\newcommand{\remove}[1]{}
\newtheorem{theorem}{Theorem}
\newtheorem{claim}[theorem]{Claim}
\newtheorem{lemma}[theorem]{Lemma}
\newtheorem{definition}[theorem]{Definition}
\newtheorem{question}[theorem]{Question}
\newcommand{\poly}{\mathop\mathrm{poly}}
\newcommand{\clique}{\textsc{Congested Clique}\xspace}
\newcommand{\local}{\textsc{Local}\xspace}
\newcommand{\knearest}{$k$\textsc{-nearest}\xspace}
\newcommand{\kdnearest}{$(k,d)$\textsc{-nearest}\xspace}
\newcommand{\ktnearest}{$(k,t)$\textsc{-nearest}\xspace}
\newcommand{\sdk}{$(S, d)$\textsc{-source detection}\xspace}
\newcommand{\distthrough}{\textsc{distance through sets}\xspace}
\newcommand{\BB}{\{ 0,1 \}}
\newcommand{\PRG}{\mathsf{PRG}}
\newcommand{\Exp}[2]{\mathop{\mathbf E}_{#1}\! \left[ {#2} \right]}
\newcommand{\ith}[1]{{#1}\textsuperscript{th}}
\newcommand{\cH}{{\mathcal H}}
\newcommand{\prob}[2]{\Pr_{#1}\! \left[ #2 \right]}
\newcommand{\pr}[1]{\Pr\! \left[ {#1} \right]}
\newcommand{\E}[1]{\mathop{\mathbf E}\! \left[ {#1} \right]}
\def\cA{{\cal A}}
\def\cC{{\cal C}}
\begin{document}
\title{Exponentially Faster Shortest Paths \\ in the Congested Clique}
\author{
 Michal Dory\\
  \small Technion\\
  \small smichald@cs.technion.ac.il 
\and
Merav Parter\\
        \small Weizmann Institute \\
        \small merav.parter@weizmann.ac.il
}

\maketitle

\begin{abstract}
We present improved deterministic algorithms for approximating shortest paths in the \clique model of distributed computing. We obtain $\poly(\log\log n)$-round algorithms for the 
following problems in unweighted undirected $n$-vertex graphs:
\begin{itemize}
\item $(1+\epsilon)$-approximation of multi-source shortest paths (MSSP) from $O(\sqrt{n})$ sources.
\item $(2+\epsilon)$-approximation of all pairs shortest paths (APSP).
\item $(1+\epsilon,\beta)$-approximation of APSP where $\beta=O(\frac{\log\log n}{\epsilon})^{\log\log n}$. 
\end{itemize}
These bounds improve \emph{exponentially} over the state-of-the-art poly-logarithmic bounds due to [Censor-Hillel et al., PODC19]. 
It also provides the first nearly-additive bounds for the APSP problem in sub-polynomial time.  Our approach is based on distinguishing between short and long distances based on some distance threshold $t = O(\frac{\beta}{\epsilon})$ where $\beta=O(\frac{\log\log n}{\epsilon})^{\log\log n}$. Handling the long distances is done by devising a new algorithm for computing sparse $(1+\epsilon,\beta)$ emulator with $O(n\log\log n)$ edges. For the short distances, we provide \emph{distance-sensitive} variants for the distance tool-kit of [Censor-Hillel et al., PODC19]. By exploiting the fact that this tool-kit should be applied only on local balls of radius $t$, their round complexities get improved from $\poly(\log n)$ to $\poly(\log t)$. 

Finally, our deterministic solutions for these problems are based on a derandomization scheme of a novel variant of the hitting set problem, which might be of independent interest. 
\end{abstract}

\newpage 
\tableofcontents
\newpage
\section{Introduction} 
Shortest paths computation is one of the most fundamental graph problems, and as such it has been studied extensively in various computational models. In this work we consider the problem of approximating shortest path distances in the \clique model of distributed computing \cite{lotker2005minimum}.  
This model attracts an increasing amount of attention over the last decade, due to its relation to practical parallel models such as MapReduce \cite{hegeman2015lessons} and $k$-machines \cite{klauck2014distributed}.
The fast evolution of shortest path algorithms in the \clique model 
\cite{le2016further,holzer2016approximation,censor2019algebraic,becker2017near,ParterY18,ChechikM19,elkin2019linear,censor2019sparse,dinitz2019brief,elkin2019hopsets, DBLP:conf/podc/Censor-HillelDK19} is characterized by two main mile-stones.
%
%
%
%

\vspace{-8pt}
\paragraph{First Era: Polynomial Complexity for Exact Distances.} 
In their influential work \cite{censor2019algebraic}, Censor-Hillel et al. presented the first efficient algorithm for fast matrix multiplication in the \clique model, and demonstrated its applicability for exact distance computation. 
%
Specifically, they gave an $O(n^{0.158})$-round algorithm for computing all-pairs shortest paths (APSP) in unweighted undirected APSP. For weighted and directed graphs, they showed an $O(n^{0.158})$-round algorithm for $(1+o(1))$-approximate APSP, and an $\widetilde{O}(n^{1/3})$-round algorithm for exact APSP. In a follow-up work, Le Gall \cite{le2016further}
extended this algebraic framework, and improved the complexity of exact APSP to $O(n^{0.2096})$ rounds for weighted directed graphs with constant weights. Izumi and Le Gall showed that using quantum computing, the complexity of APSP can be improved to $\tilde{O}(n^{1/4})$ for weighted directed graphs \cite{izumi2019quantum}.
Censor-Hillel, Leitersdorf and Turner \cite{censor2019sparse} showed improved algorithms for \emph{sparse} matrix multiplication. In particular, their algorithm multiplies two $n\times n$ matrices, each with $O(n^{3/2})$ non-zero entries, within constant number of rounds. This led to an exact APSP computation in $O(D \cdot (m/n )^{1/3})$ rounds where $D$ is the (unweighted) diameter of the $n$-vertex graph and $m$ is the number of edges. 
\vspace{-8pt}
\paragraph{Second Era: Poly-logarithmic Complexity for Approximate Distances.}
The above mentioned algorithms are all based on matrix multiplication. 
An alternative natural approach for approximating APSP is obtained by the notion of multiplicative spanners, namely, sparse subgraphs that preserve all distances up to a small multiplicative stretch. 
Specifically, by computing a sparse $k$-spanner in the graph, and collecting its edges at each vertex, one immediately gets
a $k$-approximation for the APSP problem. Note, however, that due to the size vs. stretch tradeoff of multiplicative spanners, in poly-logarithmic time, the approximation value $k$ is at least logarithmic.
The problem of obtaining $O(1)$ distance approximation in poly-logarithmic time was left widely open until the recent works of \cite{becker2017near} and \cite{censor2019sparse}. 

By using the ingenious method of \emph{continuous optimization} \cite{sherman2013nearly}, Becker et al. \cite{becker2017near} presented the first poly-logarithmic algorithm for the $(1+\epsilon)$-approximate single-source shortest paths (SSSP) problem in weighted undirected graphs. The starting point of their algorithms is based on the $O(\log n)$-approximation provided by spanners. This approximation is then iteratively improved by applying a tailored gradient descent procedure. Their algorithms also have the additional benefit of being implemented in the more stringent Broadcast \clique model\footnote{In this model, each vertex is limited to send the \emph{same} message to all other vertices in a given round.}.

In a very recent work, Censor-Hillel et al. \cite{DBLP:conf/podc/Censor-HillelDK19} presented  poly-logarithmic time solutions for the more general problems of approximate multi-source shortest paths (MSSP) and APSP. In particular, among other results, they showed an $O(\log^2 n/\epsilon)$-round algorithm for the $(2+\epsilon)$-approximate APSP problem in \emph{unweighted} undirected graphs, an $O(\log^2 n/\epsilon)$-round algorithm for $(3+\epsilon)$-approximate APSP in \emph{weighted} undirected graphs; as well as $O(\log^2 n/\epsilon)$-round algorithm for the MSSP problem in weighted undirected graphs with respect to $O(\sqrt{n})$ sources. The approach of \cite{DBLP:conf/podc/Censor-HillelDK19} is based on a clever reduction to sparse matrix multiplication algorithms from \cite{censor2019sparse, DBLP:conf/podc/Censor-HillelDK19}. This reduction is not immediate, for the following reason. The basic approach of using matrix multiplications to compute distances is based on starting with the input matrix of the graph $A$ and then iteratively square it $O(\log{n})$ times.  
Even if the original graph is sparse (e.g., with $O(n^{3/2})$ edges), a single multiplication step might already result in a dense matrix. Since one needs to apply $O(\log n)$ multiplication steps, it is required to sparsify the intermediate matrices as well. To overcome this barrier, \cite{DBLP:conf/podc/Censor-HillelDK19} developed a rich tool-kit for distance computation, that includes hopset constructions, fast computation of the $k$-nearest neighbours and more.

The poly-logarithmic time complexity appears to be a natural barrier in the matrix multiplications based algorithms, from the following reason. Basically these algorithms work in iterations, where in iteration $i$, vertices learn about paths that have at most $2^i$ edges, hence $\Omega(\log n)$ iterations seem to be required in order to capture shortest paths with polynomial number of hops. 
In this paper we ask if this barrier is indeed fundamental. 
\begin{question}
Is it possible to provide $(1+\epsilon)$-approximation for SSSP and $(2+\epsilon)$-approximation for APSP in $o(\log n)$ rounds?
\end{question}
Another plausible explanation for viewing the logarithmic time complexity as a natural barrier comes from the following observation. 
The current state-of-the-art bound of many graph problems in the \clique model is bounded from below by the 
logarithm of their respective \local time complexities\footnote{In the \local model, only vertices that are neighbors in the input graph can communicate in a given round. The messages are allowed to be of unbounded size.}. In the more restrictive, yet quite related model of MPC, this limitation was conditionally proven by Ghaffari, Kuhn and Uitto \cite{ghaffari2019conditional}. Since the locality of the shortest path problem is linear, we reach again to the barrier of $O(\log n)$ rounds.

Another interesting question left by \cite{DBLP:conf/podc/Censor-HillelDK19} is whether is it possible to design efficient algorithms for \emph{additive} or \emph{near-additive} approximation for APSP. Here, the goal is to get an estimate $\delta(u,v)$ for the distance between $u$ and $v$ such that $\delta(u,v) \leq (1+\epsilon) d(u,v) + \beta$ for some small $\beta$. Note that such guarantee is closer to $(1+\epsilon)$-approximation for long enough paths, which is much better than a multiplicative $(2+\epsilon)$-approximation, which raises the following question. 
 
\begin{question}
Can we get near-additive approximation for APSP in sub-polynomial time? 
\end{question} 

We answer both questions in the affirmative by devising a unified tool for computing nearly-additive approximate distances in $\poly(\log\log n)$ rounds. This immediately solves Question 2. To address Question 1,  
our approach is based on a mixture of the two current approaches for distance computation, namely, graph sparsification and matrix multiplications.
We start by describing our results, and then elaborate on our approach.
%

\subsection{Our Contributions}
In this work we break the natural logarithmic barrier for distance computation in the \clique model. We present $\poly(\log\log n)$ round approximation algorithms for the fundamental problems of MSSP and APSP. 
\vspace{-3pt}
\paragraph{$(1+\epsilon)$ MSSP.} Our first contribution is the first sub-logarithmic algorithm for computing distances from up to $O(\sqrt{n})$ sources.\footnote{The statements in this section are slightly simpler and assume that $\epsilon$ is constant just to simplify the presentation. The exact statements appear in Section \ref{sec:applications}.}

\begin{theorem}[$(1+\epsilon)$-MSSP in $\poly(\log\log n)$ Rounds]\label{thm:rand-MSSP}
For every $n$-vertex unweighted undirected graph $G=(V,E)$, a subset of $O(\sqrt{n})$ sources $S \subset V$ and a constant $\epsilon \in (0,1)$, there is a randomized algorithm in the \clique model that computes $(1+\epsilon)$-approximation for the $S \times V$ distances with high probability within $\widetilde{O}((\log\log n)^2)$ rounds.\footnote{The term $\widetilde{O}(x)$ hides factors that are poly-logarithmic in $x$.} 
\end{theorem}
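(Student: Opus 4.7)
The plan is to split the computation by distance scale, using the two tools the paper advertises. Fix $\beta = O((\log\log n)/\epsilon)^{\log\log n}$ and a threshold $t = \Theta(\beta/\epsilon)$, so that $\log t = \widetilde{O}(\log\log n)$. For pairs $(s,v)$ with $d_G(s,v) \geq t$ a $(1+\epsilon,\beta)$-estimate is already a $(1+O(\epsilon))$-approximation, since the additive $\beta$ is swallowed by $\epsilon \cdot d_G(s,v)$; for pairs with $d_G(s,v) < t$ I need essentially exact distances, but only within radius-$t$ balls. This is exactly the short/long dichotomy described in the introduction.

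For the long range I would first invoke the paper's new sparse $(1+\epsilon,\beta)$-emulator construction to obtain $H$ with $O(n\log\log n)$ edges in $\poly(\log\log n)$ rounds. Using $|S|=O(\sqrt{n})$ and the sparsity of $H$, I would then run MSSP on $H$ via the CHDK19 distance tool-kit adapted to a sparse graph: sparse $(\min,+)$ products combined with hopset-like doubling yield estimates $\tilde{d}_H(s,v)$ satisfying $d_G(s,v)\leq \tilde{d}_H(s,v)\leq (1+\epsilon)d_G(s,v)+\beta$ in $\poly(\log\log n)$ rounds. For the short range I would run the distance-sensitive variants of the CHDK19 tool-kit on the radius-$t$ balls around each source in $G$; restricting to radius $t$ is exactly what reduces the tool-kit's cost from $\poly(\log n)$ to $\poly(\log t)=\poly(\log\log n)$, and delivers exact $(1+\epsilon)$-approximate distances to all vertices within distance $t$ of each source. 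The final output is the minimum of the two estimates; after rescaling $\epsilon$ by a constant at the outset this is a $(1+\epsilon)$-approximation for every $(s,v)\in S\times V$, and the total round complexity is $\widetilde{O}((\log\log n)^2)$.

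The main obstacle is the dual requirement that drives the whole construction: (i) building a $(1+\epsilon,\beta)$-emulator of size $O(n\log\log n)$ in $\poly(\log\log n)$ rounds, and (ii) solving MSSP on that emulator without triggering the $\Omega(\log n)$ doubling that is inherent to iterated $(\min,+)$ matrix multiplication. Classical Thorup--Zwick-type constructions already require $\Omega(\log n)$ clustering rounds; overcoming this seems to force a doubly-logarithmic clustering hierarchy with only $O(\log\log n)$ levels, each of carefully increasing radius, which in turn produces the $(\log\log n)^{\log\log n}$ shape of $\beta$. Everything else --- the high-probability correctness via random sampling of centers, the balance between the short and long thresholds, the $\epsilon$-rescaling --- should fall into place once this hierarchy and its accompanying sparse MSSP routine are in hand.
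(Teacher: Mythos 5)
Your short/long split with threshold $t=\Theta(\beta/\epsilon)$ and your treatment of the short range (distance-sensitive tool-kit: a $t$-bounded hopset with hop bound $O(\log t/\epsilon)$ followed by $(S,d)$-source detection on $G\cup H'$, which is cheap because $|S|=O(\sqrt{n})$) match the paper's proof of this theorem. The gap is in the long range. You propose to \emph{run MSSP on the emulator} via sparse $(\min,+)$ products "combined with hopset-like doubling," but this step is exactly the one you later concede you cannot carry out: an approximate shortest path in the emulator $H$ can have polynomially many hops, so iterated squaring on $H$ needs $\Omega(\log n)$ iterations, and the paper's bounded-hopset machinery does not rescue you here (it is built for the unweighted input graph and for distances at most $t$, whereas $H$ is weighted and the relevant distances are the long ones). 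As written, the long-distance half of your algorithm is unjustified, and your own "main obstacle (ii)" paragraph acknowledges this.

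The paper's resolution is much simpler and does not involve any distributed shortest-path computation on $H$ at all: since the emulator has only $O(n\log\log n)$ edges w.h.p., one vertex can collect all of it via Lenzen's routing in $O(\log\log n)$ rounds and redistribute it so that \emph{every} vertex knows the entire emulator; each vertex then computes $d_H(s,v)$ for all $s\in S$ by purely local computation. Because $d_G(s,v)\ge t=2\beta/\epsilon$ makes $(1+\epsilon/2)d_G(s,v)+\beta\le(1+\epsilon)d_G(s,v)$, taking the minimum of the local emulator distances and the source-detection estimates gives the claimed $(1+\epsilon)$-approximation in $\widetilde{O}((\log\log n)^2)$ rounds. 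So the missing idea is not a new clustering hierarchy or a sparse MSSP routine on $H$; it is the observation that near-linear size lets you ship the whole emulator to every vertex and finish locally.
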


We remark that even for computing $(1+\epsilon)$-approximate distances from a single source, the fastest previous algorithms require poly-logarithmic time \cite{DBLP:conf/podc/Censor-HillelDK19, becker2017near}.
Similarly to \cite{DBLP:conf/podc/Censor-HillelDK19}, the barrier of handling at most $O(\sqrt{n})$ sources is rooted at the sparse matrix multiplication algorithm of \cite{DBLP:conf/podc/Censor-HillelDK19, censor2019sparse}.
The latter takes $O(1)$ time as long as the density of the matrices is $O(\sqrt{n})$, for higher values the complexity is polynomial. 
%
\vspace{-3pt}
\paragraph{$(2+\epsilon)$-APSP.}
Our next contribution is providing a $(2+\epsilon)$-approximation for unweighted APSP in $\poly(\log\log n)$ time, improving upon the previous $\poly(\log{n})$ algorithm \cite{DBLP:conf/podc/Censor-HillelDK19}.

\begin{theorem}[$(2+\epsilon)$-APSP in $\poly(\log\log n)$ Rounds]\label{thm:rand-APSP}
For every $n$-vertex unweighted undirected graph $G=(V,E)$, and constant $\epsilon \in (0,1)$, there is a randomized algorithm in the \clique model that computes $(2+\epsilon)$-approximation for APSP w.h.p within $\widetilde{O}((\log\log n)^2)$ rounds. 
\end{theorem}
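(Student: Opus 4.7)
The plan is to combine two complementary distance-computation strategies separated by a threshold $t = \Theta(\beta/\epsilon)$ with $\beta = O(\log\log n / \epsilon)^{\log\log n}$, exactly as sketched in the abstract. The long-distance side is handled by the paper's sparse $(1+\epsilon,\beta)$-emulator: I would invoke that construction to produce an emulator $H$ with $O(n \log\log n)$ edges in $\poly(\log\log n)$ rounds, and then broadcast $H$ to every vertex using Lenzen's routing in $O(\log\log n)$ rounds. Once each vertex holds $H$, it computes APSP in $H$ locally with no further communication, yielding estimates $\widetilde{d}_H(u,v) \le (1+\epsilon) d(u,v) + \beta$ for every pair.

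The short-distance side uses the distance-sensitive tool-kit to compute, for every pair $(u,v)$ with $d(u,v) \le t$, a $(1+\epsilon)$-approximate estimate $\widetilde{d}_S(u,v)$. Because every primitive invoked (hopsets, iterated sparse matrix squaring, $k$-nearest) is confined to a ball of radius $t = \poly(\log\log n)$, each such iterative procedure only needs depth $\log t$ rather than $\log n$, so the $\poly(\log n)$ round bound of the original Censor-Hillel et al. tool-kit collapses to $\poly(\log t) = \poly(\log\log n)$.

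Each vertex then outputs $\widetilde{d}(u,v) = \min\{\widetilde{d}_S(u,v),\, \widetilde{d}_H(u,v)\}$, treating $\widetilde{d}_S(u,v)$ as $\infty$ when $v$ lies outside $u$'s $t$-ball. The stretch analysis is a short case split on whether $d(u,v) \le t$ or $d(u,v) > t$. In the first case, $\widetilde{d}_S(u,v) \le (1+\epsilon) d(u,v) \le (2+\epsilon) d(u,v)$ directly. In the second, the choice $t \ge \beta/\epsilon$ converts the additive error of the emulator into a multiplicative one: $\widetilde{d}_H(u,v) \le (1+\epsilon) d(u,v) + \beta \le (1+2\epsilon) d(u,v)$, and a constant rescaling of $\epsilon$ absorbs this into $(2+\epsilon) d(u,v)$.

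The main obstacle is not the gluing but the two component subroutines themselves. For the long-distance side, designing a $(1+\epsilon,\beta)$-emulator of size $O(n \log\log n)$ in $\poly(\log\log n)$ rounds appears to require an iterated Thorup--Zwick-style clustering that recurses only $\log\log n$ levels instead of $\log n$, together with the novel hitting-set derandomization promised in the abstract. For the short-distance side, the hardest part is verifying that every primitive of the distance tool-kit, most notably sparse matrix squaring, remains correct and density-preserving (intermediate matrices staying at density $O(\sqrt{n})$ so each step still runs in $O(1)$ rounds) after its internal horizon is truncated at $t$; without this, the speedup from $\poly(\log n)$ to $\poly(\log t)$ would be lost to densification overhead.
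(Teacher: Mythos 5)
Your long-distance half and the thresholding at $t=\Theta(\beta/\epsilon)$ match the paper: building the $(1+\epsilon,\beta)$-emulator, shipping it to all vertices via Lenzen's routing, and observing that for $d(u,v)\geq t$ the additive term is absorbed multiplicatively is exactly Claim \ref{claim_long_paths}. The gap is in your short-distance half. You assume that the distance-sensitive tool-kit lets every pair $(u,v)$ with $d(u,v)\leq t$ obtain a $(1+\epsilon)$-approximate estimate. No such primitive exists in the paper, and none can plausibly be built this way: the \sdk routine runs in $\poly(\log t)$ rounds only for $O(\sqrt{n})$ sources (this is precisely why Theorem \ref{thm:rand-MSSP} is stated for $O(\sqrt{n})$ sources and not for APSP), and the \kdnearest routine only gives each vertex its $k=O(n^{2/3})$ closest vertices, which does not cover all of a possibly dense $t$-ball. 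Worse, if your subroutine existed, your own gluing would yield a $(1+O(\epsilon))$-approximate APSP in sub-polynomial time, contradicting the conditional lower bound the paper cites (any $(2-\epsilon)$-APSP in sub-polynomial time implies sub-polynomial matrix multiplication; the hard instances live exactly at small distances). The fact that your argument produces stretch better than $2+\epsilon$ should have been the warning sign: the factor $2$ in the theorem is forced by the short-distance regime.

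What the paper actually does for $d(u,v)\leq t$ is substantially more involved. Pairs whose shortest path contains a vertex of degree at least $\sqrt{n}\log n$ are handled by a hitting set $S$ of size $O(\sqrt{n})$ of the large neighborhoods: routing through a neighbor $s\in S$ of that high-degree vertex costs $d(u,s)+d(s,v)\leq d(u,v)+2\leq 2d(u,v)$, and bounded hopsets plus \sdk give $(1+\epsilon/2)$-approximate distances to $S$, hence $(2+\epsilon)$. For the remaining pairs one restricts to the sparse subgraph $G'$ of low-degree edges and runs a second, finer case analysis: exact distances via the \ktnearest sets with $k=n^{1/4}\log^2 n$ and the \distthrough primitive when the path meets $N_{k,t}(u)\cap N_{k,t}(v)$; a pivot argument through a hitting set $A$ of size $O(n^{3/4}/\log n)$ when some path vertex avoids both sets, where the key inequality $d(u,p^t_A(u))\leq d(u,w)\leq d(u,v)/2$ is what yields stretch $2$ rather than $3$; and, when every path vertex lies in exactly one of the two sets, either a hitting set $A'$ for moderately high degrees in $G'$ or an exact computation via a product of three sparse matrices, with an extra sparsification step introduced specifically to keep those multiplications at $O(1)$ rounds. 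None of this machinery, nor the reason the answer is $2+\epsilon$ rather than $1+\epsilon$, appears in your proposal, so the proof as written does not go through.
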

It is worth noting that a $(2+\epsilon)$-approximation is essentially the best we can hope for without improving the complexity of matrix multiplication in the \clique model.
As noted in \cite{DBLP:conf/podc/Censor-HillelDK19}, a reduction to matrix multiplication that appears in \cite{dor2000all,korhonen2018towards} implies that one cannot compute a $(2-\epsilon)$ approximate APSP in sub-polynomial time, without obtaining sub-polynomial algorithms for matrix multiplication.

\vspace{-3pt}
\paragraph{$(1+\epsilon,\beta)$-APSP.}

Finally, we provide $\poly(\log\log n)$-round algorithm for the $(1+\epsilon,\beta)$-approximation for APSP for $\beta = O(\frac{\log{\log{n}}}{\epsilon})^{\log{\log{n}}}$.

\begin{theorem}[Near-Additive APSP]\label{cor:near-additive-apsp}
For any constant $\epsilon \in (0,1)$, there exists a randomized algorithm in the \clique model that w.h.p. computes a $(1+\epsilon,\beta)$-approximation for the APSP problem in undirected unweighted graphs in $\widetilde{O}((\log\log n)^2)$ rounds, for $\beta = O(\frac{\log{\log{n}}}{\epsilon})^{\log{\log{n}}}$.
\end{theorem}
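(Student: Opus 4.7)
The plan is to follow the split strategy advertised in the abstract, namely to handle short and long distances separately with a threshold $t = \Theta(\beta/\epsilon)$, and then take the minimum of the two estimates for every pair $(u,v)$. Concretely, I would first invoke the new construction (developed earlier in the paper) of a sparse $(1+\epsilon,\beta)$-emulator $H$ on $V$ with $O(n\log\log n)$ edges, and argue that it can be built in the \clique model in $\widetilde{O}((\log\log n)^2)$ rounds. Once $H$ is in place, for any pair $u,v$ with $d_G(u,v) \geq t$ the emulator already gives $d_H(u,v) \leq (1+\epsilon)d_G(u,v)+\beta$, and since the additive slack $\beta$ becomes negligible compared to $(1+\epsilon)d_G(u,v)$ once $d_G(u,v) \geq \beta/\epsilon$, we obtain a $(1+O(\epsilon),\beta)$-approximation on all long pairs.

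Next I would design the algorithm that computes APSP on the sparse emulator. Because $H$ has only $O(n\log\log n)$ edges, one can feed $H$ into the $(2+\epsilon)$-APSP algorithm of Theorem~\ref{thm:rand-APSP} — actually one wants a $(1+\epsilon)$-MSSP-style computation on $H$ from all sources — but the key point is that sparsity keeps every matrix that appears below the $O(\sqrt{n})$ density threshold for which sparse matrix multiplication of \cite{censor2019sparse,DBLP:conf/podc/Censor-HillelDK19} runs in $O(1)$ rounds. Combined with the distance tool-kit of \cite{DBLP:conf/podc/Censor-HillelDK19}, this yields $(1+\epsilon)$-approximate distances in $H$ in $\widetilde{O}((\log\log n)^2)$ rounds, giving exactly the long-distance estimate we need.

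For the short-distance regime $d_G(u,v) < t$, I would use the distance-sensitive versions of the \cite{DBLP:conf/podc/Censor-HillelDK19} tool-kit that the paper promises earlier. Since we only need to explore local balls of radius $t = O(\beta/\epsilon) = \bigl(O(\log\log n/\epsilon)\bigr)^{\log\log n}$, and since $\log t = O(\log\log n \cdot \log\log\log n)$, every primitive in the tool-kit (hopsets, $k$-nearest, source detection) incurs $\poly(\log t) = \poly(\log\log n)$ rounds rather than $\poly(\log n)$. Running the $(1+\epsilon)$-MSSP procedure of Theorem~\ref{thm:rand-MSSP} restricted to these balls produces a $(1+\epsilon)$-approximation for every pair at distance $\leq t$. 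The final estimate $\widetilde{d}(u,v)$ is simply $\min\{d^{\mathrm{short}}(u,v), d_H(u,v)\}$.

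The main obstacle, and the step I would dwell on the most, is verifying the correctness of the dichotomy uniformly across all pairs, in particular near the threshold $t$. One has to argue that (i) if $d_G(u,v) \leq t$ then the short-distance procedure reaches $v$ from $u$ within a ball of radius at most $t$ (so we never miss a shortest path by truncating the search), and (ii) if $d_G(u,v) > t$ then $\beta \leq \epsilon \cdot d_G(u,v)$ by the choice $t = \Theta(\beta/\epsilon)$, so the emulator's $(1+\epsilon,\beta)$ guarantee collapses into a pure multiplicative $(1+O(\epsilon))$-guarantee. A secondary obstacle is the bookkeeping for the round complexity: one must check that all ingredients — emulator construction, sparse APSP on $H$, and the $t$-truncated tool-kit — each run within $\widetilde{O}((\log\log n)^2)$ rounds, and that the derandomization scheme mentioned in the abstract (which is presumably invoked internally) does not blow up this bound. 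Rescaling $\epsilon \mapsto \epsilon/c$ at the end yields Theorem~\ref{cor:near-additive-apsp} with the claimed parameters.
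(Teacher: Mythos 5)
There is a genuine gap, and it sits exactly where you placed the weight of your argument. For this theorem the target guarantee is $(1+\epsilon,\beta)$, so the additive slack $\beta$ is allowed on \emph{every} pair, including pairs at distance below your threshold $t$. Consequently the entire short-distance phase is unnecessary: the $(1+\epsilon,\beta)$-emulator alone already certifies the claimed stretch for all pairs. Worse, the short-distance phase as you describe it cannot be carried out: Theorem~\ref{thm:rand-MSSP} handles only $O(\sqrt{n})$ sources, while here you would need all $n$ vertices as sources, and the \sdk primitive with $|S|=n$ is polynomially expensive. Indeed, a genuine $(1+\epsilon)$-approximation for all pairs at distance at most $t$ (combined with the long-range estimate) would yield a sub-polynomial $(1+O(\epsilon))$-APSP, which the paper explicitly notes is conditionally impossible without fast matrix multiplication, so this step is not merely unproved, it is presumably false.

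The second gap is that you never actually explain how the distances \emph{in} $H$ are obtained, and your suggested route does not work. Iterated distance products of the (sparse) emulator matrix need not stay below the $O(\sqrt{n})$ density threshold after even one squaring, and since hop distances in $H$ can be polynomial (e.g.\ when $G$ is a path, $H$ is essentially $G$ itself), one would need $\Omega(\log n)$ squarings, reintroducing the poly-logarithmic barrier the paper is trying to beat. The paper's proof of this theorem avoids all of this with one simple observation you are missing: $H$ has only $O(n\log\log n)$ edges, so using Lenzen's routing the whole emulator can be collected at a single vertex in $O(\log\log n)$ rounds and then redistributed so that \emph{every} vertex knows all of $H$ in another $O(\log\log n)$ rounds; each vertex then computes all emulator distances by local computation. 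Together with the emulator construction of Theorem~\ref{thm_emulator} this is the entire argument, with round complexity $O(\log^2\beta/\epsilon)=\widetilde{O}((\log\log n)^2)$ for constant $\epsilon$. The short/long dichotomy you describe is the paper's strategy for the $(1+\epsilon)$-MSSP and $(2+\epsilon)$-APSP results, where the additive term must be eliminated; it is not needed, and should not be invoked, for the near-additive APSP statement.
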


For the best of our knowledge, this is the first \emph{sub-polynomial} algorithm to the problem. A polynomial algorithm can be obtained either by the 
exact algorithm for unweighted APSP that takes $O(n^{0.158})$ time \cite{censor2019algebraic}, or by building a sparse near-additive spanner or emulator and letting the vertices learn it. This takes $O(n^{\rho})$ rounds where $\rho$ is an arbitrary small constant, using the constructions in \cite{elkin2019near, elkin2018efficient}. We remark that these constructions work also in the more general CONGEST model, where vertices can only send $\Theta(\log{n})$ bit messages to their neighbours.    

Note that while \cite{DBLP:conf/podc/Censor-HillelDK19} shows poly-logarithmic algorithms for multiplicative $(2+\epsilon)$-approximation of APSP, prior to this work there was no nearly-additive approximation in sub-polynomial time. This nearly-additive approximation yields a near-exact $(1+\epsilon)$-approximation for long distances (i.e., of distance $\Omega(\beta/\epsilon)$), improving upon the existing $(2+\epsilon)$-approximation.
\vspace{-5pt}

\paragraph{Derandomization.} The randomized constructions presented above share a similar randomized core for which we devise a new derandomization algorithm. The current deterministic constructions of spanners and hopsets are mostly based on the derandomization of the hitting set problem \cite{Censor-HillelPS16,ghaffari2018derandomizing,ParterY18}.  In our context, the standard hitting set based arguments lead to a logarithmic overhead in the size of the emulator, and consequently also in the number of rounds to collect it. This is clearly too costly as we aim towards sub-logarithmic bounds. 
To handle that, we isolate the key probabilistic argument that leads to the desired sparsity of our emulators. 
By using the strong tool of pseudorandom generators for fooling DNF formulas \cite{gopalan2012better}, and extending the approach of  \cite{ParterY18}, we show that this probabilistic argument can hold even when supplying all vertices a shared random seed of $\widetilde{O}(\log n)$ bits. 
As we believe that this derandomization scheme might be useful in other contexts, we formalize the setting by defining the \emph{soft hitting set} problem that captures our refined probabilistic arguments in a black-box manner, and show the following:
\begin{lemma}\label{lem:derand}
The soft hitting set problem can be solved in $O((\log\log n)^3)$ rounds. This leads to $\widetilde{O}((\log\log n)^4)$ deterministic solutions for Thm. \ref{thm:rand-MSSP}, \ref{thm:rand-APSP} and \ref{cor:near-additive-apsp}. 
\end{lemma}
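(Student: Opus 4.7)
The plan is two-pronged: first derandomize a single invocation of the soft hitting set primitive, and then plug it into the randomized skeletons of Theorems~\ref{thm:rand-MSSP}, \ref{thm:rand-APSP}, and \ref{cor:near-additive-apsp}.

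For the first part I would start with the randomized baseline: sample each element independently with probability $p$ so that every ``heavy'' set (of size at least $d$) is hit with high probability. The failure event---that some heavy set is missed---is naturally encoded as a DNF formula with one term per heavy set, whose literals are ``vertex $v$ was not sampled'' for $v$ ranging over the set. In the \emph{soft} version of the problem it suffices to hit all but a tiny fraction of sets, which effectively bounds the width of the relevant DNF by $\operatorname{polylog} n$. This places us in the regime fooled by the Gopalan--Meka--Reingold PRG~\cite{gopalan2012better}, giving a seed length of $s = \widetilde{O}(\log n)$ that can replace full independence in the analysis.

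To find a good seed deterministically I would adopt the method-of-conditional-expectations search implemented in the \clique along the lines of~\cite{ParterY18}. Partition the seed into $O((\log\log n)^2)$ chunks of $O(\log n)$ bits; in each iteration the $n$ machines enumerate in parallel the $n$ candidate values of the current chunk, each machine evaluates a local share of a pessimistic estimator of the failure probability conditioned on the chunks fixed so far, all shares are aggregated in $O(1)$ broadcast rounds, and the minimizing value is fixed. The main obstacle is designing this estimator: it must be tight enough to certify the existence of a winning seed (the PRG guarantee is what makes this possible), monotone under chunk-by-chunk conditioning, and---crucially---decomposable into local contributions that every vertex can compute from its own data plus $O(\log n)$ shared bits, so that a single all-to-all broadcast suffices per chunk. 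Extending the estimator of~\cite{ParterY18} from the $k$-wise independence regime to the DNF-fooling regime is the delicate step. Multiplying the $O((\log\log n)^2)$ chunks by the $O(\log\log n)$-round per-chunk verification overhead yields the claimed $O((\log\log n)^3)$-round bound.

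For the second claim, I would observe that each of the three randomized algorithms invokes randomness only inside $\widetilde{O}(\log\log n)$ calls to the (soft) hitting set primitive---once per scale or phase of the emulator construction and the $k$-nearest computations---while the remaining algorithmic skeleton is already deterministic and runs in $\widetilde{O}((\log\log n)^2)$ rounds. Replacing each sampling step by the $O((\log\log n)^3)$-round deterministic routine above yields an overall bound of $\widetilde{O}((\log\log n)^2) + \widetilde{O}(\log\log n) \cdot O((\log\log n)^3) = \widetilde{O}((\log\log n)^4)$ rounds, matching the statement of the lemma.
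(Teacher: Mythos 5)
Your high-level toolbox is the right one (the read-once-DNF-fooling PRG of \cite{gopalan2012better} plus a chunked method-of-conditional-expectations search in the \clique, as in \cite{ParterY18}), but the core of the argument -- what exactly is being derandomized -- is misformulated, and this is precisely where the paper's new content lies. Your randomized baseline is ``sample with probability $p$ so that \emph{every} heavy set is hit w.h.p., and encode the failure event as one big DNF.'' That is the \emph{standard} hitting set, which forces $p=\Theta(\log n/\Delta)$ and yields a set of size $O(N\log n/\Delta)$; it violates property (i) of the soft hitting set definition ($|R^*|=O(N/\Delta)$) and reintroduces exactly the $\log n$ overhead the soft notion was invented to avoid. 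Your attempted fix -- ``it suffices to hit all but a tiny fraction of sets, which bounds the width of the relevant DNF by $\mathrm{polylog}\, n$'' -- is not the soft condition either: the requirement is that the \emph{total size} of the unhit sets is $O(|L|\cdot\Delta)$, and the reason the GMR PRG applies has nothing to do with width; it applies because each per-set coverage event is a \emph{read-once} DNF over the sampling bits. The paper's key steps, which your proposal leaves as ``the delicate step,'' are: (1) for each set $S$ define $q(y)\in\{0,|S|\}$ indicating the cost if $S$ is unhit, and use the read-once DNF guarantee to show the PRG changes $\E[q]$ by at most $|S|\cdot\epsilon\le 1/N^{c-1}$, so $\E[q(\PRG(s))]=O(\Delta)$ with sampling probability only $\Theta(1/\Delta)$ (Lemma \ref{lem:hittingrandloglog}); and (2) combine the two requirements into a single normalized objective $X+\chi Y$ (set size plus total unhit mass) whose conditional expectation every vertex can evaluate locally, and minimize it chunk by chunk (Theorem \ref{thm:generalderand}). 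A ``pessimistic estimator of the failure probability'' of the kind you describe does not certify property (i) at all, and without the $\{0,|S|\}$-valued cost function it does not certify property (ii) either. (Also, \cite{ParterY18} is already in the DNF-fooling regime, not a $k$-wise independence regime.)

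Two smaller corrections on the accounting. The seed has length $O(\log n\,(\log\log n)^3)$, hence $O((\log\log n)^3)$ chunks of $\lfloor\log n\rfloor$ bits, and each chunk is fixed in $O(1)$ rounds (each of the $\le n$ candidate assignments is delegated to one vertex, contributions are aggregated at a leader, and the winner is broadcast); your split into $O((\log\log n)^2)$ chunks with $O(\log\log n)$ rounds each does not match the seed length and, with the correct chunk count, would give $(\log\log n)^4$ rather than $(\log\log n)^3$. Your second claim -- plugging the deterministic primitive into the three algorithms -- is essentially the paper's accounting: the emulator uses one soft hitting set per level, $r=\log\log n$ levels times $O((\log\log n)^3)$ rounds gives $O((\log\log n)^4)$, while the remaining random ingredients (hopset pivots and the sets $S,A,A'$ in the $(2+\epsilon)$-APSP routine) are ordinary hitting sets of known sets and are handled by Lemma \ref{det_hit}; the paper additionally has to modify the emulator slightly (an extra deterministic hitting set $A$ for heavy vertices, replacing the role played by $S_r$ in the randomized version), a point your sketch glosses over but which does not change the bound.
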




\subsection{Our Techniques}

One of the main ingredients in our algorithms is a new fast algorithm for constructing a near-additive emulator with $O(n\log\log n)$ edges. Roughly speaking, the near-additive emulator is a sparse graph\footnote{Unlike spanners, the emulator is not necessarily a subgraph of the input graph. It might contain edges that are not in $G$, and it is allowed to be weighted, even when the graph $G$ is unweighted.} that preserves the distances up to a nearly-additive stretch. Collecting all the edges of this emulator at each vertex, already gives near-additive approximation for APSP. 
We start by explaining our emulator results and then explain how to use them for obtaining the approximate distances.



\vspace{-5pt}
\paragraph{Nearly Additive Emulators.} For an $n$-vertex unweighted graph $G=(V,E)$, a $(1+\epsilon,\beta)$ emulator $H=(V,E',w)$ is a sparse graph that preserves the distances of $G$ up to a $(1+\epsilon,\beta)$-stretch.

Nearly-additive emulators have been studied thoroughly mainly from the pure graph theoretic perspective \cite{thorup2006spanners,elkin2018efficient} . Our emulator algorithm is inspired by the two state-of-the-art constructions of Elkin and Neiman \cite{elkin2018efficient} and Thorup and Zwick \cite{thorup2006spanners}. 
The main benefit of our algorithm is its efficient implementation in the \clique model, which takes $O(\frac{\log^2{\beta}}{\epsilon})$ rounds w.h.p, for $\beta = O(\frac{\log{\log{n}}}{\epsilon})^{\log{\log{n}}}.$ For a constant $\epsilon$ this gives a complexity of $\widetilde{O}((\log\log n)^2)$ rounds.

\begin{lemma}[Near-Additive Sparse Emulators]\label{lem:near-additive-emulator}
For any $n$-vertex unweighted undirected graph $G=(V,E)$ and $0<\epsilon<1$, there is a randomized algorithm in the \clique model that computes $(1+\epsilon,\beta)$ emulator $H=(V,E',w)$ with $O(n\log\log n)$ edges within $O(\frac{\log^2{\beta}}{\epsilon})$ rounds w.h.p, where $\beta = O(\frac{\log{\log{n}}}{\epsilon})^{\log{\log{n}}}.$
\end{lemma}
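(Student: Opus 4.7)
The plan is to adapt the hierarchical sampling scheme of Thorup--Zwick and Elkin--Neiman to the \clique model. I would fix $\ell = \Theta(\log\log n)$ levels and build a nested hierarchy $V = A_0 \supseteq A_1 \supseteq \cdots \supseteq A_\ell$, where $A_i$ is obtained by independently subsampling $A_{i-1}$ at a rate $p_i$ chosen so that $|A_\ell| = O(1)$ w.h.p.\ and $\sum_i |A_i|/p_i = O(n\log\log n)$. For each $v \in V$ and each level $i$, let $\pi_i(v)$ denote a closest vertex to $v$ in $A_i$, and let the bunch $B_i(v)$ consist of those $u \in A_{i-1}\setminus A_i$ with $d_G(v,u) < d_G(v, A_i)$. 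The emulator $H$ then contains, for every $v$ and $i$, the pivot edge $\{v, \pi_i(v)\}$ weighted by $d_G(v, \pi_i(v))$, together with an edge $\{v, u\}$ of weight $d_G(v,u)$ for each $u \in B_i(v)$.

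\textbf{Size and stretch.} A standard probabilistic argument gives $\E{|B_i(v)|} = O(1/p_i)$; with our choice of sampling rates this sums to $|E(H)| = O(n\log\log n)$ w.h.p. For stretch, I would follow a path-stitching argument in the style of Elkin--Neiman: any $G$-shortest path between $u,v$ is partitioned into $\ell$ segments, and each segment is shown to be covered in $H$ by either a pivot edge or a bunch edge with an additive error $\alpha_i$; by tuning the transition thresholds so that the $\alpha_i$ grow geometrically at rate $\log\log n/\epsilon$, the errors telescope to the claimed $(1+\epsilon,\beta)$-stretch with $\beta = O(\log\log n/\epsilon)^{\log\log n}$.

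\textbf{Efficient implementation.} The crux is to compute all $\pi_i(v)$ and $B_i(v)$ simultaneously in $O(\log^2\beta/\epsilon)$ rounds. The key observation is that at level $i$ we never need to explore beyond hop-distance $t_i\le\beta$ of $v$: to obtain $\pi_i(v)$ and $B_i(v)$ it suffices to run a $(k,t_i)$-nearest search for $k = \widetilde{O}(1/p_i)$, truncated at radius $t_i$ determined by the segment length at that level. This puts us exactly in the regime where the paper's distance-sensitive tool-kit applies, giving $\poly(\log t_i)\le\poly(\log\beta)$ rounds per level rather than $\poly(\log n)$. Summing over the $\ell = O(\log\log n)$ levels yields the stated $O(\log^2\beta/\epsilon)$ bound.

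The two main obstacles I foresee are the following. First, the $(k,t)$-nearest invocations must be load-balanced across the clique even though bunch sizes $|B_i(v)|$ can be very unequal across vertices; this is handled by the sparse matrix multiplication primitives of \cite{DBLP:conf/podc/Censor-HillelDK19,censor2019sparse} which underlie the tool-kit. Second, and more delicate, one must guarantee that the stretch accumulation is multiplicative-$(1+\epsilon)$ rather than $(1+O(\epsilon))^\ell$ at the top of the hierarchy; this is the place where the segment lengths $t_i$ must be carefully chosen so that the $\epsilon$-fractions arising at each level are absorbed into the additive term $\alpha_i$ rather than compounding into the multiplicative one, which is precisely what drives $\beta$ to $O(\log\log n/\epsilon)^{\log\log n}$.
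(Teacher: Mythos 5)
Your high-level plan (a Thorup--Zwick/Elkin--Neiman hierarchy with $\Theta(\log\log n)$ levels, implemented via the distance-sensitive tool-kit) is the same spirit as the paper, but the step you label as ``the key observation'' is exactly where the proposal breaks. With the pure TZ definitions you give --- $\pi_i(v)$ the closest vertex of $A_i$ and $B_i(v)=\{u\in A_{i-1}\setminus A_i : d_G(v,u)<d_G(v,A_i)\}$ --- the quantity $d_G(v,A_i)$ is \emph{not} bounded by $\beta$: on a path, say, the typical distance to the nearest $A_i$-vertex is about $2^{2^i}$, which is polynomial in $n$ for the upper levels, so computing $\pi_i(v)$ and $B_i(v)$ is an inherently global task (this is precisely why the paper classifies TZ as ``global'' and says a direct \clique implementation seems to need at least poly-logarithmic time). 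If instead you truncate at radius $t_i\le\beta$, as you suggest in passing, the construction is no longer TZ and you must say what a vertex does when \emph{no} $A_i$-vertex lies within $t_i$; without that, the EN-style path-stitching argument does not go through. The paper's algorithm is built around this exact modification: at level $i$ the search radius is capped at $\delta_i=\Theta(1/\epsilon^i)$, and a vertex either connects to the closest next-level vertex inside that ball ($i$-dense case) or connects to \emph{all} same-level vertices inside that ball ($i$-sparse case); the stretch analysis (via the $i$-clustered/$R_i$/$\beta_i$ induction) and the size analysis (expected $O(1/p_{i+1})$ edges per vertex, since a large ball is hit by the next level) are tailored to this rule. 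Your parameters for the local search are also off: a $(k,t_i)$-nearest query with $k=\widetilde O(1/p_i)$ returns the $k$ closest vertices of \emph{all} levels, so it need not contain any $A_i$-vertex and does not determine $\pi_i(v)$ or the dense/sparse status of $v$; the paper instead takes $k=n^{2/3}$, splits vertices into light (ball of size $\le n^{2/3}$, fully learned) and heavy (ball hit w.h.p.\ by the $O(\sqrt n)$-size top set, so only one edge is needed), and this dichotomy is what makes a single $\poly(\log\beta)$-round $(k,d)$-nearest computation suffice.

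Two further points are missing even granting the above. First, the top level: vertices of the last set must add edges to all last-set vertices within radius $\delta_r$, and their balls can be dense, so no $(k,d)$-nearest call recovers them; the paper handles this with a $(\beta',\epsilon',\delta_r)$-bounded hopset plus \sdk over the $O(\sqrt n)$ sources, which yields only $(1+\epsilon')$-approximate edge weights and forces a (small) patch to the stretch analysis --- your proposal assumes exact weights $d_G(v,u)$ throughout. Second, the size bound: the standard argument gives $O(n\log\log n)$ edges only \emph{in expectation}; the lemma claims it w.h.p., and the paper obtains this by simulating $O(\log n)$ independent runs in parallel and selecting, in $O(\log^2\delta_r+\log\log\log n)$ extra rounds, a run in which the edge count, $|S_r|=O(\sqrt n)$, and the heavy-vertex hitting event all hold. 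Your assertion that the expectation bound holds w.h.p.\ ``with our choice of sampling rates'' is not justified and does not follow from a straightforward concentration argument.
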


We next explain how an emulator leads to fast algorithms for approximate shortest paths.

\remove{
}

\subsubsection*{Approximating shortest paths via near-additive emulators}




Assume we have a graph of size $O(n \log{\log{n}})$ that preserves all the distances up to a $(1+\epsilon,\beta)$-stretch for $\beta = O(\frac{\log{\log{n}}}{\epsilon})^{\log{\log{n}}}$. Since this graph is sparse enough, all vertices can learn it in $O(\log{\log{n}})$ rounds, which already gives a $(1+\epsilon,\beta)$-approximation for APSP. For paths of length at least $t = O(\frac{\beta}{\epsilon})$, this actually gives a $(1+\Theta(\epsilon))$-approximation. 
It therefore remains to provide a near-exact approximation for all vertex pairs at distance at most $t = O(\frac{\beta}{\epsilon})$. The key observation is that since $t$ is small, one can get considerably faster algorithms for this task. Concretely,  we reconstruct the tool-kit of \cite{DBLP:conf/podc/Censor-HillelDK19}, and turn them into distance-sensitive primitives, i.e., with a running time that depends on the given distance threshold $t$. 
For example, \cite{DBLP:conf/podc/Censor-HillelDK19} show $\poly(\log{n})$ algorithms for computing the $k$-nearest vertices and for constructing hopsets.\footnote{For a formal definition of a hopset, see Section \ref{sec:preliminaries}.} We show how to implement $t$-\emph{bounded} variants of these tools in just $\poly(\log t)$ time. Intuitively, if we only want to compute the $k$-nearest vertices at distance at most $t$, we need to employ only $\poly(\log t)=\poly(\log\log n)$ steps of matrix multiplications, rather than $\poly(\log n)$.

Using these tools together with the emulator leads to $\poly(\log{\log{n}})$ algorithm for $(1+\epsilon)$-approximation for multi-source shortest paths. Using the recipe of \cite{DBLP:conf/podc/Censor-HillelDK19}, we can easily convert the $(1+\epsilon)$-MSSP algorithm into a $(3+\epsilon)$-APSP algorithm. Providing the ultimate $(2+\epsilon)$-approximation turns out to be more technical, as following directly the approach in \cite{DBLP:conf/podc/Censor-HillelDK19} adds logarithmic factors to the complexity, and to overcome it we add an additional sparsification phase to the algorithm. 
 
\remove{
}
\subsubsection*{Constructing near-additive emulators}

We next give a high-level overview of our emulator construction.
We build $(1+\epsilon,O(\frac{r}{\epsilon})^{r-1})$-emulator with $O(r n^{1+ \frac{1}{2^r}})$ edges.
For the choice $r = \log{\log n}$, we get a near-linear size emulator.

The construction is based on sampling sets $\emptyset = S_{r+1} \subset S_r \subset S_{r-1} \ldots \subset \ldots S_1 \subset S_0=V$
such that for $1 \leq i \leq r$, the set $S_i$ is constructed by adding each vertex of $S_{i-1}$ to $S_i$ with probability $p_i$.
The probabilities $p_i$ are chosen such that the final emulator would have $O(r n^{1+ \frac{1}{2^r}})$ edges.

We add edges to the emulator as follows. 
For $0 \leq i \leq r$, each vertex $v \in S_i \setminus S_{i+1}$, looks at a ball $B(v,\delta_i,G)$ of radius $\delta_i = \Theta(\frac{1}{\epsilon^i})$ around it. If this ball contains a vertex from $S_{i+1}$, it adds an edge to it, and otherwise it adds edges to all vertices in $B(v,\delta_i,G) \cap S_i$.
Intuitively if there are not too many vertices in $S_i$ at distance $\delta_i$ from $v$ it adds edges to them, otherwise there would be a vertex from $S_{i+1}$ there and it adds an edge to it. 


The intuition for the stretch analysis is as follows. We define a notion of $i$-clustered vertices, that in some sense captures the density of neighbourhoods around vertices. A vertex is $i$-clustered if there is a vertex from $S_i$ close-by. For example, all vertices are $0$-clustered, all high-degree vertices are $1$-clustered, all high-degree vertices that also have a vertex from $S_2$ close-by are $2$-clustered and so on. For a formal definition see Section \ref{sec:emulator}.
Now, if we take $u,v \in V$ such that all vertices in $\pi(u,v)$ are at most $i$-clustered we work as follows. We break $\pi(u,v)$ to segments of length $\frac{1}{\epsilon^i}$, in each such segment we show that there is an additive stretch of $\Theta(\frac{1}{\epsilon^{i-1}})$ between the first and last $i$-clustered vertices in the segment $w_1,w_2$. Summing up over all $d(u,v) \epsilon^{i}$ such segments leads to an additive $+\Theta(d(u,v) \epsilon)$ term. To handle the parts of the segments not between $w_1$ and $w_2$ we use an inductive argument. Each one of the iterations of the algorithm adds $+\Theta(d(u,v) \epsilon)$ term, which eventually leads to a stretch of $(1+\Theta(r\epsilon),O(\frac{1}{\epsilon^{r-1}}))$. By rescaling, we get a stretch of $(1+\epsilon, O(\frac{r}{\epsilon})^{r-1})$.
To get a sparse emulator of size $O(n \log{\log{n}})$, we choose $r = \log{\log{n}}$, which gives a stretch of $(1+\epsilon, O(\frac{\log{\log{n}}}{\epsilon})^{\log{\log{n}}}).$
\vspace{-5pt}

\paragraph{Implementation in the \clique model.}
The intuition for the implementation is as follows. During the algorithm vertices inspect their local balls of radius at most $\delta_r = O(\frac{\log{\log{n}}}{\epsilon})^{\log{\log{n}}}$, and add edges to some of the vertices in these balls. Ideally, we would like to exploit the small radius of the balls to get a complexity of $\poly(\log \delta_r) = \poly(\log{\log n})$. 
The key challenge is that it is not clear how vertices can learn their $t$-hop neighborhood in $\log t$ rounds. This sets a major barrier in the case where the $t$-hop ball of a vertex is dense (i.e., containing $\omega(\sqrt{n})$ vertices). 
Our key idea to overcome this barrier is based on separating vertices to heavy and light based on the sparsity level of their $t$-hop ball. For sparse vertices, whose $t$-hop ball has $O(n^{2/3})$ vertices, the algorithm collects the balls in $\poly(\log t)$ rounds using our distance sensitive tool-kit. For the remaining dense vertices, we will be using the fact that a random collection of $O(\sqrt{n})$ vertices $S$ hit their balls. These ideas allow to implement all the iterations of the algorithm, except the last one. In the last iteration vertices from the last set of the emulator $S_r$ should add edges to all vertices of $S_r$ at distance at most $t = \delta_r$, here we exploit the fact that the set $S_r$ is small and use it together with our bounded hopset to get an efficient implementation.

%

\subsubsection*{Deterministic algorithms}
To derandomize our algorithms, we introduce the notion of soft hitting sets, which is roughly defined as follows. The input to the problem is given by two sets of vertices $R, L$, and an integer $\Delta\leq |L|$. Each vertex $u_i$ in $L$ holds a set $S_i \subseteq R$ of size at least $\Delta$. A set $S_i$ is \emph{hit} by a subset of vertices $Z \subseteq R$ if $S_i \cap Z \neq \emptyset$.
A subset of vertices $Z \subseteq R$ is a \emph{soft hitting set} for the $S_i$'s sets if (i) $|Z|=O(n/\Delta)$ and (ii) the total size of all the $S_i$ sets that are not hit by $Z$ is $O(\Delta \cdot |L|)$. The main benefit of this definition over the standard hitting set definition is in property (i). In the hitting set definition, the size of the hitting set is $O(n\log n/\Delta)$ (i.e., larger by an $O(\log n)$ factor). 

The derandomization of the soft hitting set problem is based on the PRGs of Gopalan et al. \cite{gopalan2012better}. These PRGs can fool a family of DNF formula of $n$ variables with a seed of length $O((\log\log n)^3 \log n)$. Parter and Yogev \cite{ParterY18} observed that the covering conditions of the hitting set problem can be stated as a read-once DNF formula, and used these PRGs to compute hitting sets in $O((\log\log n)^3)$ \clique-rounds.
We extend this framework and show that the soft hitting problem can be stated as a \emph{function} that depends on the probability that a certain DNF formula is satisfied, which allows us to compute the soft hitting set in $O((\log\log n)^3)$ rounds. Overall, the derandomization adds $O((\log{\log{n}})^4)$ term to the complexity of our algorithms. 

We note that under the unbounded local computation assumption, our deterministic bounds can in fact match the randomized ones. Although this assumption is considered to be legit in the \clique model (as well as in other standard distributed models), it is clearly less desirable. Thus in our main constructions we prefer to avoid it by paying an extra $\poly(\log\log n)$ term. 
Nevertheless for completeness, we briefly explain the idea of such a optimal derandomization assuming that the vertices can perfrom unbounded local computation. 
The well-known PRGs by Nisan and Wigderson \cite{NisanW94} provide a logarithmic seed for our probabilistic arguments. Since in our setting, we can compute a chunk of $\lfloor \log n \rfloor$ random bits in $O(1)$ rounds, the entire seed can be computed in $O(1)$ rounds. The only caveat of this PRG is its construction time. However, since in our algorithms the PRG is computed locally, the vertices can use this PRG in black-box in our derandomization algorithms.

\paragraph{Comparison to existing nearly-additive emulator constructions.} 
The most related constructions to ours are the centralized emulator constructions of Elkin and Neiman \cite{elkin2018efficient} and Thorup and Zwick \cite{thorup2006spanners}. The first construction also has distributed implementations in the CONGEST model in $O(n^{\rho})$ time where $\rho$ is an arbitrary small constant \cite{elkin2019near, elkin2018efficient}. For the sake of the discussion, we say that an algorithm is \emph{local} if its exploration radius is \emph{sub-polynomial} 
(i.e., every vertex explores a sub-polynomial ball around it in order to define its output), and otherwise it is \emph{global}. Also, we say that an algorithm is \emph{cluster-centric} if in the clustering procedure of the algorithm, all vertices in the cluster make a collective decision, on behalf of the cluster. An algorithm is \emph{vertex-centric}, if each vertex makes its own individual decisions.
 
With this rough characterization in hand, the EN algorithm is local and cluster-centric. The TZ algorithm, on the other hand, is global and vertex-centric. Our algorithm appears to be an hybrid of the two: it is local and vertex-centric. Both of these properties are important for its efficient implementation. The vertex-centric approach allows to give a very short and simple algorithm, this led to a very fast implementation in the \clique model. The local approach is crucial to obtaining a sub-logarithmic complexity. Implementing the global approach of TZ in the \clique model seems to require at least poly-logarithmic time.
We remark that while the description of our algorithm is different from the description of EN, they seem to be two different ways of viewing a similar process. This connection is useful for the analysis, and indeed some elements in our stretch analysis are inspired by \cite{elkin2019near, elkin2018efficient}. For a more detailed comparison between the algorithms, see Appendix \ref{sec:comparison}. For a recent survey on near-additive spanners and emulators see \cite{EN-HopsetSurvey20}. 

\section{Preliminaries} \label{sec:preliminaries}

To implement the algorithm, we need the following definitions and tools.

\paragraph{Notation.} Given a graph $G = (V,E)$, we denote by $d_G(u,v)$ the distance between the vertices $u$ and $v$ in $G$. If $G$ is clear from the context, we use the notation $d(u,v)$ for $d_G(u,v).$ The $t$-hop distance between $u$ and $v$ in $G$, denoted by $d^t_G(u,v)$, is the distance of the shortest path between $u$ and $v$ that uses at most $t$ edges. 
We denote by $B(v,\delta,G)$ the ball of radius $\delta$ around $v$ in $G$. 
We use the term w.h.p for an event that happens with probability at least $1-\frac{1}{n^c}$ for some constant $c$.
We use the notation $S' \gets Sample(S, p)$ for a subset $S'$ that is sampled from $S$ by adding each vertex to $S'$ independently with probability $p$.

\paragraph{Model.} 
In the \clique model, we have a communication network of $n$ vertices, communication happens in synchronous rounds, and per round, each two vertices exchange $O(\log n)$ bits.
The input and output are local, in the sense that initially each of the $n$ vertices knows only its incident edges in the graph $G=(V, E)$, and at the end each vertex should know the part of the output adjacent to it. For example, its distances from other vertices.

\paragraph{Near-additive emulator.} Given an \emph{unweighted} graph $G=(V,E)$, a \emph{weighted} graph $H=(V,E')$ on the same set of vertices is a $(1+\epsilon,\beta)$-emulator for $G$ if for any pair of vertices $u,v \in V$, it holds that $d_G(u,v) \leq d_H(u,v) \leq (1+\epsilon)d_G(u,v) + \beta.$

\paragraph{Hitting sets.} 
Let $S_v \subseteq V$ be a set of size at least $k$. We say that $A$ is a hitting set of the sets $\{S_v\}_{v \in V}$, if $A$ has a vertex from each one of the sets $S_v.$
We can construct hitting sets easily by adding each vertex to $A$ with probability $p = O(\frac{\log{n}}{k})$, which gives the following.
For a proof see Appendix \ref{sec:proof_hitting}.

\begin{restatable}{lemma}{rhitting}\label{rand_hit}
Let $V' \subseteq V$, and let $\{S_v \subseteq V\}_{v \in V'}$ be a set of subsets of size at least $k$.
There exists a \emph{randomized} algorithm in the \clique model that constructs a hitting set of size $O(n\log{n}/k)$ w.h.p, without communication. 
\end{restatable}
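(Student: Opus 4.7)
The plan is the standard random sampling argument for hitting sets, which requires no inter-vertex communication at all: each vertex decides its own membership in $A$ using only private randomness. Specifically, I would have every vertex $u \in V$ join $A$ independently with probability $p = c\ln n / k$ for a sufficiently large constant $c$. Since each vertex makes this decision locally from its own random bits, no message is sent, and each vertex trivially knows whether it itself belongs to $A$.

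Next I would bound the size of $A$. The expected size is $\E{|A|} = np = c n \ln n / k$. Since the indicator variables $\mathbf{1}[u \in A]$ are independent, a standard Chernoff bound (in the regime $\mu \geq \ln n$, or after a mild case split if $np$ is small, in which case one can simply appeal to $|A| \leq n$) gives $|A| = O(n \log n / k)$ with probability at least $1 - n^{-c'}$ for any desired constant $c'$.

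For the covering property, fix any $v \in V'$ and recall $|S_v| \geq k$. Since the samples are independent,
\[
\Pr[A \cap S_v = \emptyset] \;\leq\; (1-p)^{|S_v|} \;\leq\; (1-p)^k \;\leq\; e^{-pk} \;=\; e^{-c \ln n} \;=\; n^{-c}.
\]
Taking a union bound over the at most $|V'| \leq n$ sets $S_v$ shows that every $S_v$ is hit with probability at least $1 - n^{-(c-1)}$. Combining with the size bound via another union bound and choosing $c$ large enough yields both guarantees with high probability.

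The only subtle point worth spelling out is the ``without communication'' claim: the lemma asserts only that $A$ has been constructed, meaning each vertex knows its own membership in $A$, which is the standard interpretation in the \clique model when a hitting set is produced by independent local sampling. There is no real obstacle here; the proof is essentially a one-line Chernoff plus union-bound argument, and the main thing to be careful about is choosing the constant in $p$ large enough so that both the size bound and the covering bound hold simultaneously w.h.p.
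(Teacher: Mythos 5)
Your proposal is correct and follows essentially the same argument as the paper: independent sampling with probability $c\ln n/k$, a Chernoff bound for the size $O(n\log n/k)$, and the bound $(1-p)^k \leq e^{-c\ln n}$ plus a union bound over the sets $S_v$ for the hitting property, with the construction being communication-free since each vertex samples locally.
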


A recent \emph{deterministic} construction of hitting sets in the \clique model is given in \cite{ParterY18}, which show the following (see Corollary 17 in \cite{ParterY18}).

\begin{lemma} \label{det_hit}
Let $V' \subseteq V$, and let $\{S_v \subseteq V\}_{v \in V'}$ be a set of subsets of size at least $k$, such that $S_v$ is known to $v$.
There exists a \emph{deterministic} algorithm in the \clique model that constructs a hitting set of size $O(n\log{n}/k)$ in $O((\log{\log{n}})^3)$ rounds. 
\end{lemma}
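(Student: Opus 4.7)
The plan is to derandomize Lemma~\ref{rand_hit} by combining a PRG for DNF formulas with the method of conditional expectations, following Parter--Yogev~\cite{ParterY18}. Introduce indicator variables $x_u\in\{0,1\}$ for $u\in V$ denoting ``$u\in A$'', and write the failure event ``some $S_v$ is unhit'' as the DNF
$$\phi(x) \;=\; \bigvee_{v\in V'}\;\bigwedge_{u\in S_v}\bar{x}_u,$$
which has at most $n$ clauses, each of width at most $k$. Under independent $p$-biased sampling with $p = c\log n/k$, a union bound gives $\Pr[\phi=1]\leq 1/n^{c-1}$, and concentration gives $|A|=O(n\log n/k)$ w.h.p.; the size bound can be folded into the same DNF framework via auxiliary clauses. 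The remaining task is thus to deterministically fool this $n$-variable DNF.

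For this I would invoke the PRG of Gopalan, Meka, and Reingold~\cite{gopalan2012better}, which fools any $n$-variable DNF to additive error $1/n^{O(1)}$ using a seed of length $s=O((\log\log n)^3\log n)$. Composing with a standard biasing transformation, a uniformly random seed $\sigma\in\{0,1\}^s$ produces an $A(\sigma)$ that simultaneously hits every $S_v$ and has size $O(n\log n/k)$, except with probability $\leq 1/n^{\Omega(1)}$. The deterministic problem thus reduces to locating one ``good'' seed in $O((\log\log n)^3)$ \clique rounds.

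To do so, partition $\sigma$ into $t = s/\log n = O((\log\log n)^3)$ blocks of $\log n$ bits each, and extend the prefix block-by-block by the method of conditional expectations applied to the pessimistic estimator
$$U(\sigma_{1},\ldots,\sigma_{j}) \;=\; \sum_{v\in V'}\Pr_{\sigma_{j+1},\ldots,\sigma_{t}}\bigl[\text{clause }v\text{ alive under }G(\sigma)\bigr],$$
which union-bounds $\Pr[\phi(G(\sigma))=1]$ and decomposes into per-clause terms. In iteration $j$, the \clique enumerates all $n$ candidate values of block $j$ in parallel: candidate $v$ is routed to vertex $v$; each clause-owner $u\in V'$ locally evaluates its conditional alive probability (possible since it knows $S_u$ and the public prefix); these per-candidate sums are aggregated in $O(1)$ rounds via all-to-all communication; and the minimizer is elected. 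The main obstacle --- and the technical heart of the Parter--Yogev construction --- is verifying that the PRG of~\cite{gopalan2012better} admits a succinct enough description that each per-clause residual probability is computable locally in $O(1)$ rounds; this is the case, and summing over the $t$ chunks yields the claimed $O((\log\log n)^3)$-round algorithm.
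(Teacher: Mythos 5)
Your overall plan---encode the $p$-biased selection by bits, fool the relevant events with the PRG of \cite{gopalan2012better}, and fix a good seed chunk-by-chunk via the method of conditional expectations over $\log n$-bit blocks---is exactly the route behind this lemma: the paper itself does not reprove it but cites Corollary 17 of \cite{ParterY18}, and Section \ref{sec:derandomization} runs the same machinery for the soft variant (Lemma \ref{lem:hittingrandloglog} together with Theorem \ref{thm:generalderand}). There is, however, one step in your write-up that would fail as stated: the PRG of \cite{gopalan2012better} (Theorem \ref{thm:prg}) fools \emph{read-once} DNFs, not arbitrary DNFs, and your global failure formula $\phi(x)=\bigvee_{v\in V'}\bigwedge_{u\in S_v}\bar x_u$ is not read-once, since a vertex $u$ typically belongs to many sets $S_v$; no PRG with seed length $O(\log n\,(\log\log n)^3)$ and error $1/\mathrm{poly}(n)$ is known for general DNFs. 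The standard fix, used in \cite{ParterY18} and in Lemma \ref{lem:hittingrandloglog}, is to apply the PRG \emph{per set}: after implementing the bias $p=c\log n/k$ by declaring $u$ selected iff its block of $\ell=\lfloor\log(1/p)\rfloor$ seed-output bits is all ones, the indicator of the single event ``$S_v$ is hit'' is a read-once DNF over the $n\ell$ output bits, so each set is left unhit with probability at most $(1-p)^{k}+\epsilon\le n^{-c}+n^{-10c}$, and a union bound over the at most $n$ sets gives the existence of a good seed. Conveniently, this per-set view is exactly what your pessimistic estimator uses anyway, since it decomposes into per-clause ``alive'' probabilities, so only the probabilistic existence argument needs to be restated clause-by-clause.

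A second, smaller gap is the size bound: your estimator $U$ only counts unhit sets and does not control $|A|$, and ``folding the size bound into auxiliary clauses'' is not the right mechanism, because what is needed is the conditional \emph{expectation} of $|A|$, not the probability of an event. The construction you are reproducing handles this by minimizing a combined objective---the expected number of selected vertices (each selection indicator is a read-once conjunction of its block, so its expectation is also preserved by the PRG up to $\epsilon$) plus a suitably weighted count of unhit sets---so that any seed whose conditional value does not exceed the overall expectation simultaneously yields $|A|=O(n\log n/k)$ and zero unhit sets; this is precisely the $X+Y$ device of Theorem \ref{thm:generalderand}. With these two repairs, your argument matches the cited construction, and your point about local evaluation of conditional probabilities is consistent with the model, which does not charge local computation.
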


\paragraph{The \kdnearest problem.} In the \kdnearest problem, we are given integers $k,d$, and the goal of each vertex is to learn the distances to the closest $k$ vertices of distance at most $d$. If there are less than $k$ vertices at distance at most $d$ from $v$, it learns the distances to all vertices of distance at most $d$. We call the $k$ closest vertices of distance at most $d$ from $v$, the \kdnearest vertices to $v$, and denote them by $N_{k,d}(v)$. We focus on solving this problem in \emph{unweighted} graphs.

This problem is an extension of the \knearest problem discussed in \cite{DBLP:conf/podc/Censor-HillelDK19}, where the goal is to find the $k$-nearest vertices in the whole graph, without restriction on the distance. Extending ideas from \cite{DBLP:conf/podc/Censor-HillelDK19}, we show that the \kdnearest problem can be solved in time $poly(\log{d})$ if we only consider distances bounded by $d$ and $k=O(n^{2/3})$. The proof appears in Section \ref{sec:nearest}.

\begin{restatable}{theorem}{nearest}\label{thrm:knearest}
The \kdnearest problem can be solved in unweighted graphs in
\[O \left(\left( \frac{k}{n^{2/3} } + \log d \right) \log  d \right)\]
rounds in \clique.
\end{restatable}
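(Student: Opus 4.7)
The plan is to adapt the doubling/squaring approach used by Censor-Hillel et al.\ for the \knearest problem, and exploit the distance threshold $d$ to cap the number of doubling iterations at $\lceil \log d \rceil$ instead of the usual $\lceil \log n \rceil$. Concretely, I would maintain, for each vertex $v$ and phase $i \in \{0,1,\ldots,\lceil \log d \rceil\}$, a set $K^i(v)$ consisting of the $k$ closest vertices to $v$ at $\min(2^i,d)$-hop distance, together with their exact distances (if fewer than $k$ such vertices exist, $K^i(v)$ just lists all of them). Initially $K^0(v) = \{v\} \cup N_G(v)$, which is known locally.

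The transition from phase $i$ to phase $i+1$ is the heart of the algorithm. For each $v$, we form the candidate set $\{(w,\,d(v,u) + d(u,w)) : u \in K^i(v),\, w \in K^i(u)\}$, reduce by taking the minimum of $d(v,u) + d(u,w)$ over all intermediate $u$ for each fixed target $w$, discard any candidate at distance $> d$, and keep the $k$ targets with smallest estimates as $K^{i+1}(v)$. This is precisely a truncated min-plus product of the sparse matrix $A$ whose row $v$ is $K^i(v)$ with itself. For correctness, I would argue: if $w$ is in the true $(k, 2^{i+1})$-nearest set of $v$, fix a shortest $v$-$w$ path $\pi$ of length $\ell \le 2^{i+1}$, and let $u = \pi[\lceil \ell/2 \rceil]$, so both halves have $\le 2^i$ hops. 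Then $u \in K^i(v)$, for otherwise there would be $k$ vertices strictly closer to $v$ than $u$, hence than $w$ (since $d(v,u) \le d(v,w)$), contradicting $w$'s membership in the true $(k,2^{i+1})$-nearest set. A symmetric argument shows $w \in K^i(u)$, so the pair $(v,w)$ is considered in phase $i+1$ with the correct value.

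For the round complexity of one phase, I would appeal to the sparse matrix multiplication primitives of \cite{censor2019sparse, DBLP:conf/podc/Censor-HillelDK19}: a min-plus product of two $n\times n$ matrices with row-density $k$, i.e.\ $O(nk)$ nonzeros, can be performed in $O(k/n^{2/3})$ rounds (and $O(1)$ when $k = O(\sqrt{n})$). An additive $O(\log d)$ accounts for routing the $O(\log d)$-bit distance labels through Lenzen's scheme and locally sorting each row to extract its top-$k$ (noting that we never need to materialize the full $k^2$ entries per row, but can truncate at each reduction step). Summing $O(\log d)$ phases gives the claimed bound $O\bigl((k/n^{2/3} + \log d)\log d\bigr)$.

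The main technical obstacle is the sparse min-plus product at density $k \approx n^{2/3}$: a direct min-plus product could produce up to $k^2$ candidates per row, which naively would blow up the round count. Handling this requires carefully arguing, in the style of \cite{DBLP:conf/podc/Censor-HillelDK19}, that the computation can be structured so that each vertex receives only $\widetilde{O}(n \cdot k/n^{2/3})$-total data using Lenzen routing and keeps a running top-$k$ heap, so that the output never needs to be fully materialized. A secondary subtlety is handling rows $v$ for which $|K^i(v)| < k$ (because the $d$-ball around $v$ is small): this is benign, since such rows are only sparser and one can pad with $\infty$-entries without affecting correctness.
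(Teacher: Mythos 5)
Your algorithm is essentially the paper's: iterate min-plus squaring of a per-row-truncated (top-$k$) distance matrix for $O(\log d)$ phases, using that all values stay bounded by $d$, and argue correctness via a midpoint on the shortest path. The substantive gap is in the round-complexity accounting. The per-phase cost has to come from the \emph{filtered} sparse matrix multiplication primitive of \cite{DBLP:conf/podc/Censor-HillelDK19} (Theorem \ref{theorem:mm-filtered} in the paper), which returns only the $\rho$ smallest entries of each row of $S\cdot T$ in $O((\rho_S\rho_T\rho)^{1/3}/n^{2/3}+\log W)$ rounds, where $W$ bounds the number of possible semiring values; with $\rho_S=\rho_T=\rho=k$ and $W=O(d)$ this is exactly $O(k/n^{2/3}+\log d)$ per phase. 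You instead attribute the additive $\log d$ to routing $O(\log d)$-bit labels and to local sorting, which cost nothing extra in the \clique model (messages already carry $O(\log n)$ bits, local computation is free); the $\log d$ genuinely comes from the filtered-MM primitive's dependence on the value range. Moreover, the obstacle you correctly single out — an unfiltered product at density $k\approx n^{2/3}$ can generate up to $k^2$ candidates per row, and the plain sparse-MM bound (Theorem \ref{thrm:mm}) with $\rho_S=\rho_T=n^{2/3}$ only gives polynomially many rounds — is precisely what that primitive resolves. Your sketched fix (Lenzen routing plus a running top-$k$ heap per row, with bounded per-vertex traffic) does not work as stated, because the candidate entries of a given row are produced at many different machines and must be pruned \emph{before} they can all be shipped to the row's owner; coordinating this pruning is where the existing algorithm spends its $\log W$ factor. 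The clean repair is simply to invoke the filtered-MM theorem as a black box, as the paper does.

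A secondary imprecision: your invariant that $K^i(v)$ is \emph{the} set of $k$ closest vertices within distance $\min(2^i,d)$ is too strong. If $u\notin K^i(v)$ you can only conclude that the row holds $k$ vertices at distance at most $d(v,u)$, not strictly smaller, so with ties there is no contradiction with $w$ being $(k,2^{i+1})$-nearest. The paper's induction uses the weaker invariant that each kept entry is an exact distance and that a missing entry for $w$ implies the row already contains $k$ entries all at distance at most $d(v,w)$; its three-case analysis establishes exactly this, which suffices since any valid set of $k$ closest vertices is an acceptable output. Either adopt that weaker invariant or fix a consistent tie-breaking rule and rerun your displacement argument with it.
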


\paragraph{The \sdk problem.} 
In the \sdk problem, we are given a set of \emph{sources} $S \subseteq V$ and an integer $d$, and the task is to compute for each vertex $v$ the set of sources within hop distance at most $d$, as well as the distances to those sources using paths of at most $d$ hops. The input graph to the problem can be \emph{weighted}. The following is shown in \cite{DBLP:conf/podc/Censor-HillelDK19}.\footnote{In \cite{DBLP:conf/podc/Censor-HillelDK19}, the authors consider a more general variant, called $(S,d,k)$-source detection, where it is required to compute only distances to the $k$ closest sources of hop distance at most $d$. In our applications, $k=|S|$.}

\begin{theorem}\label{thrm:source-detection}
The \sdk problem can be solved in weighted graphs in
\[ O \biggl(\biggl( \frac{ m^{1/3} |S|^{2/3}}{n} +1 \biggr) d \biggl) \]
rounds in \clique, where $m$ is the number of edges in the input graph.
\end{theorem}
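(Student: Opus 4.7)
The plan is to implement a Bellman--Ford style relaxation in the min-plus semiring, in which each round corresponds to one matrix product. Let $A$ be the weighted adjacency matrix of $G$ augmented with zero-weight self-loops, and let $D_i$ be the $n \times |S|$ matrix whose entry $D_i[v,s]$ records the minimum weight of a $v$-to-$s$ walk using at most $i$ edges, or $\infty$ if no such walk exists. Initialize $D_0$ so that $D_0[s,s]=0$ for every $s \in S$ and $D_0[v,s]=\infty$ otherwise. The update rule $D_{i+1} = A \odot D_i$, where $\odot$ denotes min-plus matrix multiplication, is exactly the Bellman--Ford relaxation, so after $d$ iterations $D_d[v,s]=d^d_G(v,s)$, which is precisely the information required by $(S,d)$-source detection.

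Each iteration is then a min-plus product of one matrix with $m$ non-zero entries and another matrix with at most $n |S|$ non-zero entries (since each row of $D_i$ contains at most $|S|$ finite entries). The strategy is to invoke the sparse matrix multiplication primitives developed in \cite{censor2019sparse,DBLP:conf/podc/Censor-HillelDK19}, which were designed precisely for this situation: the round complexity of a sparse min-plus product is expressed as a function of the non-zero counts of the two factors, and substituting $\rho_A = m$ and $\rho_B = n|S|$ yields the $O(m^{1/3}|S|^{2/3}/n + 1)$ bound per iteration. Summing over the $d$ iterations gives the claimed total.

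The main obstacle is showing that a single sparse min-plus product can be realized within the stated bound. This is the heart of the sparse-matrix-multiplication machinery and requires a balanced routing scheme: one distributes the $\rho_A \rho_B$ potential elementary min-plus contributions across the $n$ machines using a hash on the $(v,s)$ output coordinate, and bounds the maximum load by a combination of a Chernoff-type argument on the "light" rows/columns together with an explicit partitioning of "heavy" rows/columns into sub-blocks that each fit the $O(\log n)$-bandwidth budget. Granting this primitive, the rest of the proof is the bookkeeping above; sparsity of $D_i$ is preserved across iterations because the number of columns $|S|$ never grows, so the per-iteration bound applies uniformly.
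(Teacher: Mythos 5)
First, note that the paper does not prove this statement at all: it is imported verbatim from \cite{DBLP:conf/podc/Censor-HillelDK19}, so the only thing to check is whether your reconstruction actually delivers the stated bound. Your skeleton (Bellman--Ford in the min-plus semiring, $D_{i+1}=A\odot D_i$ for $d$ iterations, each iteration realized by the sparse matrix multiplication primitive) is indeed the approach of \cite{DBLP:conf/podc/Censor-HillelDK19}, but the complexity accounting in your second paragraph has a genuine gap. The two-factor sparse product bound (Theorem \ref{thrm:mm} here, from \cite{censor2019sparse}) with total non-zero counts $m$ and $n|S|$ gives $O\bigl((m\cdot n|S|)^{1/3}/n+1\bigr)=O\bigl(m^{1/3}|S|^{1/3}/n^{2/3}+1\bigr)$ per iteration, not $O\bigl(m^{1/3}|S|^{2/3}/n+1\bigr)$ as you assert; the two differ by a factor $(n/|S|)^{1/3}$, and your bound is the weaker one whenever $|S|<n$. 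This is not a cosmetic loss: in the applications in this paper one has $|S|=O(\sqrt{n})$ and possibly dense $m$, where the claimed bound gives $O(1)$ rounds per iteration while yours gives $n^{1/6}$ rounds, which would wreck the $\poly(\log\log n)$ results.

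To actually get $m^{1/3}|S|^{2/3}/n$ you need the stronger primitive of \cite{DBLP:conf/podc/Censor-HillelDK19} whose round complexity also depends on the sparsity of the \emph{output} matrix, namely $O\bigl((\rho_S\rho_T\rho_P)^{1/3}/n^{2/3}+1\bigr)$ in density notation (the same three-parameter form as Theorem \ref{theorem:mm-filtered} here), together with the observation that in every iteration both $D_i$ and the product $A\odot D_i$ are supported on the same $|S|$ columns, so their row densities are at most $|S|$. Plugging $\rho_A=m/n$, $\rho_{D_i}\le|S|$, $\rho_{D_{i+1}}\le|S|$ gives $O\bigl(((m/n)|S|^2)^{1/3}/n^{2/3}+1\bigr)=O\bigl(m^{1/3}|S|^{2/3}/n+1\bigr)$ per iteration, and summing over the $d$ iterations yields the theorem. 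So the missing idea is precisely that the output sparsity must enter the per-product cost; without it your argument proves only a strictly weaker statement.
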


\paragraph{Bounded Hopsets.}

Note that in Theorem \ref{thrm:source-detection} there is a linear dependence on $d$, hopsets would allow us to solve the same task in time $poly(\log{d})$ at the price of obtaining $(1+\epsilon)$-approximations for the distances. 

Given a graph $G$, a $(\beta,\epsilon)$-hopset $H$ is a weighted graph on the same vertices, such that the $\beta$-hop distances in $G \cup H$ give $(1+\epsilon)$-approximation for the distances in $G$. The importance of hopsets comes from the fact that they allow to focus only on short paths in $G \cup H$ which is crucial for obtaining a fast algorithm. In \cite{DBLP:conf/podc/Censor-HillelDK19} it is shown that hopsets can be constructed in $O(\frac{\log^2{n}}{\epsilon})$ rounds in the \clique, which is too expensive for our purposes. Here, we focus on constructing \emph{bounded} hopsets, that give good approximation only to paths of at most $t$ hops, and show that they can be constructed in time which is just $poly(\log{t})$. 

Bounded hopsets are defined as follows. Given a graph $G$, a $(\beta,\epsilon, t)$-hopset $H$ is a weighted graph on the same vertices, such that the $\beta$-hop distances in $G \cup H$ give $(1+\epsilon)$-approximation for all pairs of vertices in $G$ where $d_G(u,v)=d^t_G(u,v)$. In unweighted graphs, these are all pairs of vertices at distance at most $t$ from each other. For any such pair, we have $d_G(u,v) \leq d_{G \cup H}(u,v)$ and $$d_G(u,v) \leq d_{G \cup H}^{\beta}(u,v) \leq (1+\epsilon)d_G(u,v).$$

Extending ideas from \cite{DBLP:conf/podc/Censor-HillelDK19}, we show the following. The proof appears in Section \ref{sec:hopsets}.

\begin{restatable}{theorem}{hopset}
\label{hopset_thm}
Let $G$ be an unweighted undirected graph and let $0< \epsilon < 1$. 
\begin{enumerate}
\item There is a randomized construction of a $(\beta,\epsilon, t)$-hopset with $O(n^{3/2} \log{n})$ edges and $\beta=O\bigl((\log{t})/\epsilon\bigr)$ that takes $O\bigl((\log^2{t})/\epsilon \bigr)$ rounds w.h.p in the \clique model.
\item There is a deterministic construction of a $(\beta,\epsilon, t)$-hopset with $O(n^{3/2} \log{n})$ edges and $\beta=O\bigl((\log{t})/\epsilon\bigr)$ that takes $O\bigl((\log^2{t})/\epsilon + (\log{\log{n}})^3 \bigr)$ rounds in the \clique model.
\end{enumerate}
\end{restatable}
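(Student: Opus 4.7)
The plan is to adapt the hopset construction of \cite{DBLP:conf/podc/Censor-HillelDK19} by truncating the outer scale-loop so that only distance scales up to $t$ are processed. The CHDK19 construction proceeds in phases indexed by $i=0,1,\dots,\lceil\log n\rceil$, where phase $i$ augments the hopset so that every pair at distance at most $2^i$ in $G$ acquires a $(1+\epsilon)$-approximate path of $\beta=O(\log n/\epsilon)$ hops in $G\cup H$. Since the bounded guarantee only requires handling pairs $(u,v)$ with $d_G(u,v)=d_G^t(u,v)$, it suffices to run the phases for $i=0,1,\dots,\lceil\log t\rceil$, which immediately drops the hop-bound to $\beta=O(\log t/\epsilon)$.

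Concretely, first I would construct a set $A\subseteq V$ of size $O(\sqrt{n}\log n)$ that hits every vertex-set of size $\sqrt{n}$: randomly via Lemma \ref{rand_hit} with $k=\sqrt{n}$ for part (1), and deterministically via Lemma \ref{det_hit} at an additive cost of $O((\log\log n)^3)$ rounds for part (2). Then, for each scale $i=0,1,\dots,\lceil\log t\rceil$, I would invoke Theorem \ref{thrm:source-detection} from the source set $A$ in the current graph $G\cup H_{i-1}$ with hop-parameter $\Theta(\beta)$, and for every vertex $v$ add a shortcut edge to each $a\in A$ discovered within $\beta$ hops, weighted by the reported hop-bounded distance. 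This produces a hopset with $O(n\cdot|A|)=O(n^{3/2}\log n)$ edges. Plugging $|S|=O(\sqrt{n}\log n)$, $m=O(n^{3/2}\log n)$ and $d=\Theta(\beta)$ into Theorem \ref{thrm:source-detection} gives $m^{1/3}|S|^{2/3}/n=O(n^{-1/6}\log n)=o(1)$, so each phase costs $O(\beta)=O(\log t/\epsilon)$ rounds, and the $O(\log t)$ phases together cost $O(\log^2 t/\epsilon)$ rounds.

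The stretch and hop analysis is inherited from CHDK19 in its truncated form: one proves by induction on $i$ that after phase $i$, every pair $(u,v)$ with $d_G(u,v)\le 2^i$ admits a $\beta$-hop path in $G\cup H_i$ of length at most $(1+\epsilon)d_G(u,v)$. The inductive step exploits the fact that on any shortest $u$-$v$ path whose length lies in $(2^{i-1},2^i]$, the hitting-set property of $A$ guarantees a sampled vertex $a\in A$ sufficiently close to one endpoint, so the newly added shortcut incident to $a$ splices two shorter $\beta$-hop approximate sub-paths (obtained at scale $i-1$) into a single $\beta$-hop approximate path at scale $i$, with the $\epsilon$-slack absorbed into the $(1+\epsilon)$ factor exactly as in CHDK19.

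The main obstacle is to confirm that this truncation does not smuggle any residual dependence on $\log n$ into $\beta$. In particular, one must check that within each phase, the source-detection step only needs to explore paths of hop-length $O(\log t/\epsilon)$ rather than $O(\log n/\epsilon)$, and that the scale-$2^i$ inductive invariant can still be sustained using shortcuts obtained from exploration radius $O(\beta)$ in $G\cup H_{i-1}$. A minor but worth-noting point is that $G\cup H_{i-1}$ is a weighted graph, yet Theorem \ref{thrm:source-detection} is phrased in terms of hop distance, so weight blowup is harmless. Once these invariants are verified, both parts of Theorem \ref{hopset_thm} follow: the randomized bound is immediate, and the deterministic bound is obtained by replacing the random sample with the hitting set of Lemma \ref{det_hit}, incurring the stated $(\log\log n)^3$ additive term.
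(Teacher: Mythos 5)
Your construction adds only shortcut edges incident to the hitting set $A$ (every vertex connects to the $A$-vertices it detects within $\beta$ hops), and this is not enough: the paper's construction, following \cite{DBLP:conf/podc/Censor-HillelDK19} and Thorup--Zwick, crucially also includes \emph{bunch} edges from every vertex $v\notin A_1$ to all vertices strictly closer to $v$ than its pivot $p(v)\in A_1$, restricted to distance $t$ and computed exactly via the \kdnearest algorithm with $k=\sqrt{n}\log n$, $d=t$. These bunch edges are what make the inductive stretch argument go through: on each segment of the shortest path, either the pivot is closer than the segment endpoint (dense case, jump to $A_1$), or the bunch edge covers the segment exactly (sparse case). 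Your sketch implements only the first half of this dichotomy. Concretely, if $G$ contains a long path lying in a sparse region (each $t$-ball has far fewer than $\sqrt{n}$ vertices), the hitting-set guarantee says nothing, no $A$-vertex need be anywhere near the path, and your hopset contains no useful edge for a pair $u,v$ at distance $t$ on that path; no $\beta$-hop $(1+\epsilon)$-approximate path exists in $G\cup H$, so the claimed invariant "after phase $i$, every pair at distance $\le 2^i$ has a $\beta$-hop $(1+\epsilon)$ path" fails already in this case. Even in the dense case, the nearest $A$-vertex is only guaranteed among the $\sqrt{n}$ nearest, which can be at distance up to $t\gg\epsilon\, d_G(u,v)$, so "a sampled vertex sufficiently close to one endpoint" cannot be justified without the bunch comparison.

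There is also a quantitative slip in the round count: you plug $m=O(n^{3/2}\log n)$ into Theorem \ref{thrm:source-detection}, but the source detection runs on $G\cup H_{i-1}$ and $G$ itself may have up to $\Theta(n^2)$ edges; with your larger hitting set of size $O(\sqrt{n}\log n)$ (you hit sets of size $\sqrt{n}$ rather than $\sqrt{n}\log n$), the term $m^{1/3}|S|^{2/3}/n$ becomes $\Theta(\log^{2/3} n)$, giving $O(\log^{2/3}n\cdot\log^2 t/\epsilon)$ rounds rather than the claimed $O(\log^2 t/\epsilon)$ --- which would also ruin the $\poly(\log\log n)$ applications. The paper avoids both problems at once: $A_1$ hits the \ktnearest sets with $k=\sqrt{n}\log n$, so $|A_1|=O(\sqrt{n})$ and each bounded bunch has at most $\sqrt{n}\log n$ vertices (giving the $O(n^{3/2}\log n)$ size), the bunches of vertices outside $A_1$ are computed once in $O(\log^2 t)$ rounds, and only the $A_1\times A_1$ distances are iterated over the $\log t$ scales at cost $O(\beta)$ each. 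Your high-level idea of truncating the scale loop at $\log t$ and of using \kdnearest-type primitives to keep everything $t$-bounded is indeed the paper's idea; what is missing is the bunch structure that carries the correctness, plus knowing the sets $N_{k,t}(v)$ locally, which is also what Lemma \ref{det_hit} needs for the deterministic variant.
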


\section{$(1+\epsilon,\beta)$-Emulators}\label{sec:emulator}

\subsection{Warm-up: $(1+\epsilon,\Theta(\frac{1}{\epsilon}))$-emulator with $\tilde{O}(n^{1+ \frac{1}{4}})$ edges.}

To illustrate the intuition behind our algorithm, we start by describing a simplified variant with $\tilde{O}(n^{1+\frac{1}{4}})$ edges.
The construction samples two random sets $S_1,S_2$ as follows: $S_1$ is a random set of size $O(n^{3/4})$, defined by adding each vertex to $S_1$ with probability $\frac{1}{n^{1/4}}$, and $S_2$ is a random set of size $O(n^{1/4})$ defined by adding each vertex of $S_1$ to $S_2$ with probability $\frac{1}{n^{1/2}}.$
We add to the emulator the following (possibly weighted) edges: 
\begin{enumerate}
\item All edges adjacent to low-degree vertices of degree at most $n^{1/4} \log{n}$. In addition, each vertex of degree at least $n^{1/4} \log{n}$ adds an edge to a neighbour in $S_1$, which exists w.h.p, as $S_1$ is a random set of size $n^{3/4}.$\label{low_degree}
\item Vertices $v \in S_1$ consider their ball $B(v, \delta, G)$ of radius $\delta = \frac{1}{\epsilon} + 2$ around $v$. If this ball has at most $\sqrt{n}\log{n}$ vertices from $S_1$, $v$ adds an edge to each of them. Otherwise, w.h.p, this ball has a representative from $S_2$, and $v$ adds a weighted edge to it.\label{S1_edges}
\item The vertices in $S_2$ add edges to all the vertices.\label{S2_edges}
\end{enumerate}
Each edge $\{u,v\}$ added to the emulator has a weight of $d_G(u,v)$. 
Note that in Line \ref{low_degree} each vertex adds at most $n^{1/4} \log{n}$ edges. In Line \ref{S1_edges}, each vertex of $S_1$ adds at most $\sqrt{n}\log{n}$ edges, as $|S_1|=O(n^{3/4})$ is sums up to $\tilde{O}(n^{1+ \frac{1}{4}})$ edges. Finally, in Line \ref{S2_edges}, we add at most $O(n^{1+\frac{1}{4}})$ edges as $|S_2| = n^{1/4}$. Hence, in total the emulator has $\tilde{O}(n^{1+ \frac{1}{4}})$ edges.\\[-7pt]

\textbf{Stretch analysis.} We next sketch the stretch analysis. Let $u,v \in V$ such that $\pi(u,v)$ is the shortest $u-v$ path. If $\pi(u,v)$ only contains low degree vertices it is contained in the emulator. Otherwise, $\pi(u,v)$ has at least one vertex $w$ with degree at least $n^{1/4} \log{n}$. As $S_1$ is a random set of size $O(n^{3/4})$, then $w$ has a neighbour $s \in S_1$. There are 2 options for $s$, either it adds edges to all vertices from $S_1$ of radius $\delta = \frac{1}{\epsilon} + 2$ around it, or it adds one edge to a vertex in $S_2$ at distance at most $\delta$. In the first case, we say that $w$ is $1$-clustered and in the second case we say that $w$ is $2$-clustered. We break into cases according to this.
The case that $w$ is 2-clustered is actually easy, as in this case there if a vertex from $S_2$ at distance $\Theta(\frac{1}{\epsilon})$ from $\pi(u,v)$, since we added edges between all vertices to vertices in $S_2$, it follows that the emulator has a $u-v$ path with additive stretch $\Theta(\frac{1}{\epsilon})$.
The more interesting case is that all high-degree vertices are $1$-clustered, in this case we can break the path to sub-paths of length $\frac{1}{\epsilon}$ and show that in each of them we get an additive $+4$ stretch. The intuition is that if we take the first and last high degree vertices in each such sub-path, the emulator has a short path between them since their neighbours in $S_1$ have an edge between them. As $\pi(u,v)$ has $\epsilon d(u,v)$ segments of length $\frac{1}{\epsilon}$, this results in a $(1+4\epsilon, 4)$-stretch overall, where the last additive term is for the case that $d(u,v) < \frac{1}{\epsilon}.$ If we sum up over all cases, we get a stretch of $(1+4\epsilon, \Theta(\frac{1}{\epsilon})).$ By rescaling of $\epsilon$ we can get a $(1+\epsilon, \Theta(\frac{1}{\epsilon}))$ stretch.\\[-7pt]

\subsection{Algorithm for $(1+\epsilon,\beta)$-Emulators} \label{sec:alg}

We next describe an algorithm that builds an emulator of size $O(n \log{\log{n}})$ edges and stretch of $(1+\epsilon, O(\frac{\log{\log{n}}}{\epsilon})^{\log{\log{n}}}).$
Later we show a variant of the algorithm that can be efficiently implemented in the \clique.

Let $\emptyset = S_{r+1} \subset S_r \subset S_{r-1} \ldots \subset \ldots S_1 \subset S_0=V$ be random subsamples of vertices, such that 
$$S_i \gets Sample(S_{i-1}, p_i) \mbox{~~for every ~~} i \in \{1, \ldots, r\},$$
where $p_i = n^{-\frac{2^{i-1}}{2^r}}$ for $1 \leq i \leq r-1$, and $p_r = n^{-\frac{1}{2^r}}$.\footnote{The definition of $p_r$ would be useful for the \clique implementation, for other purposes it is possible to define it in the same way as the other $p_i$ values.} 

For $i \in \{0, \ldots, r\}$, we define the values $R_i, \delta_i$. Intuitively, $\delta_i$ is a radius such that vertices in $S_i$ consider the ball of radius $\delta_i$ around them, and add edges to certain vertices there, and $R_i$ would bound the distance between certain vertices to the closest vertex in $S_i$. The values $R_i, \delta_i$ are defined as follows. $R_0 = 0, R_i = \sum_{j=0}^{i-1} \delta_j$, where $\delta_i = \frac{1}{\epsilon^i} + 2R_i.$
For every $i \in \{0,\ldots, r\}$, define $s \in S_i$ to be $i$-dense if 
$$B(s, \delta_i ,G)\cap S_{i+1} \neq  \emptyset,$$
and otherwise it is $i$-sparse. The emulator would include edges from each $i$-sparse vertex $s \in S_i$ to all vertices in $S_i$ of distance at most $\delta_i$, and one edge from each $i$-dense vertex $s \in S_i$ to a vertex in $S_{i+1}$ of distance at most $\delta_i$.

For each vertex $v$ and $i \in \{0,\ldots, r\}$, we define $c_i(v)$ to be a vertex in $S_i$ of distance at most $R_i$ from $v$ if such exists. $c_i(v)$ is defined inductively as follows. Set $c_0(v)=v$. The vertex $c_{i+1}(v)$ exists only if $c_i(v)$ exists and is $i$-dense. In this case, we define $c_{i+1}(v)$ to be the closest vertex to $c_i(v)$ in $B(c_i(v), \delta_i ,G)\cap S_{i+1}$, such vertex exists since $c_i(v)$ is $i$-dense. We say that a vertex is $i$-clustered if $c_i(v)$ exists. In particular, all vertices are 0-clustered, and each $i$-clustered vertex is also $i'$-clustered for any $0 \leq i' \leq i$.

\paragraph{The edges of the emulator.}
For every $v \in S_i \setminus S_{i+1}$, we add to the emulator $H$ the following edges:
\begin{itemize}
\item
If $v$ is $i$-dense, we add one edge to $c_{i+1}(v)$.
\item
If $v$ is $i$-sparse, we add edges to all vertices $u \in B(v,\delta_i,G) \cap S_i.$
\end{itemize}
For each edge $\{u,v\}$ added to the emulator, its weight is the weight of the shortest $u-v$ path in $G$.
Note that each vertex is exactly in one of the sets $S_i \setminus S_{i+1}$, hence it adds edges only once in the process. 
We next show that the distance between $v$ to $c_i(v)$ in the emulator is indeed at most $R_i$.

\begin{claim} \label{claim_Ri}
Every $i$-clustered vertex $v$ is connected to $c_i(v)$ in the emulator by a path of at most $i$ edges and of total weight at most $R_i$.
\end{claim}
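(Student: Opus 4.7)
The natural strategy is induction on $i$, and this is what I would pursue. Below is the plan.

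\textbf{Base case ($i=0$).} Every vertex $v$ is $0$-clustered with $c_0(v)=v$, so the empty path (zero edges, zero weight) trivially realises the bound $R_0=0$.

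\textbf{Inductive step.} Assume the claim holds for $i$, and let $v$ be $(i+1)$-clustered. By the definition of $c_{i+1}$, the vertex $c_i(v)$ exists, is $i$-dense, and $c_{i+1}(v)$ is the closest element to $c_i(v)$ in $B(c_i(v),\delta_i,G)\cap S_{i+1}$. The inductive hypothesis gives a path $P$ in $H$ from $v$ to $c_i(v)$ with at most $i$ edges and total weight at most $R_i$. It therefore suffices to exhibit, in the emulator, a single edge from $c_i(v)$ to $c_{i+1}(v)$ of weight at most $\delta_i$: concatenating it with $P$ yields at most $i+1$ edges and weight at most $R_i+\delta_i=R_{i+1}$, as required.

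\textbf{Producing the last edge.} Here I would split into two cases depending on the membership of $c_i(v)$. First, a useful small observation: any vertex $w\in S_j$ satisfies $c_j(w)=w$ (unroll the inductive definition of $c_j$, using that $w\in B(w,\delta_{j'},G)\cap S_{j'+1}$ at every step for $j'<j$). Applied to $w=c_i(v)\in S_i$, this gives $c_i(c_i(v))=c_i(v)$, hence $c_{i+1}(c_i(v))$ is the closest vertex to $c_i(v)$ in $B(c_i(v),\delta_i,G)\cap S_{i+1}$, which is exactly $c_{i+1}(v)$ (assuming, as is standard, a fixed consistent tie-breaking rule, e.g., by vertex ID).

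Now the two cases. If $c_i(v)\in S_i\setminus S_{i+1}$, then since $c_i(v)$ is $i$-dense, the edge-insertion rule adds to $H$ the edge from $c_i(v)$ to $c_{i+1}(c_i(v))=c_{i+1}(v)$, whose weight equals $d_G(c_i(v),c_{i+1}(v))\leq \delta_i$, as desired. If instead $c_i(v)\in S_{i+1}$, then $c_i(v)$ itself lies in $B(c_i(v),\delta_i,G)\cap S_{i+1}$, so the closest such vertex is $c_i(v)$; thus $c_{i+1}(v)=c_i(v)$ and no additional edge is needed, the path $P$ already reaching $c_{i+1}(v)$ with at most $i\leq i+1$ edges and weight at most $R_i\leq R_{i+1}$.

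\textbf{Where the care is needed.} The only subtlety is the identification $c_{i+1}(c_i(v))=c_{i+1}(v)$, which hinges on consistent tie-breaking and on the observation $c_j(w)=w$ for $w\in S_j$; without this, the edge added by the algorithm at $c_i(v)$ would a priori go to a different $S_{i+1}$-representative than the canonical one used to define $c_{i+1}(v)$. Everything else is bookkeeping against the recursion $R_{i+1}=R_i+\delta_i$.
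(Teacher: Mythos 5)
Your proof is correct and follows essentially the same induction as the paper: base case $c_0(v)=v$, then split on whether $c_i(v)\in S_{i+1}$ (where $c_{i+1}(v)=c_i(v)$) or $c_i(v)\in S_i\setminus S_{i+1}$ (where the $i$-dense rule supplies an edge of weight at most $\delta_i$, giving $R_i+\delta_i=R_{i+1}$). The only difference is that you explicitly justify the identification $c_{i+1}(c_i(v))=c_{i+1}(v)$ via the observation $c_j(w)=w$ for $w\in S_j$, a step the paper's proof treats as immediate ``by the definition''; this is a reasonable extra bit of care, not a different argument.
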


\begin{proof}
The proof is by induction. For $i=0$, it holds that $c_0(v)=v,$ hence the claim clearly holds. Assume it holds for $i$, and we prove it for $i+1$. Let $v$ be an $(i+1)$-clustered vertex, then $v$ is also $i$-clustered, and from the induction hypothesis, a path of total weight at most $R_i$ between $v$ and $c_i(v)$ exists in the emulator, and the path has at most $i$ edges. Since $v$ is $(i+1)$-clustered, then $c_i(v)$ is $i$-dense. One case it that $c_i(v) \in S_{i+1}$, in this case $c_{i+1}(v) = c_i(v),$ hence the claim clearly holds. Otherwise, $c_i(v) \in S_i \setminus S_{i+1},$ and since $c_i(v)$ is $i$-dense it adds an edge to a vertex in $S_{i+1}$ of distance at most $\delta_i$. By the definition, this vertex is $c_{i+1}(v)$. Hence, there is a path of at most $i+1$ edges in the emulator between $v$ and $c_{i+1}(v)$ of weight at most $R_i + \delta_i = \sum_{j=0}^{i-1} \delta_j + \delta_i = \sum_{j=0}^{i} \delta_j = R_{i+1}$, as needed.
\end{proof}

\subsection{Size analysis}

We next bound the size of the emulator, simple inductive arguments show the following, for a proof see Appendix \ref{sec:emulator_app_size}.

\begin{restatable}{claim}{sizeSi}\label{size_Si}
For $0 \leq i <r$, it holds that $S_i$ is of size $n^{1-\frac{2^i-1}{2^r}}$ in expectation. 
\end{restatable}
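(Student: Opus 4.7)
The plan is to prove the claim by a straightforward induction on $i$, exploiting linearity of expectation together with the telescoping structure of the sampling probabilities $p_i = n^{-2^{i-1}/2^r}$.

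For the base case $i=0$, we have $S_0 = V$, so $|S_0| = n = n^{1 - (2^0-1)/2^r}$ since the exponent $(2^0-1)/2^r$ is zero. For the inductive step, assume $\mathbb{E}[|S_{i-1}|] = n^{1 - (2^{i-1}-1)/2^r}$. Since $S_i$ is obtained from $S_{i-1}$ by keeping each element independently with probability $p_i$, linearity of expectation gives $\mathbb{E}[|S_i|] = \mathbb{E}[|S_{i-1}|] \cdot p_i$. Substituting $p_i = n^{-2^{i-1}/2^r}$ and adding exponents,
\[
\mathbb{E}[|S_i|] \;=\; n^{1 - (2^{i-1}-1)/2^r} \cdot n^{-2^{i-1}/2^r} \;=\; n^{1 - (2^{i-1} - 1 + 2^{i-1})/2^r} \;=\; n^{1 - (2^i - 1)/2^r},
\]
which is exactly the claimed bound. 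The restriction $i < r$ ensures we only ever invoke the formula $p_j = n^{-2^{j-1}/2^r}$ (the special value of $p_r$ is never used in the product $\prod_{j=1}^{i} p_j$).

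There is no real obstacle here: once one notices that $\sum_{j=1}^{i} 2^{j-1} = 2^i - 1$, the exponents telescope and the claim drops out immediately. The only subtlety worth flagging is the conditional form of the expectation — formally, $\mathbb{E}[|S_i| \mid S_{i-1}] = p_i |S_{i-1}|$, and then the tower property yields $\mathbb{E}[|S_i|] = p_i \mathbb{E}[|S_{i-1}|]$ — but this is standard and requires no work beyond citing linearity.
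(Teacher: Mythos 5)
Your proof is correct and follows essentially the same route as the paper's: induction on $i$, using $\mathbb{E}[|S_i|]=p_i\,\mathbb{E}[|S_{i-1}|]$ with $p_i=n^{-2^{i-1}/2^r}$ and summing the exponents via $\sum_{j=1}^{i}2^{j-1}=2^i-1$. Your added remarks on the tower property and on why the special value of $p_r$ never enters for $i<r$ are fine but not needed beyond what the paper already does.
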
 

\begin{restatable}{claim}{probSr} \label{claim_prob_Sr}
Each vertex is in $S_r$ with probability $\frac{1}{\sqrt{n}}$.
\end{restatable}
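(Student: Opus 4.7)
The plan is to exploit the fact that the nested sampling is independent across stages, so for any fixed vertex $v$, the event $v \in S_r$ is the intersection of the $r$ independent events $v \in S_i$ conditioned on $v \in S_{i-1}$, each happening with probability $p_i$. Therefore $\Pr[v \in S_r] = \prod_{i=1}^{r} p_i$, and the whole task reduces to evaluating this product with the prescribed probabilities $p_i = n^{-2^{i-1}/2^r}$ for $1 \le i \le r-1$ and $p_r = n^{-1/2^r}$.

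I would then collect exponents: the exponent of $n$ in the product is
\[
-\frac{1}{2^r}\Bigl(\sum_{i=1}^{r-1} 2^{i-1} + 1\Bigr).
\]
Using the geometric sum identity $\sum_{i=1}^{r-1} 2^{i-1} = 2^{r-1} - 1$, the parenthesized quantity telescopes to $2^{r-1}$, so the exponent becomes $-2^{r-1}/2^r = -1/2$. Hence $\Pr[v \in S_r] = n^{-1/2} = 1/\sqrt{n}$, as claimed.

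The only place anything subtle happens is the slight asymmetry in the definition of $p_r$ versus the other $p_i$'s — this is precisely what makes the telescoping land on $2^{r-1}$ rather than $2^{r-1} - 1$, and it is the reason this specific value of $p_r$ was chosen (as hinted in the footnote near the definition). There is no real obstacle; the proof is a one-line computation once the independence of the sampling stages is noted.
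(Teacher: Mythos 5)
Your proof is correct and follows essentially the same route as the paper's: multiply the stage probabilities $p_1\cdots p_r$, evaluate the geometric sum $\sum_{i=1}^{r-1}2^{i-1}=2^{r-1}-1$, and observe that the special choice $p_r=n^{-1/2^r}$ makes the total exponent $-2^{r-1}/2^r=-1/2$. Nothing is missing.
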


\begin{restatable}{claim}{sizeSr} \label{size_Sr}
$S_r$ is of size $O(\sqrt{n})$ w.h.p.
\end{restatable}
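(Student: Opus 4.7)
The plan is to combine Claim \ref{claim_prob_Sr} with a standard Chernoff bound. First, I would observe that the sampling process that defines $S_r$ is performed independently for each vertex: a vertex $v$ ends up in $S_r$ iff it survives all $r$ independent coin flips of probabilities $p_1,\ldots,p_r$, and these coin flips are independent across vertices. Hence the indicator variables $\{X_v = \mathbf{1}[v \in S_r]\}_{v \in V}$ are mutually independent, and by Claim \ref{claim_prob_Sr} each has expectation $\Pr[X_v=1] = 1/\sqrt{n}$. Writing $|S_r| = \sum_{v \in V} X_v$, linearity of expectation yields $\E{|S_r|} = \sqrt{n}$.

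Next I would apply a multiplicative Chernoff bound to this sum of independent Bernoulli variables with mean $\mu = \sqrt{n}$. For any constant $\delta>0$, this gives
\[
\pr{|S_r| \geq (1+\delta)\sqrt{n}} \;\leq\; \exp\!\left(-\frac{\delta^2 \sqrt{n}}{3}\right),
\]
which is of the form $e^{-\Omega(\sqrt{n})}$ and is therefore bounded by $1/n^c$ for every constant $c$ and sufficiently large $n$. Taking, say, $\delta=1$ concludes that $|S_r| \leq 2\sqrt{n} = O(\sqrt{n})$ with high probability.

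I do not foresee a real obstacle: the only subtle point is the justification that the $X_v$'s are independent, which follows because the hierarchy $S_r \subset S_{r-1} \subset \cdots \subset S_0$ is built by independently flipping a coin for each vertex at each level, so the survival events for different vertices do not share randomness. Once this independence is noted, the bound is an immediate application of Chernoff.
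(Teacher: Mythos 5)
Your proposal is correct and follows essentially the same route as the paper: invoke Claim \ref{claim_prob_Sr} for the per-vertex probability $1/\sqrt{n}$, note the independence of the sampling across vertices, and apply a multiplicative Chernoff bound with $\delta=1$ to conclude $|S_r|\leq 2\sqrt{n}$ w.h.p. The only cosmetic difference is the exact Chernoff exponent used ($\delta^2\mu/3$ versus $\delta^2\mu/(2+\delta)$), which is immaterial.
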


We next show the following.

\begin{claim} \label{claim_edges}
For any $0 \leq i<r$, each vertex $s \in S_i \setminus S_{i+1}$ adds at most $O(n^{\frac{2^i}{2^r}})$ edges to the emulator in expectation.
\end{claim}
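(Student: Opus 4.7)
My plan is to fix a vertex $s$, condition on the event $s \in S_i \setminus S_{i+1}$, and split the analysis according to whether $s$ is $i$-dense or $i$-sparse. By the construction an $i$-dense vertex contributes exactly one edge, so only the $i$-sparse case requires work. Let $B := B(s,\delta_i,G) \setminus \{s\}$ and $b := |B|$. Under the conditioning $s \notin S_{i+1}$, the event that $s$ is $i$-sparse coincides exactly with $\{B \cap S_{i+1} = \emptyset\}$, and in that case $s$ adds one edge to each vertex of $B \cap S_i$, contributing $|B \cap S_i|$ edges. Hence
\[
\E{\text{edges contributed by } s} \;\leq\; 1 \;+\; \E{|B \cap S_i|\cdot \mathbf{1}\bigl(B \cap S_{i+1}=\emptyset\bigr)}.
\]

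The crux is to exploit the fact that, while the chain $S_1 \supseteq S_2 \supseteq \cdots \supseteq S_r$ is highly correlated across levels for a single vertex, the whole sampling process is independent across distinct vertices. Writing $q_i := p_1 \cdots p_i (1-p_{i+1}) = \pr{v \in S_i \setminus S_{i+1}}$ and $P_{i+1} := p_1 \cdots p_{i+1} = \pr{v \in S_{i+1}}$ for the single-vertex marginals, linearity of expectation combined with inter-vertex independence yields
\[
\E{|B \cap S_i|\cdot \mathbf{1}\bigl(B \cap S_{i+1}=\emptyset\bigr)} \;=\; \sum_{v \in B}\pr{v \in S_i \setminus S_{i+1}}\prod_{v' \in B \setminus \{v\}}\pr{v' \notin S_{i+1}} \;=\; b \cdot q_i \cdot (1-P_{i+1})^{b-1}.
\]

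To bound this uniformly in $b$, I would invoke the elementary inequality $b(1-p)^{b-1} = O(1/p)$ for $p \in (0,1)$, proved by splitting into $b \leq 1/p$ (immediate) and $b > 1/p$ (use $(1-p)^{b-1} \leq e^{-(b-1)p}$ and optimize $b\,e^{-bp}$). Applied with $p = P_{i+1}$ this gives $b \cdot q_i \cdot (1-P_{i+1})^{b-1} = O(q_i/P_{i+1}) = O((1-p_{i+1})/p_{i+1}) = O(1/p_{i+1})$. For $0 \leq i \leq r-2$ we have $1/p_{i+1} = n^{2^i/2^r}$ on the nose; for the boundary case $i = r-1$ we have $1/p_r = n^{1/2^r} \leq n^{2^{r-1}/2^r}$. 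In every case the expected number of edges is $O(n^{2^i/2^r})$, as claimed.

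The main obstacle I foresee is keeping the two kinds of (in)dependence straight: the nested memberships $\{v \in S_j\}_j$ for a single $v$ are highly correlated, whereas across different vertices the whole sampling chain is independent. Once this distinction is made explicit, the argument reduces to the short probability calculation above together with a one-line function maximization, and one must remember to verify the corner case $i = r-1$ separately because of the nonstandard definition of $p_r$.
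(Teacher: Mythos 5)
Your proof is correct and follows essentially the same route as the paper's: bound a vertex's contribution by $1$ plus (number of candidate $S_i$-vertices in the ball) times the probability that none of them lands in $S_{i+1}$, show this is $O(1/p_{i+1})$ via an $x e^{-xp}$-type optimization, and handle $i=r-1$ separately because $p_r$ is defined differently. The only difference is bookkeeping: the paper conditions on the random size $|B(s,\delta_i,G)\cap S_i|=\alpha x$ and uses the sampling of $S_{i+1}$ from $S_i$, whereas you integrate out both sampling levels at once using the single-vertex marginals $q_i,P_{i+1}$, inter-vertex independence, and the deterministic ball size $b$ — an arguably tidier formalization of the same argument.
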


\begin{proof}
We will show that a vertex $s \in S_i \setminus S_{i+1}$ adds $O(\frac{1}{p_{i+1}})$ edges to the emulator in expectation. As for $0 \leq i < r-1$, we have $\frac{1}{p_{i+1}} = n^{\frac{2^i}{2^r}}$, and for $i=r-1$, we have $\frac{1}{p_r} = n^{\frac{1}{2^r}} < n^{\frac{2^{r-1}}{2^r}}$, the claim follows.

Let $x=\frac{1}{p_{i+1}}$. By definition, $S_{i+1}$ is sampled from $S_i$ with probability $p_{i+1} = \frac{1}{x}$. Hence, intuitively, vertices $s \in S_i$ with $|B(s,\delta_i,G) \cap S_i| \geq x$ would have vertex from $S_{i+1}$ at distance $\delta_i$ and would only add one edge to the emulator, other vertices in $S_i$ would only add at most $x$ edges to the emulator. We next formalize this intuition. Let $s \in S_i \setminus S_{i+1}$. If $|B(s,\delta_i,G) \cap S_i| \leq x$, then $s$ adds edges to at most $x$ vertices as needed. Assume now that $|B(s,\delta_i,G) \cap S_i| = \alpha x$ for $\alpha > 1$. The probability that no vertex from $B(s,\delta_i,G)$ is sampled for $S_{i+1}$ is $(1-\frac{1}{x})^{\alpha x}$, in this case it adds to the emulator $\alpha x$ edges in iteration $i$, and otherwise it adds one edge. Hence, the expected number of edges added by $s$ is bounded by $1+\alpha x (1-\frac{1}{x})^{\alpha x}$. Since $(1-\frac{1}{x})^{\alpha x}$ goes to $\frac{1}{e^\alpha}$ and $\frac{\alpha}{e^{\alpha}} < 1$, the expected number of edges is $O(x)$.
\end{proof}

\begin{lemma} \label{lemma_size} 
The emulator $H$ has $O(r \cdot n^{1+\frac{1}{2^r}})$ edges in expectation.
\end{lemma}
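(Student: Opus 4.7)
}
The plan is to decompose the edge count by iteration and apply linearity of expectation. Write $E(H) = \bigcup_{i=0}^{r} E_i$, where $E_i$ is the set of edges contributed by the vertices in $S_i \setminus S_{i+1}$. For each $i \in \{0,\ldots,r-1\}$, I would bound $\E{|E_i|}$ by
\[
\E{|E_i|} \;\leq\; \sum_{s \in V} \pr{s \in S_i} \cdot \E{\#\text{edges added by } s \mid s \in S_i \setminus S_{i+1}}.
\]
The first factor is $\E{|S_i|}$, controlled by Claim~\ref{size_Si} as $n^{1 - (2^i-1)/2^r}$. The second factor is controlled by Claim~\ref{claim_edges} as $O(n^{2^i/2^r})$. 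Multiplying gives $\E{|E_i|} = O(n^{1 + 1/2^r})$ for each $0 \leq i < r$.

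The remaining case is $i = r$, which is not covered by Claim~\ref{claim_edges}. Here I would use the fact that $S_{r+1} = \emptyset$, so by definition every $s \in S_r$ is $r$-sparse and therefore contributes edges to all vertices in $B(s, \delta_r, G) \cap S_r$, which is a subset of $S_r$. Hence $|E_r| \leq |S_r|^2$. By Claim~\ref{claim_prob_Sr}, each vertex lies in $S_r$ independently with probability $1/\sqrt{n}$, so $|S_r|$ is a binomial random variable with $\E{|S_r|} = \sqrt{n}$ and $\E{|S_r|^2} = \mathrm{Var}(|S_r|) + \E{|S_r|}^2 = O(n)$. Therefore $\E{|E_r|} = O(n)$, which is absorbed into the overall bound.

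Summing over all $r+1$ iterations,
\[
\E{|E(H)|} \;=\; \sum_{i=0}^{r-1} \E{|E_i|} + \E{|E_r|} \;=\; O\!\left(r \cdot n^{1 + 1/2^r}\right) + O(n) \;=\; O\!\left(r \cdot n^{1 + 1/2^r}\right),
\]
as claimed.

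\paragraph{Where the difficulty sits.} The bookkeeping is routine once the per-iteration bounds are in place; the only mild subtlety is justifying the factorization in the first display, i.e. that Claim~\ref{claim_edges}'s bound on the expected degree of $s$ holds conditional on $s \in S_i \setminus S_{i+1}$ and can be combined with the marginal probability $\pr{s \in S_i}$. This is fine because the proof of Claim~\ref{claim_edges} fixes $s \in S_i$ and averages only over the independent sampling of $S_{i+1}$ from $S_i \setminus \{s\}$ (the number of candidate neighbours $|B(s,\delta_i,G) \cap S_i|$ is used only as a deterministic upper bound for worst case $\alpha$), so the bound is uniform in the conditioning. The separate treatment of $i = r$ is the one place the general argument breaks down and needs to be replaced by the direct $|S_r|^2$ bound.
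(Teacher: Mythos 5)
Your proof is correct and follows essentially the same route as the paper: for each $i<r$ it combines Claim~\ref{size_Si} (expected size of $S_i$) with Claim~\ref{claim_edges} (expected per-vertex edge count) to get $O(n^{1+1/2^r})$ per iteration, and bounds the $S_r$-internal contribution by $|S_r|^2$. Your handling of the $i=r$ case via $\E{|S_r|^2}=O(n)$ is in fact a slightly cleaner fit for an in-expectation statement than the paper's appeal to the w.h.p.\ bound of Claim~\ref{size_Sr}, but the argument is otherwise the same.
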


\begin{proof}
By Claim \ref{size_Si}, for $0 \leq i <r$, $S_i$ is of size $n^{1-\frac{2^i-1}{2^r}}$ in expectation, and by Claim \ref{claim_edges}, each vertex $s \in S_i \setminus S_{i+1}$ for $0 \leq i<r$ adds $O(n^{\frac{2^i}{2^r}})$ edges to the emulator in expectation. Hence, for $0 \leq i<r$, the expected number of edges added by all the vertices of $S_i \setminus S_{i+1}$ is bounded by 
$$O( n^{1-\frac{2^i-1}{2^r}} \cdot n^{\frac{2^i}{2^r}}) = O(n^{1+\frac{1-2^i+2^i}{2^r}})=O(n^{1+\frac{1}{2^r}}).$$
For $i=r$, since $S_r$ is of size $O(\sqrt{n})$ w.h.p by Claim \ref{size_Sr}, even if all vertices of $S_r$ add edges to all vertices of $S_r$ this adds at most $O(n)$ edges to the emulator w.h.p. Summing over all $i$, gives a total of $O(r \cdot n^{1+\frac{1}{2^r}})$ edges in expectation.
\end{proof}

\subsection{Stretch analysis}

The stretch analysis appears in Lemma \ref{lemma_stretch}. To prove it, we need the following technical claims, for proofs see Appendix \ref{sec:emulator_app_stretch}.
We start by bounding the value of $R_i$.

\begin{restatable}{claim}{stretchRi}
$R_i = \sum_{j=0}^{i-1} \frac{1}{\epsilon^j} \cdot 3^{i-1-j}.$
\end{restatable}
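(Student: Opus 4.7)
The plan is to prove this closed form by induction on $i$, using a simple recursion that falls out of the definitions. First I will rewrite the definition to extract a clean one-step recursion for $R_i$. Since $R_{i+1}=R_i+\delta_i$ and $\delta_i=\frac{1}{\epsilon^i}+2R_i$, we get
\[
R_{i+1} \;=\; R_i + \frac{1}{\epsilon^i} + 2R_i \;=\; 3R_i + \frac{1}{\epsilon^i},
\]
with base case $R_0=0$. This is the only nontrivial observation; everything after is routine.

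Next I would verify the base case $i=0$: the empty sum gives $0$, matching $R_0=0$. For a concrete sanity check I would also compute $i=1$, where both the recursion and the claimed formula yield $R_1 = 1$.

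For the inductive step, assume $R_i = \sum_{j=0}^{i-1} \frac{1}{\epsilon^j}\cdot 3^{i-1-j}$. Plugging into the recursion,
\[
R_{i+1} \;=\; 3\sum_{j=0}^{i-1} \frac{1}{\epsilon^j}\cdot 3^{i-1-j} \;+\; \frac{1}{\epsilon^i}
\;=\; \sum_{j=0}^{i-1} \frac{1}{\epsilon^j}\cdot 3^{i-j} \;+\; \frac{1}{\epsilon^i}\cdot 3^{0}
\;=\; \sum_{j=0}^{i} \frac{1}{\epsilon^j}\cdot 3^{i-j},
\]
which is exactly the claimed formula with $i$ replaced by $i+1$, completing the induction.

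There is no real obstacle here; the only care needed is to line up the geometric weights correctly after multiplying by $3$ (shifting the exponent of $3$ from $i-1-j$ to $i-j$) and to notice that the added term $\frac{1}{\epsilon^i}$ is precisely the $j=i$ term of the new sum (since $3^{i-i}=1$).
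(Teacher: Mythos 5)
Your proof is correct. It differs from the paper's argument in how the induction is organized: the paper verifies the closed form by expanding the definition directly, writing $R_i=\sum_{j=0}^{i-1}\delta_j=\sum_{j=0}^{i-1}\bigl(\frac{1}{\epsilon^j}+2R_j\bigr)$, substituting the claimed formula for \emph{every} smaller $R_j$ (a strong-induction step), exchanging the order of the resulting double sum, and then summing a geometric series $\sum_{j'=j+1}^{i-1}3^{j'-j-1}$ to collapse it back to the stated expression. You instead isolate the first-order recurrence $R_{i+1}=R_i+\delta_i=3R_i+\frac{1}{\epsilon^i}$ and run an ordinary induction that uses only the formula at the immediately preceding index; the multiplication by $3$ shifts the exponents $3^{i-1-j}\mapsto 3^{i-j}$ and the extra term $\frac{1}{\epsilon^i}$ is exactly the new $j=i$ summand. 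Your route is shorter and avoids both the double-sum reindexing and the geometric-series computation, at no loss of generality; the paper's unfolding is more computational but makes visible how each earlier radius $\delta_j$ contributes to $R_i$, which is the same structural fact either way. Both arguments are complete and establish the identity.
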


\begin{restatable}{claim}{Ribound} \label{claim_Ri_bound}
Let $0< \epsilon < \frac{1}{6}$, then $R_i \leq \frac{2}{\epsilon^{i-1}}.$
\end{restatable}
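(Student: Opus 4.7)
The plan is to leverage the closed-form expression $R_i = \sum_{j=0}^{i-1} \frac{1}{\epsilon^j} \cdot 3^{i-1-j}$ from the preceding claim and recognize it as a geometric sum whose dominant contribution is the last term $1/\epsilon^{i-1}$. The intuition is that under the hypothesis $\epsilon < 1/6$, we have $3\epsilon < 1/2$, so multiplicatively advancing the index $j$ by $1$ increases $1/\epsilon^j$ by a factor of $1/\epsilon > 6$ while only decreasing $3^{i-1-j}$ by a factor of $3$. Hence each previous term in the sum is smaller than the next by a factor strictly greater than $2$, and the whole sum is bounded by twice its largest term.

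Concretely, I would first perform the substitution $k = i-1-j$, which rewrites
\[
R_i \;=\; \sum_{j=0}^{i-1} \frac{3^{i-1-j}}{\epsilon^j} \;=\; \frac{1}{\epsilon^{i-1}} \sum_{k=0}^{i-1} (3\epsilon)^k.
\]
This is a finite geometric sum with common ratio $3\epsilon$. Using $\epsilon < 1/6$, we have $3\epsilon < 1/2$, so
\[
\sum_{k=0}^{i-1} (3\epsilon)^k \;\leq\; \sum_{k=0}^{\infty} (3\epsilon)^k \;=\; \frac{1}{1-3\epsilon} \;<\; \frac{1}{1 - 1/2} \;=\; 2.
\]
Combining these gives $R_i < 2/\epsilon^{i-1}$, which is the desired bound.

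I do not expect any genuine obstacle here: the bound is essentially a one-line consequence of the geometric sum formula once the previous claim is in hand. The only mild care point is the index substitution and verifying that the ratio $3\epsilon$ being strictly below $1/2$ is enough to absorb all the lower-order terms into the constant $2$; indeed, any $\epsilon < 1/6$ suffices, and a tighter analysis would give the slightly sharper $R_i \leq \frac{1}{\epsilon^{i-1}(1-3\epsilon)}$, but $2/\epsilon^{i-1}$ is the cleanest form for subsequent use in the stretch analysis.
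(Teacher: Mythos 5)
Your proof is correct and follows essentially the same route as the paper: both recognize $R_i$ as a geometric series with ratio governed by $3\epsilon$ and use $\epsilon < \frac{1}{6}$ (so $3\epsilon < \frac{1}{2}$) to bound the sum by $\frac{1}{1-3\epsilon} \leq 2$ times the dominant term $\frac{1}{\epsilon^{i-1}}$. Your reindexing to $\frac{1}{\epsilon^{i-1}}\sum_k (3\epsilon)^k$ is just a slightly cleaner packaging of the paper's factoring out of $3^{i-1}$; there is no substantive difference.
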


We next define a value $\beta_i$, that would give a bound on the additive stretch of paths of $i$-clustered vertices in Lemma \ref{lemma_stretch}. It is defined as follows.
Define $\beta_0 = 0, \beta_i = 4 \sum_{j=1}^{i} 2^{i-j} R_j.$ 

\begin{restatable}{claim}{Stretchtwo} \label{claim_stretch_2}
$4R_i + 2 \beta_{i-1} = \beta_i.$
\end{restatable}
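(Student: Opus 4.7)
The plan is to prove the identity $4R_i + 2\beta_{i-1} = \beta_i$ by a direct manipulation of the defining sum, since this is a purely algebraic claim about the two recursive sequences $R_i$ and $\beta_i$ (the underlying geometric/graph meaning plays no role here). First, I would start from the definition $\beta_i = 4\sum_{j=1}^{i} 2^{i-j} R_j$ and peel off the top term $j=i$. Since $2^{i-i}=1$, that term contributes exactly $4R_i$, which matches the $4R_i$ summand on the right-hand side of the claim.

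Next, I would rewrite the remaining tail $4\sum_{j=1}^{i-1} 2^{i-j} R_j$ by noting $2^{i-j} = 2 \cdot 2^{(i-1)-j}$ and pulling the factor $2$ outside the sum. This turns the tail into $2 \cdot \left(4\sum_{j=1}^{i-1} 2^{(i-1)-j} R_j\right)$, and the bracketed expression is exactly $\beta_{i-1}$ by the defining formula applied with index $i-1$. Combining the two contributions yields $\beta_i = 4R_i + 2\beta_{i-1}$, as desired.

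I do not anticipate any real obstacle; the only subtlety is the base case $i=1$. There the empty sum gives $\beta_0 = 0$ and the recurrence collapses to $\beta_1 = 4R_1$, which agrees with $\beta_1 = 4 \cdot 2^{0} \cdot R_1$. Thus both the base case and the inductive step follow directly from unfolding definitions, and no use of the bound $R_i \leq 2/\epsilon^{i-1}$ from Claim~\ref{claim_Ri_bound} or the closed form for $R_i$ is needed for this specific claim.
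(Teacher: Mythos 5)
Your proposal is correct and is essentially the same computation as in the paper: the paper starts from $4R_i + 2\beta_{i-1}$ and folds the factor of $2$ into the exponents to recover the sum defining $\beta_i$, while you unfold $\beta_i$ by peeling off the $j=i$ term and factoring $2$ out of the tail — the same one-line algebraic identity read in the opposite direction. No gap; the remark on the $i=1$ base case is consistent with the paper's convention $\beta_0=0$.
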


\begin{restatable}{claim}{Stretchone} \label{claim_stretch_1}
Let $0< \epsilon < \frac{1}{10}$, then $\beta_i \leq \frac{10}{\epsilon^{i-1}}.$
\end{restatable}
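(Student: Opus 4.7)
}
My plan is to prove the bound by a short induction on $i$, using the recurrence from Claim \ref{claim_stretch_2} as the step, and the closed-form bound on $R_i$ from Claim \ref{claim_Ri_bound} to control the new term introduced at each level. The direct route via the definition $\beta_i = 4\sum_{j=1}^{i} 2^{i-j} R_j$ also works (factor $1/\epsilon^{i-1}$ out, recognize a geometric series in $2\epsilon$, and bound its sum using $2\epsilon < 1/5$), but using Claim \ref{claim_stretch_2} avoids the explicit sum manipulation.

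The base case is $i=0$, where $\beta_0 = 0$ and the bound is trivial (and also $i=1$: $\beta_1 = 4R_1 = 4 \leq 10 = 10/\epsilon^{0}$). For the inductive step, I assume $\beta_{i-1} \leq 10/\epsilon^{i-2}$ and compute
\begin{equation*}
\beta_i \;=\; 4R_i + 2\beta_{i-1} \;\leq\; \frac{8}{\epsilon^{i-1}} + \frac{20}{\epsilon^{i-2}} \;=\; \frac{8 + 20\epsilon}{\epsilon^{i-1}},
\end{equation*}
where the first inequality uses Claim \ref{claim_Ri_bound} (valid since $\epsilon < 1/10 < 1/6$) and the inductive hypothesis. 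Since $\epsilon < 1/10$ gives $20\epsilon < 2$, the numerator is bounded by $10$, yielding $\beta_i \leq 10/\epsilon^{i-1}$, which closes the induction.

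I do not expect any real obstacle here: the claim is a routine consequence of Claims \ref{claim_Ri_bound} and \ref{claim_stretch_2}. The only thing worth double-checking is that the recurrence in Claim \ref{claim_stretch_2} together with the geometric growth rate $1/\epsilon$ of the $R_i$'s ensures that the contribution $2\beta_{i-1}$, which lives on the scale $1/\epsilon^{i-2}$, is dominated by $4R_i$ after paying a factor of $\epsilon$. This is precisely what forces the hypothesis $\epsilon < 1/10$: we need $20\epsilon$ to fit below $10 - 8 = 2$ so the constant $10$ is preserved across the induction.
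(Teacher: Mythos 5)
Your proof is correct, but it takes a different route from the paper's. The paper works directly from the definition $\beta_i = 4\sum_{j=1}^{i}2^{i-j}R_j$: it plugs in $R_j \leq 2/\epsilon^{j-1}$ from Claim \ref{claim_Ri_bound}, recognizes the resulting geometric series in $1/(2\epsilon)$, and bounds it by $\frac{8}{\epsilon^{i-1}(1-2\epsilon)} \leq \frac{10}{\epsilon^{i-1}}$, which is exactly your ``direct route'' aside. You instead induct on $i$ using the recurrence $\beta_i = 4R_i + 2\beta_{i-1}$ of Claim \ref{claim_stretch_2}, and your step $\beta_i \leq \frac{8+20\epsilon}{\epsilon^{i-1}} \leq \frac{10}{\epsilon^{i-1}}$ is valid (the base cases are fine, and indeed $i=1$ need not be treated separately since the hypothesis $\beta_0 = 0 \leq 10\epsilon$ already feeds the step; your check that Claim \ref{claim_Ri_bound} applies because $\epsilon < 1/10 < 1/6$ is the right thing to verify). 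The trade-off is minor: the paper's summation gives the closed-form constant $8/(1-2\epsilon)$ in one shot, while your induction avoids any explicit series manipulation and makes transparent where the hypothesis $\epsilon < 1/10$ is actually consumed --- the constant $10$ must absorb $8 + 20\epsilon$ across each level --- which is the same threshold the paper's bound $8/(1-2\epsilon) \leq 10$ encodes implicitly.
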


Let $\pi(u,v)$ be any shortest $u-v$ path. We denote by $d_H(u,v)$ the distance between $u$ and $v$ in the emulator, and by $d(u,v)$ the distance between them in $G$. 

\begin{lemma} \label{lemma_stretch}
Assume that all the vertices in $\pi(u,v)$ are at most $i$-clustered, except maybe one that is $j$-clustered for $j>i$, then $d_H(u,v) \leq (1+20 \epsilon i)d(u,v)+ \beta_i.$ 
\end{lemma}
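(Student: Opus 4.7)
The plan is to prove the lemma by induction on $i$, extending the warm-up analysis of the $(1+\epsilon,\Theta(1/\epsilon))$-emulator. At level $i$, I would cut $\pi(u,v)$ into consecutive segments of length $L = 1/\epsilon^i$, and in each segment route the endpoints through the first and last $i$-clustered vertices $w_1, w_2$ via their cluster centers in $S_i$, while recursively handling the fringe sub-paths before $w_1$ and after $w_2$ by the induction hypothesis at level $i-1$. The exception clause in the statement is precisely what enables this recursion: the $i$-clustered endpoint $w_1$ (resp.\ $w_2$) of a fringe serves as the single allowed exception at level $i-1$, since $i > i-1$.

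The base case $i = 0$ is short. A vertex that is at most $0$-clustered lies outside $S_1$ and satisfies $B(\cdot, \delta_0, G) \cap S_1 = \emptyset$ (with $\delta_0 = 1$), hence is $0$-sparse and contributes every incident graph edge to $H$. The potential exception vertex $w^*$ has at most two neighbors on $\pi(u,v)$, both non-exceptional, so they add the incident path edges. Thus $\pi(u,v)$ is embedded in $H$ and $d_H(u,v) \leq d(u,v)$, matching $\beta_0 = 0$.

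For the inductive step I would assume the lemma at level $i-1$, partition $\pi$ into segments of length $L$, and process each segment $[a,b]$ separately. If the segment contains no $i$-clustered vertex then the level-$(i-1)$ hypothesis applies directly (in particular $w^*$ cannot lie in such a segment, since $w^*$ is $i$-clustered). Otherwise, letting $w_1, w_2$ be the first and last $i$-clustered vertices in $[a,b]$, I would invoke induction on the fringes $\pi(a, w_1)$ and $\pi(w_2, b)$ with $w_1, w_2$ as the allowed exceptions, and route the middle as $w_1 \to c_i(w_1) \to c_i(w_2) \to w_2$. Claim~\ref{claim_Ri} bounds each detour leg by $R_i$, and the triangle inequality gives $d_G(c_i(w_1), c_i(w_2)) \leq 2R_i + d(w_1,w_2) \leq 2R_i + L = \delta_i$. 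Combining the three sub-bounds with Claim~\ref{claim_stretch_2} ($4R_i + 2\beta_{i-1} = \beta_i$) produces a per-segment bound of $(1+20\epsilon(i-1))\cdot(\text{segment length}) + \beta_i$; summing over the $\lceil d(u,v)/L \rceil$ segments and using Claim~\ref{claim_stretch_1} ($\beta_i \leq 10/\epsilon^{i-1}$) to bound the accumulated additive term by $10\epsilon \cdot d(u,v) + \beta_i$ yields the target $(1+20\epsilon i) d(u,v) + \beta_i$.

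The main obstacle is ensuring that the connecting edge $\{c_i(w_1), c_i(w_2)\}$ is actually present in $H$, which requires at least one of the two cluster centers to be $i$-sparse (so that it adds all $S_i$ vertices in its $\delta_i$-ball, including the other). If $c_i(w_j)$ were $i$-dense then $w_j$ would be $(i+1)$-clustered and would have to coincide with the unique exception $w^*$; by pigeonhole at most one of $w_1, w_2$ can do so, and if both coincide with $w^*$ then $w_1 = w_2$ makes the middle piece trivial. This exception-vertex bookkeeping is the most delicate part of the argument and is precisely what the single-exception clause in the statement is designed to support.
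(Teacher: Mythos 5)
Your proposal is correct and follows essentially the same route as the paper's proof: induction on $i$, cutting $\pi(u,v)$ into segments of length about $1/\epsilon^i$, routing each segment through the cluster centers $c_i(\cdot)$ of its first and last $i$-clustered vertices (bounded via Claim~\ref{claim_Ri} and the fact that the single-exception clause forces at least one of these centers to be $i$-sparse, so the edge between them exists), handling the fringes by the level-$(i-1)$ hypothesis with the $i$-clustered endpoint as the allowed exception, and summing with Claims~\ref{claim_stretch_2} and~\ref{claim_stretch_1}. The only differences are cosmetic (the paper splits into a short-path case and a long-path case that reduces to it, and tracks the floor in the segment length, giving $20\epsilon d(u,v)$ rather than your $10\epsilon d(u,v)$ for the accumulated additive terms), and these do not affect correctness.
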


\begin{proof}
The proof is by induction. If $i=0$, all the vertices in $\pi(u,v)$ are at most $0$-clustered, except maybe one. As all vertices that are at most $0$-clustered are $0$-sparse and add all their adjacent edges to the emulator, it holds that all the edges of $\pi(u,v)$ are taken to the emulator. Hence, $d_H(u,v)=d(u,v)$. Assume that the claim holds for $i-1$, and we show that it holds for $i$.

\emph{Case 1: $\pi(u,v)$ has length at most $\frac{1}{\epsilon^i}$.} If there is only one vertex in $\pi(u,v)$ that is $j$-clustered for $j \geq i$, then the claim follows from the induction assumption. Hence, we focus on the case that there are at least two $i$-clustered vertices in $\pi(u,v)$. Denote the first and last such vertices by $u'$ and $v'$ respectively. From Claim \ref{claim_Ri}, for every $i$-clustered vertex $w$, the distance in the emulator between $w$ and $c_i(w) \in S_i$ is at most $R_i$.
Hence, 
$d_H(u',c_i(u')) \leq R_i$ and $d_H(v',c_i(v')) \leq R_i.$
Since $d(u',v') \leq \frac{1}{\epsilon^i}$, it follows that $$d(c_i(u'),c_i(v')) \leq d(c_i(u'),u')+d(u',v')+d(v',c_i(v')) \leq d(u',v')+2R_i \leq \frac{1}{\epsilon^i} +2R_i.$$ Also, from the statement of the claim at least one of $u',v'$ is at most $i$-clustered. Assume w.l.o.g that this is $u'$. This means that $c_i(u')$ adds edges to all vertices of $S_i$ at distance at most $\delta_i = \frac{1}{\epsilon^i}+2R_i$, and in particular there is an edge between $c_i(u')$ and $c_i(v')$ in the emulator of weight at most $d(u',v')+2R_i.$ It follows that 
$$d_H(u',v') \leq d_H(u',c_i(u'))+d_H(c_i(u'),c_i(v'))+d_H(c_i(v'),v') \leq R_i + d(u',v')+2R_i + R_i = d(u',v')+4R_i.$$
Now, the subpaths of $\pi(u,v)$ between $u$ and $u'$ and between $v'$ and $v$ are also shortest paths $\pi(u,u'),\pi(v',v)$, which have
only one vertex that is $j$-clustered for $j \geq i$. Hence from the induction hypothesis,
$$d_H(u,u') \leq (1+20\epsilon(i-1))d(u,u')+ \beta_{i-1},$$
$$d_H(v',v) \leq (1+20\epsilon(i-1))d(v',v)+ \beta_{i-1}.$$
Combining it all, we get that
$$d_H(u,v) \leq d_H(u,u') + d_H(u',v') + d_H(v',v) \leq (1+20\epsilon(i-1))d(u,v) +4R_i + 2 \beta_{i-1}.$$
From Claim \ref{claim_stretch_2}, we get $$d_H(u,v) \leq (1+20\epsilon(i-1))d(u,v) + \beta_i,$$ which concludes the proof for this case.

\emph{Case 2: $\pi(u,v)$ has length greater than $\frac{1}{\epsilon^i}$.} Here we divide $\pi(u,v)$ to subpaths of length exactly $\lfloor \frac{1}{\epsilon^i} \rfloor$, except maybe one of length at most $\lfloor \frac{1}{\epsilon^i} \rfloor$. In each of these subpaths we can follow the analysis from Case 1 and get a stretch of $(1+20\epsilon(i-1),\beta_i)$ for the subpath. From Claim \ref{claim_stretch_1}, $\beta_i \leq \frac{10}{\epsilon^{i-1}}$. Hence, for each subpath of length $\lfloor \frac{1}{\epsilon^i} \rfloor$ we add at most additive stretch of $\frac{10}{\epsilon^{i-1}}$. As $\epsilon \leq \frac{1}{2}$, we have that $\lfloor \frac{1}{\epsilon^i} \rfloor \geq \frac{1}{\epsilon^i} -1 \geq \frac{1}{2 \epsilon^i}.$ Hence, there are at most $d(u,v) \cdot 2 \epsilon^i$ subpaths of length exactly $\lfloor \frac{1}{\epsilon^i} \rfloor.$
As each one of them adds an additive stretch of at most $\frac{10}{\epsilon^{i-1}}$, this adds $\frac{10}{\epsilon^{i-1}} \cdot d(u,v)\cdot 2\epsilon^i = 20 \epsilon d(u,v)$ to the total stretch. The last subpath of length at most $\lfloor \frac{1}{\epsilon^i} \rfloor$ adds $\beta_i$ to the additive stretch. In total we get
$$d_H(u,v) \leq (1+20\epsilon(i-1))d(u,v) + 20 \epsilon d(u,v) + \beta_i \leq (1+20\epsilon i)d(u,v) + \beta_i,$$ which completes the proof.
\end{proof}

\paragraph{Conclusion.}

The analysis in previous sections gives the following.

\begin{theorem} \label{thm_conclusion_emulator}
Let $G$ be an unweighted undirected graph, and let $0<\epsilon<1$, there is a randomized algorithm that builds an emulator $H$ with $O(r \cdot n^{1+\frac{1}{2^r}})$ edges in expectation, and stretch of $(1+\epsilon, O(\frac{r}{\epsilon})^{r-1}).$ For the choice $r=\log{\log{n}}$, we have
$O(n \log{\log{n}})$ edges in expectation and stretch of $(1+\epsilon, O(\frac{\log{\log{n}}}{\epsilon})^{\log{\log{n}}}).$
\end{theorem}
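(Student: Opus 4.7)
The proof is essentially a packaging of the two main technical lemmas already established. The plan is first to read off the size bound directly from Lemma \ref{lemma_size}, which states that the emulator has $O(r \cdot n^{1+1/2^r})$ edges in expectation. Second, for the stretch, I would apply Lemma \ref{lemma_stretch} with $i = r$. The key observation that unlocks this is that since $S_{r+1} = \emptyset$ by construction, no vertex in $V$ can be $(r+1)$-clustered (because being $j$-clustered requires $c_j(v) \in S_j$ to exist, but $c_{r+1}(v) \in S_{r+1} = \emptyset$ is impossible). Hence for \emph{every} shortest path $\pi(u,v)$, \emph{all} of its vertices are at most $r$-clustered, so the hypothesis of Lemma \ref{lemma_stretch} with $i=r$ holds trivially (in fact with no exceptional vertex at all).

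Applying Lemma \ref{lemma_stretch} with $i = r$ then yields $d_H(u,v) \le (1 + 20\epsilon r)\, d(u,v) + \beta_r$ for every $u,v \in V$. I would then invoke Claim \ref{claim_stretch_1} to bound $\beta_r \le 10/\epsilon^{r-1}$ (valid for $\epsilon < 1/10$), giving a $(1 + 20\epsilon r,\; 10/\epsilon^{r-1})$-stretch. To obtain the clean $(1+\epsilon, O(r/\epsilon)^{r-1})$ guarantee, I would rescale by substituting $\epsilon' = \epsilon/(20r)$ throughout the construction. This replaces the multiplicative term $1 + 20\epsilon' r$ by $1 + \epsilon$, while the additive term becomes $10/(\epsilon')^{r-1} = 10 (20r/\epsilon)^{r-1} = O(r/\epsilon)^{r-1}$, exactly as claimed. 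I should also note that the rescaling leaves the edge count unchanged asymptotically, since the set sizes $|S_i|$ in Lemma \ref{lemma_size} depend only on $r$ and $n$, not on $\epsilon$.

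Finally, for the special choice $r = \log \log n$, plugging in gives $n^{1/2^r} = n^{1/\log n} = 2$, so the expected edge count becomes $O(r \cdot n^{1+1/2^r}) = O(n \log \log n)$, while the stretch becomes $(1+\epsilon, O(\log\log n / \epsilon)^{\log\log n})$, completing the claim. No step here is a genuine obstacle; the only care required is checking that the rescaling is consistent with the hypotheses of the auxiliary Claims \ref{claim_Ri_bound} and \ref{claim_stretch_1} (both requiring $\epsilon$ below a small absolute constant), which is immediate since we may assume $\epsilon$ is sufficiently small and fold a constant factor into the final $O(\cdot)$.
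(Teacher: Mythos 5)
Your proposal is correct and follows essentially the same route as the paper's proof: size from Lemma \ref{lemma_size}, stretch from Lemma \ref{lemma_stretch} with $i=r$ (using that $S_{r+1}=\emptyset$ makes every vertex at most $r$-clustered), the bound $\beta_r \leq 10/\epsilon^{r-1}$ from Claim \ref{claim_stretch_1}, and the rescaling $\epsilon \mapsto \epsilon/(20r)$, which the paper phrases equivalently as setting $\epsilon'=20\epsilon r$. Your extra remarks—that the edge count is unaffected by the rescaling since the sampling probabilities are independent of $\epsilon$, and that the rescaled parameter automatically satisfies the hypotheses of Claims \ref{claim_Ri_bound} and \ref{claim_stretch_1}—are accurate and only make the argument slightly more explicit than the paper's.
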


\begin{proof}
From Lemma \ref{lemma_size}, the emulator has $O(r \cdot n^{1+\frac{1}{2^r}})$ edges in expectation, and from Lemma \ref{lemma_stretch}, for all pairs of vertices the stretch is at most $(1+20 \epsilon r)d(u,v)+ \beta_r$, as all vertices are at most $r$-clustered. 
Let $\epsilon' = 20 \epsilon r$. Since $\beta_r \leq \frac{10}{\epsilon^{r-1}}$ by Claim \ref{claim_stretch_1}, we have $\beta_r \leq \frac{10}{(\frac{\epsilon'}{20 r})^{r-1}}=O(\frac{r}{\epsilon'})^{r-1}$, which gives a stretch of $(1+\epsilon', O(\frac{r}{\epsilon'})^{r-1}).$ 
For the specific choice of $r=\log{\log{n}}$, we have that the number of edges is $O(\log{\log{n}} \cdot n^{1+\frac{1}{\log{n}}})=O(n \log{\log{n}})$, and the stretch is $(1+\epsilon', O(\frac{\log{\log{n}}}{\epsilon'})^{\log{\log{n}}}).$
\end{proof}

\subsection{Constructing emulators in the \clique}

We next explain how to implement the algorithm from Section \ref{sec:alg} efficiently in the \clique model. 
This algorithm is composed of two parts, first we sample the sets $S_i$, and then each vertex $v \in S_i$ looks at a certain $\delta_i$-neighbourhood around it and adds edges to certain vertices there. The sampling is a completely local task that can be simulated locally by each vertex. For the second part, we focus first on vertices that are in $S_i \setminus S_{i+1}$ for $i<r$. Such vertices are either $i$-sparse, in which case they add edges to all vertices in $S_i$ in radius $\delta_i$, or are $i$-dense in which case they add one edge to the closest vertex from $S_{i+1}$. To implement it we would like to be able to learn the $\delta_i$-neighbourhood of vertices. 
However, this is only efficient if this neighbourhood is small enough. To overcome it, we break into cases according to the size of the neighbourhood $B(v,\delta_i,G)$. If it does not contain a lot of vertices, we say that the vertex $v$ is \emph{light}, and otherwise we say that $v$ is \emph{heavy}.
In the case that $v$ is light it is actually possible to compute the entire neighbourhood $B(v,\delta_i,G)$ using the \kdnearest algorithm. Otherwise, $v$ is heavy. Here although we cannot learn the entire neighbourhood, we can exploit the fact it is large. Intuitively, as the goal in the original algorithm was to deal with different parts of the graph according to their density, if we realize that a certain neighbourhood around a vertex is already quite large, we can bring this vertex immediately to the final stage of the algorithm. This is done as follows. Note that $S_r$ is a random set of $O(\sqrt{n})$ vertices. Since $v$ is heavy, $B(v,\delta_i,G)$ would contain a vertex from $S_r$ which means that $v$ is $i$-dense. In this case, $v$ only adds one edge to the closest vertex from $S_{i+1}$, and we show that it can compute this vertex using the \kdnearest algorithm. To summarize, all vertices not in $S_r$ can actually add all their adjacent edges to the emulator using the \kdnearest algorithm. To deal with vertices in $S_r$ we use a different approach. Note that since $S_{r+1} = \emptyset$, all the vertices in $S_r$ are $r$-sparse and should add edges to all vertices in $S_r$ in their $\delta_r$-neighbourhood. However, this neighbourhood may be large, and there is no clear way to learn the whole neighbourhood. Here we exploit the fact that $S_r$ is of size $O(\sqrt{n})$, so we only need to compute distances to at most $O(\sqrt{n})$ vertices. We show that using the source detection algorithm and the bounded hopset we can compute approximations to all these distances efficiently.

We next describe the algorithm in detail, and explain the changes needed in the analysis since we compute only approximations to distances in the final stage.

\subsubsection*{Sampling the sets $S_i$} 

We sample the sets $S_i$ exactly as described in Section \ref{sec:alg}, this is a local process computed by each vertex. At the end of the process, each vertex sends one message to all other vertices with the index $i$ such that $v \in S_i \setminus S_{i+1}.$

\subsubsection*{Adding edges to the emulator}

To describe the algorithm, we need the following definitions. For a vertex $v$, let $i_v$ be the maximum index $i$ such that $v \in S_i$, and let $B_v=B(v,\delta_{i_v},G).$ We say that $v$ is \emph{heavy} if $|B_v| > n^{2/3}$ and otherwise it is \emph{light}. 
Let $N_{k,\delta_{i_v}}(v)$ be a set of $k=n^{2/3}$ closest vertices to $v$ of distance at most $\delta_{i_v}$. Note that if $v$ is heavy, there are at least $n^{2/3}$ vertices in the $\delta_{i_v}$-neighbourhood of $v$.
Since $S_r$ contains each vertex with probability $\frac{1}{\sqrt{n}}$, we have the following.

\begin{claim} \label{claim_heavy} 
For all heavy vertices $v$, there is a vertex in $N_{k,\delta_{i_v}}(v) \cap S_r$ w.h.p for $k=n^{2/3}$.
\end{claim}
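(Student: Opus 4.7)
My plan is to prove the claim by a direct Chernoff/union-bound argument, exploiting the independence of the sampling across vertices and Claim~\ref{claim_prob_Sr}, which says that each vertex lies in $S_r$ with probability $1/\sqrt{n}$.

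First, I would observe that for any heavy vertex $v$, the hypothesis $|B_v|>n^{2/3}$ means there are strictly more than $k=n^{2/3}$ vertices within distance $\delta_{i_v}$ of $v$, so $N_{k,\delta_{i_v}}(v)$ contains exactly $k$ vertices (selected by some deterministic tie-breaking, e.g.\ by vertex~ID, as a function of $G$ and $\delta_{i_v}$ alone). The key point is that this set does \emph{not} depend on the random samples $S_1,\ldots,S_r$ once the radius $\delta_{i_v}$ is fixed.

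To set up the probabilistic calculation I would condition on the value of $i_v$, i.e.\ the unique index $j$ with $v\in S_j\setminus S_{j+1}$. Once $i_v=j$ is fixed, $\delta_{i_v}=\delta_j$ is a deterministic quantity and the set $N_{k,\delta_j}(v)$ is determined entirely by the input graph. If $j=r$ then $v\in S_r\cap N_{k,\delta_r}(v)$ and the claim holds trivially. If $j<r$, then conditioning on $i_v=j$ only reveals $v$'s own membership status in each $S_i$; since the subsampling process adds each vertex to $S_i$ independently of all other vertices, the events ``$u\in S_r$'' for $u\ne v$ remain independent Bernoullis with success probability $1/\sqrt{n}$.

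Consequently, conditional on $i_v=j<r$, the probability that none of the at least $n^{2/3}-1$ vertices of $N_{k,\delta_j}(v)\setminus\{v\}$ belongs to $S_r$ is at most
\[
\left(1-\tfrac{1}{\sqrt{n}}\right)^{n^{2/3}-1}\ \le\ \exp\!\left(-\tfrac{n^{2/3}-1}{\sqrt{n}}\right)\ \le\ \exp\bigl(-n^{1/6}/2\bigr).
\]
Taking a union bound over the at most $n$ heavy vertices and over the (at most $r+1\le\log\log n$) possible values of $i_v$ yields a total failure probability of $O(n\log\log n)\cdot \exp(-n^{1/6}/2)$, which is much smaller than $1/n^c$ for any constant $c$. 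This is the w.h.p.\ bound asserted in the claim.

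The only subtlety worth double-checking is the independence assertion after conditioning on $i_v$, but this is immediate because the hierarchical sampling $S_i\gets\mathrm{Sample}(S_{i-1},p_i)$ makes each vertex's inclusion decision independently of the others; conditioning on $v$'s membership pattern therefore has no effect on the distribution of other vertices' memberships in $S_r$. No other step requires any computation beyond the one displayed above.
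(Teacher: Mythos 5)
Your proof is correct and follows essentially the same route as the paper's: both use the fact (Claim~\ref{claim_prob_Sr}) that each vertex lies in $S_r$ with probability $1/\sqrt{n}$, bound the probability that none of the $k=n^{2/3}$ vertices of $N_{k,\delta_{i_v}}(v)$ is sampled by $\bigl(1-\tfrac{1}{\sqrt{n}}\bigr)^{\Theta(n^{2/3})}$, and finish with a union bound over the heavy vertices. The only difference is that you make explicit the conditioning on $i_v$ and the independence of the vertices' sampling decisions, a measurability point the paper's proof treats implicitly; this is a refinement, not a different argument.
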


\begin{proof}
If $v$ is heavy, then $|B_v| > n^{2/3}$, this means that there are more than $n^{2/3}$ vertices at distance at most $\delta_{i_v}$ from $v$, and hence $|N_{k,\delta_{i_v}}(v)| = n^{2/3}$. By Claim \ref{claim_prob_Sr}, each vertex is in $S_r$ with probability $\frac{1}{\sqrt{n}}$. Hence, the probability that $N_{k,\delta_{i_v}}(v) \cap S_r = \emptyset$ is $(1-\frac{1}{\sqrt{n}})^{n^{\frac{2}{3}}} \leq (1-\frac{1}{\sqrt{n}})^{\sqrt{n} c \ln{n}} \leq e^{-c \ln{n}} \leq \frac{1}{n^{c}}$ where $c$ can be any constant. Using union bound, we get that w.h.p for all heavy vertices $v$, there is a vertex in $N_{k,\delta_{i_v}}(v) \cap S_r$.
\end{proof}

We will show that for each vertex with $i_v < r$, we can add all the adjacent edges to the emulator by just learning the \kdnearest vertices, and for vertices in $S_r$ we will use the bounded hopsets to get approximations to the distances. 

\begin{claim} \label{claim_edges_notSr}
All the edges of the emulator with at least one endpoint in $V \setminus S_r$ can be added to the emulator with correct distances in $O(\log^2{\delta_r})$ rounds w.h.p. 
\end{claim}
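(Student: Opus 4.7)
The plan is to reduce the whole edge-insertion process (for endpoints outside $S_r$) to a single run of the \kdnearest primitive, and then handle the heavy and light vertices in essentially different ways using only local computation at each vertex.

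First, each vertex broadcasts its index $i_v$ in one round, so globally every vertex knows the partition $V = \bigsqcup_{i=0}^{r}(S_i \setminus S_{i+1})$. Next, run the \kdnearest procedure of Theorem~\ref{thrm:knearest} with parameters $k = n^{2/3}$ and $d = \delta_r$. Since $\delta_i \leq \delta_r$ for all $i$, each vertex $v$ with $i_v = i < r$ can read off from its $(n^{2/3},\delta_r)$-nearest output a (partial) view of $B(v,\delta_i,G)$, simply by restricting attention to the entries at distance at most $\delta_i$. By Theorem~\ref{thrm:knearest}, this step costs $O\bigl((1+\log\delta_r)\log\delta_r\bigr) = O(\log^2 \delta_r)$ rounds.

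Now each vertex $v$ with $i_v = i < r$ decides locally how to act. If fewer than $k$ vertices of its output lie within distance $\delta_i$, then $v$ is light and has learned all of $B(v,\delta_i,G)$ exactly together with their $S_\cdot$-memberships; it can therefore determine whether it is $i$-dense or $i$-sparse and generate the required emulator edges (either a single edge to the closest vertex of $S_{i+1}$ in the ball, or one edge to every vertex of $S_i$ in the ball), each labeled with the true distance in $G$. If exactly $k$ entries lie within distance $\delta_i$, then $v$ is heavy, so by Claim~\ref{claim_heavy} its $(k,\delta_i)$-nearest set contains a vertex $u \in S_r \subseteq S_{i+1}$ w.h.p.; thus $v$ is guaranteed to be $i$-dense, and it may safely add a single edge to such a $u$ at distance at most $\delta_i$. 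This choice of $u$ may differ from the original $c_{i+1}(v)$, but the stretch analysis of Section~\ref{sec:emulator} only uses the property that the chosen neighbor lies in $S_{i+1}\cap B(v,\delta_i,G)$, so redefining $c_{i+1}(v)$ to be $u$ preserves Claim~\ref{claim_Ri} and all downstream arguments.

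Finally, every vertex notifies the other endpoint of each emulator edge it added (needed so the endpoint in $S_r$ also learns incident emulator edges). Since each vertex adds $O(k)=O(n^{2/3})$ edges, Lenzen's routing completes this in $O(1)$ rounds. Summing, the total cost is dominated by the \kdnearest call, giving $O(\log^2 \delta_r)$ rounds w.h.p. The main obstacle is simply checking that the ``any $S_{i+1}$ vertex in $B(v,\delta_i,G)$'' relaxation is compatible with the stretch analysis; this is precisely what the inductive structure of $c_i(\cdot)$ and $R_i$ in Section~\ref{sec:alg} permits, since neither Claim~\ref{claim_Ri} nor Lemma~\ref{lemma_stretch} uses that $c_{i+1}(v)$ is the nearest such vertex, only that its distance from $c_i(v)$ is at most $\delta_i$.
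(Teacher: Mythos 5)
Your proposal is correct and takes essentially the same route as the paper: one \kdnearest call with $k=n^{2/3}$, $d=\delta_r$, after which light vertices recover their ball $B(v,\delta_{i_v},G)$ exactly and heavy vertices invoke Claim~\ref{claim_heavy} to conclude they are $i$-dense and add a single edge. The only (harmless) deviation is that you let a heavy vertex connect to an arbitrary $S_{i+1}$-vertex in its ball and argue the stretch analysis tolerates this relaxation, whereas the paper notes the vertex can in fact identify the exact closest $S_{i+1}$-vertex, since it knows the distance to the hitting vertex $u\in S_r$ and to every vertex strictly closer than $u$.
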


\begin{proof}
To implement the algorithm, we compute the \kdnearest for $k=n^{2/3}, d=\delta_r$, which takes $O(\log^2{\delta_r})$ time by Theorem \ref{thrm:knearest}. Let $N_{k,d}(v)$ be the set of $k$ closest vertices of distance at most $d$ to $v$ computed by the algorithm. Let $v \not \in S_r$, and let $i_v$ be the maximum index such that $v \in S_i$. If $v$ is light $|B_v|=|B(v,\delta_{i_v},G)| \leq n^{2/3}$. Since $\delta_{i_v} \leq \delta_r = d$, it follows that $B_v \subseteq N_{k,d}(v)$. Hence, $v$ already knows the distances to all vertices in $B_v$, and since it also knows which of them belong to each set $S_i$, it can add all the relevant edges to vertices in $B_v \cap S_i$, as follows. Since $v$ computed the set $B_v$, it knows whether $B_v \cap S_{i+1} \neq \emptyset$, in this case it is $i$-dense, and adds an edge to the closest vertex in $B_v \cap S_{i+1}$. Otherwise, it is $i$-sparse, and adds edges to all vertices in $B_v \cap S_i.$

We next consider the case that $v$ is heavy. Then, by Claim \ref{claim_heavy}, there is a vertex $u \in N_{k,\delta_{i_v}}(v) \cap S_r$ w.h.p.
Now $v$ knows the distances to all vertices in $N_{k,d}$ and in particular to $u$. Also, $u \in S_r \subseteq S_{i+1}$, which means that $v$ is $i$-dense. Hence, $v$ only needs to add one edge to $c_{i+1}(v)$ which is the closest vertex from $S_{i+1}$, since $v$ knows the distance to $u$ and also to all other vertices strictly closer than $u$, it can add the relevant edge as needed. 
\end{proof}

\begin{claim} \label{claim_edges_Sr}
All the edges in the emulator with two endpoints in $S_r$ can be added to the emulator with $(1+\epsilon')$-approximate distances in $O(\frac{\log^2{\delta_r}}{\epsilon'})$ rounds w.h.p.
\end{claim}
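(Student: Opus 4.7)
}

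The plan is to reduce the task to bounded source detection on the union of $G$ with a bounded hopset, with $S_r$ as the source set. Recall that since $S_{r+1}=\emptyset$, every vertex $v\in S_r$ is $r$-sparse and therefore should add an edge to every vertex in $S_r\cap B(v,\delta_r,G)$, with weight equal to the true $G$-distance. Since $|S_r|=O(\sqrt n)$ w.h.p.\ by Claim \ref{size_Sr}, we only need each $v\in S_r$ to learn (approximate) distances to the $O(\sqrt n)$ vertices of $S_r$ that are within distance $\delta_r$ in $G$.

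First, invoke Theorem \ref{hopset_thm} with threshold $t=\delta_r$ to construct a $(\beta,\epsilon',\delta_r)$-hopset $H'$ with $\beta=O\!\left(\tfrac{\log \delta_r}{\epsilon'}\right)$ and $O(n^{3/2}\log n)$ edges, in $O\!\left(\tfrac{\log^2 \delta_r}{\epsilon'}\right)$ rounds w.h.p. By definition of a bounded hopset, for every pair $u,v\in V$ with $d_G(u,v)\le \delta_r$ we have
\[
d_G(u,v)\;\le\; d_{G\cup H'}^{\,\beta}(u,v)\;\le\;(1+\epsilon')\,d_G(u,v).
\]
In particular this applies to all pairs in $S_r\times S_r$ that are within distance $\delta_r$ in $G$, which is exactly the set of pairs we need an edge for.

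Second, run the source detection algorithm of Theorem \ref{thrm:source-detection} on $G\cup H'$ with source set $S=S_r$ and hop parameter $d=\beta$. Each vertex in $S_r$ will then learn the exact $\beta$-hop distance in $G\cup H'$ to every source in $S_r$ within $\beta$ hops, which by the hopset guarantee is a $(1+\epsilon')$-approximation to $d_G$ on all relevant pairs. The cost of source detection is
\[
O\!\left(\Bigl(\tfrac{m^{1/3}|S_r|^{2/3}}{n}+1\Bigr)\beta\right)
\;=\;O\!\left(\Bigl(\tfrac{n^{2/3}\cdot n^{1/3}}{n}+1\Bigr)\beta\right)
\;=\;O(\beta)\;=\;O\!\left(\tfrac{\log \delta_r}{\epsilon'}\right),
\]
where we used $m=O(n^2)$ trivially (or the sparser $O(n^{3/2}\log n)$ bound coming from $H'$ if one prefers); either way the first term is absorbed.

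Finally, each $v\in S_r$ adds, for every other $u\in S_r$ whose computed $\beta$-hop estimate $\widetilde d(u,v)$ is at most $(1+\epsilon')\delta_r$, an emulator edge $\{u,v\}$ of weight $\widetilde d(u,v)$; by the hopset guarantee this covers every pair $u,v\in S_r$ with $d_G(u,v)\le \delta_r$, with weight in $[d_G(u,v),(1+\epsilon')d_G(u,v)]$, as the claim requires. The total round complexity is dominated by the hopset construction, giving $O\!\left(\tfrac{\log^2 \delta_r}{\epsilon'}\right)$ rounds w.h.p. The only mildly subtle point is that the edges now carry $(1+\epsilon')$-approximate weights rather than exact ones; this is harmless because in the final theorem one simply applies this claim with a rescaled $\epsilon'=\Theta(\epsilon)$ so that the extra $(1+\epsilon')$ factor is absorbed into the overall $(1+\epsilon,\beta)$ stretch bound established in Lemma \ref{lemma_stretch}.
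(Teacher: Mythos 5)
Your proposal is correct and follows essentially the same route as the paper: construct a $(\beta,\epsilon',\delta_r)$-hopset via Theorem \ref{hopset_thm} and then run the \sdk algorithm on $G\cup H'$ with sources $S_r$ and $d=\beta$, with the same round-complexity accounting and the same deferral of the approximate-weight issue to a rescaling of $\epsilon'$ in the stretch analysis.
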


\begin{proof}
Vertices in $S_r$ should add edges to all vertices in $S_r$ of distance at most $\delta_r$. For this, we start by running the bounded hopset algorithm with $t = \delta_r$, which takes $O(\frac{\log^2{\delta_r}}{\epsilon'})$ time and constructs a $(\beta,\epsilon',t)$-hopset $H'$ with $\beta = O(\frac{\log t}{\epsilon'})$ by the randomized construction in Theorem \ref{hopset_thm}. By definition, for all pairs of vertices $u,v$ of distance at most $t$ there is a $\beta$-hop path in $G \cup H'$ of length at most $(1+\epsilon')d_G(u,v)$. 
To learn about those paths we run the \sdk algorithm on the graph $G \cup H'$ with $S = S_r, d=\beta = O(\frac{\log{t}}{\epsilon'})$ which takes  $ O \biggl(\biggl( \frac{ n^{2/3} n^{1/3}}{n} +1 \biggr) \frac{\log{\delta_r}}{\epsilon'} \biggl) = O(\frac{\log{\delta_r}}{\epsilon'})$ time w.h.p by Theorem \ref{thrm:source-detection}, as the size of $S_r$ is $O(\sqrt{n})$ w.h.p by Claim \ref{size_Sr}. This gives $(1+\epsilon')$-approximations for the distances to all sources in $S_r$ of distance at most $\delta_r$ as needed, and hence all vertices in $S_r$ can add all the relevant edges to the emulator with approximate distances. The overall time complexity is $O(\frac{\log^2{\delta_r}}{\epsilon'})$ rounds from constructing the hopsets.
\end{proof}

To conclude, we have the following.

\begin{lemma} \label{lemma_time}
The time complexity of the algorithm is $O(\frac{\log^2{\delta_r}}{\epsilon'})$ rounds w.h.p.
\end{lemma}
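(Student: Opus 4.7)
The plan is to derive Lemma \ref{lemma_time} essentially by bookkeeping on top of Claims \ref{claim_edges_notSr} and \ref{claim_edges_Sr}. The algorithm from the previous subsection decomposes into two phases: (i) sampling the nested sets $S_0 \supset S_1 \supset \cdots \supset S_r$, and (ii) adding the emulator edges.

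For phase (i), I would first note that each $S_i$ is obtained from $S_{i-1}$ by independent coin flips, so every vertex simulates the entire sampling locally without communication. Afterwards, each vertex broadcasts (in a single \clique round, using the $O(\log n)$-bit all-to-all channel) the unique index $i$ with $v \in S_i \setminus S_{i+1}$, so all vertices know the partition. This phase costs $O(1)$ rounds.

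For phase (ii), I would partition the emulator edges into two groups according to which claim applies. If an edge is added during an iteration $i < r$, then its initiating endpoint lies in $S_i \setminus S_{i+1}$; since $S_r \subseteq S_{r-1} \subseteq \cdots \subseteq S_{i+1}$ and the endpoint is not in $S_{i+1}$, that endpoint is not in $S_r$. Hence every such edge has at least one endpoint in $V \setminus S_r$ and is handled by Claim \ref{claim_edges_notSr} in $O(\log^2 \delta_r)$ rounds w.h.p. The remaining emulator edges are added in iteration $i = r$: because $S_{r+1}=\emptyset$, every $v \in S_r$ is $r$-sparse and adds edges only to vertices of $S_r$ within radius $\delta_r$. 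These edges lie entirely inside $S_r$, so they are handled by Claim \ref{claim_edges_Sr} in $O(\log^2 \delta_r / \epsilon')$ rounds w.h.p.

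Summing the two phases gives a total of $O(1) + O(\log^2 \delta_r) + O(\log^2 \delta_r / \epsilon') = O(\log^2 \delta_r / \epsilon')$ rounds, and a union bound over the polynomially many high-probability events in the two claims preserves the w.h.p.\ guarantee. Since the two claims do the substantive distributed work (the distance-sensitive \kdnearest\ routine for non-$S_r$ endpoints and the bounded hopset together with the \sdk\ procedure for the small set $S_r$), there is no genuine obstacle at this level; the only point worth double-checking is precisely the clean partition of the emulator edges described above, which ensures that every edge created by the algorithm falls under exactly one of the two claims.
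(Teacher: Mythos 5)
Your proof is correct and follows essentially the same route as the paper: the sampling phase is local plus one broadcast round, and the edge-adding phase is charged to Claims \ref{claim_edges_notSr} and \ref{claim_edges_Sr}, giving $O(\frac{\log^2{\delta_r}}{\epsilon'})$ rounds in total. The extra detail you supply (the clean partition of emulator edges between the two claims and the union bound) is implicit in the paper's one-line proof.
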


\begin{proof}
Sampling the sets $S_i$ and informing all vertices about it takes one round, adding edges to the emulator takes $O(\frac{\log^2{\delta_r}}{\epsilon'})$ rounds w.h.p by Claims \ref{claim_edges_notSr} and \ref{claim_edges_Sr}.
\end{proof}

\subsubsection*{Analysis and conclusion}

Since we compute only approximations to distances of edges with both endpoints in $S_r$, this changes slightly the stretch analysis in the final stage, see Appendix \ref{sec:emulator_app_clique} for full details. To conclude, we get a $(1+\epsilon,\beta)$ emulator with $O(r \cdot n^{1+\frac{1}{2^r}})$ edges in $O(\frac{\log^2{\beta}}{\epsilon})$ rounds, where $\beta = O(\frac{r}{\epsilon})^{r-1}$. Later we focus mostly on the case that $r=\log\log{n}$. In this case, $\log{\beta} = O(\log{\log{n}} \cdot \log{\frac{\log{\log{n}}}{\epsilon}})$. If $\epsilon$ is constant we get that a complexity of $O(\frac{\log^2{\beta}}{\epsilon})$ is roughly $O((\log{\log{n}})^2).$

\begin{restatable}{theorem}{ConclusionExp}
Let $G$ be an unweighted undirected graph, let $0<\epsilon<1$ and let $r \geq 2$ be an integer, there is a randomized algorithm that builds an emulator $H$ with $O(r \cdot n^{1+\frac{1}{2^r}})$ edges in expectation, and stretch of $(1+\epsilon, \beta),$ in $O(\frac{\log^2{\beta}}{\epsilon})$ rounds w.h.p, where $\beta = O(\frac{r}{\epsilon})^{r-1}$. For the choice $r = \log{\log{n}}$, we have $O(n \log{\log{n}})$ edges in expectation, and $\beta = O(\frac{\log{\log{n}}}{\epsilon})^{\log{\log{n}}}.$
\end{restatable}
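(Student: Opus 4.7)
The plan is to bolt the distributed implementation of Section \ref{sec:alg} onto the stretch and size analyses already developed in the centralized case, adjusting only the last inductive step of the stretch proof to account for the hopset-based distance approximation on $S_r \times S_r$ edges.

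For the edge count, I would first observe that the distributed algorithm constructs (w.h.p.) the same combinatorial emulator as the centralized one: every vertex outside $S_r$ is handled by Claim \ref{claim_edges_notSr} exactly as in Section \ref{sec:alg}, and every vertex inside $S_r$ is $r$-sparse (since $S_{r+1}=\emptyset$) and connects to the same target set $B(v,\delta_r,G)\cap S_r$. Heavy vertices $v\notin S_r$ are treated as $i$-dense, which is consistent because $S_r \subseteq S_{i+1}$ hits $B_v$ w.h.p.\ by Claim \ref{claim_heavy}. Therefore Lemma \ref{lemma_size} gives the $O(r\cdot n^{1+1/2^r})$ expected edge bound directly. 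For the running time, Lemma \ref{lemma_time} yields $O(\log^2 \delta_r / \epsilon')$ rounds w.h.p.; since $\delta_r = 1/\epsilon^r + 2R_r = O(1/\epsilon^{r-1})$ by Claim \ref{claim_Ri_bound} while $\beta=\Theta((r/\epsilon)^{r-1})$, we have $\log \delta_r = O(\log \beta)$, so the complexity is $O(\log^2 \beta / \epsilon)$ after setting $\epsilon' = \Theta(\epsilon)$.

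The main step, and the main obstacle, is the stretch analysis. The only edges whose weights are not exact are the $S_r\times S_r$ edges from Claim \ref{claim_edges_Sr}, whose weights are $(1+\epsilon')$-approximations of the true $G$-distance. Inspecting Lemma \ref{lemma_stretch}, the only place in which edge weights enter the argument is the shortcut $d_H(c_i(u'),c_i(v'))\leq d(u',v')+2R_i$ used in Case 1; this shortcut is approximate only when $i=r$, giving $d_H(c_r(u'),c_r(v'))\leq (1+\epsilon')(d(u',v')+2R_r)$. Propagating this through Case 1 inflates the per-segment bound to $(1+\epsilon')d(u',v') + 4R_r + 2\epsilon' R_r$; summing over the $O(d(u,v)\epsilon^r)$ segments in Case 2 contributes an extra $\epsilon' d(u,v)$ to the multiplicative error and a lower-order additive term of the same order as $\beta_r$. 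Hence the overall stretch becomes $(1+20\epsilon r + \epsilon',\, O(\beta_r))$; picking $\epsilon'$ to be a small constant fraction of $\epsilon r$ and then rescaling $\epsilon$ exactly as in the proof of Theorem \ref{thm_conclusion_emulator} gives the claimed $(1+\epsilon,\beta)$ stretch with $\beta=O(r/\epsilon)^{r-1}$.

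Finally, substituting $r=\log\log n$ yields $n^{1/2^r}=n^{1/\log n}=O(1)$, so the emulator has $O(n\log\log n)$ edges in expectation and $\beta=O(\log\log n/\epsilon)^{\log\log n}$, as stated. The delicate part is the bookkeeping confirming that the $(1+\epsilon')$ approximation enters only at the top level of the recursion (level $i=r$): edges added by non-$S_r$ vertices, including the edges from $(r-1)$-dense vertices to their $c_r(\cdot)$ representative in $S_r$, carry exact weights via Claim \ref{claim_edges_notSr}, so the $(1+\epsilon')$ factor multiplies each reconstructed $u$-$v$ path at most once and does not compound with the $20\epsilon r$ multiplicative factor already accumulated through the recursion. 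Once this is verified, the argument of Theorem \ref{thm_conclusion_emulator} transfers essentially unchanged.
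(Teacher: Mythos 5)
Your proposal is correct and follows essentially the same route as the paper's own proof: keep the size bound (Lemma \ref{lemma_size}) and round bound (Lemma \ref{lemma_time}) from the exact analysis, and redo only the $i=r$ case of Lemma \ref{lemma_stretch}, where the $(1+\epsilon')$-approximate $S_r\times S_r$ weights contribute an extra $2\epsilon' R_r=O(\beta_r)$ additive term and an extra $\epsilon'$ multiplicative term, which are absorbed by taking $\epsilon'=\Theta(\epsilon r)$ and rescaling exactly as in Theorem \ref{thm_conclusion_emulator}. One harmless slip: $\delta_r=\frac{1}{\epsilon^r}+2R_r=\Theta(\frac{1}{\epsilon^r})$, not $O(\frac{1}{\epsilon^{r-1}})$, but since $r\leq 2(r-1)$ for $r\geq 2$ you still have $\log\delta_r=O(\log\beta)$, so the claimed $O(\frac{\log^2\beta}{\epsilon})$ round complexity is unaffected.
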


\subsubsection*{A variant that works w.h.p}

In the algorithm described above, the number of edges is $O(n \log{\log{n}})$ in expectation. For our applications, it would be useful to have this number of edges w.h.p, we next describe a variant that obtains this.

By the proof of Lemma \ref{lemma_size}, all vertices not in $S_r$ add $O(r n^{1+\frac{1}{2^r}})$ edges to the emulator in expectation. In addition, by Claim \ref{size_Sr}, the size of $S_r$ is $O(\sqrt{n})$ w.h.p, which means that $|S_r|^2 = O(n)$ w.h.p, hence the number of edges added between vertices in $S_r$ is at most $O(n)$ w.h.p. Also, from Claim \ref{claim_heavy}, for all heavy vertices $v$, there is a vertex in $N_{k,\delta_{i_v}}(v) \cap S_r$ w.h.p. We would like to find a run where all the above events hold. From Markov's inequality, we have that with constant probability the number of edges added by all vertices not in $S_r$ is $O(r n^{1+\frac{1}{2^r}})$, since the other events hold w.h.p, we have that with constant probability all the above events hold. Hence, if we run the algorithm for $O(\log{n})$ times in parallel, w.h.p we have a run where all events hold. In such a run the total number of edges in the emulator is $O(r n^{1+\frac{1}{2^r}})$. We next explain how to implement the parallel runs efficiently. 
For this, we show how to identify a run where all 3 events hold. 

\begin{claim}
In $O(\log^2{\delta_r} + \log{\log{\log{n}}})$ rounds, we can find w.h.p a run of the algorithm where the number of edges added by vertices not in $S_r$ is $O(r n^{1+\frac{1}{2^r}})$, the size of $S_r$ is $O(\sqrt{n})$ and for all heavy vertices $v$, there is a vertex in $N_{k,\delta_{i_v}}(v) \cap S_r$. 
\end{claim}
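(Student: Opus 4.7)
The plan is to run $c \log n$ independent copies of the sampling phase in parallel and then to efficiently identify one whose sampled sets satisfy conditions (1)--(3). Since drawing the nested subsamples $\{S_i^{(j)}\}_{i=0}^{r+1}$ is purely local, the parallel repetition itself costs no communication. Each individual copy satisfies conditions (2) and (3) with high probability (Claims \ref{size_Sr} and \ref{claim_heavy}), and condition (1) with at least constant probability (Markov applied to the expectation bound in Lemma \ref{lemma_size}); hence a single copy is simultaneously good with constant probability, and standard Chernoff amplification over $c \log n$ independent copies yields a good copy w.h.p.

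The key observation that makes checking all $c \log n$ copies cheap is that the set $N_{k,\delta_r}(v)$ for $k = n^{2/3}$ and $d = \delta_r$ is a function of $G$ alone and does not depend on any sampling. Hence a \emph{single} execution of the \kdnearest algorithm (Theorem \ref{thrm:knearest}), in $O(\log^2 \delta_r)$ rounds, equips every vertex with its $k$-nearest neighbours up to distance $\delta_r$, and this one table serves every parallel copy. Every vertex then broadcasts, for each copy $j$, the single bit telling whether it belongs to $S_r^{(j)}$; these $c\log n$ bits per vertex fit in $O(1)$ \clique-rounds using Lenzen-style routing.

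With this information in place, each vertex locally performs its part of the three tests for every copy $j$. Condition (2) reduces to summing the broadcast $S_r^{(j)}$-membership bits. Condition (3) is, for every heavy vertex $v$, an intersection test between $S_r^{(j)}$ and $N_{k,\delta_{i_v^{(j)}}}(v) \subseteq N_{k,\delta_r}(v)$, whose members and their distances $v$ already knows. Condition (1) requires each vertex to locally compute the number of edges it would add to the $j$-th emulator: it can do so from its known ball if it is light, or as a single-edge contribution if it is heavy and condition (3) is satisfied for copy $j$. The per-copy totals are obtained by aggregating one scalar per vertex per copy; partitioning the $c \log n$ independent sums across $c \log n$ coordinator vertices and using a standard aggregation tree inside each cohort completes the aggregation in $O(\log \log \log n)$ rounds, after which each coordinator broadcasts whether its copy satisfies all three bounds, and the smallest-indexed good copy is agreed upon in $O(1)$ rounds.

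The one subtle point to verify is that a single \kdnearest invocation really does suffice for all copies, which is exactly what the uniform bound $\delta_{i_v^{(j)}} \leq \delta_r$ buys us: any restriction to a smaller radius needed by copy $j$ is performed by $v$ locally on the already-computed $N_{k,\delta_r}(v)$ by filtering on the stored distances. Summing the $O(\log^2 \delta_r)$ rounds of the one nearest-neighbours invocation with the $O(\log\log\log n)$ rounds of broadcast and aggregation gives the claimed $O(\log^2 \delta_r + \log\log\log n)$ bound, and the rest of the emulator construction then proceeds on whichever good copy has been identified.
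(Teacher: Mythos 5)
Your overall route is the same as the paper's (simulate the $O(\log n)$ samplings locally, argue via the w.h.p.\ events plus Markov that a constant fraction of copies is good, reuse a \emph{single} \kdnearest invocation with $k=n^{2/3}$, $d=\delta_r$ for all copies, and let per-run coordinators collect the edge counts and verdicts), but there is a genuine gap in how you check condition (1). Broadcasting only the one bit ``$v\in S_r^{(j)}$'' per copy is not enough information. A light vertex $v$ with top index $i_v^{(j)}<r$ must decide whether it is $i_v$-dense or $i_v$-sparse in copy $j$, and if sparse it must count $|B_v\cap S_{i_v}^{(j)}|$; both require knowing, for every vertex $u$ in its ball $B_v\subseteq N_{k,\delta_r}(v)$, the level of $u$ in copy $j$, i.e.\ membership in the intermediate sets $S_1^{(j)},\ldots,S_{r-1}^{(j)}$. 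These memberships are determined by $u$'s private coin flips and are not implied by $u$'s $S_r^{(j)}$-bit, so with your broadcast the per-copy edge counts (and even the dense/sparse classification) cannot be computed locally. Conditions (2) and (3) are fine with your bits, since they only involve $S_r^{(j)}$.

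The repair is exactly what the paper does: each vertex announces, for every copy $j$, the index $i$ such that $v\in S_i^{(j)}\setminus S_{i+1}^{(j)}$. Since $i\leq \log\log n$, this is $O(\log\log\log n)$ bits per copy, so $O(\log n\cdot\log\log\log n)$ bits in total, which with $O(\log n)$-bit messages takes $O(\log\log\log n)$ rounds --- this broadcast, not the aggregation, is the real source of the additive $\log\log\log n$ term in the claim. Your aggregation tree is also unnecessary (though harmless): in the \clique a coordinator $v_j$ can receive one $O(\log n)$-bit scalar from every vertex in a single round, and each vertex sends only $O(\log n)$ such messages, so collecting the per-copy totals and announcing the chosen copy takes $O(1)$ rounds.
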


\begin{proof}
As discussed above, if we run the algorithm $O(\log{n})$ times, then w.h.p we will have a run that satisfies the above. We next explain how to find it. The first part of the algorithm is to sample the sets $S_i$, this is done by all vertices locally, and we now let all vertices simulate this sampling for $O(\log{n})$ different independent runs of the algorithm. At the end of the sampling, each vertex $v$ should inform all other vertices, the index $i$ where $v \in S_i \setminus S_{i+1}$, since $i \leq \log\log{n}$, this number can be represented in $O(\log{\log{\log{n}}})$-bits, and hence to send all the $O(\log{n})$ values for the different runs, we only need $O(\log{\log{\log{n}}})$ rounds, as the size of messages in each round is $O(\log{n})$.

Next, we run the \kdnearest algorithm with $k=n^{2/3},d=\delta_r.$ As explained in the proof of Claim \ref{claim_edges_notSr}, this allows all vertices not in $S_r$ to learn about all the edges they add to the emulator. Note that we only need to run the \kdnearest algorithm once, to simulate all the runs. While the identity of the edges added to the emulator may change between runs, the only relevant edges are to vertices in the \kdnearest, and the choice of the edges depends on the sets $S_i$ that vertices belong to, and all vertices already know which vertices are in which set for each one of the runs. In addition, since all vertices know which vertices are in $S_r$, they can check locally if $|S_r| = O(\sqrt{n})$.  For the last event, since vertices computed the \kdnearest, each heavy vertex $v$ knows if there is a vertex in in $N_{k,\delta_{i_v}}(v) \cap S_r$.

Now, to decide about the optimal run we choose $O(\log{n})$ vertices, each one would be assigned to evaluate one of the runs. For a particular run $j$, all vertices send to the vertex $v_j$ responsible for run $j$, the number of edges they added to the emulator in run $j$, and if they are heavy, they also inform whether $N_{k,\delta_{i_v}}(v) \cap S_r \neq \emptyset$ in run $j$. From the above information, $v_j$ can see if all 3 events hold for run $j$. At the end, all $O(\log{n})$ vertices assigned for the runs let all vertices know the total number of edges added to the emulator by vertices not in $S_r$, and whether the 2 other events hold. Then, all vertices choose an optimal run to be the run with minimum number of edges added, from the ones that satisfy all the events. 

As computing the \kdnearest takes $O(\log^2{\delta_r})$ rounds, the total time complexity is $O(\log^2{\delta_r} + \log{\log{\log{n}}})$, and since we simulate $O(\log{n})$ runs, w.h.p we find a run where all 3 events hold.
\end{proof}

After we identify a run where all 3 events hold, all vertices know in which sets $S_i$ all vertices are in this run, and from now on they can just run the whole algorithm (actually, since we already computed the \kdnearest, all vertices not in $S_r$ already know which edges they add to the emulator, so we only need to run the final stage of the algorithm to decide which edges vertices in $S_r$ add). This gives the following.

\begin{theorem} \label{thm_emulator}
Let $G$ be an unweighted undirected graph, let $0<\epsilon<1$ and let $r \geq 2$ be an integer, there is a randomized algorithm that builds an emulator $H$ with $O(r \cdot n^{1+\frac{1}{2^r}})$ edges w.h.p, and stretch of $(1+\epsilon, \beta),$ in $O(\frac{\log^2{\beta}}{\epsilon})$ rounds w.h.p, where $\beta = O(\frac{r}{\epsilon})^{r-1}$. For the choice $r = \log{\log{n}}$, we have $O(n \log{\log{n}})$ edges w.h.p, and $\beta = O(\frac{\log{\log{n}}}{\epsilon})^{\log{\log{n}}}.$
\end{theorem}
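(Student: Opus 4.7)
The plan is to boost the expected edge bound of the preceding theorem to a with-high-probability bound by running $\Theta(\log n)$ independent instances of the construction in parallel and selecting one in which three good events all occur. The events I would track are exactly the ones flagged in the preceding discussion: (E1) the total number of edges added by vertices outside $S_r$ is $O(r \cdot n^{1+1/2^r})$, (E2) $|S_r| = O(\sqrt{n})$, and (E3) every heavy vertex $v$ has a sampled representative in $N_{k,\delta_{i_v}}(v) \cap S_r$ for $k = n^{2/3}$. Events (E2) and (E3) hold w.h.p.\ in each run by Claims \ref{size_Sr} and \ref{claim_heavy}, while (E1) holds with constant probability via Markov applied to the bound from Lemma \ref{lemma_size}. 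Hence each trial is simultaneously good with constant probability, and $\Theta(\log n)$ independent trials ensure a good trial w.h.p.

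The next step is to argue that the parallel simulation costs no more than a single run up to low-order terms. The sampling of the $S_i$'s is purely local and requires no communication; each vertex just simulates $\Theta(\log n)$ independent copies for itself. Announcing the membership index $i_v \le \log \log n$ for every run can be packed: since $i_v$ occupies $O(\log \log \log n)$ bits and a single \clique message carries $\Theta(\log n)$ bits, all $\Theta(\log n)$ indices can be broadcast in $O(\log \log \log n)$ rounds. Crucially, the one expensive subroutine, the \kdnearest computation with $k = n^{2/3}$ and $d = \delta_r$, is invoked only once, because its output depends only on $G$, not on the sampling; from that output each vertex can locally determine, for every simulated run, whether it is light or heavy and which edges it would contribute (exactly as in Claim \ref{claim_edges_notSr}). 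This call costs $O(\log^2 \delta_r)$ rounds by Theorem \ref{thrm:knearest}.

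Selecting a good run is done via a small coordinator protocol: designate $\Theta(\log n)$ distinguished vertices $v_1,\dots,v_{\Theta(\log n)}$, one per run, and have every vertex report its per-run edge count and per-run heavy-vertex indicator to the appropriate $v_j$ in $O(1)$ rounds using the Lenzen routing primitive. Each coordinator aggregates locally, checks (E1), (E2), (E3) for its run, and broadcasts the verdict; all vertices then agree on the lexicographically first good index $j^\star$. With the chosen sampling fixed, vertices outside $S_r$ already know their emulator edges from the cached \kdnearest output. The remaining work is to add the $S_r$-to-$S_r$ edges with $(1+\epsilon')$-approximate weights via the bounded hopset plus source detection as in Claim \ref{claim_edges_Sr}, which costs $O(\log^2 \delta_r / \epsilon)$ rounds. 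Summing gives total complexity $O(\log^2 \delta_r / \epsilon) = O(\log^2 \beta / \epsilon)$ since $\delta_r = \Theta(\beta)$ by Claim \ref{claim_Ri_bound} and the definition of $\delta_r$. The stretch and correctness of the selected run are inherited from the earlier analysis, and the specialization $r = \log \log n$ immediately yields the claimed $O(n \log \log n)$ edges and $\beta = O(\tfrac{\log\log n}{\epsilon})^{\log \log n}$.

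The main obstacle I anticipate is the bookkeeping: naively routing $\Theta(\log n)$ distinct per-run quantities per vertex could violate the $O(\log n)$-bit bandwidth, so one must be careful to (a) pack membership indices into a single message using the small range of $i_v$, and (b) use Lenzen-style routing to deliver per-run counts to coordinators in $O(1)$ rounds. A subtler point is justifying that the single shared \kdnearest call is sufficient for all runs; this is precisely because the $k$-nearest structure is a property of $G$ alone, and the randomness only influences which nearest vertices lie in the various $S_i$'s, a query each vertex can resolve locally once every run's membership vector is known.
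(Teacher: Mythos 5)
Your proposal matches the paper's proof essentially step for step: the same three events (Markov for the edge count of non-$S_r$ vertices, w.h.p.\ bounds for $|S_r|$ and for hitting heavy vertices), $\Theta(\log n)$ parallel runs simulated with a single shared \kdnearest call, packed broadcasting of the small membership indices, per-run coordinator vertices to select a good run, and the final $S_r$-to-$S_r$ stage via the bounded hopset and source detection, yielding the same $O(\log^2\beta/\epsilon)$ bound. The only (immaterial) difference is that the paper lets each vertex send its per-run counts directly to the $O(\log n)$ coordinators without invoking Lenzen routing.
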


\section{Applications} \label{sec:applications}

\subsection{$(1+\epsilon, \beta)$-approximation of APSP}

Using the emulator, we directly get a near-additive approximation for APSP, by building a sparse emulator and letting all vertices learn it. This gives the following.  

\begin{theorem} \label{thm_near_additive_apsp}
Let $0<\epsilon<1$, there is a randomized $(1+\epsilon, \beta)$-approximation algorithm for unweighted undirected APSP in the \clique model that takes $O(\frac{\log^2{\beta}}{\epsilon})$ rounds w.h.p, where $\beta = O(\frac{\log{\log{n}}}{\epsilon})^{\log{\log{n}}}.$
\end{theorem}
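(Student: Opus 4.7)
The plan is a direct application of the emulator construction followed by broadcast. First, I would invoke Theorem \ref{thm_emulator} with the choice $r = \log\log n$ to construct, in $O(\log^2 \beta/\epsilon)$ rounds w.h.p., a $(1+\epsilon, \beta)$-emulator $H = (V, E', w)$ with $|E'| = O(n \log\log n)$ edges w.h.p., where $\beta = O(\log\log n / \epsilon)^{\log\log n}$. At this point, by the definition of an emulator, if every vertex knew all of $H$, it could locally run Dijkstra on $H$ and output $(1+\epsilon, \beta)$-approximate distances to all other vertices.

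The second step is to disseminate $H$ to all vertices. Since $H$ has $O(n \log\log n)$ edges in total, each edge can be described by $O(\log n)$ bits (two endpoints and a weight bounded by $n$), and each vertex needs to receive all of these edges. Using Lenzen's routing scheme for the \clique model, any communication pattern in which every vertex is the source and destination of at most $O(n)$ messages of size $O(\log n)$ can be completed in $O(1)$ rounds. Here the total traffic is $O(n \log\log n)$ messages that must be broadcast to $n$ destinations each; partitioning the edges across vertices and applying Lenzen's routing $O(\log\log n)$ times delivers the entire emulator to every vertex in $O(\log\log n)$ rounds.

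Finally, each vertex $v$ locally computes $d_H(v, u)$ for every $u \in V$ by running Dijkstra on $H$. By the emulator guarantee, this distance satisfies $d_G(v,u) \le d_H(v,u) \le (1+\epsilon) d_G(v,u) + \beta$, which is the desired approximation.

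The overall round complexity is $O(\log^2 \beta / \epsilon) + O(\log\log n)$ rounds w.h.p. Since $\beta \ge \log\log n$, the second term is absorbed into the first, giving the claimed bound of $O(\log^2 \beta / \epsilon)$ rounds. There is no significant obstacle here beyond verifying that the broadcast step fits within the claimed complexity; the only minor subtlety is checking that $O(n \log\log n)$ edges can indeed be broadcast in $O(\log\log n)$ rounds via standard routing, which follows from the fact that each vertex can send and receive $n$ messages per round in the \clique model.
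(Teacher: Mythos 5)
Your proposal is correct and follows essentially the same route as the paper: build the $(1+\epsilon,\beta)$-emulator via Theorem \ref{thm_emulator} and then let all vertices learn its $O(n\log\log n)$ edges in $O(\log\log n)$ rounds using Lenzen's routing, after which the approximation follows directly from the emulator guarantee. The only cosmetic difference is in how the broadcast is organized (the paper first gathers the emulator at a single vertex and then redistributes it), which does not affect correctness or the stated complexity.
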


\begin{proof}
First, we use Theorem \ref{thm_emulator} to build a $(1+\epsilon,\beta)$-emulator of size $O(n \log{\log{n}})$ in $O(\frac{\log^2{\beta}}{\epsilon})$ rounds w.h.p, for $\beta = O(\frac{\log{\log{n}}}{\epsilon})^{\log{\log{n}}}$. By definition, the emulator contains a path with $(1+\epsilon,\beta)$ stretch for all pairs of vertices.
Hence, to approximate APSP, we let all vertices learn the emulator. This is done as follows. As the size of the emulator is $O(n \log{\log{n}})$ w.h.p, and each vertex knows the edges adjacent to it in the emulator, we can use Lenzen's routing \cite{lenzen2013optimal} to let one vertex $v$ learn the whole emulator in $O(\log{\log{n}})$ rounds w.h.p. Then, in additional $O(\log{\log{n}})$ rounds all vertices can learn it, as $v$ can divide the edges of the emulator into $n$ parts of size $O(\log{\log{n}})$, send each one of them to one vertex, and then each vertex can send this information to all other vertices in $O(\log{\log{n}})$ time, allowing all vertices learn the whole emulator.
\end{proof}

\subsection{$(1+\epsilon)$-approximation of multi-source shortest paths}
We next show how to get $(1+\epsilon)$-approximation for SSSP or multi-source shortest paths as long as the number of sources is $O(\sqrt{n})$ in just $poly(\log{\log{n}})$ rounds in unweighted undirected graphs. The idea is simple, for vertices that are far away, a $(1+\epsilon,\beta)$-approximation already gives a $(1+\epsilon)$-approximation. So we just need to take care of close by vertices of distance around $O(\frac{\beta}{\epsilon})$ from each other, for this we can use the bounded hopset. We show the following.

\begin{theorem} \label{thm_maap}
Let $0<\epsilon<1$ and let $G$ be an unweighted undirected graph, there is a randomized $(1+\epsilon)$-approximation algorithm for multi-source shortest paths in the \clique model from a set of sources $S$ of size $O(\sqrt{n})$ that takes $O(\frac{\log^2{\beta}}{\epsilon})$ rounds w.h.p, where $\beta = O(\frac{\log{\log{n}}}{\epsilon})^{\log{\log{n}}}.$
\end{theorem}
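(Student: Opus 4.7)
The plan is to combine the $(1+\epsilon,\beta)$-emulator of Theorem \ref{thm_emulator} (which handles long distances essentially for free) with a bounded-hopset based source-detection phase (which handles short distances with a near-exact multiplicative approximation). Set the distance threshold
\[
t \;=\; \Theta\!\left(\frac{\beta}{\epsilon}\right), \qquad \beta \;=\; O\!\left(\tfrac{\log\log n}{\epsilon}\right)^{\log\log n},
\]
and let $\epsilon'=\Theta(\epsilon)$ be chosen later by rescaling.

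\textbf{Step 1: Long distances via the emulator.} First, invoke Theorem \ref{thm_emulator} with parameter $r=\log\log n$ to construct a $(1+\epsilon',\beta)$-emulator $H^{em}$ with $O(n\log\log n)$ edges, in $O(\log^2\beta/\epsilon')$ rounds. Using Lenzen's routing, let every vertex collect all edges of $H^{em}$ in $O(\log\log n)$ additional rounds, and compute the $S\times V$ distances locally in $H^{em}$. For any pair $(s,v)$ with $d_G(s,v)\geq t$ we then obtain
\[
d_{H^{em}}(s,v) \;\leq\; (1+\epsilon')d_G(s,v)+\beta \;\leq\; (1+\epsilon'+\epsilon)d_G(s,v),
\]
which is already a $(1+O(\epsilon))$-approximation for the long pairs.

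\textbf{Step 2: Short distances via hopset + source-detection.} Next, apply Theorem \ref{hopset_thm} to build a $(\beta', \epsilon', t)$-hopset $H'$ with $\beta'=O(\log t/\epsilon')$ and $O(n^{3/2}\log n)$ edges, in $O(\log^2 t/\epsilon')=O(\log^2\beta/\epsilon')$ rounds. Then run the \sdk algorithm of Theorem \ref{thrm:source-detection} on the graph $G\cup H'$ with source set $S$ (of size $O(\sqrt{n})$) and hop bound $d=\beta'$. Since $|S|=O(\sqrt{n})$ and the number of edges in $G\cup H'$ is at most $n^2$, the complexity is
\[
O\!\left(\Bigl(\frac{(n^2)^{1/3}\cdot(\sqrt n)^{2/3}}{n}+1\Bigr)\cdot \beta'\right) \;=\; O(\beta') \;=\; O(\log \beta/\epsilon').
\]
By the definition of a bounded hopset, every vertex $v$ learns for each source $s$ with $d_G(s,v)\leq t$ an estimate $\widetilde d(s,v)$ with $d_G(s,v)\leq \widetilde d(s,v)\leq(1+\epsilon')d_G(s,v)$.

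\textbf{Step 3: Combine.} Each vertex $v$ outputs, for every source $s\in S$, the minimum of its two estimates: the emulator distance $d_{H^{em}}(s,v)$ and the hopset estimate $\widetilde d(s,v)$ (with $\widetilde d(s,v):=\infty$ if not defined). For short pairs $d_G(s,v)\leq t$ the second estimate gives $(1+\epsilon')$-approximation; for long pairs $d_G(s,v)>t=\Theta(\beta/\epsilon)$ the first estimate gives $(1+O(\epsilon))$-approximation as above. Rescaling $\epsilon'$ by a constant factor yields a $(1+\epsilon)$-approximation for every source-vertex pair. The total round complexity is dominated by the emulator and hopset constructions, namely $O(\log^2\beta/\epsilon)$ rounds w.h.p.

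\textbf{Main obstacle.} The construction itself is modular, so the substantive issue is only bookkeeping: one must verify that (i) source-detection on the possibly dense graph $G\cup H'$ still runs in the claimed time (which is why the $|S|=O(\sqrt n)$ restriction enters, via the factor $m^{1/3}|S|^{2/3}/n$ being $O(1)$), and (ii) the threshold $t=\Theta(\beta/\epsilon)$ is large enough so that the additive $+\beta$ of the emulator is absorbed by $\epsilon\cdot d_G(s,v)$ on the long range, but small enough so that the hopset construction (whose cost depends on $\log t$) remains within the $O(\log^2\beta/\epsilon)$ budget; both are ensured by the specific choice above.
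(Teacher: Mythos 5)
Your proposal is correct and follows essentially the same route as the paper's own proof: build the $(1+\Theta(\epsilon),\beta)$-emulator and broadcast it for pairs at distance $\geq t=\Theta(\beta/\epsilon)$, build a $(\beta',\epsilon',t)$-hopset and run \sdk from $S$ on $G\cup H'$ (using $m^{1/3}|S|^{2/3}/n=O(1)$ for $|S|=O(\sqrt n)$) for the short pairs, then take the minimum and rescale, with $\log t=O(\log\beta)$ giving the $O(\log^2\beta/\epsilon)$ bound. No substantive differences.
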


\begin{proof}
We use Theorem \ref{thm_emulator} to build $(1+\frac{\epsilon}{2}, \beta)$-emulator for $\beta=O(\frac{\log{\log{n}}}{\epsilon})^{\log{\log{n}}}$ in $O(\frac{\log^2{\beta}}{\epsilon})$ rounds w.h.p. Since the size of the emulator is $O(n \log{\log{n}})$ w.h.p, all vertices can learn it in $O(\log{\log{n}})$ rounds, as explained in the proof of Theorem \ref{thm_near_additive_apsp}. 

Let $t = \frac{2\beta}{\epsilon}$, we build a $(h,\epsilon,t)$-hopset $H'$, for $h = O(\frac{\log{t}}{\epsilon})$ in $O(\frac{\log^2{t}}{\epsilon})$ rounds using the randomized algorithm in Theorem \ref{hopset_thm}. This gives us $h$-hop paths with $(1+\epsilon)$-approximate distances to all pairs of vertices of distance at most $t$. Then, we can run the \sdk algorithm to compute for each vertex the distances from $S$ using paths of hop distance at most $h$ in the graph $G \cup H'.$ This takes $ O \biggl(\biggl( \frac{ n^{2/3} |S|^{2/3}}{n} +1 \biggr) h \biggl)=O\biggl(\biggl(\frac{|S|^{2/3}}{n^{1/3}} +1 \biggr)  h \biggr)$ rounds. In particular the complexity is $O(h)=O(\frac{\log{t}}{\epsilon})$ if $|S|=O(\sqrt{n}).$

For each pair of vertices $\{v,u\}$ where $u \in S$, we evaluate the distance with the minimum distance found throughout the computation, which can be either the distance between them in the emulator or in the \sdk algorithm. If $d(u,v) \leq t$, then during the \sdk algorithm $u$ and $v$ learn $(1+\epsilon)$-approximate distance, as by definition, the graph $G \cup H'$ has a $h$-hop path between $u$ and $v$ with $(1+\epsilon)$-approximate distance in this case. Otherwise, $d(u,v) \geq t = \frac{2\beta}{\epsilon}$, in this case, the $(1+\frac{\epsilon}{2},\beta)$-emulator has a path between $u$ and $v$ of distance at most $(1+\frac{\epsilon}{2})d(u,v)+\beta \leq (1+\frac{\epsilon}{2})d(u,v) + \frac{\epsilon}{2} d(u,v) \leq (1+\epsilon)d(u,v).$
Hence, we are guaranteed to find a $(1+\epsilon)$-approximation for the distance $d(u,v).$

The time complexity is $O(\frac{\log^2{\beta}}{\epsilon} + \frac{\log^2{t}}{\epsilon})$ for $|S|=O(\sqrt{n})$ and $t=\frac{2\beta}{\epsilon}$. Since $\beta = O(\frac{\log{\log{n}}}{\epsilon})^{\log{\log{n}}}$, $t$ can be bounded by $O(\frac{\log{\log{n}}}{\epsilon})^{\log{\log{n}}+1}$, and it is easy to see that $\log{t} = O(\log{\beta})$, hence the complexity is  $O(\frac{\log^2{\beta}}{\epsilon})$ w.h.p.
\end{proof}

\subsection{($2+\epsilon$)-approximation of APSP}

We showed how to get a $(1+\epsilon,\beta)$-approximation for APSP, next we discuss a $(2+\epsilon)$-approximation, which gives a better approximation for short paths. 
As we showed before, for long paths of length around $t=O(\frac{\beta}{\epsilon})$, we already have a $(1+\epsilon)$-approximation from the emulator, so we need to take care only of short paths of length at most $t$. To explain the intuition, we start by describing a simple $(3+\epsilon)$-approximation, and then explain how to improve the approximation. Assume we sample a random set $A$ of $\sqrt{n}$ vertices, then each vertex has a vertex from $A$ among its $k=\sqrt{n}\log{n}$ closest vertices w.h.p. Now each vertex learns its \ktnearest vertices, which are the $k$ closest vertices of distance at most $t$. For any vertex $u$, there are 2 cases, either the \ktnearest vertices to $u$ contain its entire $t$-neighbourhood, in which case, $u$ already knows all the distances to vertices at distance at most $t$, or there are at least $k$ vertices in the $t$-neighbourhood of $u$, in which case there is also a vertex from $A$ there, $p_A(u)$. Now, for a pair of vertices $u,v$ of distance at most $t$, if $v$ is in the $(k,t)$-closest vertices to $u$, we are done. Otherwise, $d(u,p_A(u)) \leq d(u,v)$, which gives $$d(u,p_A(u))+d(p_A(u),v) \leq d(u,v) + d(p_A(u),u)+d(u,v) \leq 3d(u,v).$$ Hence, if we compute the distance from $u$ to $v$ through $p_A(u)$ we get a $3$-approximation for the distance. To do so, we let all vertices learn approximate distances to all vertices in $A$ at distance at most $2t$, this can be implemented efficiently using the bounded hopset and source detection algorithms. Since we only approximate the distances to $A$, this results in a $(3+\epsilon)$-approximation.

Obtaining a better approximation of $(2+\epsilon)$ requires several changes to the algorithm and analysis. At a high-level the algorithm starts by dealing with paths that have at least one high-degree vertex of degree at least $\sqrt{n}\log{n}$, for such paths it is relatively easy to find a good approximation using hitting set arguments. Then, we are left with a sparser graph of size $\widetilde{O}(n^{3/2})$. In this graph, we want to implement an algorithm similar to the $(3+\epsilon)$-approximation described above, compute a random set of vertices $A$, compute the \ktnearest vertices and compute distances to close by sources in $A$. However, since now we work only in a sparse graph, we can afford computing distances to a larger set $A$ of size around $n^{3/4}$, which allows focusing on $k$ around $n^{1/4}$. We show that we can exploit this sparsity and get a better approximation in this case using matrix multiplication of 3 sparse matrices.
This approach generally follows \cite{DBLP:conf/podc/Censor-HillelDK19}, however following the algorithm described there directly adds some logarithmic factors when we multiply the 3 matrices, and to avoid them we need to sparsify the graph even further. We next describe the algorithm in detail, and show the following. 

\begin{restatable}{theorem}{APSPthm}
Let $0<\epsilon<1$, there is a randomized $(2+\epsilon)$-approximation algorithm for unweighted undirected APSP in the \clique model that takes $O(\frac{\log^2{\beta}}{\epsilon})$ rounds w.h.p, where $\beta = O(\frac{\log{\log{n}}}{\epsilon})^{\log{\log{n}}}.$
\end{restatable}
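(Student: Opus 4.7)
The plan is to partition the pairs $(u,v)$ by distance: for long pairs with $d(u,v)\ge t:=O(\beta/\epsilon)$, the $(1+\epsilon/2,\beta)$-emulator of Theorem~\ref{thm_emulator} already yields a multiplicative $(1+\epsilon)$-approximation, since the additive $\beta$ is absorbed by the $t$-distance threshold; all $O(n\log\log n)$ emulator edges can be disseminated to every vertex in $O(\log\log n)$ rounds (as in Theorem~\ref{thm_near_additive_apsp}). What remains is to produce a $(2+\epsilon)$-estimate for every pair at distance at most $t$ within the budget $O(\log^2\beta/\epsilon)$.

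For short pairs the first step is to peel off shortest paths that contain at least one \emph{heavy} vertex (of degree exceeding $\sqrt n\log n$). Sample a hitting set $B$ by including each vertex independently with probability $\Theta(\log n/\sqrt n)$; w.h.p.\ $|B|=\widetilde{O}(\sqrt n)$ and $B$ contains a neighbor of every heavy vertex. Build a $(O(\log t/\epsilon),\epsilon,t)$-hopset $H'$ via Theorem~\ref{hopset_thm} and invoke Theorem~\ref{thrm:source-detection} with $S=B$ on $G\cup H'$ so that every vertex learns $(1+\epsilon)$-approximate distances to every $b\in B$ within radius $t$; since $|B|^{2/3}/n^{1/3}=\widetilde{O}(1)$, this phase fits in $O(\log^2\beta/\epsilon)$ rounds. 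For any pair whose shortest path passes through a heavy $w$, some $b\in B$ is adjacent to $w$, and $\widetilde d(u,b)+\widetilde d(b,v)\le (1+\epsilon)d(u,v)+2$, which (combined with case (i) below for very short pairs) is comfortably within the $(2+\epsilon)$ target.

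For short pairs whose shortest path uses only light vertices, switch to the sparsified graph $G'$ obtained from $G$ by removing all edges incident to heavy vertices; note $|E(G')|=\widetilde{O}(n^{3/2})$ and the relevant distances are preserved by $G'$. Set $k=\Theta(n^{1/4}\log n)$ and sample a pivot set $A\subseteq V$ of size $\widetilde{O}(n^{3/4})$, so that w.h.p.\ every vertex has a pivot among its $k$-nearest in $G'$. Use Theorem~\ref{thrm:knearest} on $G'$ to compute $N_{k,t}(u)$ exactly for every $u$, and combine a second bounded hopset on $G'$ with source-detection from $S=A$ to learn $(1+\epsilon)$-approximate distances from every vertex to every pivot within radius $t$. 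Both sub-routines fit in $O(\log^2\beta/\epsilon)$ rounds thanks to the sparsity of $G'$ together with $|A|\le n^{3/4}$.

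Finally I would combine everything into $\widehat d(u,v)$ defined as the minimum of: (i) the exact $d(u,v)$ when $v\in N_{k,t}(u)$; (ii) the heavy-vertex term $\min_{b\in B}\widetilde d(u,b)+\widetilde d(b,v)$; (iii) a pivot-based term obtained on $G'$ via a sparse min-plus product over the exact $d(u,N_{k,t}(u)\cap A)$, the approximate $d(A,A)$, and the approximate $d(A,v)$ matrices; and (iv) the emulator estimate. A case analysis on whether the shortest $u$-$v$ path stays inside $N_{k,t}(u)$, uses a heavy vertex, or is entirely light and long, shows that (i)--(iv) combined achieve $(2+\epsilon)$-approximation; the tight case for (iii) exploits the fact that any light-only shortest path longer than the $k$-th-nearest distance of $u$ enters $A$ at a vertex close to $u$, and pushes the $(2+\epsilon)$ bound through the three-term product in the manner of \cite{DBLP:conf/podc/Censor-HillelDK19}. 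The main obstacle I anticipate is implementing (iii) within the $O(\log^2\beta/\epsilon)$ budget: literally following \cite{DBLP:conf/podc/Censor-HillelDK19} triggers $\Theta(\log n)$ iterated-squaring steps to turn hop-bounded distances into true distances, which would wreck the sub-logarithmic complexity. My plan is therefore to insert an additional sparsification that truncates the intermediate $|V|\times|A|$ matrix to roughly its $\widetilde{O}(\sqrt n)$ smallest entries per row, keeping the non-zero count at $\widetilde{O}(n^{3/2})$ so that a single invocation of the $O(1)$-round sparse matrix multiplication primitive of \cite{censor2019sparse,DBLP:conf/podc/Censor-HillelDK19} suffices; the analysis must then verify that the truncation preserves a pivot near each endpoint of every relevant shortest path, so that the $(2+\epsilon)$ bound survives.
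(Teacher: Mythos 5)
Your decomposition into long pairs (emulator), short pairs through a heavy vertex (hitting set $+$ bounded hopset $+$ source detection), and short light-only pairs matches the paper's, and the first two parts are essentially correct (modulo a parameter slip: sampling $B$ at rate $\Theta(\log n/\sqrt n)$ gives $|B|=\Theta(\sqrt n\log n)$, and then the source-detection term $\frac{m^{1/3}|B|^{2/3}}{n}$ on the dense graph is $\Theta(\log^{2/3} n)$, not $O(1)$; you must sample at rate $\Theta(1/\sqrt n)$ against the degree threshold $\sqrt n\log n$, as the paper does, and similarly the pivot set must have size $O(n^{3/4}/\log n)$ with $k=n^{1/4}\log^2 n$, since even $|A|=n^{3/4}$ exactly leaves a $\log^{1/3} n$ factor in the source-detection step on the $O(n^{3/2}\log n)$-edge graph). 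The real problem, however, is the light-only short case. Your estimate (iii), which routes $u \to a_1 \in N_{k,t}(u)\cap A \to a_2 \in A \to v$, only guarantees a $3$-approximation in the configuration where \emph{every} vertex of $\pi(u,v)$ lies in exactly one of $N_{k,t}(u)$, $N_{k,t}(v)$: the hitting-set property only places a pivot within the $k$-nearest ball of $u$, whose radius can be as large as $d(u,v)$ itself, so the best bound is $d(u,a_1)+d(a_1,v)\le d(u,v)+2d(u,v)=3d(u,v)$. Your justification that the path ``enters $A$ at a vertex close to $u$'' is not valid -- the pivot need not be on or near the path, and ``close'' only means within the $k$-th-nearest radius. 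The $2$-approximation via a single pivot works only when some path vertex $w$ lies \emph{outside both} $N_{k,t}(u)$ and $N_{k,t}(v)$ (then w.l.o.g.\ $d(u,p^t_A(u))\le d(u,w)\le d(u,v)/2$), which is the paper's Case 2; your scheme has no mechanism for the remaining ``crossing'' case.

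The paper closes exactly this case with machinery your proposal lacks: when the path is $u\to u'\in N_{k,t}(u)$, one crossing edge $\{u',v'\}$, $v'\in N_{k,t}(v)\to v$, it splits on the degree of $u'$. If $\deg(u')\ge n/k^2$, a second hitting set $A'$ of size $O(\sqrt n\log^5 n)$ contains a neighbor $w'$ of $u'$; since $w'$ is adjacent to the path, $\delta(u,w')+\delta(w',v)\le(1+\epsilon')(d(u,v)+2)\le(2+\epsilon)d(u,v)$, and this is computed by one sparse product with densities $k$ and $|A'|$. If $\deg(u')\le n/k^2$, the exact three-segment distance is computed by the triple min-plus product $W_1\cdot W_2\cdot W_3$ with $W_2$ restricted to edges incident to low-degree vertices, and the degree split is precisely what keeps all densities ($k$, $n/k^2$, then $n/k$ against $k$) small enough for $O(1)$-round sparse multiplication. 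Your alternative fix -- truncating the intermediate $|V|\times|A|$ matrix to $\widetilde O(\sqrt n)$ entries per row -- addresses neither issue: the final factor $d(A,\cdot)$ still has density $\approx|A|=n^{3/4}$, so the last product costs $\mathrm{poly}(n)$ rounds by Theorem \ref{thrm:mm}, and even if it were free, the pivot-through-$A$ estimate cannot beat $3$ in the crossing case, so the truncation cannot be repaired into a $(2+\epsilon)$ bound without introducing something like the paper's $A'$/$E''$ degree-splitting argument.
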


We will need a few additional tools from \cite{DBLP:conf/podc/Censor-HillelDK19, censor2019sparse}.

\subsubsection*{Additional tools}

\paragraph{Distance through a set.}

Another useful tool from \cite{DBLP:conf/podc/Censor-HillelDK19} allows to compute the distance between vertices through a set. In the \distthrough problem, we assume that each vertex $v$ has a set $W_v$ and distance estimates $\delta(v,w)$ and $\delta(w,v)$ for all $w \in W_v$. The task is for all vertices $v$ to compute distance estimates
$\min_{w \in W_v \cap W_u } \{ \delta(v,w) + \delta(w,u) \}$
for all other vertices $u \in V$. The following is shown in \cite{DBLP:conf/podc/Censor-HillelDK19}.

\begin{theorem}\label{thrm:distance-through}
The \distthrough problem can be solved in
\[O \biggl( \frac{\rho^{2/3}}{n^{1/3}} + 1 \biggr)\]
rounds in \clique, where $\rho = \sum_{v \in V}|W_v|/n$.
\end{theorem}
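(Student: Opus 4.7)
The plan is to reduce the \distthrough problem to a single sparse $(\min,+)$ matrix multiplication and then invoke the semiring sparse-matrix product of \cite{censor2019sparse} as a black box. Define two $n \times n$ matrices $A$ and $B$ over the tropical semiring by
\[
A_{v,w} = \begin{cases} \delta(v,w), & w \in W_v, \\ +\infty, & \text{otherwise,} \end{cases}
\qquad
B_{w,u} = \begin{cases} \delta(w,u), & w \in W_u, \\ +\infty, & \text{otherwise.} \end{cases}
\]
Then $(A \star B)_{v,u} = \min_w\bigl(A_{v,w} + B_{w,u}\bigr)$ is finite exactly when $w \in W_v \cap W_u$, and in that case equals $\delta(v,w)+\delta(w,u)$. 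Hence $A \star B$ is exactly the table of estimates \distthrough asks for.

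Next, I would check that the input/output layout matches what sparse \clique multiplication expects. Vertex $v$ knows $W_v$ together with $\{\delta(v,w)\}_{w \in W_v}$, which are precisely the non-$\infty$ entries of row $v$ of $A$. Symmetrically, vertex $u$ knows $W_u$ and $\{\delta(w,u)\}_{w \in W_u}$, which form column $u$ of $B$ (equivalently, row $u$ of $B^{T}$). The required output at vertex $v$ is row $v$ of $A \star B$. This is the standard ``row-of-$A$ / row-of-$B^{T}$'' layout for sparse clique matrix multiplication, so no preprocessing round is needed to rearrange the inputs.

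The final step is to count the total sparsity and plug into the black box. By definition of $\rho$, the number of non-$\infty$ entries of $A$ is $\sum_v |W_v| = n\rho$, and the same count holds for $B$. The semiring sparse matrix multiplication of \cite{censor2019sparse} multiplies two $n \times n$ matrices with $m_A$ and $m_B$ finite entries in $O\bigl((m_A\,m_B)^{1/3}/n + 1\bigr)$ rounds in \clique, over any semiring, including $(\min,+)$. Substituting $m_A = m_B = n\rho$ gives
\[
O\!\left(\frac{(n\rho)^{2/3}}{n} + 1\right) \;=\; O\!\left(\frac{\rho^{2/3}}{n^{1/3}} + 1\right),
\]
which is the claimed bound.

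The main obstacle I anticipate is that $\rho$ is only the \emph{average} row-density: an adversary can make a handful of sets $W_v$ extremely large while keeping $\rho$ small, so any naive scheme in which each vertex broadcasts its own row of $A$ would blow up on the heavy rows. The routine of \cite{censor2019sparse} gets around this precisely by hashing the non-$\infty$ entries uniformly across the $n$ machines so that the per-machine load depends only on the totals $m_A, m_B$ and not on the worst row. The remaining routine check is that this load-balanced scheme carries over verbatim to the $(\min,+)$ semiring (no algebraic cancellation is used, only summation/minimization), which is known and immediate. Everything else is bookkeeping.
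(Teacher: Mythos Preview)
Your proposal is correct and is exactly the intended reduction. The paper itself does not prove Theorem~\ref{thrm:distance-through}; it quotes it from \cite{DBLP:conf/podc/Censor-HillelDK19}, where the proof is precisely what you wrote: encode the sets $W_v$ as a sparse matrix $A$ (and its transpose-analogue $B$), observe that the $(\min,+)$ product $A \star B$ is the desired table, and invoke the sparse semiring multiplication of \cite{censor2019sparse} (stated here as Theorem~\ref{thrm:mm}) with $\rho_A=\rho_B=\rho$.
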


\paragraph{Sparse matrix multiplication.}

In \cite{censor2019sparse}, it is shown how to multiply sparse matrices efficiently. Here, $\rho_S$ is the density of matrix $S$ which is the average number of non-zero elements in a row. In the case of distance products this corresponds to the average degree in the graph, and the zero element is $\infty.$ See more about distance products in Section \ref{sec:nearest}.
The following is shown in \cite{censor2019sparse, DBLP:conf/podc/Censor-HillelDK19}.\footnote{In \cite{censor2019sparse, DBLP:conf/podc/Censor-HillelDK19}, the statements are written in a slightly different but equivalent form. In \cite{censor2019sparse}, instead of looking at the density of a matrix $S$, they look at the total number of non-zero elements in the matrix which is equal to $n \rho_S$. In \cite{DBLP:conf/podc/Censor-HillelDK19}, there is an additional parameter related to the density of the output matrix, here we assume that the output matrix can be a full matrix, and bound this number with the maximum possible density, $n$.}

\begin{theorem} \label{thrm:mm}
Given two $n \times n$ matrices $S$ and $T$, there is deterministic algorithm that computes the product $P = S \cdot T$ over a semiring in the \clique model, completing in
\[O \biggl(\frac{(\rho_S \rho_T)^{1/3}}{n^{1/3}} + 1 \biggr)\] rounds.
\end{theorem}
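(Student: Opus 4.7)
The plan is to adapt the 3D index-cube partition technique used by Censor-Hillel, Kaski, Korhonen, Lenzen, Paz and Suomela for dense matrix multiplication in the \clique, exploiting the sparsity of the inputs via a density-balancing permutation. First I would apply deterministic permutations to the row indices of $S$, the middle index (columns of $S$ and rows of $T$), and the column indices of $T$; after this rebalancing, every $a \times b$ block of the permuted $S$ should contain $O(ab\rho_S/n + \log n)$ non-zeros, and every $b \times c$ block of the permuted $T$ should contain $O(bc\rho_T/n + \log n)$ non-zeros.

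Next I would partition the triple-index cube $[n]^3$ labeling the products $S_{ij} \otimes T_{jk}$ into $n$ sub-cubes of dimensions $a \times b \times c$ with $abc = n^2$, one per machine. The machine handling cube $(I, J, K)$ must receive the blocks $S_{I,J}$ and $T_{J,K}$, which under the rebalancing totals $O((ab\rho_S + bc\rho_T)/n)$ non-zero entries per machine; these are delivered via Lenzen's routing in $O((ab\rho_S + bc\rho_T)/n^2 + 1)$ rounds. Each machine then computes its partial semiring values for the $ac$ entries $P_{ik}$ with $(i,k) \in I \times K$, and partial values are aggregated across the $n/b$ cubes sharing each fixed $(i,k)$ via a second Lenzen routing step costing $O(n/b + 1)$ rounds.

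To optimize the total $O((ab\rho_S + bc\rho_T)/n^2 + n/b + 1)$, I would balance all three terms subject to $abc = n^2$. Setting $ab\rho_S = bc\rho_T$ yields $a\rho_S = c\rho_T$, and equating routing cost with aggregation cost ($ab\rho_S/n^2 = n/b$) together with $abc = n^2$ determines $a, b, c$ uniquely up to constants, giving $a = \Theta((n\rho_T^2/\rho_S)^{1/3})$, $c = \Theta((n\rho_S^2/\rho_T)^{1/3})$, and $b = \Theta(n^{4/3}/(\rho_S\rho_T)^{1/3})$. Substituting back yields $n/b = \Theta((\rho_S\rho_T)^{1/3}/n^{1/3})$, matching the claimed bound $O((\rho_S\rho_T)^{1/3}/n^{1/3} + 1)$.

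The main obstacle is the deterministic density rebalancing: a uniformly random permutation gives block-density concentration directly from Chernoff plus a union bound over the at most $\poly(n)$ sub-blocks of interest, but a deterministic algorithm requires an explicit balanced hash family with short seed, or a pseudorandom generator that fools the relevant load-balancing indicator functions, together with an argument that the worst-case block density is within an $O(\log n)$ additive slack of the average. Secondary subtleties include the boundary regimes where at least one of $a, b, c$ would exceed $n$ under the unconstrained optimum (handled by clamping, and absorbed into the additive $+1$ term), ensuring that aggregation of semiring values respects associativity, and verifying that the Lenzen routing applies despite small imbalances in per-machine source/sink volumes.
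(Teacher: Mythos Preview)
The paper does not prove Theorem~\ref{thrm:mm}; it is quoted as a black-box tool from Censor-Hillel, Leitersdorf, and Turner~\cite{censor2019sparse} (in the form stated in~\cite{DBLP:conf/podc/Censor-HillelDK19}), so there is no in-paper proof to compare against. Your proposal is a faithful reconstruction of the argument in that source: the 3D partition of the index cube into $a\times b\times c$ sub-cubes with $abc=n^2$, delivering the sparse input blocks via Lenzen's routing at cost $O((ab\rho_S+bc\rho_T)/n^2+1)$, aggregating the $n/b$ partial semiring contributions per output entry, and balancing the three terms to obtain $b=\Theta(n^{4/3}/(\rho_S\rho_T)^{1/3})$ and hence round complexity $\Theta((\rho_S\rho_T)^{1/3}/n^{1/3}+1)$, is exactly the approach of~\cite{censor2019sparse}.

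On your main concern, the deterministic density rebalancing: in~\cite{censor2019sparse} this is not done via PRGs or derandomized hashing but by an explicit redistribution step. The nonzero entries are sorted by index and reassigned so that each machine holds an (almost) equal number; this sorting and redistribution is itself carried out in $O(1)$ rounds using Lenzen's routing, since the total number of nonzeros is known and can be evenly split. Consequently the $O(\log n)$ additive slack you anticipate from Chernoff-type balancing does not arise, and no pseudorandom generator is needed. Your remarks on clamping $a,b,c$ to $[1,n]$ in boundary regimes (absorbed into the $+1$) and on semiring associativity for the aggregation tree are correct and are handled in the source essentially as you describe.
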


\subsubsection*{The algorithm}

We next explain how the algorithm computes distances $d(u,v)$ for different pairs of vertices. We denote by $t = \frac{2\beta}{\epsilon}$.
We start with discussing the case that $d(u,v) \geq t$, in all other cases we assume that $d(u,v) \leq t.$

\paragraph{Paths where $d(u,v) \geq t$.} Here we use the emulator to get $(1+\epsilon)$-approximation for the distances.

\begin{claim} \label{claim_long_paths}
Let $t = \frac{2\beta}{\epsilon}$. There is a randomized algorithm that computes $(1+\epsilon)$-approximation for the distances $d(u,v)$ for all pairs of vertices $u,v$ with $d(u,v) \geq t$. The algorithm takes $O(\frac{\log^2{\beta}}{\epsilon})$ rounds w.h.p, for $\beta=O(\frac{\log{\log{n}}}{\epsilon})^{\log{\log{n}}}$.
\end{claim}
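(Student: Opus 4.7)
The plan is to reduce the problem to the near-additive emulator of Theorem \ref{thm_emulator}, exploiting the fact that once $d(u,v) \geq t = 2\beta/\epsilon$, the additive slack $\beta$ already costs only an $\epsilon/2$ fraction of the true distance.

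First I would invoke Theorem \ref{thm_emulator} with stretch parameters $(1+\epsilon/2, \beta)$ and $r = \log\log n$, to construct in $O(\log^2 \beta / \epsilon)$ rounds w.h.p. a $(1+\epsilon/2, \beta)$-emulator $H$ with $O(n \log\log n)$ edges w.h.p., where $\beta = O(\log\log n / \epsilon)^{\log\log n}$. By the emulator guarantee, for every pair of vertices $u,v$,
\[
d_G(u,v) \leq d_H(u,v) \leq (1+\epsilon/2) d_G(u,v) + \beta.
\]

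Second, I would broadcast $H$ to every vertex, exactly as in the proof of Theorem \ref{thm_near_additive_apsp}: since $|E(H)| = O(n \log\log n)$ w.h.p. and each vertex knows its own incident edges in $H$, one can use Lenzen's routing \cite{lenzen2013optimal} to first gather $H$ at a single vertex in $O(\log\log n)$ rounds, then redistribute $H$ in chunks of $O(\log\log n)$ edges in $O(\log\log n)$ additional rounds. Once every vertex holds the entire edge set of $H$, it computes $d_H(u,v)$ for all $v$ locally and outputs this as its distance estimate.

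Finally, I would verify the approximation guarantee for the relevant regime. For any pair with $d_G(u,v) \geq t = 2\beta/\epsilon$, we have $\beta \leq (\epsilon/2) d_G(u,v)$, so
\[
d_G(u,v) \leq d_H(u,v) \leq (1+\epsilon/2) d_G(u,v) + \beta \leq (1+\epsilon) d_G(u,v),
\]
as required. The total round complexity is dominated by the emulator construction, giving $O(\log^2 \beta / \epsilon)$ rounds w.h.p. There is essentially no obstacle beyond correctly invoking the two tools; the mildly delicate point is only ensuring that we use the $(1+\epsilon/2, \beta)$ parameter choice (rather than $(1+\epsilon, \beta)$) so that the multiplicative and additive contributions each consume only half of the $\epsilon$ budget for long paths.
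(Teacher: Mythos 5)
Your proposal is correct and matches the paper's own proof essentially step for step: build a $(1+\epsilon/2,\beta)$-emulator via Theorem \ref{thm_emulator}, let all vertices learn it as in the proof of Theorem \ref{thm_near_additive_apsp}, and observe that for $d(u,v)\geq t=\frac{2\beta}{\epsilon}$ the additive term $\beta$ is at most $\frac{\epsilon}{2}d(u,v)$, yielding the $(1+\epsilon)$ guarantee. Nothing is missing.
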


\begin{proof}
We use Theorem \ref{thm_emulator} to build $(1+\frac{\epsilon}{2}, \beta)$-emulator of size $O(n \log{\log{n}})$ for $\beta=O(\frac{\log{\log{n}}}{\epsilon})^{\log{\log{n}}}$ in $O(\frac{\log^2{\beta}}{\epsilon})$ rounds w.h.p, and then let all vertices learn it in $O(\log{\log{n}})$ rounds, as described in the proof of Theorem \ref{thm_near_additive_apsp}. 
Since $d(u,v) \geq t = \frac{2\beta}{\epsilon}$, the $(1+\frac{\epsilon}{2},\beta)$-emulator has a path between $u$ and $v$ of distance at most $(1+\frac{\epsilon}{2})d(u,v)+\beta \leq (1+\frac{\epsilon}{2})d(u,v) + \frac{\epsilon}{2} d(u,v) \leq (1+\epsilon)d(u,v).$
\end{proof}

\paragraph{Paths with a high-degree vertex and $d(u,v) \leq t$.}
We next focus on shortest paths $u-v$ that have at least one vertex with degree at least $\sqrt{n}\log{n}.$
We start by describing the algorithm, and then analyze its correctness and complexity.
In the algorithm, each pair of vertices $u,v$ have an estimate for the distance $\delta(u,v)$ between them, which is updated each time they learn about a shorter path between them. Let $N(v)$ be the set of neighbours of the vertex $v$. During the algorithm we build a hitting set $S$ of the sets $N(v)$ for all vertices where $|N(v)| \geq \sqrt{n} \log{n}$, this means that $S$ has at least one vertex in each of these sets.

\begin{oframed}
\begin{enumerate}
\item Set $\delta(u,v)=1$ if $\{u,v\} \in E$, and set $\delta(u,v)$ to the distance between $u$ and $v$ in the emulator otherwise.\label{alg_init}
\item Compute a hitting set $S$ of size $O(\sqrt{n})$ of the sets $N(v)$ for all vertices $v$ where $|N(v)| \geq \sqrt{n} \log{n}.$
\item Compute $(1+\frac{\epsilon}{2})$-approximate distances between all vertices to vertices of $S$ at distance at most $2t$, and for all $u \in V, s \in S$ update $\delta(u,s)$ accordingly.
\item For any pair of vertices $u,v$, set $\delta(u,v)= \min\{\delta(u,v), \min_{s \in S} \{\delta(u,s)+\delta(s,v)\} \}.$ \label{alg_dist}
\end{enumerate}
\end{oframed}

\begin{claim}
At the end of the algorithm, for any pair of vertices $u,v$ where $d(u,v) \leq t$ and the shortest $u-v$ path has at least one vertex of degree at least $\sqrt{n} \log{n}$, we have $\delta(u,v) \leq (2+\epsilon)d(u,v)$ w.h.p.
\end{claim}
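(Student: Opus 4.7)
The plan is to locate a hitting-set representative adjacent to a high-degree vertex on the shortest path, and to use it as a pivot in Step \ref{alg_dist} of the algorithm. Let $\pi(u,v)$ be a shortest path of length $d(u,v)\le t$ containing a vertex $w$ with $|N(w)|\ge \sqrt{n}\log n$. Since $S$ is a hitting set for the family $\{N(v): |N(v)|\ge \sqrt n\log n\}$ constructed in Step 2, there exists some $s\in S\cap N(w)$; this step uses the w.h.p.\ correctness of the hitting-set construction (Lemma \ref{rand_hit}).

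Next, I would bound the ``pivot distances'' $d(u,s)$ and $d(s,v)$. Since $s$ is adjacent to $w$ and $w$ lies on the shortest $u$--$v$ path,
\[
d(u,s)\le d(u,w)+1,\qquad d(s,v)\le d(w,v)+1,
\]
so $d(u,s)+d(s,v)\le d(u,v)+2$. In particular, since $d(u,v)\le t$, both $d(u,s)$ and $d(s,v)$ are at most $t+1\le 2t$, which means that the estimates $\delta(u,s)$ and $\delta(s,v)$ computed in Step 3 satisfy $\delta(u,s)\le(1+\epsilon/2)d(u,s)$ and $\delta(s,v)\le(1+\epsilon/2)d(s,v)$ (w.h.p., using the guarantees of the bounded hopset construction of Theorem \ref{hopset_thm} together with source detection, Theorem \ref{thrm:source-detection}).

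Combining these two steps, Step \ref{alg_dist} yields
\[
\delta(u,v)\ \le\ \delta(u,s)+\delta(s,v)\ \le\ \bigl(1+\tfrac{\epsilon}{2}\bigr)\bigl(d(u,v)+2\bigr).
\]
It remains to translate this into a $(2+\epsilon)$-multiplicative bound. If $d(u,v)=1$, then $\{u,v\}\in E$ and Step \ref{alg_init} already sets $\delta(u,v)=1$. Otherwise $d(u,v)\ge 2$, and a direct calculation gives $(1+\epsilon/2)(d(u,v)+2)\le (2+\epsilon)d(u,v)$, since this reduces to the elementary inequality $\frac{2+\epsilon}{1+\epsilon/2}=2\le d(u,v)$.

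The main (and really only) obstacle is routine: ensuring that the distance-to-$S$ estimates are actually $(1+\epsilon/2)$-approximate \emph{up to distance} $2t$, which is why the algorithm explicitly computes approximate distances to $S$-vertices within radius $2t$ rather than just $t$; the $+2$ slack in the triangle inequalities above is exactly what forces this enlarged radius, and as long as $2t\le 2t$ fits within the guarantees of the bounded-hopset plus source-detection combination, the argument goes through w.h.p.
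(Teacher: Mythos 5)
Your proposal is correct and follows essentially the same argument as the paper: pick a hitting-set neighbour $s\in S$ of the high-degree vertex $w$ on $\pi(u,v)$, note $d(u,s)+d(s,v)\le d(u,v)+2\le 2d(u,v)$ for $d(u,v)\ge 2$ (with $d(u,v)=1$ handled by the initialization of $\delta$), and use the $(1+\epsilon/2)$-approximate distances to $S$ within radius $2t$ from Step 3 together with the minimization in Step \ref{alg_dist}. The only difference is cosmetic: you verify $(1+\epsilon/2)(d(u,v)+2)\le(2+\epsilon)d(u,v)$ directly, while the paper first bounds $d(u,s)+d(s,v)\le 2d(u,v)$ and then multiplies by $(1+\epsilon/2)$.
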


\begin{proof}
For the analysis, we assume that $S$ is indeed a hitting set, which happens w.h.p if we use a randomized algorithm to construct $S$.
In this case, each vertex of degree at least $\sqrt{n}\log{n}$ has a neighbour in $S$. Let $u,v$ be such that $d(u,v) \leq t$ and the shortest $u-v$ path has at least one vertex $w$ of degree at least $\sqrt{n}\log{n}.$ We know that in this case, $w$ has a neighbour $s \in S$. Also, since $w$ is in the shortest path between $u$ and $v$, then $d(u,s) \leq t+1$ and $d(v,s) \leq t+1$. Since we computed $(1+\frac{\epsilon}{2})$-approximations between all vertices to vertices in $S$ of distance $2t$, we have that $\delta(u,s),\delta(v,s)$ have $(1+\frac{\epsilon}{2})$-approximations to the distances $d(u,s),d(v,s)$, respectively. Also, $$d(u,s)+d(s,v) \leq d(u,w) + d(w,s) + d(s,w) + d(w,v) = d(u,v)+2 \leq 2d(u,v),$$ where in the last inequality we assume $d(u,v) \geq 2$, if the distance is 1, then $u$ and $v$ are neighbours and know the distance between them. Since we approximate the distances, we get a $2(1+ \frac{\epsilon}{2})$-approximation, which is a $(2+\epsilon)$-approximation. 
\end{proof}

\begin{claim} \label{alg_time}
The algorithm takes $O(\frac{\log^2{t}}{\epsilon})$ rounds w.h.p.
\end{claim}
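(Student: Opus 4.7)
The plan is to analyze the four steps of the algorithm separately and show that each of them can be implemented within the claimed round complexity, with Step~1 and Step~3 providing the dominant contributions of $O(\log^2 t / \epsilon)$ rounds.

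For Step~1, we invoke Theorem~\ref{thm_emulator} to construct a $(1+\epsilon/2,\beta)$-emulator with $O(n\log\log n)$ edges in $O(\log^2\beta/\epsilon)$ rounds, and then let all vertices learn it in $O(\log\log n)$ additional rounds via Lenzen's routing, exactly as in the proof of Theorem~\ref{thm_near_additive_apsp}. Since $t = 2\beta/\epsilon$, we have $\log t = \Theta(\log\beta)$ (absorbing the $\log(1/\epsilon)$ factor), so this step fits within $O(\log^2 t /\epsilon)$ rounds. For Step~2, the randomized hitting-set construction from Lemma~\ref{rand_hit} produces a set of size $O(\sqrt{n})$ w.h.p.\ that hits every $N(v)$ of size $\geq \sqrt{n}\log n$, and this requires no communication.

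Step~3 is the most involved. The plan is to first build a $(\beta',\epsilon/2, 2t)$-hopset $H'$ with $\beta' = O(\log t/\epsilon)$ and $O(n^{3/2}\log n)$ edges in $O(\log^2 t/\epsilon)$ rounds by Theorem~\ref{hopset_thm}. Then one runs the \sdk algorithm on $G\cup H'$ with sources $S$ and hop parameter $d = \beta'$; for any vertex within distance $2t$ of some $s \in S$, there is a $\beta'$-hop $(1+\epsilon/2)$-approximate path in $G \cup H'$, so this correctly computes the required estimates. Applying Theorem~\ref{thrm:source-detection} with $m' = O(n^{3/2}\log n)$ and $|S| = O(\sqrt{n})$, the bound
\[
O\!\left(\left(\frac{m'^{1/3} |S|^{2/3}}{n} + 1\right)\beta'\right) = O\!\left(\left(\frac{n^{1/2}\cdot n^{1/3}}{n} + 1\right)\beta'\right) = O(\beta') = O(\log t/\epsilon)
\]
follows. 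Finally, Step~4 applies the \distthrough primitive from Theorem~\ref{thrm:distance-through} with $W_v = S$ for every $v$; since $\rho = \sum_v |W_v|/n = O(\sqrt{n})$, the round complexity is $O(\rho^{2/3}/n^{1/3} + 1) = O(1)$.

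The subtle point, which I expect to be the main obstacle, is verifying that Step~3 remains efficient despite the hopset densifying the graph to $\tilde{O}(n^{3/2})$ edges: this is exactly why it is crucial that $|S| = O(\sqrt{n})$, since the product $m'^{1/3}|S|^{2/3}$ must remain at most $O(n)$ to keep the source-detection cost proportional to $\beta'$ rather than polynomial. Summing the contributions of the four steps gives the total bound $O(\log^2 t/\epsilon)$ rounds w.h.p., as claimed.
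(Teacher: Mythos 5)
Your proposal is correct and follows essentially the same route as the paper: the same line-by-line accounting with the emulator, the randomized hitting set of Lemma~\ref{rand_hit}, the bounded hopset of Theorem~\ref{hopset_thm} followed by \sdk, and the \distthrough primitive, each bounded exactly as in the paper's proof. The only (harmless) deviations are cosmetic: the paper charges the emulator construction to the earlier long-distance phase (Claim~\ref{claim_long_paths}), so Line~1 here needs no communication, and in Step~3 the source detection actually runs on $G \cup H'$ where $G$ may be dense (up to $\Theta(n^2)$ edges) rather than on a graph with $O(n^{3/2}\log n)$ edges, but since $|S|=O(\sqrt{n})$ gives $m^{1/3}|S|^{2/3}/n \leq n^{2/3}\cdot n^{1/3}/n = O(1)$ even for $m=n^2$, the $O(\beta')=O(\log t/\epsilon)$ bound you derive still holds.
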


\begin{proof}
All vertices already computed the emulator in the first case of the algorithm (see Claim \ref{claim_long_paths}), which allows implementing Line \ref{alg_init} without communication.
Computing the hitting set $S$ is done using Lemma \ref{rand_hit}. As $S$ is a hitting set for sets $N(v)$ of size at least $k \geq \sqrt{n} \log{n}$, $S$ has size $O(\sqrt{n})$ w.h.p. The construction is completely local, and requires only one round to update all vertices which vertices are in $S$. To compute distances from $S$ we use the randomized algorithm in Theorem \ref{hopset_thm} to build a $(\beta',\epsilon',t')$-hopset $H'$ with $t' = 2t, \epsilon' = \frac{\epsilon}{2}, \beta' = O(\frac{\log{t}}{\epsilon}).$ The construction takes $O(\frac{\log^2{t}}{\epsilon})$ rounds. In the graph $G \cup H'$ there are $\beta'$-hop paths with $(1+\epsilon')$-approximate distances to all pairs of vertices at distance at most $2t.$ Now we run the \sdk algorithm with the set of sources $S$, and $d= \beta'$, which guarantees that all vertices $u \in V, s \in S$ of distance at most $2t$ would learn a $(1+\epsilon')$-approximation to the distance between them. Since $S$ is of size $O(\sqrt{n})$ w.h.p, the complexity of the \sdk algorithm is $O(\beta')=O(\frac{\log{t}}{\epsilon})$ w.h.p by Theorem \ref{thrm:source-detection}. To compute the minimum distance between $u$ and $v$ through a vertex in $S$ in Line \ref{alg_dist}, we use the \distthrough algorithm with the sets $W_v = S$ for all vertices. Since $S$ is of size $O(\sqrt{n})$ w.h.p, the complexity is $O(1)$ w.h.p from Theorem \ref{thrm:distance-through}. The overall complexity is $O(\frac{\log^2{t}}{\epsilon})$ rounds w.h.p.
\end{proof}

\paragraph{Paths that contain only low-degree vertices and $d(u,v) \leq t$.}
From now on we focus only on the case that the shortest $u-v$ path contains only vertices of degree at most $\sqrt{n}\log{n}.$
For this, it is enough to work with a sparse subgraph $G'$ that has only edges incident to vertices of degree at most $\sqrt{n}\log{n}.$ In this graph we can afford to compute distances to a set of sources of size $O(n^{3/4}/\log{n})$, we next explain how to exploit it. 
We start by describing the general structure of the algorithm, and then analyze it. The stretch analysis also provides a more intuitive description of the algorithm.
The value of $\delta(u,v)$ at the beginning is set to the value obtained in the previous algorithm, and updated later if $u,v$ find a shorter path between them. 

\begin{oframed}
\begin{enumerate}
\item Let $G'$ be a subgraph of $G$ that contains only all the edges incident to vertices of degree at most $\sqrt{n}\log{n}.$ All the computations in the algorithm are in the graph $G'$.
\item Let $k = n^{1/4} \log^2{n}.$ Compute for each vertex $u$ the distances to the \ktnearest vertices, denote this set by $N_{k,t}(u)$, update $\delta(u,v)$ accordingly for all $u \in V, v \in N_{k,t}(u).$\label{alg2_nearest}
\item Set $\delta(u,v) = \min\{\delta(u,v), \min_{w \in N_{k,t}(u) \cap N_{k,t}(v)} \{\delta(u,w)+\delta(w,v)\}\}.$\label{alg2_w}
\item Compute a hitting set $A$ of size $O(n^{3/4}/\log{n})$ of the sets $N_{k,t}(v)$ for all vertices where $|N_{k,t}(v)|=k$.
\item Compute $(1+\epsilon')$-approximate distances between all vertices to vertices of $A$ at distance at most $2t$ for $\epsilon' = \frac{\epsilon}{2}$, and for all $u \in V, s \in A$ update $\delta(u,s)$ accordingly.\label{alg2_A}
\item If $N_{k,t}(u) \cap A \neq \emptyset$, denote by $p^t_A(u)$ the closest vertex to $u$ from the set $A$, at distance at most $t$.
\item Let $\delta(u,v) = \min\{\delta(u,v), \delta(u,p^t_A(u)) + \delta(p^t_A(u),v), \delta(v,p^t_A(v)) + \delta(p^t_A(v),u)\}$, where the expressions containing $p^t_A(u), p^t_A(v)$ are only calculated if these vertices exist.\label{alg2_p}  
\item Let $N(v)$ be the set of neighbours of $v$ in $G'$. Compute a hitting set $A'$ of size $O(\sqrt{n}\log^5{n})$ for the sets $N(v)$ for all vertices where $|N(v)| \geq \frac{n}{k^2} = \frac{\sqrt{n}}{\log^4{n}}.$
\item Compute $(1+\epsilon')$-approximate distances between all vertices to vertices of $A'$ at distance at most $2t$ for $\epsilon' = \frac{\epsilon}{2}$, and for all $u \in V, s \in A'$ update $\delta(u,s)$ accordingly.\label{alg2_dA'}
\item For each vertex $u$, define a set $A'_u$ of size at most $k$ as follows. For each $v \in N_{k,t}(u)$ that has a neighbour $w \in A'$ add one such neighbour $w$ to $A'_u.$\label{alg2_A'u} 
\item Let $\delta(u,v) = \min\{ \delta(u,v), \min_{w \in A'_u} \delta(u,w) + \delta(w,v)\}.$\label{alg2_A'}
\item Let $E''$ be all edges in $G'$ with at least one endpoint of degree at most $\frac{n}{k^2} = \frac{\sqrt{n}}{\log^4{n}}$. 
\item Let $\delta'(u,v) = \min\{ \delta(u,u')+\delta(u',v')+\delta(v',v) | u' \in N_{k,t}(u), v' \in N_{k,t}(v), \{u',v'\} \in E''\}.$\label{alg2_m}
\item Set $\delta(u,v) = \min\{\delta(u,v),\delta'(u,v)\}.$
\end{enumerate}
\end{oframed}

We next prove the correctness of the algorithm, and then analyze its complexity.

\begin{claim}
For all pairs of vertices $u,v$ where $d_{G}(u,v) \leq t$ and the shortest $u-v$ path contains only vertices of degree at most $\sqrt{n}\log{n}$, at the end of the algorithm we have $\delta(u,v) \leq (2+\epsilon)d(u,v)$ w.h.p.
\end{claim}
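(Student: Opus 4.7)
I would prove the claim by a case analysis on the structure of the shortest $u$--$v$ path $\pi(u,v) = u_0=u, u_1, \ldots, u_\ell = v$. Every edge of $\pi$ lies in $G'$ since, by hypothesis, all vertices on $\pi$ have degree at most $\sqrt{n}\log n$ in $G$; in particular $\pi$ is also shortest in $G'$, so the neighborhood computations of the algorithm apply to it. Define $i^* = \max\{i : u_i \in N_{k,t}(u)\}$ and $j^* = \min\{j : u_j \in N_{k,t}(v)\}$; both are well defined because $u \in N_{k,t}(u)$ and $v \in N_{k,t}(v)$. The plan is to split into three cases determined by these two indices.

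\emph{Easy cases.} If $i^* = \ell$ then $v \in N_{k,t}(u)$, and step \ref{alg2_nearest} records $\delta(u,v) = d(u,v)$ exactly. If instead $i^* \geq j^*$, I would use a monotonicity argument: along the shortest path the distance $d(v, u_i) = \ell - i$ is strictly decreasing in $i$, so by the ``$k$ closest'' definition of $N_{k,t}(v)$ (with a fixed tie-breaking rule, e.g.\ by vertex ID) every $u_i$ with $i \geq j^*$ belongs to $N_{k,t}(v)$; in particular $u_{i^*} \in N_{k,t}(u) \cap N_{k,t}(v)$. Step \ref{alg2_w} then yields
\[
\delta(u,v) \leq \delta(u, u_{i^*}) + \delta(u_{i^*}, v) = d(u, u_{i^*}) + d(u_{i^*}, v) = d(u,v),
\]
using the exact values recorded by step \ref{alg2_nearest}.

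\emph{Main case (the gap).} Assume $i^* + 1 \leq j^*$. Because $u_{i^*+1} \notin N_{k,t}(u)$ while $d(u, u_{i^*+1}) \leq \ell \leq t$, the set $N_{k,t}(u)$ must contain exactly $k$ vertices, so w.h.p.\ the hitting set $A$ intersects $N_{k,t}(u)$ and $p_A^t(u)$ exists. Every vertex of $N_{k,t}(u)$ then lies at distance at most $d(u, u_{i^*+1}) \leq d(u,v)$ from $u$, which gives $d(u, p_A^t(u)) \leq d(u,v)$, and by the triangle inequality $d(p_A^t(u), v) \leq d(p_A^t(u), u) + d(u,v) \leq 2\, d(u,v) \leq 2t$. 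Since $p_A^t(u) \in A$ and both relevant distances are at most $2t$, step \ref{alg2_A} provides $(1+\epsilon/2)$-approximations for $\delta(u, p_A^t(u))$ and $\delta(p_A^t(u), v)$, and step \ref{alg2_p} therefore updates
\[
\delta(u,v) \leq \delta(u, p_A^t(u)) + \delta(p_A^t(u), v) \leq \left(1+\tfrac{\epsilon}{2}\right) \cdot 2\, d(u,v) = (2+\epsilon)\, d(u,v).
\]
The further estimates computed in steps \ref{alg2_A'u}--\ref{alg2_m} through $A'_u$ and $E''$ can only decrease $\delta(u,v)$, so the bound is preserved.

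The main technical delicacy is the monotonicity step used in the second easy case: the conclusion $u_{i^*} \in N_{k,t}(v)$ requires that the ``$k$ closest'' sets be defined so that any vertex strictly closer to the source than a current member is also included, which I would ensure by fixing a canonical tie-breaking convention. Beyond that, the proof reduces to a triangle-inequality bookkeeping combined with the w.h.p.\ correctness of the hitting set $A$ (Lemma~\ref{rand_hit}) and the approximation guarantees of steps \ref{alg2_nearest}, \ref{alg2_w}, \ref{alg2_A}, and \ref{alg2_p}.
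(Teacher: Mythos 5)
Your decomposition by $i^* = \max\{i : u_i \in N_{k,t}(u)\}$ and $j^* = \min\{j : u_j \in N_{k,t}(v)\}$ handles the two easy situations correctly (the tie-breaking caveat you raise for the monotonicity step is legitimate and fixable), but the ``main case'' has a genuine gap, and it is precisely the part of the claim for which the algorithm carries most of its machinery. First, the arithmetic does not close: from $u_{i^*+1} \notin N_{k,t}(u)$ you only get $d(u,p^t_A(u)) \le d(u,u_{i^*+1}) \le d(u,v)$, and the triangle inequality then yields $d(u,p^t_A(u)) + d(p^t_A(u),v) \le 3d(u,v)$, not $2d(u,v)$; the displayed bound $(1+\tfrac{\epsilon}{2})\cdot 2\,d(u,v)$ does not follow from what you established. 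The paper obtains the factor $2$ only in the configuration where some path vertex $w$ lies in \emph{neither} $N_{k,t}(u)$ nor $N_{k,t}(v)$: such a $w$ is within $d(u,v)/2$ of one endpoint, say $u$, which forces $d(u,p^t_A(u)) \le d(u,w) \le d(u,v)/2$ and hence a pivot path of length at most $2d(u,v)$ (this is also why Line \ref{alg2_p} minimizes over pivoting through $p^t_A(u)$ \emph{and} through $p^t_A(v)$, depending on which endpoint $w$ is closer to). Your argument uses $u_{i^*+1}$, which may be far beyond the midpoint, so even in this configuration your constant is not $2$.

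Second, and more importantly, your gap case silently contains the hardest configuration $j^* = i^*+1$, in which every vertex of $\pi(u,v)$ lies in exactly one of $N_{k,t}(u)$, $N_{k,t}(v)$ and no vertex lies in neither. There the pivot-through-$A$ argument inherently gives only a $3$-approximation (this is exactly the $(3+\epsilon)$ baseline the paper describes before presenting the algorithm), and even symmetrizing over $p^t_A(u)$ and $p^t_A(v)$ does not push it down to $2+\epsilon$ for short paths. The paper's proof treats this as a separate case built around the edge $\{u',v'\}$ joining the last vertex of $N_{k,t}(u)$ to the first vertex of $N_{k,t}(v)$: if $u'$ has degree at least $n/k^2$, it has a neighbour in the hitting set $A'$, and Lines \ref{alg2_dA'}--\ref{alg2_A'} give an estimate at most $(1+\epsilon')(d(u,v)+2) \le (2+\epsilon)d(u,v)$; otherwise $u'$ has low degree, $\{u',v'\} \in E''$, and Line \ref{alg2_m} (the triple sparse matrix product) recovers $d(u,v)$ exactly. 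Your proposal invokes steps \ref{alg2_dA'}--\ref{alg2_m} only to say they ``can only decrease'' the estimate, so this case---the reason those steps exist---is left unproved.
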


\begin{proof}
We use in the analysis $d(u,v)$ for the distance in $G'$, for vertices where the shortest $u-v$ path in $G$ contains only vertices of degree at most $\sqrt{n}\log{n}$, we have $d_{G'}(u,v) = d_G(u,v)$, as the path is contained in $G'$.
We assume for the analysis that $A$ and $A'$ are indeed hitting sets, which happens w.h.p if we construct them with a randomized algorithm.
The algorithm basically deals with different pairs $u,v$ in different lines. We next explain the different cases.
First, if $v$ is among the \ktnearest vertices to $u$, then $u$ learns the distance in Line \ref{alg2_nearest}, and we are done. 
Hence, we are left only with the case that $v \not \in N_{k,t}(u)$ and vice verse. Note that in this case, since $d(u,v) \leq t$, and $v \not \in N_{k,t}(u)$, there are at least $k$ vertices at distance at most $t$ from $u$, which means that $|N_{k,t}(u)|=k.$ As $A$ is a hitting set for these sets, we have that there is a vertex from $A$ among $N_{k,t}(u)$, which means that the vertex $p^t_A(u)$ exists. A symmetric argument shows that $p^t_A(v)$ exists (if we did not already learn the distance $d(u,v)$).
Let $\pi(u,v)$ be a shortest $u-v$ path. Note that we are in the case that $\pi(u,v)$ has only vertices of degree at most $\sqrt{n} \log{n}$, hence it is contained in $G'$. The analysis breaks down to several cases.

\textbf{Case 1: there is $w \in \pi(u,v) \cap N_{k,t}(u) \cap N_{k,t}(v).$}
In this case, we have that $$d(u,v) = min_{w \in N_{k,t}(u) \cap N_{k,t}(v)} \{\delta(u,w)+\delta(w,v)\},$$ hence we learn the distance in Line \ref{alg2_w}.

\textbf{Case 2: there is $w \in \pi(u,v) \setminus (N_{k,t}(u) \cup N_{k,t}(v)).$}
In this case, we show that approximating the distance from $u$ to $v$ with the distance of a path that goes through a vertex in $A$ in Line \ref{alg2_p} gives a good approximation.
Since $w \in \pi(u,v)$, it is of distance at most $d(u,v)/2$ from one of $u$ or $v$, assume w.l.o.g that $d(u,w) \leq d(u,v)/2.$ Since $p^t_A(u) \in N_{k,t}(u)$ and $w \not \in N_{k,t}(u)$, we have $d(u,p^t_A(u)) \leq d(u,w) \leq d(u,v)/2.$
Also, $d(v,p^t_A(u)) \leq d(v,u) + d(u,p^t_A(u)) \leq \frac{3}{2} d(u,v) \leq \frac{3}{2} t.$ Since all vertices compute distances to all vertices in $A$ of distance at most $2t$ in Line \ref{alg2_A}, we have that $v$ computed $(1+\epsilon')$-approximate distance to $p^t_A(u).$ If we compute the distance from $u$ to $v$ through $p^t_A(u)$ we get 
$$d(u,p^t_A(u)) + d(p^t_A(u),v) \leq \frac{1}{2} d(u,v) + \frac{3}{2} d(u,v) = 2d(u,v).$$
Since we computed $(1+\epsilon')$-approximations to the values $d(u,p^t_A(u)),d(p^t_A(u),v)$, in Line \ref{alg2_p} we get a $(2+2\epsilon')=(2+\epsilon)$-approximation for $d(u,v)$.

\textbf{Case 3: all vertices in $\pi(u,v)$ are in exactly one of $N_{k,t}(u),N_{k,t}(v)$.}
In this case, the shortest path from $u$ to $v$ starts with a path from $u$ to $u' \in N_{k,t}(u)$, then an edge from $u'$ to a vertex $v' \in N_{k,t}(v)$ and a path from $v'$ to $v$. Ideally, we would like to exploit the simple structure of the path, in order to learn about its distance, a similar idea is used in \cite{DBLP:conf/podc/Censor-HillelDK19}. However, computing the distance directly would add logarithmic factors to the complexity. To avoid them, we break into cases according to the degree of $u'$. If the degree is large, we show that we can use hitting set arguments to approximate $d(u,v)$, and otherwise, we can exploit the fact that the degree is small to compute the exact distance $d(u,v).$

If $u'$ has degree at least $\frac{n}{k^2} = \frac{\sqrt{n}}{\log^4{n}}$, then $u'$ has a neighbour in $A'$, as $A'$ is a hitting set for vertices of degree at least $\frac{n}{k^2}$. 
Also, recall that for a vertex $u$, we define in Line \ref{alg2_A'u} the set $A'_u$ of size at most $k$, as follows. We look at all vertices in $N_{k,t}(u)$, and for each one of them that has a neighbouring vertex in $A'$, we add one such vertex to $A'_u$.  
In particular, since $u' \in N_{k,t}(u)$ and $u'$ has a neighbour in $A'$, we have that $A'_u$ has a vertex $w' \in A'$ which is a neighbour of $u'$. Also, in Line \ref{alg2_dA'}, all vertices computed $(1+\epsilon')$-approximate distances to vertices in $A'$ of distance at most $2t$, which includes distances from $u$ to $w'$ and from $v$ to $w'$ as $d(u,v) \leq t$ and $w'$ is adjacent to $\pi(u,v)$. Hence, we have
$$\min_{w \in A_u'} \{ \delta(u,w)+\delta(w,v) \} \leq (1+\epsilon') (d(u,w')+d(w',v)) \leq (1+\epsilon') (d(u,u')+d(u',w')+d(w',u')+d(u',v)).$$ 
This equals to $(1+ \epsilon')(d(u,v)+2) \leq (2+\epsilon)d(u,v),$
where we used the fact that $u'$ is in $\pi(u,v)$, $w'$ is a neighbour of $u'$, $d(u,v) \geq 2$ and $\epsilon'=\frac{\epsilon}{2}.$ 
Hence, $u$ and $v$ compute a $(2+\epsilon)$-approximation for the distance in Line \ref{alg2_A'}.

The only case left is that $u'$ has degree at most $\frac{n}{k^2}$. Hence, $d(u,v)=d(u,u')+d(u',v')+d(v',v)$, where $u' \in N_{k,t}(u), v' \in N_{k,t}(v), \{u',v'\} \in E''.$ Since in this case we also have $\delta(u,u')=d(u,u'),\delta(u',v')=d(u',v'),\delta(v',v)=d(v',v)$ as vertices know the exact distances to the \ktnearest vertices and neighbours, we have that $u,v$ compute $d(u,v)$ in Line \ref{alg2_m}. 
\end{proof}

\begin{claim}
The algorithm takes $O(\frac{\log^2{t}}{\epsilon})$ time w.h.p.
\end{claim}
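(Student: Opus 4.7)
The plan is to bound the round complexity of each of the thirteen steps of the algorithm separately and verify that the sum is $O(\log^2 t/\epsilon)$ w.h.p. The steps fall into three categories: local computation or a single broadcast round; invocations of the $(k,d)$-nearest, source-detection, or \distthrough primitives with carefully chosen parameters; and sparse matrix multiplications via Theorem~\ref{thrm:mm}.

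For the first two categories, I would verify each step in turn. The $(k,t)$-nearest computation in Step~2, with $k=n^{1/4}\log^2 n$, takes $O(\log^2 t)$ rounds by Theorem~\ref{thrm:knearest} since $k/n^{2/3}=o(1)$. Each \distthrough call (Steps~3 and 7) takes $O(1)$ rounds by Theorem~\ref{thrm:distance-through}, because the relevant sets $W_v$ have bounded average size (at most $k$ in Step~3 and effectively a single value per vertex in Step~7, which can alternatively be handled by having each $u$ broadcast the pair $(p^t_A(u),\delta(u,p^t_A(u)))$ since the recipients already know $\delta(p^t_A(u),v)$ from Step~5). The two hitting-set constructions (Steps~4 and 8) require only local computation by Lemma~\ref{rand_hit} followed by one broadcast round. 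For the two source-detection phases (Steps~5 and 9), I would first invoke the bounded hopset of Theorem~\ref{hopset_thm} once in $O(\log^2 t/\epsilon)$ rounds, and then apply Theorem~\ref{thrm:source-detection} from $A$ (resp.\ $A'$); the key calculation is that with $m=\widetilde{O}(n^{3/2})$ edges in $G'\cup H'$, we have $m^{1/3}|A|^{2/3}/n=O(1)$ and $m^{1/3}|A'|^{2/3}/n=O(1)$, so each detection phase costs $O(\beta')=O(\log t/\epsilon)$ rounds.

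The main obstacle is the sparse matrix multiplications in Steps~11 and 13. Step~11 is the product of two sparse matrices with $\rho_S\le k$ and $\rho_T\le|A'|=\widetilde{O}(\sqrt{n})$, and Theorem~\ref{thrm:mm} yields $(\rho_S\rho_T)^{1/3}/n^{1/3}=\widetilde{O}(n^{-1/12})=O(1)$. Step~13 is harder because it effectively computes $N\cdot E''\cdot N^{T}$ in the min-plus semiring, and a naive intermediate product $N\cdot E''$ may be dense, costing $\Omega(k^{1/3})$ rounds in the second product. My plan here is to decompose $E''=E''_u\cup E''_v$ according to which endpoint has $G'$-degree at most $n/k^2$; for each summand, at least one of the two multiplications has row-density at most $O(n/k)$ because the low-degree side caps the fan-out, so both multiplications per summand satisfy $(\rho_1\rho_2)^{1/3}/n^{1/3}=O(1)$ by Theorem~\ref{thrm:mm}. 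This is exactly the role of the extra sparsification via the $n/k^2$ degree threshold built into the definition of $E''$. Adding everything up, the $O(\log^2 t/\epsilon)$ hopset construction dominates, and all other contributions are $O(\log^2 t)$, $O(\log t/\epsilon)$, or $O(1)$, giving the claimed $O(\log^2 t/\epsilon)$ total w.h.p.
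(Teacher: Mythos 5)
Your proposal is correct and follows essentially the same route as the paper: the same sequence of primitive invocations (\kdnearest, \distthrough, hitting sets, bounded hopset plus \sdk, sparse matrix products) with the same parameter calculations, dominated by the $O(\log^2 t/\epsilon)$-round bounded-hopset construction. Your handling of the triple product in Step 13, decomposing $E''$ by which endpoint has degree at most $n/k^2$, is just a transposed restatement of the paper's construction, which restricts the middle matrix $W_2$ to rows of low-degree vertices (so the intermediate density is at most $k\cdot n/k^2=n/k$) and recovers the other orientation by taking $\min\{W[u,v],W[v,u]\}$.
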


\begin{proof}
Computing the distances to the \ktnearest to each vertex takes $O(\log^2{t})$ rounds by Theorem \ref{thrm:knearest} as $k = n^{1/4} \log^2{n} = O(n^{2/3}).$
Computing $\min_{w \in N_{k,t}(u) \cap N_{k,t}(v)} \{\delta(u,w)+\delta(w,v)\}$ is done using the \distthrough algorithm with the sets $W_v = N_{k,t}(v)$. Since these sets are of size at most $k < \sqrt{n}$, the complexity is $O(1)$ from Theorem \ref{thrm:distance-through}.

Computing $A$ is done using Lemma \ref{rand_hit} which requires only one round to update all vertices which vertices are in $A$. 
To compute the approximate distances to vertices in $A$ we use the bounded hopset and \sdk algorithms as described in the proof of Claim \ref{alg_time}. The only difference here is the size of the graph and the size of $A$. Here, we compute distances in the graph $G' \cup H'$ where $H'$ is the $(\beta',\epsilon',2t)$-hopset constructed, the number of edges in $G'$ and in $H'$ is bounded by $O(n^{3/2} \log{n})$ w.h.p, from the definition of $G'$ and from Theorem \ref{hopset_thm}. Also, $A$ is of size $O(n^{3/4}/\log{n})$ w.h.p by its definition. As $G' \cup H'$ has $\beta'$-hop paths that approximate all paths of length at most $2t$ in $G'$, for $\beta'=O(\frac{\log{t}}{\epsilon})$, the complexity of the \sdk algorithm is  $O \biggl(\biggl( \frac{ (n^{3/2}\log{n})^{1/3} (n^{3/4}/\log{n})^{2/3}}{n} +1 \biggr) \beta' \biggl) = O(\beta')=O(\frac{\log{t}}{\epsilon})$ w.h.p. As constructing the hopset $H'$ requires $O(\frac{\log^2{t}}{\epsilon})$, this is the total complexity for this part.

Each vertex $v$ can compute locally the vertex $p^t_A(v)$ from the set $N_{k,t}(v)$, and then it sends to all vertices $p^t_A(v)$ if exists. Hence, all pairs of vertices $u,v$ can compute simultaneously the values $\delta(u,p^t_A(u)) + \delta(p^t_A(u),v), \delta(v,p^t_A(v)) + \delta(p^t_A(v),u)$, by exchanging between them the relevant values.

The computation of $A'$ is done using Lemma \ref{rand_hit} and the computation of the approximate distances to $A'$ follows the description of the same computation for $A$. $A'$ is smaller which can only improve the complexity. Each $u$ computes the set $A'_u$ as follows. $u$ already computed the set $N_{k,t}(u)$. Also, each vertex that has at least one neighbour in $A'$ can inform all vertices about such neighbour in one round. The set $A'_u$ is then composed of all such neighbours of vertices in $N_{k,t}(u).$ We next explain how to compute $\min_{w \in A'_u} \{\delta(u,w)+\delta(w,v)\}$ in Line \ref{alg2_A'}. For this, we use sparse matrix multiplication with matrices $M_1,M_2$ where $M_1[u,v] = \delta(u,v)$ if $v \in A'_u$ and it equals $\infty$ otherwise, and the matrix $M_2[u,v]=\delta(u,v)$ if $u \in A'$, and it equals $\infty$ otherwise. Note that $M_2$ has all distances from $A'$ to other vertices, where $M_1$ only has distances from each $u$ to $A'_u.$ Now by definition of matrix multiplication in the min-plus semiring $(M_1 \cdot M_2)[u,v]=\min_{w \in A'_u} \{\delta(u,w)+\delta(w,v)\}$, as needed. The density of $M_1$ is bounded by $k=n^{1/4}\log^2{n}$ as the sets $A'_u$ are of size at most $k$ by definition, and the density of $M_2$ is bounded by $|A'|=\sqrt{n}\log^5{n}$. Hence, by Theorem \ref{thrm:mm} the computation takes $O(1)$ rounds.

Finally, we explain how to compute $$\delta'(u,v) = \min\{ \delta(u,u')+\delta(u',v')+\delta(v',v) | u' \in N_{k,t}(u), v' \in N_{k,t}(v), \{u',v'\} \in E''\}.$$ For this, we use sparse matrix multiplication with the following 3 matrices. One, $W_1$, has the distances from each vertex $u$ to the set $N_{k,t}(u)$. The second, $W_2$, has all edges from vertices of degree at most $\frac{n}{k^2}$ in $G'$ to all their neighbours, and the third $W_3 = W_1^T$, i.e., for each vertex $v$, $W_3[v',v]$ has the distance $d(v',v)$ if $v' \in N_{k,t}(v)$, and it has $\infty$ otherwise. Let $W$ be the product of these 3 matrices in the min-plus semiring. By definition of the product, $W[u,v]$ is the weight of the minimum path that has the first edge from $u$ to $u' \in N_{k,t}(u)$, the second edge from $u'$ to $v'$ where $u'$ has degree at most $\frac{n}{k^2}$ and $v'$ is a neighbour of $u'$ in $G'$, and the third edge is from $v' \in N_{k,t}(v)$ to $v$. Let $P$ be a shortest path between $u$ to $v$ that is composed of a path between $u$ to $u' \in N_{k,t}(u)$, an edge $\{u',v'\} \in E''$, and a path from $v' \in N_{k,t}(v)$ to $v$, by definition of $E''$ we have that at least one of $u',v'$ has degree at most $\frac{n}{k^2}$. If this is $u'$, then we have $W[u,v]=\delta'(u,v)$, and otherwise we have $W[v,u]=\delta'(v,u)=\delta'(u,v)$. Hence, at least one of $u,v$ learns the value $\delta'(u,v)$ and by taking the minimum value between $W[u,v],W[v,u]$ both learn it.
We now analyze the complexity. In $W_1$, the degrees of all vertices are at most $k=n^{1/4}\log^2{n}$ by definition, hence its density is $k$. In $W_2$ the degrees of all vertices are most $\frac{n}{k^2}$, hence its density is $\frac{n}{k^2}$. Hence, we can multiply $W_1,W_2$ in $O(1)$ rounds by Theorem \ref{thrm:mm}. Moreover, in the product $W_1 \cdot W_2$ the degrees of all vertices are at most $k \cdot \frac{n}{k^2}=\frac{n}{k}$. Since $W_3 = W_1^T$, its density is at most $k$. Hence, multiplying $W_1 \cdot W_2$ and $W_3$ takes $O(1)$ rounds as well by Theorem \ref{thrm:mm}, as the product of densities is at most $\frac{n}{k} \cdot k=n$.
\end{proof}

At the end, each pair of vertices $u,v$ have an estimate $\delta(u,v)$ for the distance between them which is the minimum estimate obtained in the different parts of the algorithm. From the discussion above, in all cases we have that $\delta(u,v) \leq (2+\epsilon)d(u,v)$ w.h.p. Also, the total time complexity is $O(\frac{\log^2{\beta}}{\epsilon}+\frac{\log^2{t}}{\epsilon})$ for $\beta=O(\frac{\log{\log{n}}}{\epsilon})^{\log{\log{n}}}$ and $t = \frac{2\beta}{\epsilon}$. By the choice of parameters, this gives a complexity of $O(\frac{\log^2{\beta}}{\epsilon})$ rounds, which gives the following.

\APSPthm*

\section{Deterministic Algorithms}

We next explain how to derandomize our algorithms. Most of the randomized parts in our algorithms are based on hitting set arguments, which can be derandomized easily by using Lemma \ref{det_hit} instead of Lemma \ref{rand_hit}. This adds $O((\log{\log{n}})^3)$ term to the complexity. However, derandomizing the emulator algorithm requires a more careful process. Intuitively, using hitting set arguments directly to derandomize the process would add logarithmic factor to the size of the emulator, which leads to additional logarithmic term in the complexity of our shortest paths algorithms, which is too expensive. To avoid it, we introduce the \emph{soft hitting set} problem, a new variant of the hitting set problem that captures more accurately the randomized process required in the emulator construction. Then, we show how to build soft hitting sets deterministically, which leads to deterministic construction of emulators with essentially  the same parameters as the randomized construction.
Other than this, another slight difference in the deterministic construction is the following. In our randomized construction we used the set $S_r$ of the emulator in two different roles: First, it is one of the sets of the emulator. Second, it is a hitting set for heavy vertices (see Claim \ref{claim_heavy}). In the deterministic construction we add a different hitting set $A$ for the second purpose, which slightly changes the description of the algorithm.

We next describe in detail the deterministic construction of emulators, and show deterministic variants of all our applications. Later, in Section \ref{sec:derandomization}, we show how to build soft hitting sets deterministically.

\subsection{Deterministic emulators}

For the construction, we need the following definitions.
For a given two subsets of vertices $V_1,V_2$, define the \emph{soft-hitting set} function $SH(V_1,V_2)$ by 
$$
SH(V_1,V_2)=
\begin{cases}
0, \text{~~if~~} V_1 \cap V_2 \neq \emptyset\\
|V_1|, \text{~~otherwise.}\\
\end{cases}
$$

\begin{definition}[Soft Hitting Set]
Consider a graph $G=(V,E)$ with two special sets of vertices $L \subseteq V$ and $R \subseteq V$ with the following properties: each vertex $u \in L$ has a subset $S_u \subseteq R$ where $|S_u| \geq \Delta$.
A set of vertices $R^* \subseteq R$ is \emph{soft hitting set} for $L$ if:
(i) $|R^*|=O(|R|/\Delta)$ and 
(ii) $\sum_{u \in L} SH(S_u, R^*)=O(|L|\cdot \Delta)$. 
\end{definition}
The above definition can be viewed as an adaption of the hitting-set definition by Ghaffari and Kuhn in \cite{ghaffari2018derandomizing}.  In Section \ref{sec:derandomization} we show:
\begin{restatable}{lemma}{hitting}[Det. Construction of Soft Hitting Sets] \label{lemma_soft_hitting}
Let $L,R \subseteq V$ be subsets of vertices where each vertex $u \in L$ knows a set $S_u \subseteq R$ of at least $\Delta$ vertices. There exists an 
$O((\log\log n)^3)$-round deterministic algorithm in the \clique model that computes a soft-hitting set $R^* \subseteq R$ for $L$, where $|R^*| \leq c|R|/\Delta$ for a constant $c$.
\end{restatable}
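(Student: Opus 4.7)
My plan is to derandomize the natural sampling procedure that puts each $v \in R$ into $R^*$ independently with probability $p = 1/\Delta$, via the PRG of Gopalan, Meka, and Reingold~\cite{gopalan2012better} for fooling DNFs, following and extending the framework of Parter--Yogev~\cite{ParterY18}.

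First, I would verify the randomized guarantee. Let $X_v \in \{0,1\}$ be the indicator that $v\in R^*$, with $\Pr[X_v=1]=p=1/\Delta$. Then $\E[|R^*|]=|R|/\Delta$, and for each $u\in L$,
\[
\E[\,SH(S_u,R^*)\,] \;=\; |S_u|\cdot\prod_{v\in S_u}(1-p) \;\le\; |S_u|\cdot e^{-|S_u|/\Delta}.
\]
Since the function $x\mapsto x\cdot e^{-x/\Delta}$ on $[\Delta,\infty)$ is maximized at $x=\Delta$ with value $\Delta/e$, summing over $L$ gives $\E[\sum_{u\in L}SH(S_u,R^*)]\le |L|\Delta/e$. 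By Markov's inequality applied to the sum of the two (positive) quantities against appropriate constants, a uniformly random $R^*$ satisfies both properties (i) and (ii) with positive constant probability.

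The key observation for derandomization is that the event ``$S_u\cap R^*=\emptyset$'' is exactly the negation of a \emph{read-once} monotone DNF (with a single term $\bigvee_{v\in S_u} X_v$) over the variables $\{X_v\}_{v\in R}$. Therefore the expectation of $SH(S_u,R^*)$ equals $|S_u|\cdot(1-\Pr[\text{DNF}_u\text{ satisfied}])$, and similarly $\E[|R^*|]=\sum_{v\in R}\Pr[X_v=1]$ is a (trivially fooled) sum of read-once single-variable DNFs. Using the PRG of~\cite{gopalan2012better}, which fools any $n$-variable DNF to error $1/\mathrm{poly}(n)$ using a seed of length $s=O((\log\log n)^3\log n)$, I obtain
\[
\E_{\sigma\sim\text{PRG}}[|R^*|] \approx |R|/\Delta,\qquad
\E_{\sigma\sim\text{PRG}}\Big[\textstyle\sum_{u\in L}SH(S_u,R^*)\Big] \approx |L|\Delta/e,
\]
up to additive $1/\mathrm{poly}(n)$ per DNF, which is negligible after summing since each term is $O(n)$. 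Hence there exists a PRG seed $\sigma^*$ for which both $|R^*|\le c|R|/\Delta$ and $\sum_u SH(S_u,R^*)\le c'|L|\Delta$ hold simultaneously.

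To find $\sigma^*$ deterministically in $O((\log\log n)^3)$ rounds, I use the method of conditional expectations on the seed, processing the $s$-bit seed in $O((\log\log n)^3)$ chunks of $\lfloor\log n\rfloor$ bits each. In a given chunk, each of the $n$ possible extensions of the current partial seed is assigned to a distinct vertex; that vertex locally evaluates the conditional expectation of the \emph{potential function} $\Phi=|R^*|/c_1 + \sum_u SH(S_u,R^*)/c_2$ (where $c_1,c_2$ are chosen so the pseudorandom expectation is at most $|L|+|R|/\Delta$). The conditional expectation of $\Phi$ with the remaining bits uniform is a sum of PRG-probabilities of single-term DNFs, each computable locally given knowledge of the PRG construction (we can assume unbounded local computation for evaluating the PRG on a fixed seed, exactly as in~\cite{ParterY18}). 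Each vertex $u\in L$ evaluates its own contribution $|S_u|\cdot\Pr_{\text{PRG}|\sigma_{\text{partial}}}[S_u\text{ not hit}]$ and these are aggregated to the $n$ designated coordinators in $O(1)$ rounds via Lenzen's routing; the extension minimizing $\Phi$ is then broadcast, concluding the chunk. Across $O((\log\log n)^3)$ chunks, the total round complexity is $O((\log\log n)^3)$, and the resulting seed certifies both (i) and (ii).

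The main obstacle I anticipate is formalizing the aggregation of the per-$u$ conditional expectations in the \clique model: $|L|$ may be large, so each vertex of $R$ (or of $V$) must collect the relevant portion of the sum without congestion. This is handled exactly as in~\cite{ParterY18} by partitioning the sum among vertices via Lenzen's routing so that each chunk of the conditional expectation is computed by a different vertex in $O(1)$ rounds, but care is needed because here the summand is \emph{weighted} by $|S_u|$ rather than a $\{0,1\}$ indicator; this is what the paper signals by stating that soft hitting sets are ``a function'' of DNF satisfaction probabilities rather than a single DNF probability.
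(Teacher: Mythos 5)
Your overall plan coincides with the paper's proof: biased sampling at rate roughly $1/\Delta$, the observation that the hit-events are read-once DNFs so the PRG of \cite{gopalan2012better} preserves both the expected size and the expected total $SH$-value (the latter by multiplying the non-hit probability by $|S_u|$), and then the method of conditional expectations applied to a single normalized potential, processed in $\lfloor\log n\rfloor$-bit chunks of the $O(\log n(\log\log n)^3)$-bit seed with the $n$ possible chunk extensions distributed to $n$ coordinator vertices, giving $O((\log\log n)^3)$ rounds. Your aggregation worry is not a real obstacle: per chunk, each vertex $u \in L$ sends one $O(\log n)$-bit value to each of the $n$ coordinators, the coordinators forward partial sums to a leader, and the leader broadcasts the winning extension, all in $O(1)$ rounds without Lenzen routing; the weighting by $|S_u|$ only changes the magnitude of the transmitted numbers, not the communication pattern.

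There is, however, one concrete gap in the pseudorandom step. You apply the PRG to the variables $\{X_v\}_{v\in R}$ as if each $X_v$ were a biased coin with $\Pr[X_v=1]=1/\Delta$, but the generator of \cite{gopalan2012better} fools read-once DNFs with respect to the \emph{uniform} distribution on its output bits; a single output bit has bias $1/2$, not $1/\Delta$, so the expectations you compute under the PRG do not correspond to the claimed sampling rate. The missing ingredient (which is exactly what the paper's construction supplies) is to encode each biased coin as the conjunction of a block of $\ell=\lfloor\log\Delta\rfloor$ (up to constants, $\lfloor\log 1/p\rfloor$ with $p=\Theta(1/\Delta)$) uniform PRG output bits, i.e.\ $h_s(v)=\bigwedge_{j}\PRG(s)_{(v-1)\ell+j}$; then membership of $v$ in $R^*$ has probability $2^{-\ell}=\Theta(1/\Delta)$, and the event that $S_u$ is hit becomes $\bigvee_{v\in S_u}\bigwedge_j y_{v,j}$, still a read-once DNF, now over $|R|\ell$ uniform variables, so the fooling guarantee applies and your two expectation bounds go through with negligible $1/\mathrm{poly}(n)$ error even after multiplying by $|S_u|\le n$. (As a side remark, no unbounded-local-computation assumption is needed here: this PRG is explicit and polynomial-time computable, which is why the paper reserves that assumption only for the alternative Nisan--Wigderson-based variant.) With the block encoding inserted, your argument is the paper's argument.
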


We next show how to use soft hitting sets to derandomize our emulator algorithm.
The main difference in the algorithm is the construction of the sets $S_i$, before we just defined $S_i \gets Sample(S_{i-1}, p_i)$, here we use soft hitting sets to build them.

\subsubsection*{Constructing the sets $S_i$}

First, we build sets $\emptyset = S'_{r+1} \subset S'_r \subset S'_{r-1} \ldots \subset \ldots S'_1 \subset S'_0=V$, which behave similarly to the sets $S_i$ in the randomized construction.
The values $\delta_i$ and $p_i$ are defined as in the randomized construction.
Before the process we run the \ktnearest algorithm with $k = n^{2/3}, t = \delta_r$ to let each vertex $v$ learn the set $N_{k,t}(v)$ of $k$ closest vertices of distance at most $t$, the complexity is $O(\log^2{\delta_r})$ rounds by Theorem \ref{thrm:knearest}. 

We build the sets $S'_i$ in an iterative process that takes $r$ iterations.
For $0 \leq i \leq r-1$, given the set $S'_i$, we define the set $S'_{i+1}$ as follows.
First, for each vertex $v \in S'_i$, we say that $v$ is \emph{heavy} if $|B(v,\delta_i,G)| \geq n^{2/3}$, and it is \emph{light} otherwise.
We define the set $L$ to be composed of all light vertices $v \in S'_i$ where $|B(v,\delta_i,G) \cap S'_i| \geq \frac{c}{p_{i+1}}$, where $c$ is the constant guaranteed by Lemma \ref{lemma_soft_hitting}. For a vertex $v \in L$, we define $T_v = B(v,\delta_i,G) \cap S'_i$, and we define $R = S'_i$ and $\Delta = \frac{c}{p_{i+1}}$. By definition, for all $v \in L$ we have $T_v \subseteq R$, and $|T_v| \geq \Delta.$ Additionally, given $S'_i$, all vertices $v$ can learn if they are in $L$, and if so the corresponding set $T_v$, as follows. All vertices computed the sets $N_{k,t}(v)$ for $k = n^{2/3}, t = \delta_r \geq \delta_i.$ If $|B(v,\delta_i,G)| < n^{2/3}$, then it is contained entirely in $N_{k,t}(v)$, $v$ can learn this by checking if the set $N_{k,t}(v)$ has a vertex of distance greater than $\delta_i$ or not, hence it can learn if it is light. In the case it is indeed light, we have that $B(v,\delta_i,G) \cap S'_i \subset N_{k,t}(v)$, hence it can compute this set without communication. Note that by definition this set is $T_v$ if $v \in L$, and in addition $v$ can deduce if it is in $L$ based on the size of this set. Hence, all the conditions of Lemma \ref{lemma_soft_hitting} hold, and we can use it to compute a set $S'_{i+1} \subseteq S'_i$ which is a soft hitting set with respect to the sets $\{T_v\}_{v \in L}.$ The complexity is $O((\log{\log{n}})^3)$ rounds per iteration. At the end, we have additional round to let all vertices learn which vertices are in $S'_{i+1}.$
This completes the description of one iteration. Since we have $r$ iterations, the overall complexity is $O(r(\log{\log{n}})^3).$

Other than the sets $S'_i$ we also compute a set $A$ which is a hitting set for heavy vertices. $A$ is defined as follows. For a vertex $v$ which is heavy in some iteration, let $i$ be the first iteration that $v \in S'_i$ is heavy. Define $A_v = N_{k,\delta_{i}}(v)$, note that $v$ knows this set as it is contained in $N_{k,\delta_r}(v)$, and that $|A_v|=k$ since $v$ is heavy in iteration $i$. The set $A$ would be a hitting set of the sets $A_v$ of heavy vertices. As $k = n^{2/3}$, we can construct deterministically a hitting set $A$ of size $O(n^{1/3} \log{n})$ that hits all sets $A_v$ in $O((\log{\log{n}})^3)$ rounds using Lemma \ref{det_hit}. 

Finally, we define the set $S_i = S'_i \cup A$ for all $0 \leq i \leq r$, and $S_{r+1} = \emptyset$. By their definition, we have that $\emptyset = S'_{r+1} \subset S'_r \subset S'_{r-1} \ldots \subset \ldots S'_1 \subset S'_0=V$.
From now on the algorithm works as the randomized one with respect to the sets $S_i$. For example, a vertex $v \in S_i$ is $i$-dense if $B(v,\delta_i,G) \cap S_{i+1} \neq \emptyset,$ the definition of $c_i(v)$ is with respect to the new $S_i$'s, and so on.
Most of the algorithm and analysis, including the stretch analysis, do not depend on the specific way the sets $S_i$ are constructed, hence they remain the same. The differences are mostly in the size analysis, and also slightly in the implementation details. We next discuss the changes in detail.
From the discussion above we have the following.

\begin{claim} \label{construct_Si_det}
The construction of the sets $S_i$ takes $O(r (\log\log{n})^3 + \log^2{\delta_r})$ rounds.
\end{claim}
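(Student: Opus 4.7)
The plan is to account separately for the three components that make up the construction of the sets $S_i$: (a) the initial $(k,t)$-nearest preprocessing; (b) the $r$ iterative applications of the soft hitting set algorithm used to produce $S'_{i+1}$ from $S'_i$; and (c) the single deterministic hitting-set construction that produces the auxiliary set $A$. Once the cost of each is bounded, the claim follows by summing.

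For component (a), I would invoke Theorem~\ref{thrm:knearest} with $k=n^{2/3}$ and $d=\delta_r$. Plugging these parameters in, the round complexity reduces to $O\bigl((\frac{n^{2/3}}{n^{2/3}}+\log\delta_r)\log\delta_r\bigr) = O(\log^2 \delta_r)$. This preprocessing is executed once before the iterative process begins, and it supplies every vertex $v$ with $N_{k,\delta_r}(v)$, which as argued in the algorithm description is enough for $v$ to determine locally, in any later iteration $i$, (i) whether it is light or heavy with respect to $\delta_i$, (ii) the set $T_v = B(v,\delta_i,G)\cap S'_i$ when it is light, and (iii) the set $A_v = N_{k,\delta_i}(v)$ when it is heavy.

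For component (b), each iteration performs exactly one invocation of Lemma~\ref{lemma_soft_hitting} with ground set $R = S'_i$ and the sets $\{T_v\}_{v\in L}$, each of size at least $\Delta = c/p_{i+1}$. Since the preprocessing guarantees that every $v\in L$ already knows its set $T_v$ locally, the hypotheses of Lemma~\ref{lemma_soft_hitting} are satisfied, and the lemma produces $S'_{i+1}$ in $O((\log\log n)^3)$ rounds. One additional round suffices for broadcasting membership in $S'_{i+1}$ to all vertices. Summing across the $r$ iterations yields $O(r(\log\log n)^3)$ rounds in total.

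For component (c), the set $A$ is a standard hitting set for the collection $\{A_v\}$ over the heavy vertices; each such set has size $k=n^{2/3}$ and is already known to its owner thanks to the preprocessing. Hence Lemma~\ref{det_hit} applies and produces $A$ deterministically in $O((\log\log n)^3)$ rounds. Forming $S_i = S'_i \cup A$ is purely local. Summing (a), (b), and (c) gives the claimed bound $O(r(\log\log n)^3 + \log^2\delta_r)$. The main conceptual point to verify, and the only place where something nontrivial happens, is step (b): one must ensure that in every iteration the sets $T_v$ are available locally so that the soft hitting set primitive can be applied as a black box; this is precisely why the $(k,t)$-nearest computation is run once upfront with $d=\delta_r$, which dominates every $\delta_i$ and thus covers every iteration simultaneously.
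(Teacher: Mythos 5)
Your proposal is correct and follows essentially the same decomposition as the paper's proof: one upfront \kdnearest computation with $k=n^{2/3}$, $d=\delta_r$ costing $O(\log^2\delta_r)$ rounds (Theorem~\ref{thrm:knearest}), $r$ invocations of Lemma~\ref{lemma_soft_hitting} at $O((\log\log n)^3)$ rounds each, and one application of Lemma~\ref{det_hit} for $A$. The additional remarks on why each vertex knows $T_v$ and $A_v$ locally are accurate and simply restate what the paper establishes in the algorithm's description preceding the claim.
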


\begin{proof}
Computing the sets $N_{k,t}(v)$ for $t = \delta_r$ takes $O(\log^2{\delta_r})$ rounds from Theorem \ref{thrm:knearest}. The construction of the sets $S'_i$ takes $r$ iterations of $O((\log{\log{n}})^3)$ rounds by Lemma \ref{lemma_soft_hitting}, and the construction of $A$ takes $O((\log{\log{n}})^3)$ rounds by Lemma \ref{det_hit}.
\end{proof}

\subsubsection*{Size analysis}

We have the following bound on the size of the sets $S_i.$

\begin{claim} \label{claim_det_size}
For $0 \leq i <r$, it holds that $S_i$ is of size $O(n^{1-\frac{2^i-1}{2^r}})$, and $S_r$ is of size $O(\sqrt{n})$.
\end{claim}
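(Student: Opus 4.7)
The plan is to induct on $i$, exploiting the size guarantee from the soft hitting set lemma at each step, and then bound the contribution of the extra hitting set $A$.

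First, I would record the per-iteration recurrence. In iteration $i$, the set $S'_{i+1}$ is obtained from $R = S'_i$ by invoking Lemma~\ref{lemma_soft_hitting} with $\Delta = c/p_{i+1}$, which guarantees $|S'_{i+1}| \le c|R|/\Delta = |S'_i|\cdot p_{i+1}$. Starting from $|S'_0| = n$ and unrolling gives
\[
|S'_i| \le n\cdot \prod_{j=1}^{i} p_j.
\]
For $1 \le j \le r-1$ we have $p_j = n^{-2^{j-1}/2^r}$, so $\sum_{j=1}^{i} 2^{j-1} = 2^i - 1$ yields $\prod_{j=1}^{i} p_j = n^{-(2^i-1)/2^r}$. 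This proves $|S'_i| \le n^{1-(2^i-1)/2^r}$ for every $0 \le i \le r-1$.

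Next I handle $i=r$ separately, because $p_r = n^{-1/2^r}$ is defined differently from the other $p_j$'s. Using the recurrence with $i = r-1 \to r$ gives $|S'_r| \le |S'_{r-1}|\cdot p_r \le n^{1-(2^{r-1}-1)/2^r}\cdot n^{-1/2^r} = n^{1 - 2^{r-1}/2^r} = \sqrt{n}$.

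Finally, since $S_i = S'_i \cup A$ where $A$ is the Lemma~\ref{det_hit} hitting set of size $O(n^{1/3}\log n)$ (built for sets $A_v$ of size $k = n^{2/3}$), I must check that adding $A$ does not spoil the bounds. For $0 \le i \le r-1$ we have $|S'_i| \ge n^{1-(2^{r-1}-1)/2^r} = n^{1/2 + 1/2^r} \ge \sqrt{n}$, which dominates $|A| = O(n^{1/3}\log n)$; hence $|S_i| = O(|S'_i|) = O(n^{1-(2^i-1)/2^r})$. For $i = r$, both $|S'_r|$ and $|A|$ are $O(\sqrt{n})$, so $|S_r| = O(\sqrt{n})$ as claimed.

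The only real subtlety (which I do not expect to be hard, but is the one place the argument could slip) is checking that the soft hitting set lemma can be applied in every iteration with the stated parameters, even when the set $L$ of heavy-qualified vertices is empty or when some vertices have already been absorbed into $A$; in those degenerate cases we may simply take $S'_{i+1}$ to be an arbitrary sample of size $|S'_i|\cdot p_{i+1}$, which matches the same bound. Beyond that, the proof is a direct telescoping calculation.
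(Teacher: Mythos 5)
Your proof is correct and follows essentially the same route as the paper: the soft-hitting-set lemma yields the recurrence $|S'_{i+1}| \le c|S'_i|/\Delta = |S'_i|\,p_{i+1}$, which telescopes (the paper inducts) to $|S'_i| \le n^{1-\frac{2^i-1}{2^r}}$ with $i=r$ handled separately via $p_r = n^{-1/2^r}$, and then $A$ of size $O(n^{1/3}\log n)$ is absorbed. One small fix: you cannot assert the lower bound $|S'_i| \ge n^{1/2+1/2^r}$ (the lemma only gives an upper bound on $|S'_i|$, so $|S_i| = O(|S'_i|)$ is not justified); the correct comparison, as in the paper, is that the \emph{bound} $n^{1-\frac{2^i-1}{2^r}} \ge n^{1/2+1/2^r}$ dominates $|A|$, hence $|S_i| \le |S'_i| + |A| = O(n^{1-\frac{2^i-1}{2^r}})$ for $i<r$ and $|S_r| = O(\sqrt n)$.
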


\begin{proof}
First, we prove that for $0 \leq i< r$, the set $S'_i$ is of size at most $n^{1-\frac{2^i-1}{2^r}}.$
For $i=0$, it holds since $S_0 = V.$
For $1 \leq i < r$, the proof is by induction.
By definition, $S'_{i+1}$ is a soft hitting set, where we have $R = S'_i$ and $\Delta = \frac{c}{p_{i+1}}$. From Lemma \ref{lemma_soft_hitting}, we have that $|S'_{i+1}| \leq c|R|/\Delta = c|S'_i|p_{i+1}/c = |S'_{i}|p_{i+1}$. From the induction hypothesis and since $p_{i+1} = n^{- \frac{2^i}{2^r}}$, we have, $$|S'_{i+1}| \leq n^{1-\frac{2^i-1}{2^r}} \cdot n^{- \frac{2^i}{2^r}} = n^{1-\frac{2^{i+1}-1}{2^r}}.$$
For the case $i=r$, since $p_r = n^{-\frac{1}{2^r}}$, following the same argument we have that $$|S'_r| \leq |S'_{r-1}| p_r = n^{1-\frac{2^{r-1}-1}{2^r}} \cdot n^{-\frac{1}{2^r}} = \sqrt{n}.$$

Now, we have $S_i = S'_i \cup A$, where $A$ is of size $O(n^{1/3} \log{n})$. As $n^{1-\frac{2^i-1}{2^r}} > n^{1/3} \log{n}$ for $i <r$, and $|S'_r| \leq \sqrt{n}$, we have that $|S_i| = O(n^{1-\frac{2^i-1}{2^r}})$ for $0 \leq i < r$, and $S_r = O(\sqrt{n})$.
\end{proof}

We next bound the size of the emulator.

\begin{claim}
The emulator has $O(r \cdot n^{1+\frac{1}{2^r}})$ edges.
\end{claim}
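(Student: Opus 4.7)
The plan is to bound the edges contributed by each level $S_i \setminus S_{i+1}$ and then sum over $i$, mirroring the randomized proof of Lemma \ref{lemma_size} but replacing the probabilistic estimate of Claim \ref{claim_edges} with the deterministic guarantee provided by the soft hitting set property. Fix $0 \leq i < r$ and consider any $v \in S_i \setminus S_{i+1}$. If $v$ is $i$-dense it contributes a single edge, so the total dense contribution is at most $|S_i|$. Otherwise $v$ is $i$-sparse, meaning $B(v,\delta_i,G) \cap S_{i+1} = \emptyset$; since $A \subseteq S_{i+1}$ this also gives $B(v,\delta_i,G) \cap A = \emptyset$, so $|B(v,\delta_i,G) \cap S_i| = |B(v,\delta_i,G) \cap S'_i|$ and $v$ contributes exactly this many edges.

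Next I would argue that every sparse vertex is light: by construction, a heavy vertex $v$ has $A_v = N_{k,\delta_i}(v) \subseteq B(v,\delta_i,G)$ of size $k = n^{2/3}$, and $A$ is a hitting set for $\{A_v\}$, so $B(v,\delta_i,G) \cap A \neq \emptyset$, contradicting sparsity. So I only need to bound the sparse contribution over light vertices, split according to membership in $L$. For $v \in L$ with $v$ sparse, the set $T_v = B(v,\delta_i,G) \cap S'_i$ satisfies $T_v \cap S'_{i+1} = \emptyset$, hence $SH(T_v, S'_{i+1}) = |T_v|$, so by property (ii) of Lemma \ref{lemma_soft_hitting},
\[
\sum_{\substack{v \in L \\ v \text{ sparse}}} |B(v,\delta_i,G) \cap S'_i| \;\leq\; \sum_{v \in L} SH(T_v, S'_{i+1}) \;=\; O(|L| \cdot \Delta) \;=\; O(|S_i|/p_{i+1}).
\]
For sparse light $v \notin L$ the definition of $L$ gives $|B(v,\delta_i,G) \cap S'_i| < \Delta = c/p_{i+1}$, so these vertices together contribute at most $O(|S_i|/p_{i+1})$.

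Combining the three buckets, the total number of edges added at level $i$ is $O(|S_i|/p_{i+1})$. Plugging in the sizes from Claim \ref{claim_det_size}, for $0 \leq i < r-1$ this is $O\bigl(n^{1-\frac{2^i-1}{2^r}} \cdot n^{\frac{2^i}{2^r}}\bigr) = O(n^{1+\frac{1}{2^r}})$, and for $i = r-1$ the same calculation with $1/p_r = n^{1/2^r}$ yields $O(n^{1-\frac{2^{r-1}-2}{2^r}}) = O(n^{1+\frac{1}{2^r}})$. For the top level $i=r$, since $S_{r+1} = \emptyset$ every vertex of $S_r$ is sparse and contributes at most $|S_r|$ edges, and Claim \ref{claim_det_size} gives $|S_r| = O(\sqrt{n})$, so this level contributes $O(n) = O(n^{1+\frac{1}{2^r}})$. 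Summing the $r+1$ levels gives the claimed $O(r \cdot n^{1+\frac{1}{2^r}})$ bound.

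The only subtlety — and the main place where care is required — is the interface between $S'_i$ and $S_i = S'_i \cup A$: the soft hitting set machinery applies to $S'_{i+1}$ inside $S'_i$, but the algorithm's sparse/dense dichotomy is defined with respect to $S_{i+1}$ and edges are added inside $S_i$. The two observations that $A$ does not intersect balls of sparse vertices (so switching from $S_i$ to $S'_i$ is free in the sparse case) and that $A$ always hits balls of heavy vertices (so sparse forces light) are exactly what glue the two worlds together cleanly.
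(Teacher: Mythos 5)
Your proof is correct and takes essentially the same route as the paper's: heavy vertices are forced to be $i$-dense by the hitting set $A$, property (ii) of the soft hitting set bounds the total contribution of sparse vertices in $L$, light vertices outside $L$ contribute fewer than $\Delta = c/p_{i+1}$ edges each, and the per-level totals $O(|S_i|/p_{i+1})$ combine with Claim \ref{claim_det_size} exactly as in the paper. The only cosmetic slip is writing $A_v = N_{k,\delta_i}(v)$, whereas the paper defines $A_v = N_{k,\delta_{i'}}(v)$ for the \emph{first} iteration $i'$ in which $v$ is heavy; since $N_{k,\delta_{i'}}(v) \subseteq B(v,\delta_{i'},G) \subseteq B(v,\delta_i,G)$ for $i' \leq i$, your conclusion that $A$ meets the ball of every heavy vertex (so sparse forces light) is unaffected.
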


\begin{proof}
We show that in each iteration the number of edges added to the emulator is $O(n^{1+\frac{1}{2^r}}).$ Since we have $r$ iterations, the claim follows.
First, for $i=r$, since $S_r = O(\sqrt{n})$ by Claim \ref{claim_det_size}, even if all vertices of $S_r$ add edges to all vertices of $S_r$ this adds a total of $O(n)$ edges. We next fix $0 \leq i < r$ and bound the edges added by vertices in $S_i \setminus S_{i+1}$. Let $v \in S_i \setminus S_{i+1}$. First note that each $i$-dense vertex $v \in S_i \setminus S_{i+1}$ only adds one edge to the emulator by the definition of the algorithm. Also, by definition, $v$ is $i$-dense if $B(v, \delta_i, G) \cap S_{i+1} \neq \emptyset$. Since $A \subseteq S_{i+1}$, we have that if  $B(v, \delta_i, G) \cap A \neq \emptyset$, then $v$ is $i$-dense. We next divide to cases and see how many edges are added by $v$.

\emph{\textbf{Case 1:} $v$ is heavy in iteration $i$.} In this case, from the definition of $A$, we have that $B(v, \delta_{i'},G) \cap A \neq \emptyset$ where $i' \leq i$ is the first iteration where $v$ is heavy. Since $i' \leq i$, we have that $\delta_{i'} \leq \delta_i$ and $B(v,\delta_{i'},G) \subseteq B(v,\delta_i,G).$ It follows that $B(v, \delta_{i},G) \cap A \neq \emptyset$, which means that $v$ is $i$-dense and adds only one edge, as explained above.

\emph{\textbf{Case 2:} $B(v, \delta_i,G) \cap A \neq \emptyset$.} As explained above, in this case, $v$ is $i$-dense and adds only one edge to the emulator.

\emph{\textbf{Case 3:} $|B(v, \delta_i,G) \cap S_i| \leq \frac{c}{p_{i+1}}$.} As in all cases, $v$ only adds edges to vertices in $B(v, \delta_i,G) \cap S_i$, it adds at most $\frac{c}{p_{i+1}}$ edges in this case.

\emph{\textbf{Case 4:} $v$ is light in iteration $i$, $B(v, \delta_i,G) \cap A = \emptyset$ and $|B(v, \delta_i,G) \cap S_i| \geq \frac{c}{p_{i+1}}$.} First, since $S_i = S'_i \cup A$ and $B(v, \delta_i,G) \cap A = \emptyset$, we have that $|B(v, \delta_i,G) \cap S'_i| = |B(v, \delta_i,G) \cap S_i| \geq \frac{c}{p_{i+1}}$. In addition, we are in the case that $v$ is light in iteration $i$. We have that $v \in S_i \setminus S_{i+1}$, since $S_i = S'_i \cup A$, and $A \subseteq S_{i+1}$, it follows that $v \in S'_i$. Hence, by the definition of the set $L$, we have that $v \in L$, and $T_v = B(v, \delta_i,G) \cap S'_i$. There are two options for $v$, either $T_v \cap S'_{i+1} \neq \emptyset$, in this case, $v$ is $i$-dense and adds only one edge to the emulator, or $T_v \cap S'_{i+1} = \emptyset$, in this case $v$ is $i$-sparse, in which case it adds $|T_v|$ edges to the emulator. We denote by $L'$ all the vertices in $L$ which are in this second case and add $|T_v|$ edges. Also, we denote by $L'' \subseteq L$ all vertices in $L$ which are in case 4. Recall that by the definition of the soft hitting set function we have that $SH(T_v, S'_{i+1}) = 0$ if $T_v \cap S'_{i+1} \neq \emptyset$, and it equals $|T_v|$ otherwise. Hence, we have the following. All vertices in Case 4 that are not in $L'$ are $i$-dense and only add one edge to the emulator. The total number of edges added by vertices in $L'$ is $$\sum_{v \in L'} |T_v| = \sum_{v \in L''} SH(T_v, S'_{i+1}) \leq \sum_{v \in L} SH(T_v, S'_{i+1}) = O(|L| \cdot \Delta),$$ where the last equality follows from the fact that $S'_{i+1}$ is a soft hitting set of the sets $\{T_v\}_{v \in L}$. We have that $L \subseteq S_i$, and $\Delta = \frac{c}{p_{i+1}}$, which gives $\sum_{v \in L'} |T_v| = O(\frac{|S_i|}{p_{i+1}})$.

To sum up over all cases: the total number of edges added by $i$-dense vertices in bounded by $O(|S_i|)$, the total number of edges added by vertices in Cases 3 and 4 is bounded by $O(\frac{|S_i|}{p_{i+1}})$, hence the total number of edges added is bounded by $O(\frac{|S_i|}{p_{i+1}})$.
From Claim \ref{claim_det_size} we have $|S_i| = O(n^{1-\frac{2^i-1}{2^r}})$, and by the definition of $p_i$, we have that $\frac{1}{p_{i+1}} = n^{\frac{2^i}{2^r}}$ for $0 \leq i < r-1$, and $\frac{1}{p_r} = n^{\frac{1}{2^r}} < n^{\frac{2^{r-1}}{2^r}}$ for $i = r-1$. This gives $O(\frac{|S_i|}{p_{i+1}}) = O(n^{1-\frac{2^i-1}{2^r}} \cdot n^{\frac{2^i}{2^r}}) = O(n^{1+\frac{1}{2^r}}).$ This completes the proof.
\end{proof}

\subsubsection*{Implementation details}

The main change in the algorithm is the construction of the sets $S_i$ which is already discussed above. Adding edges to the emulator basically follows the ideas discussed in the randomized construction with slight changes as the definition of heavy and light vertices slightly changed, for completeness we next describe this part.

\begin{claim} \label{claim_edges_not_Sr_det}
All the edges of the emulator with at least one endpoint in $V \setminus S_r$ can be added to the emulator with correct distances in $O(\log^2{\delta_r})$ rounds. 
\end{claim}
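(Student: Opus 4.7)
The plan is to piggyback entirely on the \ktnearest sets $N_{k,t}(v)$ with $k=n^{2/3}$ and $t=\delta_r$ that were already computed (in $O(\log^2\delta_r)$ rounds) during the deterministic construction of the sets $S_i$. After that single computation, plus a single round in which every vertex broadcasts the index $i$ for which $v\in S_i\setminus S_{i+1}$ and whether $v\in A$, each vertex $v\notin S_r$ will be able to determine locally every emulator edge it is supposed to contribute, mirroring the case split used in the randomized Claim~\ref{claim_edges_notSr} but replacing the randomized guarantee for heavy vertices with the deterministic guarantee supplied by the hitting set $A$.

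Fix $v\in S_i\setminus S_{i+1}$ for some $i<r$, and split on whether $v$ is light or heavy at iteration $i$. In the light case $|B(v,\delta_i,G)|<n^{2/3}=k$, and since $\delta_i\le \delta_r=t$ we get $B(v,\delta_i,G)\subseteq N_{k,t}(v)$; thus $v$ sees its whole $\delta_i$-ball together with the $S_i$/$S_{i+1}$-labels already broadcast, so it can test $i$-density and add either the single edge to the closest vertex of $B(v,\delta_i,G)\cap S_{i+1}$ or all edges to $B(v,\delta_i,G)\cap S_i$, all locally. In the heavy case, let $i'\le i$ be the first iteration at which $v$ was heavy; by construction of $A$ there is some $a\in A_v\cap A$ where $A_v=N_{k,\delta_{i'}}(v)$. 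Since $A\subseteq S_{i+1}$ and $\delta_{i'}\le\delta_i$, this already certifies that $v$ is $i$-dense, so $v$ only needs to emit a single edge to $c_{i+1}(v)$.

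The main step to argue carefully is that $v$ can actually identify $c_{i+1}(v)$ from $N_{k,t}(v)$ alone. Since $v$ is heavy at $i'$, $|B(v,\delta_{i'},G)|\ge k$, hence $N_{k,t}(v)$ consists of exactly the $k$ globally closest vertices and they all sit at distance $\le \delta_{i'}$; in particular $a\in N_{k,t}(v)$ and $d(v,c_{i+1}(v))\le d(v,a)\le\delta_{i'}$. Any $S_{i+1}$-vertex $c'$ with $d(v,c')<d(v,a)$ is strictly closer than the $k$-th closest, so $c'\in N_{k,t}(v)$; therefore $c_{i+1}(v)$ is precisely the $S_{i+1}$-vertex of minimum distance inside $N_{k,t}(v)$, which $v$ can read off locally.

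Combining the two cases, every vertex outside $S_r$ produces its emulator edges with exact distances using only the already-computed \ktnearest data and the $S_i$/$A$-labels, so the overall round cost is dominated by the $O(\log^2\delta_r)$ rounds of the \ktnearest invocation guaranteed by Theorem~\ref{thrm:knearest}. The one mildly delicate point, and the only place the deterministic argument departs from its randomized counterpart, is the visibility argument in the heavy case above; the remainder is bookkeeping analogous to Claim~\ref{claim_edges_notSr}.
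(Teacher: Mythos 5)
Your proof is correct and follows essentially the same route as the paper's: compute (or reuse) the $(k,d)$\textsc{-nearest} sets with $k=n^{2/3}$, $d=\delta_r$ in $O(\log^2\delta_r)$ rounds, then handle light vertices by noting $B(v,\delta_i,G)\subseteq N_{k,d}(v)$ and heavy vertices via the deterministic hitting set $A\subseteq S_{i+1}$, which forces $i$-density so only one edge is needed. Your explicit argument that the closest $S_{i+1}$-vertex can be read off from $N_{k,t}(v)$ is a slightly more careful spelling-out of a step the paper states briefly, but it is the same idea.
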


\begin{proof}
We compute the set $N_{k,d}(v)$ of the \kdnearest for $k=n^{2/3}, d=\delta_r$, which takes $O(\log^2{\delta_r})$ time by Theorem \ref{thrm:knearest}. 
Let $v \in S_i \setminus S_{i+1}$ for $0 \leq i < r$, note that in particular $v \not \in A$, as $A \subseteq S_{i+1}$, hence $v \in S'_i$. If $v$ is light in iteration $i$, we have that $|B(v, \delta_i,G)| < n^{2/3}$. In this case $B(v, \delta_i,G) \subseteq N_{k,d}(v)$, as $\delta_i \leq d$. This means that $v$ knows the whole set $B(v, \delta_i,G)$ and can add edges accordingly: if this set contains a vertex in $S_{i+1}$ it adds one edge to it, and otherwise it adds edges to all vertices in $B(v, \delta_i,G) \cap S_{i}.$

The other case is that $v$ is heavy in iteration $i$, in this case from the definition of $A$ it follows that $N_{k,\delta_{i'}}(v) \cap A \neq \emptyset$, where $i'$ is the first iteration where $v$ is heavy. Since $i' \leq i$, we have $N_{k,\delta_{i'}}(v) \subseteq N_{k,\delta_{i}}(v) \subseteq N_{k,\delta_{r}}(v).$ Since $A \subseteq S_{i+1}$ it follows that $v$ is $i$-dense in this case and only adds one edge to the emulator to the closest vertex in $S_{i+1}$. Also, since $v$ knows the $k$ closest vertices, it can find this vertex and add an edge to it.
The whole complexity of the algorithm is $O(\log^2{\delta_r})$ time for computing the \kdnearest.
\end{proof}

Adding edges between vertices in $S_r$ follows the proof of Claim \ref{claim_edges_Sr} with the following changes: the construction of hopsets is done using the deterministic variant in Theorem \ref{hopset_thm}, which adds $O((\log{\log{n}})^3)$ term to the complexity, and the size of $S_r$ now is $O(\sqrt{n})$ always and not just w.h.p, which gives the following.

\begin{claim} \label{claim_edges_Sr_det}
All the edges in the emulator with two endpoints in $S_r$ can be added to the emulator with $(1+\epsilon')$-approximate distances in $O(\frac{\log^2{\delta_r}}{\epsilon'} + (\log{\log{n}})^3)$ rounds.
\end{claim}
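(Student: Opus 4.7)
The plan is to closely mirror the proof of the randomized Claim \ref{claim_edges_Sr}, substituting each randomized primitive by its deterministic counterpart and appealing to the deterministic size bound on $S_r$ established in Claim \ref{claim_det_size}. The claim we need is about the final phase of the emulator construction, in which every vertex of $S_r$ must learn $(1+\epsilon')$-approximate distances to all other vertices of $S_r$ that lie within the ball of radius $\delta_r$, and then add the corresponding weighted edges to $H$.

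First, I would invoke the deterministic construction in item 2 of Theorem \ref{hopset_thm} with parameter $t=\delta_r$ to build a $(\beta',\epsilon',t)$-hopset $H'$ with $\beta' = O(\log\delta_r/\epsilon')$. This is the only step where the deterministic overhead $(\log\log n)^3$ enters; the rest of the running time from this construction is $O(\log^2\delta_r/\epsilon')$. By the definition of a bounded hopset, every pair $u,v$ with $d_G(u,v)\le \delta_r$ admits a $\beta'$-hop path in $G\cup H'$ whose length is at most $(1+\epsilon')d_G(u,v)$.

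Next, I would apply the $(S,d)$-source detection algorithm of Theorem \ref{thrm:source-detection} on the graph $G\cup H'$ with $S=S_r$ and $d=\beta'$. By Claim \ref{claim_det_size}, we now have the unconditional (no ``w.h.p.'') bound $|S_r|=O(\sqrt n)$, so the source-detection term evaluates to
\[
O\!\left(\left(\frac{(n^{3/2}\log n)^{1/3}(\sqrt n)^{2/3}}{n}+1\right)\beta'\right)=O\!\left(\frac{\log\delta_r}{\epsilon'}\right),
\]
deterministically rather than just with high probability. After this step every vertex in $S_r$ knows a $(1+\epsilon')$-approximate distance to every other vertex of $S_r$ within distance $\delta_r$, and can locally add the required weighted edges to the emulator without further communication. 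Adding up the two contributions yields the stated bound $O(\log^2\delta_r/\epsilon' + (\log\log n)^3)$.

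The main subtlety, and the only real divergence from the randomized proof, is ensuring that the size bound $|S_r|=O(\sqrt n)$ really does hold deterministically so that Theorem \ref{thrm:source-detection} gives an $O(\beta')$ bound without any probability qualifier. This is precisely what Claim \ref{claim_det_size} guarantees, so no extra work is needed beyond quoting it. Everything else is a direct replay of the randomized argument with the deterministic hopset swapped in, which contributes the additive $(\log\log n)^3$ term.
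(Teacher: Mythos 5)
Your proposal is correct and takes essentially the same route as the paper, which obtains this claim exactly by replaying the randomized Claim \ref{claim_edges_Sr} with the deterministic hopset of Theorem \ref{hopset_thm} (contributing the additive $(\log\log n)^3$ term) and the unconditional bound $|S_r|=O(\sqrt{n})$ from Claim \ref{claim_det_size}. One minor slip: in the source-detection step the graph $G\cup H'$ may have up to $n^2$ edges rather than the $O(n^{3/2}\log n)$ you plugged in, but even with $m=n^2$ the bracketed factor in Theorem \ref{thrm:source-detection} is $O(1)$ for $|S_r|=O(\sqrt{n})$, so your stated round bound is unaffected.
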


From Claims \ref{construct_Si_det}, \ref{claim_edges_not_Sr_det} and \ref{claim_edges_Sr_det}, we have the following.

\begin{lemma} 
The time complexity of the algorithm is $O(\frac{\log^2{\delta_r}}{\epsilon'} + r (\log\log{n})^3)$ rounds.
\end{lemma}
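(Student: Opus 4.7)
The plan is to observe that the deterministic algorithm decomposes into three phases whose round complexities have already been established in the preceding claims, so all that remains is to sum them and identify the dominant terms.

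First I would recall the three components: (i) constructing the nested sets $S_0 \supseteq S_1 \supseteq \cdots \supseteq S_r$ together with the auxiliary hitting set $A$, which by Claim \ref{construct_Si_det} takes $O(r(\log\log n)^3 + \log^2\delta_r)$ rounds (the $\log^2\delta_r$ term arises from the one-shot \kdnearest computation with $k=n^{2/3}$, $d=\delta_r$, and each of the $r$ soft-hitting-set invocations as well as the construction of $A$ contribute an $O((\log\log n)^3)$ term); (ii) adding all emulator edges incident to vertices outside $S_r$, which by Claim \ref{claim_edges_not_Sr_det} piggybacks on the already-computed \kdnearest sets and needs only $O(\log^2\delta_r)$ rounds; and (iii) adding the edges between pairs of vertices in $S_r$ via the deterministic bounded hopset plus source-detection, which by Claim \ref{claim_edges_Sr_det} takes $O\bigl(\tfrac{\log^2\delta_r}{\epsilon'} + (\log\log n)^3\bigr)$ rounds.

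Next I would simply sum these three bounds. Phase (i) contributes $O(r(\log\log n)^3 + \log^2\delta_r)$, phase (ii) contributes $O(\log^2\delta_r)$, and phase (iii) contributes $O\bigl(\tfrac{\log^2\delta_r}{\epsilon'} + (\log\log n)^3\bigr)$. Since $\epsilon' < 1$, the term $\tfrac{\log^2\delta_r}{\epsilon'}$ dominates the bare $\log^2\delta_r$ terms from (i) and (ii), and the $r(\log\log n)^3$ term from (i) dominates the single $(\log\log n)^3$ term from (iii). Hence the total is $O\bigl(\tfrac{\log^2\delta_r}{\epsilon'} + r(\log\log n)^3\bigr)$, as claimed.

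There is no real obstacle here since the lemma is essentially a summation of previously established round bounds; the only point worth a sentence of justification is that the \kdnearest computation need only be run once and its output is reused both inside the soft-hitting-set construction (for identifying light vertices and their balls $T_v$) and in phase (ii) (for locally determining which emulator edges a non-$S_r$ vertex adds), so these costs do not multiply.
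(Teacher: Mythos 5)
Your proposal is correct and matches the paper's own (implicit) argument: the lemma is stated there as following directly from Claims \ref{construct_Si_det}, \ref{claim_edges_not_Sr_det} and \ref{claim_edges_Sr_det}, which is exactly the summation you carry out. Your extra remark that the single \kdnearest computation is reused across phases is a fair clarification but not a new ingredient.
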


The rest of the analysis and conclusion follow the randomized construction, which gives the following. For a constant $\epsilon$ and $r = \log{\log{n}}$, we get a complexity of $O((\log{\log{n}})^4).$

\begin{theorem}
Let $G$ be an unweighted undirected graph, let $0<\epsilon<1$ and let $r \geq 2$ be an integer, there is a deterministic algorithm that builds an emulator $H$ with $O(r \cdot n^{1+\frac{1}{2^r}})$ edges, and stretch of $(1+\epsilon, \beta),$ in $O(\frac{\log^2{\beta}}{\epsilon} + r (\log\log{n})^3)$ rounds, where $\beta = O(\frac{r}{\epsilon})^{r-1}$. For the choice $r = \log{\log{n}}$, we have $O(n \log{\log{n}})$ edges, a complexity of $O(\frac{\log^2{\beta}}{\epsilon} + (\log\log{n})^4)$ rounds, and $\beta = O(\frac{\log{\log{n}}}{\epsilon})^{\log{\log{n}}}.$
\end{theorem}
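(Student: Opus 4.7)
The plan is to follow the same high-level skeleton as the randomized emulator construction, but replace every use of a random set by a deterministic counterpart that still produces the $(1{+}\epsilon,\beta)$ stretch and the $O(r\cdot n^{1+1/2^r})$-edge bound. The key move is to construct the nested family $\emptyset=S'_{r+1}\subsetneq S'_r\subsetneq\dots\subsetneq S'_0=V$ iteratively: at step $i$, the \ktnearest{} algorithm with parameters $k=n^{2/3}$, $t=\delta_r$ (run once at the very beginning, taking $O(\log^2\delta_r)$ rounds by Theorem \ref{thrm:knearest}) lets every \emph{light} vertex $v\in S'_i$ locally compute $T_v=B(v,\delta_i,G)\cap S'_i$, and then Lemma \ref{lemma_soft_hitting} applied with $L$ equal to the light vertices whose $T_v$ has size at least $c/p_{i+1}$, $R=S'_i$, and $\Delta=c/p_{i+1}$, produces $S'_{i+1}$ in $O((\log\log n)^3)$ rounds. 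In parallel, a single deterministic hitting set $A$ of size $O(n^{1/3}\log n)$ for the sets $A_v=N_{k,\delta_{i'}}(v)$ of heavy vertices (where $i'$ is the first iteration at which $v$ is heavy) is constructed via Lemma \ref{det_hit} in $O((\log\log n)^3)$ rounds. Finally one sets $S_i=S'_i\cup A$ and runs the original edge-addition rule of Section \ref{sec:alg} with respect to these $S_i$'s.

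The size analysis has two halves. First, Claim \ref{claim_det_size} gives $|S_i|=O(n^{1-(2^i-1)/2^r})$ by induction: the soft-hitting-set guarantee $|S'_{i+1}|\le c|S'_i|/\Delta=|S'_i|p_{i+1}$ replaces the Chernoff/expectation bound of the randomized construction, and $|A|$ is absorbed since $n^{1/3}\log n\ll n^{1-(2^i-1)/2^r}$ for $i<r$. Second, I bound the edges added by each $v\in S_i\setminus S_{i+1}$ by splitting into four cases mirroring the randomized argument: if $v$ is heavy, or if $B(v,\delta_i,G)\cap A\ne\emptyset$, then $v$ is $i$-dense (because $A\subseteq S_{i+1}$) and contributes a single edge; if $|B(v,\delta_i,G)\cap S_i|\le c/p_{i+1}$ the contribution is trivially bounded by $c/p_{i+1}$; and in the remaining case $v\in L$, so by the soft-hitting property $\sum_{v\in L}SH(T_v,S'_{i+1})=O(|L|\cdot\Delta)=O(|S_i|/p_{i+1})$, which controls the $i$-sparse contribution collectively. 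Summing over the $r$ iterations and using $|S_i|/p_{i+1}=O(n^{1+1/2^r})$ yields the desired $O(r\cdot n^{1+1/2^r})$ total edge count. Note that the key reason this works — and the main technical obstacle — is the strong guarantee (i) $|R^*|=O(|R|/\Delta)$ in the soft hitting set definition: a naive deterministic hitting set would lose an $O(\log n)$ factor here and blow up the edge count beyond $O(n\log\log n)$ for $r=\log\log n$.

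Implementation in the \clique{} follows the randomized template with two substitutions. Non-$S_r$ vertices decide their emulator edges locally from $N_{k,\delta_r}(v)$ exactly as in Claim \ref{claim_edges_notSr}, using the already-computed \ktnearest{} data; heavy vertices are certified as $i$-dense via $A\cap N_{k,\delta_{i'}}(v)\ne\emptyset$ rather than via $S_r$. Edges inside $S_r$ are handled by constructing a deterministic $(\beta',\epsilon',\delta_r)$-hopset using the second item of Theorem \ref{hopset_thm} (adding an $O((\log\log n)^3)$ term) and then invoking \sdk{} with $S=S_r$ of size $O(\sqrt n)$, which costs $O(\log^2\delta_r/\epsilon)$ rounds. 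Adding the $r$ rounds of soft-hitting-set construction and the single deterministic hitting-set construction for $A$, the total round complexity is $O(\log^2\delta_r/\epsilon+r(\log\log n)^3)$. The stretch analysis from Lemma \ref{lemma_stretch} is untouched because it depends only on the set-inclusion structure of the $S_i$'s and on the edges added by the (unchanged) rule. Setting $r=\log\log n$ and $\beta=O(r/\epsilon)^{r-1}$ gives the advertised $O(n\log\log n)$ edges and $O(\log^2\beta/\epsilon+(\log\log n)^4)$ rounds, completing the proof.
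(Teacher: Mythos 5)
Your proposal follows essentially the same route as the paper: iteratively building the nested sets $S'_i$ via the soft-hitting-set lemma with $\Delta=c/p_{i+1}$, adding a separate deterministic hitting set $A$ for heavy vertices, setting $S_i=S'_i\cup A$, and reusing the randomized edge rule, size case analysis, and \clique implementation (with the deterministic hopset for the $S_r$ stage). This matches the paper's proof, including the correct identification of why the soft hitting set (rather than a standard hitting set) is needed to avoid the $\log n$ blow-up; the only minor gloss is that the $(1+\epsilon')$-approximate weights on $S_r$-$S_r$ edges slightly perturb the constants in the final stretch bound, which the paper handles by reference to the randomized \clique analysis.
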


\subsection{Applications}

Using the deterministic emulator, we can get deterministic variants of all our applications.
First, by building the emulator and letting all vertices learn it, we have the following.

\begin{theorem} 
Let $0<\epsilon<1$, there is a deterministic $(1+\epsilon, \beta)$-approximation algorithm for unweighted undirected APSP in the \clique model that takes $O(\frac{\log^2{\beta}}{\epsilon} + (\log{\log{n}})^4)$ rounds, where $\beta = O(\frac{\log{\log{n}}}{\epsilon})^{\log{\log{n}}}.$
\end{theorem}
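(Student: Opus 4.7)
The plan is to mirror the randomized proof of Theorem~\ref{thm_near_additive_apsp} verbatim, only swapping the randomized emulator construction for its deterministic counterpart established just above this statement. Concretely, I would first invoke the deterministic emulator theorem with the choice $r=\log\log n$ to obtain, in $O(\frac{\log^2\beta}{\epsilon}+(\log\log n)^4)$ rounds, a $(1+\epsilon,\beta)$-emulator $H$ with $O(n\log\log n)$ edges and $\beta=O(\frac{\log\log n}{\epsilon})^{\log\log n}$. By the emulator guarantee, for every pair $u,v\in V$ we have $d_G(u,v)\le d_H(u,v)\le (1+\epsilon)d_G(u,v)+\beta$, so it suffices to let every vertex learn all of $H$ and then read off $d_H(u,v)$ locally as the distance estimate.

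The dissemination step is the standard ``collect-then-broadcast'' routine used in the randomized proof. Each vertex already knows its $H$-incident edges, and the total size of $H$ is $O(n\log\log n)$. Using Lenzen's routing~\cite{lenzen2013optimal}, a single designated vertex $v^*$ can gather all edges of $H$ in $O(\log\log n)$ rounds, since the per-vertex load is $O(\log\log n)$. Then $v^*$ partitions the edge list into $n$ chunks of size $O(\log\log n)$, ships one chunk to each vertex in $O(\log\log n)$ rounds, and finally every vertex broadcasts its chunk to all others in $O(\log\log n)$ rounds; after these phases every vertex holds a copy of $H$. Note that this dissemination procedure is entirely deterministic --- it only relies on the deterministic routing protocol of Lenzen and on each vertex knowing its local contribution, both of which hold here because the emulator construction itself is deterministic (so the set of edges and their identities are fixed, not random).

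Combining the two phases, the total round complexity is $O(\tfrac{\log^2\beta}{\epsilon}+(\log\log n)^4)+O(\log\log n)=O(\tfrac{\log^2\beta}{\epsilon}+(\log\log n)^4)$, matching the stated bound. Stretch correctness follows directly from the emulator stretch guarantee. I do not foresee any technical obstacle here: the only place where randomness entered the proof of Theorem~\ref{thm_near_additive_apsp} was in the emulator construction, and that has already been derandomized in the preceding theorem. The sub-routine of Lenzen's routing works deterministically with the worst-case size bound $O(n\log\log n)$, which in the deterministic setting holds unconditionally rather than only w.h.p., so no additional $O(\log n)$ repetition blow-up is incurred.
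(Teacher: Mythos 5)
Your proposal is correct and follows exactly the paper's route: invoke the deterministic emulator construction (with $r=\log\log n$) and then let all vertices learn the $O(n\log\log n)$-edge emulator via Lenzen's routing, exactly as in the randomized Theorem~\ref{thm_near_additive_apsp}. The extra observations you make (that the dissemination is deterministic and that the size bound now holds unconditionally) are accurate and consistent with the paper's argument.
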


To get an approximation for multi-source shortest paths we follow the proof of Theorem \ref{thm_maap} with the only difference that we use the deterministic emulator and hopset constructions, which in total add an $O((\log{\log{n}})^4)$ term to the complexity.

\begin{theorem} 
Let $0<\epsilon<1$ and let $G$ be an unweighted undirected graph, there is a deterministic $(1+\epsilon)$-approximation algorithm in the \clique model for multi-source shortest paths from a set of sources $S$ of size $O(\sqrt{n})$ that takes $O(\frac{\log^2{\beta}}{\epsilon} + (\log{\log{n}})^4)$ rounds, where $\beta = O(\frac{\log{\log{n}}}{\epsilon})^{\log{\log{n}}}.$
\end{theorem}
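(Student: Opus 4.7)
The plan is to follow the proof of Theorem \ref{thm_maap} almost verbatim, substituting each randomized primitive with its deterministic counterpart and carefully accounting for the extra $(\log\log n)^4$ overhead this introduces. Concretely, I would first invoke the deterministic emulator theorem (the theorem stated just before this one) with $r=\log\log n$ and $\epsilon/2$ as the stretch parameter, producing a $(1+\epsilon/2,\beta)$-emulator of size $O(n\log\log n)$ in $O(\frac{\log^2\beta}{\epsilon}+(\log\log n)^4)$ rounds, where $\beta=O(\frac{\log\log n}{\epsilon})^{\log\log n}$. This emulator is then disseminated to all vertices using Lenzen's routing in $O(\log\log n)$ rounds exactly as in the proof of Theorem \ref{thm_near_additive_apsp}. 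For any pair of sources and vertices $(u,v)$ with $d_G(u,v)\ge t:=2\beta/\epsilon$, the emulator already certifies a $(1+\epsilon)$-approximation, since $(1+\epsilon/2)d(u,v)+\beta\le (1+\epsilon)d(u,v)$.

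The remaining task is to handle pairs with $d_G(u,v)\le t$. For this I would use the deterministic $(\beta',\epsilon/2,t)$-hopset construction guaranteed by part (2) of Theorem \ref{hopset_thm}, which produces a hopset $H'$ with $\beta'=O(\frac{\log t}{\epsilon})$ in $O(\frac{\log^2 t}{\epsilon}+(\log\log n)^3)$ rounds. On the graph $G\cup H'$, every pair of vertices at distance at most $t$ in $G$ admits a $\beta'$-hop path of weight at most $(1+\epsilon/2)d_G(u,v)$. I would then invoke the source-detection algorithm of Theorem \ref{thrm:source-detection} with source set $S$, depth $d=\beta'$, on $G\cup H'$. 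Since $|S|=O(\sqrt n)$, the round complexity is $O\!\left(\left(\tfrac{(m)^{1/3}|S|^{2/3}}{n}+1\right)\beta'\right)=O(\beta')=O(\frac{\log t}{\epsilon})$, so every $v\in V$ learns a $(1+\epsilon/2)$-approximate distance to each source $s\in S$ with $d_G(s,v)\le t$.

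Each vertex $v$ then outputs, for each source $s\in S$, the minimum between its source-detection estimate and its emulator-based estimate. The correctness argument follows the proof of Theorem \ref{thm_maap} verbatim: if $d_G(u,v)\le t$, the source-detection estimate is within a $(1+\epsilon/2)\le(1+\epsilon)$ factor, while if $d_G(u,v)>t$, the emulator estimate is within a $(1+\epsilon)$ factor by the calculation above. Note that the source-detection routine of Theorem \ref{thrm:source-detection} is itself deterministic, so no further derandomization is needed for it.

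The bulk of the running time is the deterministic emulator construction at $O(\frac{\log^2\beta}{\epsilon}+(\log\log n)^4)$, while the hopset and source-detection steps contribute only $O(\frac{\log^2 t}{\epsilon}+(\log\log n)^3)$ rounds. Since $\log t=\Theta(\log\beta)$ (because $t=2\beta/\epsilon$ and $\beta$ already grows super-polynomially in $\log\log n$), the total complexity is $O(\frac{\log^2\beta}{\epsilon}+(\log\log n)^4)$, matching the claim. I expect no real obstacle here: the main delicate point is only to verify that every randomized ingredient in the proof of Theorem \ref{thm_maap}---the emulator, the hopset, and the hitting-set style arguments hidden inside the source-detection reduction---has a deterministic counterpart whose overhead is absorbed into the $(\log\log n)^4$ term, and this is precisely what the preceding sections of the paper have established.
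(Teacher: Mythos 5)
Your proposal is correct and matches the paper's approach: the paper proves this theorem exactly by rerunning the proof of Theorem \ref{thm_maap} with the deterministic emulator and deterministic hopset (Theorem \ref{hopset_thm}, part 2) in place of their randomized counterparts, noting that the source-detection step needs no derandomization, which yields the extra $O((\log\log n)^4)$ term just as you argue.
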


To get a $(2+\epsilon)$-approximation for APSP we follow the randomized algorithm with the following changes: we use the deterministic emulator and hopset algorithms. In addition, we use the deterministic hitting set construction from Lemma \ref{det_hit} to build the hitting sets $S,A,A'$ in the algorithm. Note that these are hitting sets either of the sets $N(v)$ of neighbours of vertices, or of the sets $N_{k,t}(v)$ of the $k$ closest vertices at distance at most $t$, all these sets are known to the vertices during the algorithm which allows using Lemma \ref{det_hit}. The deterministic hitting set and hopset constructions add an additional $O((\log{\log{n}})^3)$ term and the deterministic emulator construction adds $O((\log{\log{n}})^4)$ term to the complexity, which gives the following.

\begin{theorem}
Let $0<\epsilon<1$, there is a deterministic $(2+\epsilon)$-approximation algorithm for unweighted undirected APSP in the \clique model that takes $O(\frac{\log^2{\beta}}{\epsilon} + (\log{\log{n}})^4)$ rounds, where $\beta = O(\frac{\log{\log{n}}}{\epsilon})^{\log{\log{n}}}.$
\end{theorem}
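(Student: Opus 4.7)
The plan is to follow the randomized $(2+\epsilon)$-approximation algorithm of Theorem \ref{thm:rand-APSP} step by step, substituting each randomized primitive with its deterministic counterpart and tracking the extra overhead. The randomized algorithm has three conceptual layers: (i) approximation of long paths (with $d(u,v) \ge t$) via the emulator, (ii) approximation of paths with $d(u,v) \le t$ that pass through a high-degree vertex, handled via a hitting set $S$ of the neighborhoods $N(v)$ together with source detection through $S$ using a bounded hopset, and (iii) approximation of paths $d(u,v) \le t$ that contain only low-degree vertices, handled in the sparsified graph $G'$ via the sets $N_{k,t}(v)$, the hitting sets $A$ and $A'$, source detection, distance-through-sets, and sparse matrix multiplication. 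My plan is to replay each layer and show that only deterministic black-box replacements are needed.

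First, for layer (i) I would invoke the deterministic emulator construction (the deterministic variant of Theorem \ref{thm_emulator}, shown just above this statement) instead of Theorem \ref{thm_emulator}; this yields a $(1+\epsilon/2,\beta)$ emulator with $O(n\log\log n)$ edges deterministically in $O(\log^2\beta/\epsilon + (\log\log n)^4)$ rounds, and all vertices learn it in $O(\log\log n)$ rounds exactly as in Theorem \ref{thm_near_additive_apsp}. This already handles all pairs with $d(u,v) \ge t = 2\beta/\epsilon$ with stretch $(1+\epsilon)$, by the same calculation used in Claim \ref{claim_long_paths}. For layer (ii), I would replace the construction of the hitting set $S$ of $\{N(v) : |N(v)| \ge \sqrt{n}\log n\}$ via Lemma \ref{rand_hit} by the deterministic construction of Lemma \ref{det_hit}; this is legitimate because every vertex $v$ knows its own neighborhood $N(v)$, which is exactly the input condition of Lemma \ref{det_hit}. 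Similarly, the randomized hopset used to compute $(1+\epsilon/2)$-approximate distances to $S$ within radius $2t$ is replaced by the deterministic $(\beta',\epsilon/2,2t)$-hopset of Theorem \ref{hopset_thm}. The distance-through-set step in Line \ref{alg_dist} is already deterministic (Theorem \ref{thrm:distance-through}). Together these substitutions add $O((\log\log n)^3)$ to the round complexity, while the correctness argument is unchanged.

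For layer (iii), the same translation works: the hitting sets $A$ (for $\{N_{k,t}(v) : |N_{k,t}(v)|=k\}$) and $A'$ (for $\{N(v) : |N(v)| \ge n/k^2\}$ in $G'$) are built via Lemma \ref{det_hit}; this is again legitimate because each vertex knows $N_{k,t}(v)$ after running the \ktnearest algorithm (Theorem \ref{thrm:knearest}, which is deterministic) and knows $N(v)$ from the input. The $(1+\epsilon/2)$-approximate distances to $A$ and $A'$ within radius $2t$ use the deterministic hopset together with the deterministic $(S,d)$-source detection (Theorem \ref{thrm:source-detection} is already deterministic), and the sparse matrix multiplication steps used to aggregate $\min_{w \in A'_u}\{\delta(u,w)+\delta(w,v)\}$ and $\delta'(u,v)$ invoke Theorem \ref{thrm:mm}, which is deterministic as well. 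The density and radius bounds used in the correctness proof do not rely on randomness in $A$ or $A'$ beyond the hitting property, which is now guaranteed unconditionally.

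The main obstacle is really bookkeeping rather than a new idea: one must confirm that every place the randomized proof used a ``w.h.p.''\ hitting property of a random sample can indeed be replayed with a deterministic hitting set whose input sets are locally known, and that the additive overheads accumulate to at most $O((\log\log n)^4)$. The emulator substitution contributes the $O((\log\log n)^4)$ term, each deterministic hitting set and hopset contributes $O((\log\log n)^3)$, and the deterministic \ktnearest, source-detection, distance-through-sets, and sparse matrix multiplications contribute only their $\poly(\log t) = \poly(\log\log n)$ or $O(1)$ rounds. Adding these yields the claimed $O(\log^2\beta/\epsilon + (\log\log n)^4)$ round bound with $\beta = O(\log\log n/\epsilon)^{\log\log n}$, while the $(2+\epsilon)$-stretch guarantee is preserved line by line.
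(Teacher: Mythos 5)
Your proposal is correct and follows essentially the same route as the paper: replay the randomized $(2+\epsilon)$-APSP algorithm, substitute the deterministic emulator and hopset constructions, and build the hitting sets $S,A,A'$ via Lemma \ref{det_hit} (which applies precisely because the sets $N(v)$ and $N_{k,t}(v)$ are known locally), with the overhead accounting matching the paper's: $O((\log\log n)^3)$ per hitting set/hopset and $O((\log\log n)^4)$ from the emulator.
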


\subsection{Derandomization of soft hitting sets} \label{sec:derandomization}
We begin by setting up some notation, similarly as in \cite{ParterY18}. For a set $S$ we denote by $x \sim S$ a uniform sampling from $S$. For a function $\PRG$ and an index $i$, let $\PRG(s)_i$ 
the $\ith{i}$ bit of $\PRG(s)$. 

\begin{definition}[Pseudorandom Generators]
A generator $\PRG \colon \BB^r \to \BB^n$ is an $\epsilon$-pseudorandom
generator (PRG) for a class $\cC$ of Boolean functions if for every $f \in 
\cC$:
$$
|\Exp{x \sim \BB^n}{f(x)} - \Exp{s \sim \BB^r}{f(\PRG(s))}| \le 
\epsilon.
$$
We refer to $r$ as the seed-length of the generator and say $\PRG$ is explicit 
if there is an efficient algorithm to compute $\PRG$ that runs in time 
$poly(n, 1/\epsilon)$.
\end{definition}

As in \cite{ParterY18}, we use the PRGs of \cite{gopalan2012better} which fool
families of DNFs using a short seed.

\begin{theorem}\label{thm:prg}
For every $\epsilon=\epsilon(n) > 0$, there exists an explicit pseudorandom 
generator, $\PRG 
\colon 	\BB^r \to \BB^n$ that fools all read-once DNFs on $n$-variables with 
error at most $\epsilon$ and seed-length $r = O((\log(n/\epsilon)) \cdot (\log 
\log(n/\epsilon))^3)$.
\end{theorem}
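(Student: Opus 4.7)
The plan is to build the generator recursively, reducing the variable count by roughly a square root at each of $L = O(\log\log(n/\epsilon))$ levels, so that the total seed cost adds up to $O(\log(n/\epsilon) \cdot (\log\log(n/\epsilon))^3)$ bits. Before starting the recursion, I would first reduce to the case where every term of the read-once DNF has width at most $w = O(\log(m/\epsilon))$, where $m$ is the number of terms. Any wider term is satisfied by a uniformly random input with probability at most $\epsilon/m$; by a union bound, deleting all such terms changes the acceptance probability of the DNF by at most $\epsilon$, and the same error holds under any distribution that is sufficiently close to uniform on each individual term. Since the formula is read-once, the restricted DNF has at most $n/w$ terms, so we may also assume $m \leq n$.

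The heart of the construction is one recursive step, which reduces fooling a read-once DNF on $N$ variables (for $N \leq n$) to fooling a collection of roughly independent read-once DNFs on $\approx \sqrt{N}$ variables each. I would sample a $k$-wise independent hash function $h \colon [N] \to [\sqrt{N}]$ with $k = \Theta(w)$, and argue that with probability $1 - \epsilon$ over $h$ every term of the DNF has its literals spread across distinct buckets. Conditioned on this good event, the restriction of the formula to the randomness inside a single bucket is itself a read-once DNF on $\sqrt{N}$ variables, and the bucket restrictions are almost independent. Generating the bits in each bucket from an independent-looking copy of a recursive PRG $G'$ of error $\epsilon/2$, and then coupling the buckets through an $\epsilon$-almost $k$-wise independent generator so that the logical OR across buckets is approximated correctly, suffices to fool the full DNF up to $O(\epsilon)$ error. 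The correctness analysis is a standard hybrid: swap the truly random bits in one bucket at a time for $G'$'s output, bounding each swap by the product structure of the read-once DNF together with the error of $G'$ and the bias of the combining generator.

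The seed accounting is as follows. At each level the fresh seed cost is $O(\log(N/\epsilon) \cdot (\log\log(n/\epsilon))^2)$, dominated by the $k$-wise independent hash (cost $O(k \log N) = O(\log(n/\epsilon)\log\log(n/\epsilon))$) and the almost $k$-wise independent combining generator (cost $O(\log(n/\epsilon) \log\log(n/\epsilon) + \log(1/\epsilon))$ via Naor--Naor style constructions with polylogarithmic overheads). Recursing $L = O(\log\log(n/\epsilon))$ times down to the base case $N = O(\log(n/\epsilon))$, where the uniform distribution is itself a trivial PRG, multiplies the per-level cost by $L$ and yields the advertised $r = O(\log(n/\epsilon) \cdot (\log\log(n/\epsilon))^3)$. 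Explicitness of the hash family, the almost $k$-wise independent generator, and the base case sampler is standard, so the composed generator can be evaluated in $\poly(n,1/\epsilon)$ time.

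The main obstacle is the hash-collision analysis: showing that $k$-wise independence with $k = \Theta(w)$ guarantees, with probability $1-\epsilon$, that no term has two literals landing in the same bucket, despite there being up to $n/w$ terms each of width up to $w$. A naive pairwise bound only handles collisions within one term; the refined argument requires a level-$k$ moment concentration inequality for $k$-wise independent random variables (in the style of Schmidt--Siegel--Srinivasan) together with a union bound that trades the tail exponent $k$ against the total number of terms. A secondary difficulty is controlling how the per-level errors from $G'$ and the combining generator compose across the $L$ recursion levels, which calls for a careful triangle-inequality argument ensuring that the final error remains at most $\epsilon$ rather than blowing up with depth.
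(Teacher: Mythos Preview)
The paper does not prove this theorem at all: it is quoted verbatim as a black-box result from Gopalan, Meka, Reingold, Trevisan, and Vadhan \cite{gopalan2012better}, and the paper only \emph{uses} the generator inside the derandomization of the soft hitting set (Lemma~\ref{lem:hittingrandloglog} and Theorem~\ref{thm:generalderand}). So there is no ``paper's own proof'' to compare against.

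That said, your sketch is broadly in the spirit of the actual construction in \cite{gopalan2012better}: a recursion of depth $O(\log\log(n/\epsilon))$ that at each level partitions the variables via a limited-independence hash and combines subproblems with small-bias/almost $k$-wise independent generators. Two cautions if you intend this as a standalone proof. First, the hash-collision step as you describe it (``every term has its literals spread across distinct buckets'') is not quite the right invariant; what is actually controlled is the distribution of term widths after bucketing, and the analysis goes through a sandwiching/Janson-type argument rather than a clean no-collision event. Second, your per-level seed accounting is loose: to land exactly on $(\log\log)^3$ rather than a higher power you need to be careful that the per-level cost is $O(\log(n/\epsilon)\cdot(\log\log(n/\epsilon))^2)$ and that the error does not force $k$ or the bias parameter to grow with depth. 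These are the genuinely delicate points in \cite{gopalan2012better}, and your proposal acknowledges but does not resolve them.
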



\remove{
For a given two subsets of vertices $V_1,V_2$, define the \emph{soft-hitting set} function $SH(V_1,V_2)$ by 
$$
SH(V_1,V_2)=
\begin{cases}
0, \text{~~if~~} V_1 \cap V_2 \neq \emptyset\\
|V_1|, \text{~~otherwise.}\\
\end{cases}
$$
\begin{definition}[Soft Hitting Set]
Consider a graph $G=(V,E)$ with two special sets of vertices $L \subseteq V$ and $R \subseteq V$ with the following properties: each vertex $u \in L$ has a subset $S_u \subseteq R$ where $|S_u| \geq \Delta$.
A set of vertices $R^* \subseteq R$ is \emph{soft hitting set} for $L$ if:
(i) $|R^*|=O(|R|/\Delta)$ and 
(ii) $\sum_{u \in L} SH(S_u, R)=O(|L|\cdot \Delta)$. 
\end{definition}
In this section we show:
\begin{lemma}[Det. Construction of Soft Hitting Sets]
Let $L,R \subseteq V$ be subsets of vertices where each vertex $u$ knows a set $S_u \subseteq R$ of at least $\Delta$ vertices. There exists an 
$O((\log\log n)^3)$-round deterministic algorithm that computes a soft-hitting set $R^* \subseteq R$ for $L$. 
\end{lemma}
}

In this section we prove Lemma \ref{lemma_soft_hitting} which states the following:
\hitting*

We first show that a soft hitting set can be obtained using a small random seed, and later explain how to compute it deterministically via derandomization in the \clique model. 
\begin{lemma}\label{lem:hittingrandloglog}[Soft Hitting Set with Small Seed]
Let $S$ be a subset of $[N]$ where $|S|\ge 
\Delta$ for some parameter $\Delta \le N$ and let $c$ be any constant. 
Then, there exists a family of hash functions $\cH= \{h \colon [N] \to \BB \}$ such 
that choosing a random function from $\cH$ takes $r=O(\log N \cdot (\log\log N)^3)$ 
random bits and for $Z_h = \{u \in [N] : h(u)=1 \}$ it holds that:
(i) $\prob{h}{ |Z_h| =O(N/\Delta)} \ge 2/3$, and
(ii) $\Exp{h}{ SH(S,Z_h) }=O(\Delta)$.
\end{lemma}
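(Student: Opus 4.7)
The plan is to realize the natural random sampling procedure with a small seed by appealing to the DNF-fooling PRG of Theorem~\ref{thm:prg}. For each element $u \in [N]$, I would reserve a private block of $\ell = \lceil \log \Delta \rceil$ bits $b_{u,1},\ldots,b_{u,\ell}$, and set $h(u)=1$ iff all $\ell$ bits are $0$; this uses $n' = N\ell$ total bits and, under the uniform distribution, sets $h(u)=1$ independently with probability $1/\Delta$. I then define $\cH = \{h_s : s \in \BB^r\}$ where $h_s$ is obtained from $\PRG(s) \in \BB^{n'}$ via the above rule, and $\PRG$ is the PRG of Theorem~\ref{thm:prg} with error $\epsilon_0 = \Delta/(C N)$ for a sufficiently large constant $C$. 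Since $n'/\epsilon_0 = O(N^2 \log \Delta/\Delta)$, the seed length is $r = O(\log N \cdot (\log\log N)^3)$, as required.

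For property (ii), the key observation is that the event $\{S \cap Z_h \ne \emptyset\}$ is computed by the read-once DNF
\[ F_S(b) \;=\; \bigvee_{u \in S}\; \bigwedge_{j=1}^{\ell} \overline{b_{u,j}}, \]
which is read-once because distinct clauses use disjoint blocks of base variables. Hence $\bigl|\Pr_h[S \cap Z_h = \emptyset] - (1-1/\Delta)^{|S|}\bigr| \leq \epsilon_0$, and therefore
\[ \Exp{h}{SH(S, Z_h)} \;=\; |S|\cdot\Pr_h[S \cap Z_h = \emptyset] \;\leq\; |S|\,(1-1/\Delta)^{|S|} + |S|\epsilon_0. \]
The first term is at most $|S| e^{-|S|/\Delta} \le \Delta \cdot \max_{x \geq 1} x e^{-x} = \Delta/e$, since $|S| \geq \Delta$; the second is at most $N\epsilon_0 = O(\Delta)$ by the choice of $\epsilon_0$. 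Together these give $\Exp{h}{SH(S, Z_h)} = O(\Delta)$.

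For property (i), each indicator $[h(u)=1]$ is itself a single-clause (trivially read-once) conjunction on the block of $u$, so the PRG guarantees $\Pr_h[h(u)=1] \leq 1/\Delta + \epsilon_0$, yielding $\Exp{h}{|Z_h|} = \sum_u \Pr_h[h(u)=1] \leq N/\Delta + N\epsilon_0 = O(N/\Delta)$. Markov's inequality, which requires no independence whatsoever, then gives $\Pr_h[|Z_h| \geq 3\,\Exp{h}{|Z_h|}] \leq 1/3$, so $\Pr_h[|Z_h| = O(N/\Delta)] \geq 2/3$ as claimed.

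The main obstacle is quantitative rather than structural: the $N\epsilon_0$ additive slack introduced by the PRG in property (ii) must be absorbed into the target bound $O(\Delta)$, which forces $\epsilon_0 = O(\Delta/N)$ even when $|S|$ is as large as $N$. One must then verify that this choice of $\epsilon_0$ does not inflate the seed length past $O(\log N \cdot (\log\log N)^3)$; this follows because $\log(n'/\epsilon_0) = \Theta(\log N)$ and the $(\log\log)^3$ factor from Theorem~\ref{thm:prg} is insensitive to this polynomial slackness. The structural part of the argument, namely identifying a read-once DNF for each event we need to fool, is immediate once each element receives its own private block of base variables.
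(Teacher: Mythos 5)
Your construction is the same as the paper's (a private block of $O(\log\Delta)$ bits per element, membership in $Z_h$ decided by an AND over the block, both relevant events expressed as read-once DNFs and fooled by the PRG of Theorem~\ref{thm:prg}, then expectation plus Markov), so the structural part is fine. But your choice of the fooling error $\epsilon_0=\Delta/(CN)$ breaks property (i). You argue $\Exp{h}{|Z_h|}\le N/\Delta+N\epsilon_0=O(N/\Delta)$, but $N\epsilon_0=\Delta/C$, and $\Delta/C=O(N/\Delta)$ only when $\Delta=O(\sqrt{N})$; the lemma allows any $\Delta\le N$. For instance with $\Delta=N$ your $\epsilon_0$ is a constant, the PRG guarantee only gives $\prob{h}{h(u)=1}\le 1/\Delta+\Theta(1)$ per element, and the resulting bound $\Exp{h}{|Z_h|}=O(N)$ is useless for concluding $|Z_h|=O(N/\Delta)=O(1)$ with probability $2/3$. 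In other words, you calibrated $\epsilon_0$ only against the constraint coming from property (ii) ($N\epsilon_0=O(\Delta)$) and overlooked the separate constraint from property (i) ($N\epsilon_0=O(N/\Delta)$, i.e.\ $\epsilon_0=O(1/\Delta)$), and these two are compatible with your choice only for $\Delta\le O(\sqrt N)$.

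The repair is immediate and is exactly what the paper does: take $\epsilon_0$ polynomially small, e.g.\ $\epsilon_0=1/N^{10c}$ (or any $\epsilon_0\le c_0\min\{\Delta/N,\,1/\Delta\}$). Since $\Delta\le N$, this simultaneously makes $N\epsilon_0\le 1=O(\Delta)$ for property (ii) and $N\epsilon_0\le 1/\Delta\cdot N^{2-10c}=O(N/\Delta)$ for property (i), while the seed length of Theorem~\ref{thm:prg} is only logarithmic in $N\ell/\epsilon_0$ and hence remains $O(\log N\cdot(\log\log N)^3)$. A further, purely cosmetic point: with $\ell=\lceil\log\Delta\rceil$ the per-element probability under truly uniform bits is $2^{-\ell}\in[1/(2\Delta),1/\Delta]$ rather than exactly $1/\Delta$, so the bound in (ii) should read $|S|(1-2^{-\ell})^{|S|}\le |S|e^{-|S|/(2\Delta)}\le 2\Delta/e$; this only changes constants.
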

\begin{proof}
We first describe the construction of $\cH$. Let $p=c'/\Delta$ for some large constant $c'$ (will be set later), and let $\ell=\lfloor \log 1/p \rfloor$. Let 
$\PRG \colon \BB^r \to \BB^{N\ell}$ be the PRG constructed in 
\Cref{thm:prg} for $r=O(\log N\ell \cdot (\log\log N\ell)^3)=O(\log N \cdot 
(\log\log N)^3)$ and for $\epsilon = 1/N^{10c}$. For a string $s$ of length $r$ 
we define 
the hash function $h_s(i)$ as follows. First, it computes $y=\PRG(s)$. 
Then, it interprets $y$ as $N$ blocks where each block 
is of length $\ell$ bits, and outputs 1 if 
and only if all the bits of the $\ith{i}$ block are 1. 
Formally, we define
$
h_s(i) = \bigwedge_{j=(i-1)\ell+1}^{i\ell}\PRG(s)_{j}.
$
We show that properties (i) and (ii) hold for the set $Z_{h_s}$ where $h_s \in 
\cH$. 
We begin with property (i). For $i \in [N]$ let $X_i=h_s(i)$ be a random 
variable where $s \sim \BB^r$. Moreover, let $X=\sum_{i=1}^{N}X_i$. Using this 
notation we have that $|Z_{h_s}|=X$.
Thus, to show property (i), we need to show that
$
\Pr_{s \sim \BB^r}[X = O(N/\Delta)] \ge 2/3.
$
Let $f_i \colon \BB^{N\ell} \to \BB$ be a function that outputs 1 if the 
$\ith{i}$ 
block is all 1's. That is,
$
f_i(y)=\bigwedge_{j=(i-1)\ell+1}^{i\ell}y_{j}.
$
Since $f_i$ 
is a read-once DNF formula we have that
$$
\left |\Exp{y \sim \BB^{N\ell}}{f_i(y)} - \Exp{s \sim \BB^{r}}{f_i(\PRG(s))} 
\right| 
\le \epsilon.
$$
Therefore, it follows that
$$
\E{X} = \sum_{i=1}^{N}\E{X_i} = \sum_{i=1}^{N}\Exp{s \sim 
\BB^{r}}{f_i(\PRG(s))} \le \sum_{i=1}^{N}(\Exp{y \sim \BB^{N\ell}}{f_i(y)} + 
\epsilon) = N(2^{-\ell} + \epsilon) \leq 2c' N/\Delta.$$
Then, by Markov's inequality we get that $\Pr_{s \sim \BB^r}[X > 3\E{X}] 
\le 1/3$ and thus
$$
\prob{s \sim \BB^r}{X \le 6c'\cdot N/\Delta} \ge
1-\prob{s \sim \BB^r}{X > 3\E{X}} \ge 2/3.
$$

We turn to show property (ii). Let $S$ be any set of size at least $\Delta$ and 
let $g \colon \BB^{N\ell} \to 
\BB$ be an indicator function for the event that the set $S$ is covered. That 
is,
$$
g(y) = \bigvee_{i \in S}\bigwedge_{j=(i-1)\ell+1}^{i\ell}y_j.
$$
Since $g$ is a read-once DNF formula, we have that 
\begin{equation}\label{eq:hit}
\left | \prob{y \sim \BB^{N\ell}}{g(y)=0} - \prob{s \sim \BB^{r}}{g(\PRG(s))=0} 
\right| \le \epsilon~.
\end{equation}

Let $q \colon \BB^{N\ell} \to \{0,|S|\}$ be defined as follows: 
$$
q(y)=
\begin{cases}
0, \text{~~if~~} g(y)=0,\\
|S|, \text{~~otherwise.}\\
\end{cases}
$$
By definition, it holds that
$$\Exp{y \sim \BB^{N\ell}}{q(y)}=|S|\cdot \prob{y \sim \BB^{N\ell}}{g(y)=0} \mbox{~and~} \Exp{s \sim \BB^{r}}{q(\PRG(s))}=|S|\cdot \prob{s \sim \BB^{r}}{g(\PRG(s))=0}.$$ 
Combining this with Eq. (\ref{eq:hit}) yields that:
\begin{eqnarray}\label{eq:hit-better}
\left | \Exp{y \sim \BB^{N\ell}}{q(y)} - \Exp{s \sim \BB^{r}}{q(\PRG(s))}
\right| &=& |S|\left | \prob{y \sim \BB^{N\ell}}{g(y)=0}-\prob{s \sim \BB^{r}}{g(\PRG(s))=0} \right|
\\&\le& |S|\cdot \epsilon \leq 1/N^{c-1}~.\nonumber
\end{eqnarray}
Therefore to bound $\Exp{s \sim \BB^{r}}{q(\PRG(s))}$ it is sufficient to bound $\Exp{y \sim \BB^{N\ell}}{q(y)}$.
Since all the bits in $y$ are completely independent we have that
\begin{eqnarray*}
\Exp{y \sim \BB^{N\ell}}{q(y)}&=&\prod_i(1-\prob{y_i \sim \BB^{\ell}}{f_i(y)})\cdot |S|
\\&= & (1-p)^{|S|}\cdot |S|\leq e^{-|S|/\Delta}\cdot |S|= O(\Delta)~,
\end{eqnarray*}
where $y_i$ is the $\ith{i}$ block of $\ell$ bits in $y$. 
Property (2) follows by combining with Eq. (\ref{eq:hit-better}).
%
%
%
%
%
%
%
%
\end{proof}

We next present a deterministic construction of the \emph{soft} hitting set by means of derandomization. The round complexity of the algorithm depends on the number of random bits used by the randomized algorithms.
\begin{theorem}\label{thm:generalderand}
Let $G=(V,E)$ be an $n$-vertex graph, let $L,R \subset V$ be such that each $u \in L$ has a subset $S_u \subseteq R$ of size at least $\Delta$. Let $N=|R|$ and $c$ be a constant.
Let $\cH = \{h \colon [N] \to \BB \}$ be a family of hash functions such 
that choosing a random function $h$ from $\cH$ takes $g(N,\Delta)$ 
random bits and for $Z_h = \{u \in [N] : h(u)=0 \}$ it holds that: 
(1) $\pr{ |Z_h|=O(N/\Delta)} \ge 2/3$ and
(2) $\sum_{u \in L}\left(\pr{Z_h\cap S_u=\emptyset}\cdot |S_u|\right)=O(\Delta \cdot |L|)$. \\
Then, there exists a deterministic algorithm $\cA_{det}$ that constructs a soft hitting set in $O(g(N,\Delta)/\log N)$ rounds. 
\end{theorem}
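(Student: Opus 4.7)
The plan is to derandomize by applying the method of conditional expectations to the random $g(N,\Delta)$-bit seed defining $h \in \cH$, advancing $\lfloor \log N \rfloor$ bits per round so that the total number of \clique rounds is $O(g(N,\Delta)/\log N)$. First I would combine the two soft-hitting requirements into a single potential whose expectation over a uniformly random seed is strictly less than one. Letting $c_1, c_2$ be the constants hidden in properties (1) and (2), define
\[
\Phi(s) \;=\; \mathbb{1}\!\bigl[|Z_{h_s}| > c_1 N/\Delta\bigr] \;+\; \mathbb{1}\!\Bigl[\textstyle\sum_{u \in L} SH(S_u, Z_{h_s}) > 3 c_2 \Delta |L|\Bigr].
\]
Property (1) bounds the expectation of the first indicator by $1/3$, and Markov's inequality applied to property (2) bounds the second by $1/3$, so $\Exp{s}{\Phi(s)} \le 2/3 < 1$. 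Since $\Phi \in \{0,1,2\}$ is integer-valued, any seed $s^*$ with $\Phi(s^*)\le 2/3$ must actually satisfy $\Phi(s^*)=0$, and thus $R^* := Z_{h_{s^*}}$ is a valid soft hitting set.

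I would then peel off $\lfloor\log N\rfloor$ bits of the seed per round while maintaining the invariant $\Exp{}{\Phi(s)\mid\text{prefix fixed so far}} \le 2/3$. Given the prefix $s_{<i}$ fixed in previous rounds, the algorithm enumerates all $N = 2^{\lfloor \log N \rfloor}$ candidate values $b$ for the next chunk and selects the one minimizing $\Exp{s}{\Phi(s) \mid s_{<i}\circ b}$. The averaging principle guarantees that this minimum is at most $\Exp{s}{\Phi(s)\mid s_{<i}}$, so the invariant is preserved. After all $\lceil g(N,\Delta)/\lfloor\log N\rfloor\rceil$ chunks are fixed, the seed $s^*$ is fully determined and $\Phi(s^*)=0$.

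The crux of the implementation is performing each chunk-selection round in $O(1)$ \clique rounds. Here I would exploit that both quantities appearing inside the indicators decompose as sums of per-vertex terms: $\sum_{u\in L} SH(S_u, Z_{h_s})$ splits across $u \in L$, each of which can locally compute $|S_u|\cdot\Pr[S_u\cap Z_{h_s}=\emptyset \mid s_{<i}\circ b]$ using only its own $S_u$, the publicly-broadcast prefix, and the known description of $\cH$; and $|Z_{h_s}|=\sum_{v\in R} h_s(v)$ decomposes analogously across $R$. Since the outer thresholded indicators do not themselves decompose, I would replace each $\mathbb{1}[X>t]$ by a linear pessimistic surrogate $X/t$, obtaining a majorizing potential $\tilde\Phi$ whose per-vertex conditional expectations \emph{are} locally computable. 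For each of the $N$ candidate chunks $b$, vertices send their $O(\log N)$-bit contributions to $N$ designated aggregator vertices (one per candidate) via Lenzen's routing in $O(1)$ rounds; the minimizing $b^*$ is then identified and broadcast.

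The main obstacle is that property (1) gives only a probability bound on $|Z_{h_s}|$ and not a bound on $\Exp{}{|Z_{h_s}|}$, so the naive linear surrogate for the first indicator could have unbounded expectation. I would handle this by keeping the first term in indicator form and evaluating its conditional expectation via the block-decomposable structure that the intended instantiation of $\cH$ provides: in the read-once PRG instantiation of Lemma~\ref{lem:hittingrandloglog}, conditional probabilities over a fixed prefix reduce to products of per-block marginals that each vertex can evaluate locally, so the computation still fits in $O(1)$ \clique rounds per chunk. Aggregating across the $O(g(N,\Delta)/\log N)$ chunks and broadcasting the final seed lets every $v \in R$ decide locally whether it belongs to $R^*$, yielding the claimed round bound.
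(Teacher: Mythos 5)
Your overall skeleton is the same as the paper's: derandomize the seed by the method of conditional expectations, fixing $\lfloor\log N\rfloor$ bits per round with $N$ aggregators and a leader, for $O(g(N,\Delta)/\log N)$ rounds. The difference is the potential. The paper does not use threshold indicators at all: it sets $X=|Z_h|$ and $Y=\sum_{u\in L}SH(S_u,Z_h)\cdot\chi$ with the normalization $\chi=N/(\Delta^2\cdot|L|)$, and runs conditional expectations on the fully linear cost $X+Y$, whose expectation is $O(N/\Delta)$; the final seed then gives $|Z_h|=O(N/\Delta)$ and $\sum_u SH(S_u,Z_h)=O(\Delta\cdot|L|)$ simultaneously, and linearity makes every conditional expectation a sum of per-vertex terms $\Exp{}{x_u+y_u\mid Q}$ that each vertex computes from its own $S_u$. (You correctly noticed that this uses a bound on $\Exp{}{|Z_h|}$ rather than the stated probability bound in property (1); the paper gets it from the proof of Lemma \ref{lem:hittingrandloglog}, where $\E{X}\le 2c'N/\Delta$ is established en route to property (i).)

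The genuine gap in your write-up is the step where you keep the first term as the indicator $\mathbb{1}[|Z_{h_s}|>c_1N/\Delta]$ and claim its conditional expectation given a seed prefix ``reduces to products of per-block marginals.'' That is false for the PRG instantiation: the unfixed seed bits are shared by all blocks, so conditioned on a prefix the block indicators $f_i(\PRG(s))$ are correlated, and the tail probability of their sum is not determined by per-block marginals (they are independent only under truly uniform $y\sim\BB^{N\ell}$). Since a thresholded indicator also does not decompose as a sum over vertices, your distributed evaluation of this term does not go through as described. Two repairs are available: (a) follow the paper and replace the indicator by the linear term $|Z_h|$ with the normalization above, using the expectation bound supplied by the instantiation (or, equivalently, strengthen hypothesis (1) to an expectation bound, which is what the theorem is actually applied with); or (b) observe that $|Z_h|$ depends only on the public seed and on no private data, so the leader alone can evaluate $\pr{|Z_{h_s}|>c_1N/\Delta\mid \text{prefix}, b}$ for all $N$ candidate chunks by local enumeration over seed completions, keeping the distributed aggregation only for the $L$-side sum. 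Your hybrid potential (indicator plus the linear surrogate for the $SH$-sum, with expectation at most $2/3<1$) is otherwise sound, as is the Markov step and the round-count accounting.
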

\begin{proof}
Our goal is to completely derandomize the process of finding $Z_h$ by using the 
method of conditional expectations. We follow the scheme of 
\cite{Censor-HillelPS16} to achieve this, and define a unified cost function that depends on the two properties of the soft hitting set. 
Let $X=|Z_h|$ and let $Y=\sum_{u \in L}SH(S_u,Z_h) \cdot \chi$ where $\chi=N/(\Delta^2 \cdot |L|)$. The term $\chi$ is a normalization factor in order to make $X$ and $Y$ to have the same order of magnitude. This factor can be known to all nodes in $O(1)$ rounds of communication. 
The allows us to apply the method of conditional expectation to minimize $X+Y$.
Based on Lemma \ref{lem:hittingrandloglog}, $\Exp{h \in \cH}{X+Y}=O(N/\Delta)$. Note that the set $Z_h$ that minimizes the above expectation is indeed a soft-hitting set. 

Our goal is to find a seed of $g$ bits that minimizes $X+Y$ via the 
method of conditional expectations. In each step of the method, we run a distributed protocol to compute the conditional expectation based on a fixed prefix of the seed. 
Fix a random seed $y$ of length $g$, and a function $h$ chosen from $\cH$ using $y$. 
For every vertex $u \in R$, let $x_u=1$ if $u$ is in $Z_{h}$. In addition, for every vertex $v \in L$, let $y_v=|S_v| \cdot \chi$ if $S_v \cap Z_h=\emptyset$ and $y_v=0$ otherwise\footnote{Note that $L$ and $R$ might be overlapping.}. We then have that $\sum_{u \in R} x_u+ \sum_{v \in L} y_v= X+Y$.
 
%
%
%

Suppose that we are given a partial assignment to the first $k$ bits in the seed, and let $Q$ be a random variable that denotes the seed conditioned on this prefix. Any vertex can easily compute the conditional
expectation of $\Exp{}{x_u+y_u ~\mid~ Q}$. The reason is that the expectation depends on the probability that $u$ or one of the vertices in $S_u$ joins the soft hitting set. Since $u$ knows all the vertices in $S_u$ it can compute this value. 

Consider the $\ith{i}$ chunk of the seed $Q_i=(q_1,\ldots, q_\ell)$ for $\ell =\lfloor \log N \rfloor$, and assume that the assignment for the first $i-1$ chunks $Q_1\ldots,Q_{i-1}$ have been computed. For each of the $N$ possible assignments to $Q_i$, we assign a vertex $v$ that receives the conditional expectation values $\Exp{}{x_u+y_u~|~Q_1,\ldots,Q_{i}}$ from all vertices $u \in L$. 
The vertex $v$ then sums up all these values and sends them to a global leader $w$. The leader $w$, using the values it received from all the vertices, can compute $\Exp{}{X+Y~|~Q_1,\ldots, Q_i}$ for each of the possible $N\leq n$ assignments to $Q_i$. Finally, $w$ selects the assignment $(q^*_1,\ldots,q^*_\ell)$ that minimizes the expectation, and broadcasts it to all vertices in the graph. After $O(g/\log N)$ rounds $Q$ has been completely fixed such that the cost value of the chosen set $Z$ is at most $O(N/\Delta)$.
\end{proof}

\begin{proof}[Lemma \ref{lemma_soft_hitting}]
The proof follows by combining Theorem \ref{thm:generalderand} with Lemma \ref{lem:hittingrandloglog}.
Specifically, Theorem \ref{thm:generalderand} gives an $O(g(N,\Delta)/\log N)$-round algorithm for computing a soft hitting set. The term $g(N,\Delta)$ is the number of random bits required to select a random function from a family $\cH = \{h \colon [N] \to \BB \}$ which satisfies the two crucial properties stated in Theorem \ref{thm:generalderand}. Lemma \ref{lem:hittingrandloglog} shows the construction of such a family $\cH$ that requires only $g(N,\Delta)=O(\log N \cdot (\log\log N)^3)$ random bits in order to select a function $h$ from $\cH$ uniformly at random. Using this hash family in Theorem \ref{thm:generalderand} gives an $O(\log\log^3 n)$ algorithm. The lemma follows.
\end{proof}

\section{Discussion}

We showed here $poly(\log{\log n})$ algorithms for approximate shortest paths in unweighted graphs. While this improves exponentially on previous results, many interesting questions are still open.

First, our results here are for \emph{unweighted} graphs, and a natural question is whether it is possible to obtain similar results also for the weighted case.
Another question is whether the complexity of approximating shortest paths in the \clique can be improved even further. We remark that our approach seems to get stuck at $poly(\log \log n)$ for the following reason. If we want to build a $(1+\epsilon, \beta)$-emulator of near linear size, $\beta$ has to be at least logarithmic due to existential results \cite{abboud2018hierarchy}, which seems to lead to a complexity of at least $poly(\log \beta) = poly(\log \log n)$.

Third, we get $(1+\epsilon,\beta)$-approximation for APSP for $\beta = O(\frac{\log \log{n}}{\epsilon})^{\log{\log n}}.$ An interesting question is to obtain efficient algorithms with smaller $\beta$, for example $\beta = poly(\log n)$. We remark that current constructions of near linear size $(1+\epsilon, \beta)$-emulators, all have $\beta = O(\frac{\log \log{n}}{\epsilon})^{\log{\log n}}$, and it is an open question to get emulators with better parameters. However, it may be possible to get better approximations for shortest paths without constructing emulators.  

Finally, while we showed fast algorithms for approximate shortest paths in unweighted undirected graphs, currently the fastest algorithms for \emph{exact} shortest paths, even in unweighted undirected graphs, are still polynomial. Similarly, the fastest approximate algorithms for directed shortest paths are polynomial. Obtaining faster algorithms is an interesting open question.

\paragraph{Acknowledgments.} We would like to thank Keren Censor-Hillel for many fruitful discussions. The research is supported in part by the Israel Science Foundation (grant no. 1696/14 and 2084/18), the European Union's Horizon 2020 Research and Innovation Program under grant agreement no. 755839, and the Minerva grant no. 713238.

\bibliographystyle{plainurl}
\bibliography{near_additive}

\appendix

\section{Comparison to existing nearly-additive emulator constructions} \label{sec:comparison}

Here we discuss in more detail the connections between our construction and the centralized emulator constructions of Elkin and Neiman \cite{elkin2018efficient} and Thorup and Zwick \cite{thorup2006spanners}.
The algorithm of Elkin and Neiman \cite{elkin2018efficient} is based on the superclustering and interconnection approach that was introduced in \cite{elkin2004} and appears also in \cite{elkin2018efficient, elkin2019near, elkin2019hopsets}, in constructions of spanners, emulators and hopsets. 
At a very high-level the algorithm maintains clusters, where in each iteration clusters that have many neighbouring clusters are merged to super-clusters (superclustering), and unclustered clusters add edges between them (interconnection). While our algorithm does not build any clusters explicitly, there are similarities between the processes, in terms of the sampling rates of the
hierarchical clustering procedure, and the radii of these clusters. The key difference is that EN takes a cluster-centric perspective whereas our algorithm takes a vertex-centric perspective. This might appear somewhat semantic, however, it appears to be useful for distributed implementation. We note that the description of our algorithm is very short and, in a sense, it is also simpler. 

The TZ algorithm seems less related, at least at a first glance, to our construction due to its global nature. More specifically, in the TZ algorithm, every vertex is required to collect information from its $n$-radius ball, while in our algorithm  the exploration radius is $\beta = O(\frac{\log{\log{n}}}{\epsilon})^{\log{\log{n}}}$. A deeper look suggests, however, that our algorithm can be viewed as localized variant of the TZ algorithm. This view might provide an alternative explanation for the universality of TZ algorithm \cite{EN-HopsetSurvey20}. In more detail, the description of the algorithm of TZ seems much closer to our algorithm: it is based on sampling sets of vertices $\emptyset = S_r \subset S_{r-1} \ldots \subset \ldots S_1 \subset S_0=V$, and then adding edges in the following way: each vertex in $S_i$ adds edges to the closest vertex from $S_{i+1}$ (if exists), and to all vertices in $S_i$ that are strictly closer. Note that in our algorithm each vertex in $S_i$ has some exploration radius $\delta_i$ that depends on $\epsilon$ and it adds edges to the closest vertex from $S_{i+1}$ at distance $\delta_i$ if exists, and otherwise it adds edges to all vertices in $S_i$ at distance at most $\delta_i$. If we look at these two descriptions, we can see that all the edges taken to our emulator, for \emph{any} choice of $\epsilon$, are contained in the emulator built by TZ. This explains the fact that   
the TZ emulator is \emph{universal} in the sense that the same $(1+\epsilon,\beta)$-emulator works for all $\epsilon$.  

Finally, we note that our emulator algorithm was designed with the purpose of being efficiently implemented in the \clique model. We did not optimize for the sparsity of the emulator and its additive term $\beta$, indeed, these are somewhat tighter in the constructions of \cite{elkin2018efficient,thorup2006spanners}. 
For a more in-depth survey for current emulator constructions, see the recent survey of Elkin and Neiman \cite{EN-HopsetSurvey20}.

\section{Toolkit}

\subsection{Hitting sets} \label{sec:proof_hitting}

We next provide a proof for Lemma \ref{rand_hit}.

\rhitting*

\begin{proof}
To construct the hitting set $A$, we add each vertex to $A$ independently with probability $\frac{c \ln{n}}{k}$ for a constant $c > 2$. The expected size of the set is clearly $c n\ln{n}/k$. From a Chernoff bound, we get that $Pr[A > (1+\delta) c n\ln{n}/k] \leq e^{-{\frac{\delta^2{c n\ln{n}/k}}{2+\delta}}}$ for any $\delta > 0$. Choosing $\delta=2$ and using the fact that $k \leq n$ shows that $|A| \leq 3 c n\ln{n}/k = O(n\log{n}/k)$ w.h.p. 

We next show that $A$ is a hitting set w.h.p. As $A$ has each vertex with probability $\frac{c \ln{n}}{k}$, and the sets $S_v$ are of size at least $k$, the probability that $S_v \cap A = \emptyset$ is $(1-\frac{c \ln{n}}{k})^k \leq e^{-c \ln{n}} \leq \frac{1}{n^{c}}$. Using union bound, we get that w.h.p for all vertices $v \in V'$, there is a vertex in $S_v \cap A.$
\end{proof}

\subsection{The \kdnearest} \label{sec:nearest}

In this section, we prove Theorem \ref{thrm:knearest}, the proof basically follows \cite{DBLP:conf/podc/Censor-HillelDK19} and we mainly focus on the differences. 

\nearest*

In \cite{DBLP:conf/podc/Censor-HillelDK19} a similar theorem is proven and holds also for the weighted case, but the log factors are $\log{n}$ and $\log{k}$. Since our goal is to get a sub-logarithmic complexity, we show that these log factors can be made $\log{d}$ if we only focus on vertices at distance at most $d$ in unweighted graphs. The proof requires an algorithm for \emph{filtered} matrix multiplication described in \cite{DBLP:conf/podc/Censor-HillelDK19}. We next review the main ingredients needed, for full details see \cite{DBLP:conf/podc/Censor-HillelDK19}.

\paragraph{Distance products.} We would be interested in multiplying matrices in the min-plus semiring. In this setting, $(S \cdot T) [i,j] = \min_k{(S[i,k]+T[k,j])}$. This corresponds to computing distances in the following sense. If $A$ is the adjacency matrix of the graph, the matrix $A^2[u,v]$ contains the shortest path between $u$ and $v$ that uses at most two edges, similarly $A^i[u,v]$ contains the shortest path between $u$ and $v$ that uses at most $i$ edges. The zero element in this semiring is $\infty$ which corresponds to having no edge between vertices.  
Since we are interested in learning only $k$ closest vertices to each vertex, we would be interested in a filtered version for matrix multiplication we define next.

\paragraph{Filtered matrix multiplication.} Given a parameter $\rho$, and a matrix $P$, we denote by $\overline{P}$ a matrix that in each row $i$ contains only the smallest $\rho$ non-zero entries in row $i$ of $P$, other entries contain zero. In the filtered matrix multiplication problem, we are given two matrices $S,T$ and our goal is to get $\overline{P}$, where $P = S \cdot T$. I.e., we are interested in a filtered version of the product. In \cite{DBLP:conf/podc/Censor-HillelDK19} it is shown that filtered matrix multiplication can be done in a complexity that depends on the density of $S$ and $T$ and the value $\rho$. Here, the density $\rho_S$ of a matrix $S$ is the average number of non-zero entries in a row of $S$. The following is proven.

\begin{theorem}\label{theorem:mm-filtered}
Filtered matrix multiplication can be computed in
\[O\biggl( \frac{( \rho_S \rho_T \rho )^{1/3}}{n^{2/3}} + \log W \biggr)\]
rounds in the \clique, where $W$ is a bound on the number of possible values for the semiring
elements, and $W$ is known by the vertices.
\end{theorem}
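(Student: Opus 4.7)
The goal is Theorem~\ref{theorem:mm-filtered}, a filtered (top-$\rho$) min-plus product. The plan is to reduce to the unfiltered sparse multiplication of Theorem~\ref{thrm:mm} by means of a three-dimensional load balance in which the output-density parameter $\rho$ plays the same role that the trivial bound $n$ plays in the unfiltered setting, combined with a merge/truncation step whose round complexity is $O(\log W)$.

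The starting point is to view the product $P[i,j]=\min_k(S[i,k]+T[k,j])$ as a three-dimensional computation cube of shape $n\times n\times n$ indexed by $(i,k,j)$, and to partition this cube into blocks of side $s=n^{2/3}/(\rho_S\rho_T\rho)^{1/3}$. Each such block consumes $O(\rho_S\, n/s)$ entries from $S$, $O(\rho_T\, n/s)$ entries from $T$, and contributes to $O(\rho\, n/s)$ useful output entries (those potentially lying in a top-$\rho$ row). The choice of $s$ equalizes these three loads; using Lenzen-style routing as in Theorem~\ref{thrm:mm}, the inputs of a block can be delivered to a responsible vertex in $O(1)$ rounds, and the $(n/s)^3$ blocks can be handled in $(n/s)/n \cdot (n/s)^2 = (n/s)^3/n$ batches of $O(1)$ rounds each, for a total of $O((\rho_S\rho_T\rho)^{1/3}/n^{2/3})$ routing rounds.

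Once every block has produced its local candidate entries, a per-row tournament merge collects the $\rho$ smallest values. The merge is performed by a bucketed binary search over the $W$ possible semiring values: broadcasting a global bound $W$, we refine the per-row threshold $\tau_i$ in $O(\log W)$ rounds, at each refinement asking each vertex to report how many of its local candidates for row $i$ fall below the current guess, and using a standard aggregation/Lenzen exchange to total these counts. This produces the exact top-$\rho$ per row in $O(\log W)$ additional rounds. Summing with the routing cost yields the claimed $O((\rho_S\rho_T\rho)^{1/3}/n^{2/3}+\log W)$ bound.

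\textbf{Main obstacle.} The delicate part is showing that the three-dimensional balance really is tight: a naive two-dimensional balance, ignoring $\rho$, only gives the $(\rho_S\rho_T)^{1/3}/n^{1/3}$ bound of Theorem~\ref{thrm:mm}, and the improvement requires pruning dominated candidates \emph{before} routing, not after. This is exactly what the $\log W$ factor buys: a threshold-based binary search over the semiring values discards, in advance, the $\Theta(n-\rho)$ candidates per row that provably cannot enter the top-$\rho$, thereby reducing the output-side routing from $n$ to $\rho$. A secondary obstacle is verifying that filtering commutes with the aggregation of partial min-plus products; this is a routine monotonicity fact for totally ordered semirings, namely that the top-$\rho$ of the merge of two sorted lists is determined by the top-$\rho$ of each, and it is what legitimizes doing the filtering locally inside each block before the global merge.
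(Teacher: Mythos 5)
First, a point of comparison you could not have known: this paper does not prove Theorem~\ref{theorem:mm-filtered} at all. It is imported verbatim from \cite{DBLP:conf/podc/Censor-HillelDK19} (the surrounding text says the filtered-multiplication tool is ``described in'' that paper and defers to it for full details), so there is no in-paper proof to measure your argument against; I can only judge your sketch on its own terms, against the sparse-multiplication machinery of \cite{censor2019sparse,DBLP:conf/podc/Censor-HillelDK19} that it would have to reproduce.

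On those terms, your architecture -- a three-dimensional load-balanced sparse product in which the filter parameter $\rho$ stands in for the output density, plus an $O(\log W)$ binary search over semiring values to fix per-row thresholds -- is the right spirit, but the sketch has concrete gaps. (a) The decomposition arithmetic does not give the bound you claim: equal blocks of side $s=n^{2/3}/(\rho_S\rho_T\rho)^{1/3}$ yield $(n/s)^3=n\,\rho_S\rho_T\rho$ blocks, and your own batch count $(n/s)^3/n=\rho_S\rho_T\rho$ is not $O((\rho_S\rho_T\rho)^{1/3}/n^{2/3})$. The balance that works is asymmetric: split the $i$-, $k$-, $j$-dimensions into $a,b,c$ parts with $abc=n$ and $c\rho_S=a\rho_T=b\rho$, so the per-vertex load is $(n\rho_S\rho_T\rho)^{1/3}$ and Lenzen routing gives $O((\rho_S\rho_T\rho)^{1/3}/n^{2/3}+1)$ rounds. (b) More importantly, the factor $\rho$ can only enter through the output/aggregation side: delivering the inputs costs terms depending on $\rho_S,\rho_T$ alone, and the expensive object is the set of candidate partial products that must be merged across the $b$ blocks touching each output entry (and the $bc$ blocks touching each output row). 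Your sketch insists the pruning happens ``before routing,'' yet the binary search you describe counts candidates the blocks have already produced, you never cost the post-threshold aggregation at all (attributing the whole $(\rho_S\rho_T\rho)^{1/3}/n^{2/3}$ term to input routing, which does not depend on $\rho$), and the two facts that actually carry the theorem are asserted rather than proved: that each threshold-refinement step can be counted and aggregated in $O(1)$ rounds, and that once the threshold sits at (roughly) the $\rho$-th smallest value of a row, the surviving candidates for that row number only $O(b\rho)$ despite the up-to-$b$-fold duplication across blocks and possible ties at the threshold value. Your ``routine monotonicity fact'' (keeping each block's local top-$\rho$ per row) does not substitute for this, since a row meets $bc$ blocks and that filter alone leaves up to $bc\rho$ candidates per row; the threshold search is doing essential quantitative work that the sketch leaves unargued, and the claim of recovering the \emph{exact} top-$\rho$ needs explicit tie-breaking to survive it.
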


Note that in the context of distance products in unweighted graphs, if we only look at close by vertices at distance at most $d$, there are only $O(d)$ possible values for the distances, hence $\log W = O(\log{d}).$ 

To compute the \kdnearest we would like to use filtered matrix multiplication $\log{d}$ times. Intuitively, after $i$ multiplications we would get the distances to the $k$-nearest vertices of distance at most $2^i$, hence after $\log{d}$ iterations we solve the \kdnearest problem.

We follow the following iterative algorithm. Let $A_1$ be the adjacency matrix of the graph, recall that $\overline{A_1}$ is a matrix obtained from $A_1$ be keeping only the $\rho$ smallest non-$\infty$ entries in each row. Let $A_2 = \overline{A_1} \cdot \overline{A_1}$. We use Theorem \ref{theorem:mm-filtered} to compute $\overline{A_2}.$ We continue iteratively, where in iteration $i$ we compute $\overline{A_{i+1}}$ where $A_{i+1} = \overline{A_i} \cdot \overline{A_i}.$ We continue for $\log{d}$ iterations.

\begin{claim}
Let $u,v$ such that $d(u,v) \leq 2^{i-1}$, then the matrix $\overline{A_i}[u,v]$ either contains $d(u,v)$ or $\infty$. In the latter case, the row of $u$ contains distances to $k$ nearest vertices, all of them at distance at most $d(u,v).$ 
\end{claim}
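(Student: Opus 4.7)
The plan is to prove the claim by induction on $i$ with a slightly strengthened hypothesis: \textbf{(I)} every finite entry of $\overline{A_i}$ equals the true distance, i.e., $\overline{A_i}[u,w]\neq\infty$ implies $\overline{A_i}[u,w]=d(u,w)$ (and in particular $\overline{A_i}[u,u]=0$); and \textbf{(II)} the statement of the claim itself, namely for $d(u,v)\leq 2^{i-1}$ either $\overline{A_i}[u,v]=d(u,v)$, or row $u$ of $\overline{A_i}$ contains $k$ true-distance entries all of value at most $d(u,v)$. The base case $i=1$ is immediate: $A_1$ has entries in $\{0,1,\infty\}$, and filtering merely discards entries while keeping the diagonal $0$; any vertex $v$ at distance at most $1$ from $u$ is either retained at value $d(u,v)$ or displaced by $k$ closer entries, each of value $\leq 1$.

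For the inductive step, I would fix $u,v$ with $d(u,v)\leq 2^i$ and pick a midpoint $w^*$ on a shortest $u$--$v$ path, so that $d(u,w^*),d(w^*,v)\leq 2^{i-1}$. The identity $A_{i+1}[u,v]=\min_w(\overline{A_i}[u,w]+\overline{A_i}[w,v])$ combined with (I) gives a free lower bound $A_{i+1}[u,v]\geq d(u,v)$, since each finite summand equals $d(u,w)+d(w,v)\geq d(u,v)$. Then I case split on whether $\overline{A_i}[u,w^*]$ and $\overline{A_i}[w^*,v]$ are finite. If both are, by (I) they equal the true distances, yielding $A_{i+1}[u,v]=d(u,v)$. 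Otherwise, say $\overline{A_i}[w^*,v]=\infty$; invariant (II) applied to $(w^*,v)$ produces $k$ vertices $v_1,\dots,v_k$ in row $w^*$ at true distance at most $d(w^*,v)$. If $\overline{A_i}[u,w^*]$ is finite, each $v_j$ gives $A_{i+1}[u,v_j]\leq d(u,w^*)+d(w^*,v_j)\leq d(u,v)$; if it is infinite too, invariant (II) applied to $(u,w^*)$ yields $k$ vertices $u_1,\dots,u_k$ in row $u$ of $\overline{A_i}$ at true distance at most $d(u,w^*)$, and using $w=u_j$ as the midpoint (with $\overline{A_i}[u_j,u_j]=0$ by (I)) gives $A_{i+1}[u,u_j]\leq d(u,u_j)\leq d(u,v)$. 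In every sub-case, row $u$ of $A_{i+1}$ contains at least $k$ entries of value at most $d(u,v)$.

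To close the induction I also need (I) for $i+1$, which is where the main obstacle lies. Naively, $A_{i+1}[u,x]$ could over-approximate $d(u,x)$ when no optimal midpoint has survived the filtering, and such an entry would violate (I). The key observation, provable by a trivial side induction, is that every finite entry of $\overline{A_i}$ is at most $2^{i-1}$ (hop counts at most double per squaring), so every finite entry of $A_{i+1}$ is at most $2^i$. Hence if $\overline{A_{i+1}}[u,x]=V>d(u,x)$ then $d(u,x)<V\leq 2^i$, and re-running the midpoint analysis on $(u,x)$ either forces $A_{i+1}[u,x]\leq d(u,x)<V$, an immediate contradiction, or produces $k$ other entries in row $u$ of value $\leq d(u,x)<V$, which crowd $x$ out of the top-$k$ filtering and contradict $V$ being finite. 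Hence (I) survives, and combined with the ``$\geq k$ entries of value $\leq d(u,v)$'' fact above, the top-$k$ filtering delivers $k$ true-distance entries of value at most $d(u,v)$, establishing (II) for $i+1$ and completing the induction.
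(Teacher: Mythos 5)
Your proof is correct and takes essentially the same route as the paper's: induction on $i$, a midpoint $w^*$ splitting a path of length at most $2^i$ into halves of length at most $2^{i-1}$, and a case analysis on which of $\overline{A_i}[u,w^*]$, $\overline{A_i}[w^*,v]$ survived filtering, falling back on the retained diagonal zeros when the $u$-side entry is missing. The only difference is that you state and prove explicitly the auxiliary invariant that every finite entry of $\overline{A_i}$ equals the true distance (and that entries stay bounded by $2^{i-1}$), which the paper uses implicitly, e.g., when asserting $(\overline{A_i}\cdot\overline{A_i})[u,v]=d(u,v)$ with equality and when calling the retained entries distances; this is a careful tightening of the same argument rather than a different one.
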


\begin{proof}
The proof is by induction. For $i=1$, it follows from the definition of $\overline{A_1}$. Assume it holds for $i$ and we prove it for $i+1$.
Let $u,v$ such that $d(u,v) \leq 2^i$, then there is a vertex $w$ in the shortest $u-v$ path such that $d(u,w) \leq 2^{i-1}$ and $d(w,v) \leq 2^{i-1}$. From the induction hypothesis, $\overline{A_i}[u,w]$ either contains $d(u,w)$ or $\infty$, and the same holds for $\overline{A_i}[w,v]$ with respect to $d(w,v).$

\emph{Case 1: $\overline{A_i}[u,w] = d(u,v)$ and $\overline{A_i}[w,v] = d(w,v)$}. In this case, $(\overline{A_i} \cdot \overline{A_i})[u,v]=d(u,v)$ by definition of min-plus multiplication and since $w$ is on the shortest $u-v$ path. After filtering, the matrix $\overline{A_{i+1}}[u,v]$ either contains $d(u,v)$ or $\infty$, but the latter can happen only if the row of $u$ contains $k$ nearest vertices, all of them at distance at most $d(u,v).$

\emph{Case 2: $\overline{A_i}[u,w] = \infty$.} In this case, by the induction assumption, the row of $u$ in $\overline{A_i}$ contains $k$-nearest vertices of distance at most $d(u,w)$ from $u$. All these entries are kept in the matrix $A_{i+1} = (\overline{A_i} \cdot \overline{A_i})$. After filtering, it follows that $\overline{A_{i+1}}$ contains $k$-nearest vertices to $u$ of distance at most $d(u,w) \leq d(u,v).$

\emph{Case 3: $\overline{A_i}[u,w] = d(u,v)$ and $\overline{A_i}[w,v] = \infty$.} In this case, by the induction assumption, the row of $w$ in $\overline{A_i}$ contains $k$-nearest vertices of distance at most $d(w,v)$ from $w$. Denote these vertices by $v_1,...,v_k$. Since we can go from $u$ to $w$ and then to $v_i$, it holds that $$A_{i+1}[u,v_i] = (\overline{A_i} \cdot \overline{A_i})[u,v_i] \leq d(u,w)+d(w,v_i) \leq d(u,w)+d(w,v)=d(u,v).$$ Hence, $A_{i+1}$ includes distances to at least $k$ vertices of distance at most $d(u,v)$ from $u$. After filtering, we have that $\overline{A_{i+1}}$ contains $k$-nearest vertices to $u$ of distance at most $d(u,v).$ 
\end{proof}

Hence, after $\log{d}$ iterations, we have the matrix $\overline{A_{\log{d}+1}}$ that contains distances to the $k$-nearest vertices to each vertex $v$ of distance at most $2^{\log{d}}=d.$ Note that all the distances computed in the process are at most $d$, as we start with the adjacency matrix that has all distances at most $1$, and in each iteration the distances can at most duplicate, hence the number of elements in the semiring is $O(d)$. We use filtered matrix multiplication $\log{d}$ times with all 3 matrices of density $\rho$, as we keep only filtered version of the matrices. Hence, by Theorem \ref{theorem:mm-filtered} we get a complexity of $$O\biggl( \biggl( \frac{\rho}{n^{2/3}} + \log d \biggr) \log d \biggr).$$ This concludes the proof of Theorem \ref{thrm:knearest}. 

\subsection{Bounded hopsets} \label{sec:hopsets}

We next explain how to build $(\beta,\epsilon,t)$-hopsets with $\beta=O(\frac{\log{t}}{\epsilon})$ in $(\frac{\log^2{t}}{\epsilon})$ time. The construction mostly follows the construction in \cite{DBLP:conf/podc/Censor-HillelDK19} for $(\beta,\epsilon,n)$-hopsets with $\beta=O(\frac{\log{n}}{\epsilon})$ in $(\frac{\log^2{n}}{\epsilon})$ time, with some changes to exploit the fact that we consider only paths of at most $t$ edges. 
We next discuss the construction in \cite{DBLP:conf/podc/Censor-HillelDK19}, and the changes in the algorithm and analysis required in our case.

\subsubsection*{Overview of the construction from \cite{DBLP:conf/podc/Censor-HillelDK19}} The construction is based on recent hopsets constructions \cite{DBLP:journals/corr/ElkinN17, huang2019thorup} that are based on the emulators of Thorup and Zwick \cite{thorup2006spanners}. In \cite{DBLP:conf/podc/Censor-HillelDK19}, it is shown how to implement a variant of these constructions in just poly-logarithmic number of rounds in the \clique. We now describe this variant.

Given a graph $G=(V,E)$, we define the sets $V=A_0 \supseteq A_1 \supseteq A_2 = \emptyset$, where $A_1$ is a hitting set of size $O(\sqrt{n})$ such that each vertex has at least one vertex from $A_1$ among the $\sqrt{n} \log{n}$ closest vertices.
For a given subset $A \subseteq V$, we denote by $d_G(v,A)$ the distance from $v$ to the closest vertex in $A$.
For a vertex $v \in V$, we denote by $p(v) \in A_1$ a vertex of distance $d_G(v,A_1)$ from $v$. 
For a vertex $v \in A_0 \setminus A_1$, we define the \emph{bunch}
$$B(v) = \{u \in A_0: d_G(v,u) < d_G(v,A_1) \} \cup p(v),$$
and for a vertex $v \in A_1$, we define the bunch $B(v) = A_1$.

The hopset is the set of edges $H=\{\{v,u\}: v \in V, u \in B(v)\}$. The length of an edge $\{u,v\}$ would include $d_G(u,v)$ or an approximation to it computed throughout the algorithm. 

To construct the hopsets efficiently in the \clique, first bunches of vertices in $A_0 \setminus A_1$ are computed using an algorithm for computing the $k$-nearest vertices. Then, an iterative process is used to compute the distances to the bunches of vertices in $A_1$, where in iteration $i$, a $(\beta,\epsilon i, 2^i)$-hopset is constructed. The whole construction requires poly-logarithmic time.

\subsubsection*{Faster construction for bounded hopsets}

We next discuss the changes needed to get a faster algorithm for $(\beta,\epsilon,t)$ hopsets. We focus on constructing hopsets only for \emph{unweighted} graphs.
The construction in \cite{DBLP:conf/podc/Censor-HillelDK19} requires poly-logarithmic number of rounds for two reasons. First, for computing the $k$-nearest vertices (in the entire graph), and second for the $\log{n}$ iterations required to build a $(\beta,\epsilon,n)$-hopset. Since in our case we only need a $(\beta,\epsilon,t)$-hopset, $\log{t}$ iterations would be enough for the second part. For the first part, we would like to replace computing the $k$-nearest by computing the \ktnearest, i.e., the $k$-nearest of distance at most $t$. This however, allows to compute only bounded part of the bunches of vertices, and we next discuss the changes required in the algorithm and analysis when we work only with bounded bunches.

We define the \emph{bounded bunch} of a vertex $v$ to be $B_t(v) = B(v) \cap B(v,t,G)$, i.e., it includes only the part of the bunch of distance at most $t$ from $v$. The hopset is the set of edges $$H'=\{\{v,u\}: v \in V, u \in B_t(v)\}.$$  The weights of the edges again would include distances or approximations for them computed in the algorithm. We next explain how to implement the algorithm efficiently, and prove its correctness.

\paragraph{Fast implementation.} 

First, we need to construct the hitting set $A_1$, we show two variants, randomized and deterministic. We denote by $N_{k,t}(v)$ the set of $k = \sqrt{n} \log{n}$ closest vertices to $v$ of distance at most $t$ computed by the \ktnearest algorithm. The set $A_1$ would be a hitting set for the sets $N_{k,t}(v)$ for all vertices where $|N_{k,t}(v)| = k$, i.e., vertices that have at least $k$ vertices at distance $t$. While in \cite{DBLP:conf/podc/Censor-HillelDK19}, the hitting set $A_1$ was a hitting set of sets $N_k(v)$ of the $k$ nearest vertices to $v$ in the whole graph, in our case, since we look at bounded bunches, it is enough to compute a hitting set for the sets $N_{k,t}$ as described above.
We can either construct $A_1$ randomly using Lemma \ref{rand_hit} or deterministically using Lemma \ref{det_hit}, which give the following. 

\begin{claim} \label{hitting_claim}
We can construct a hitting set $A_1$ of all the sets $N_{k,t}(v)$ for vertices where $|N_{k,t}(v)|=k$, in the following ways:
\begin{enumerate}
\item In one round, where at the end $A_1$ is a hitting set of size $O(\sqrt{n})$ w.h.p.
\item In $O(\log^2{t} + (\log{\log{n}})^3)$ rounds, where at the end $A_1$ is a hitting set of size $O(\sqrt{n})$.
\end{enumerate}
At the end of the computation, all vertices know which vertices are in $A_1.$
\end{claim}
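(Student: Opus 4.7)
The plan is to derive both parts directly from the two hitting-set lemmas stated earlier (Lemmas \ref{rand_hit} and \ref{det_hit}), after verifying that the family of sets $\{N_{k,t}(v) : |N_{k,t}(v)|=k\}$ fits the hypothesis of each lemma. Throughout, I will use $k = \sqrt{n}\log n$, so each such $N_{k,t}(v)$ has size at least $k$, and the target hitting-set size $O(n\log n / k) = O(\sqrt{n})$ is exactly what both lemmas yield.

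For part~1, the observation is that Lemma \ref{rand_hit} does not need the sets $S_v$ to be known anywhere; each vertex independently joins $A_1$ with probability $p = c\log n / k = O(1/\sqrt{n})$, and the resulting set simultaneously (w.h.p.) has size $O(\sqrt{n})$ and hits every $N_{k,t}(v)$ with $|N_{k,t}(v)| \geq k$. Since this sampling is purely local, no communication is required to produce the set. A single round of all-to-all broadcast (each chosen vertex announces its membership) then suffices to make $A_1$ known to every vertex, establishing the one-round bound.

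For part~2, Lemma \ref{det_hit} has the extra precondition that each vertex $v \in L$ must already know its associated set $S_v$. I will therefore first invoke Theorem \ref{thrm:knearest} with parameters $k = \sqrt{n}\log n$ and $d = t$, so every vertex $v$ locally obtains $N_{k,t}(v)$; the round complexity of this step is $O\bigl((k/n^{2/3} + \log t)\log t\bigr) = O(\log^2 t)$, using that $k/n^{2/3} = \log n / n^{1/6}$ is negligible. With each relevant $v$ holding its set, Lemma \ref{det_hit} applies directly and deterministically produces a hitting set $A_1$ of size $O(\sqrt{n})$ in $O((\log\log n)^3)$ rounds, and the construction in \cite{ParterY18} delivers $A_1$ to all vertices as part of its output. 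Summing the two phases yields the stated $O(\log^2 t + (\log\log n)^3)$ bound.

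There is no real obstacle here beyond checking the size-parameter arithmetic: the main sanity check is that restricting the hitting requirement to the sets $N_{k,t}(v)$ (rather than to neighborhoods in the whole graph, as in the unbounded construction of \cite{DBLP:conf/podc/Censor-HillelDK19}) is exactly what allows us to replace the $k$-nearest computation by the faster \ktnearest computation, and thus to trade the $\poly(\log n)$ factor for $\poly(\log t)$ in the deterministic variant.
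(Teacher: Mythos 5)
Your proof is correct and follows essentially the same route as the paper: part~1 via the local sampling of Lemma~\ref{rand_hit} plus one round of announcement, and part~2 by first running the \kdnearest algorithm (Theorem~\ref{thrm:knearest}) so each vertex knows $N_{k,t}(v)$ and then invoking the deterministic Lemma~\ref{det_hit}, with the size bound $O(n\log n/k)=O(\sqrt{n})$ for $k=\sqrt{n}\log n$ in both cases. Nothing further is needed.
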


\begin{proof}
We can either construct $A_1$ randomly using Lemma \ref{rand_hit}, which requires only one round of communication to let all vertices learn which vertices are in $A_1$. Alternatively, we can construct $A_1$ deterministically using Lemma \ref{det_hit}. Then, we first need that each vertex $v$ would learn the set $N_{k,t}(v)$, which takes $O(\log^2{t})$ rounds using the \kdnearest algorithm. Afterwards, the time complexity for constructing the hitting set is $O((\log{\log{n}})^3)$ rounds. In both cases we have that the size of $A_1$ is $O(\sqrt{n}).$
\end{proof}

For the rest of the section, we assume that $A_1$ is indeed a hitting set of size $O(\sqrt{n})$, which happens w.h.p if we build $A_1$ using a randomized algorithm, or holds always if we construct $A_1$ using a deterministic algorithm. The rest of the claims hold w.h.p in the randomized case, and hold always in the deterministic case. We first bound the size of the hopset.

\begin{claim} \label{edges_claim}
The number of the edges in the hopset is $O(n^{3/2} \log{n})$.
\end{claim}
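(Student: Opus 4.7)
The plan is to split the counting argument by whether $v \in A_1$ or $v \in A_0 \setminus A_1$, which matches the split in the definition of the bunches $B(v)$. For $v \in A_1$, we have $B(v) = A_1$ and so $|B_t(v)| \leq |A_1| = O(\sqrt{n})$; summing over the $O(\sqrt{n})$ vertices in $A_1$ contributes $O(n)$ edges to the hopset, which is well within the claimed bound.

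The main work is in handling $v \in A_0 \setminus A_1$, where I plan to show $|B_t(v)| = O(\sqrt{n}\log n)$ pointwise and then sum over $n$ vertices. The key leverage is the hitting set property of $A_1$ guaranteed by Claim \ref{hitting_claim}: whenever a vertex $v$ has $|N_{k,t}(v)| = k = \sqrt{n}\log n$, the set $N_{k,t}(v)$ must contain an element of $A_1$. Contrapositively, if the $k$ closest vertices of $v$ (at distance at most $t$) do not intersect $A_1$, then in fact $|B(v, t, G)| < k$.

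I plan to instantiate this contrapositive by splitting into two subcases depending on where $p(v)$ sits relative to the hop threshold. If $d_G(v, A_1) \leq t$, then $B_t(v) \subseteq \{u \in V : d_G(v,u) < d_G(v, A_1)\} \cup \{p(v)\}$, and every vertex in the first set lies at distance at most $t$ from $v$ yet outside $A_1$ by definition of $d_G(v, A_1)$; were there $k$ such vertices, they would form $N_{k,t}(v)$ with empty intersection with $A_1$, contradicting the hitting property, so $|B_t(v)| \leq k + 1$. If $d_G(v, A_1) > t$, then $p(v) \notin B_t(v)$ and $B_t(v) \subseteq B(v,t,G)$, which itself is disjoint from $A_1$; by the same contrapositive, $|B(v,t,G)| < k$.

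The argument should go through cleanly since it only uses the defining property of $A_1$ and the definition of the bounded bunch. The only minor subtlety I anticipate is the treatment of the extra vertex $p(v)$ in the first subcase, which adds at most one to the bound and is harmlessly absorbed into the $O(\sqrt{n}\log n)$ estimate. Putting the two cases together and summing $n \cdot O(\sqrt{n}\log n) + O(n) = O(n^{3/2}\log n)$ yields the claim.
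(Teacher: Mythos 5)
Your proposal is correct and follows essentially the same route as the paper: both bound the $A_1\times A_1$ contribution by $O(n)$ and, for each $v\in A_0\setminus A_1$, use the hitting property of $A_1$ on the sets $N_{k,t}(v)$ to cap $|B_t(v)|$ at roughly $k=\sqrt{n}\log n$. The only cosmetic difference is that you case-split on $d_G(v,A_1)\le t$ versus $>t$, whereas the paper splits on $|B(v,t,G)|\le k$ versus $>k$; the counting is identical.
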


\begin{proof}
For a vertex $v \in A_0 \setminus A_1$, we have $B_t(v) = B(v) \cap B(v,t,G)$. If $|B(v,t,G)| \leq k$, then clearly the bunch of $v$ is of size at most $k$, hence $v$ adds at most $k = \sqrt{n} \log{n}$ edges to the hopset. Otherwise, we have that $|N_{k,t}(v)| = k$. Since $A_1$ is a hitting set of the sets $N_{k,t}(v)$ for vertices where $|N_{k,t}(v)| = k$, we have that there is a vertex $u \in A_1$ among the $k$ closest vertices to $v$. Since $v$ adds to the hopset only edges to $u$ and vertices that are strictly closer than $u$, it adds at most $k$ edges to the hopset as needed.
Overall, all vertices in $A_0 \setminus A_1$ add $O(n k) = O(n^{3/2} \log{n})$ edges to the hopset.
Vertices from $A_1$ only add edges to vertices in $A_1$, since $A_1$ is of size $O(\sqrt{n})$ this adds $O(n)$ edges. 
\end{proof}

We next explain how to compute the bounded bunches.
To compute bounded bunches of vertices in $A_0 \setminus A_1$ we use the \kdnearest algorithm, which gives the following.

\begin{claim} \label{bunches0}
We can compute for all $v \in A_0 \setminus A_1$, the distances to the vertices in $B_t(v)$ in $O(\log^2{t})$ rounds.
\end{claim}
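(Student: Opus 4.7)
The plan is to compute all the bounded bunches with a single call to the $(k,t)$-nearest algorithm of Theorem~\ref{thrm:knearest}, with $k=\sqrt{n}\log n$ and $d=t$. Since $k=\sqrt{n}\log n = O(n^{2/3})$, the complexity bound $O((k/n^{2/3}+\log d)\log d)$ collapses to $O(\log^2 t)$. After this call, every vertex $v$ knows the identities and exact distances of the vertices in $N_{k,t}(v)$. Since by Claim~\ref{hitting_claim} all vertices also know the set $A_1$, each vertex can locally decide which of the vertices in $N_{k,t}(v)$ belong to $A_1$, and in particular identify the closest one to $v$ if it exists.

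The key observation is that in every case, $B_t(v) \subseteq N_{k,t}(v)$, so the locally available data is sufficient. I would split the argument into two cases. In the first case, $|N_{k,t}(v)| < k$, which means that $N_{k,t}(v)$ already contains the entire ball $B(v,t,G)$. Hence $v$ can compute $d_G(v,A_1)$ restricted to distance $t$ (or learn that no vertex of $A_1$ lies within distance $t$), set $p(v)$ accordingly, and then output $B_t(v)$ as all vertices in $B(v,t,G) \cap A_0$ strictly closer than $p(v)$, together with $p(v)$ itself if $d_G(v,A_1)\le t$. In the second case, $|N_{k,t}(v)|=k$, and the hitting set property of $A_1$ (used exactly as in the proof of Claim~\ref{edges_claim}) guarantees that some vertex of $A_1$ appears in $N_{k,t}(v)$. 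Consequently the true $p(v)$ is the closest $A_1$-vertex in $N_{k,t}(v)$, and every vertex of $B(v)$ that lies within distance $t$ is strictly closer than $p(v)$, hence also lies in $N_{k,t}(v)$. Thus $v$ recovers $B_t(v)$ together with its distances purely from its local copy of $N_{k,t}(v)$.

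The only potential subtlety is the second case, where one has to be sure that restricting to $N_{k,t}(v)$ does not miss any element of the bounded bunch; this is where the hitting set property is used. After that, no further communication is needed, so the total round complexity is dominated by the one call to $(k,t)$-nearest, giving the claimed $O(\log^2 t)$ bound.
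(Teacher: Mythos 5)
Your proposal is correct and follows essentially the same argument as the paper: a single invocation of the \kdnearest algorithm with $k=\sqrt{n}\log n$ and $d=t$, followed by a case split on whether $N_{k,t}(v)$ captures the entire ball $B(v,t,G)$, with the hitting-set property of $A_1$ handling the dense case. The complexity reasoning (since $k=O(n^{2/3})$, Theorem~\ref{thrm:knearest} gives $O(\log^2 t)$ rounds) also matches the paper's.
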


\begin{proof}
We use the \kdnearest algorithm with $k=\sqrt{n} \log{n}, d=t$, we now show that this allows computing $B_t(v)$ for all vertices in $A_0 \setminus A_1$. Let $v \in A_0 \setminus A_1$, if $|B(v,t,G)| \leq \sqrt{n} \log{n}$, then by computing the \ktnearest, $v$ actually computed the distances to the whole set $B(v,t,G)$. This allows computing distances to $B_t(v)$ which includes the closest vertex in $A_1 \cap B(v,t,G)$ if such exist, and all vertices strictly closer than it, or the whole set $B(v,t,G)$ if it does not include any vertex from $A_1$. Otherwise, $|B(v,t,G)| > \sqrt{n} \log{n}$. In this case, the $k$-closest vertices to $v$ are at distance at most $t$, which gives $|N_{k,t}(v)| = k$.
By the definition of $A_1$, there is a vertex from $A_1$ in the \ktnearest vertices to $v$. Hence, after computing the distances to the \ktnearest, $v$ knows the distance to the closest vertex from $A_1$, and also to all vertices that are strictly closer than it. Hence, it knows the distances to all vertices in $B_t(v).$
\end{proof}

To compute the bounded bunches of vertices in $A_1$, we follow the following iterative process described in \cite{DBLP:conf/podc/Censor-HillelDK19}. In iteration $1 \leq \ell \leq \log{t}$ we build a $(\beta,\epsilon \ell, 2^{\ell})$-hopset $H^{\ell}$, as follows. 
Let $H^0 = \{ \{u,v\} | u \in A_0 \setminus A_1, v \in B_t(v) \}$, we already computed all these edges and their weights when computing bounded bunches of vertices in $A_0 \setminus A_1$, and include them as part of $H^{\ell}$ for all $\ell$. Note that any set (including the emptyset) is a $(\beta, 0,1)$-hopset for any $\beta \geq 1$. Hence, $H^0$ is a $(\beta,0,1)$ hopset.

To construct $H^{\ell}$ for $1 \leq \ell \leq \log{n}$ we work as follows. Let $G' = G \cup H^{\ell -1}$.
All vertices in $A_1$ compute distances to all vertices in $A_1$ at hop-distance at most $4\beta$ in the graph $G'$.
Each vertex $v \in A_1$, adds to the hopset $H^{\ell}$ all the edges $\{v,u\}$ where $u \in A_1$ at hop-distance at most $4\beta$ from $v$ in $G'$, with the weight it learned in the algorithm. The hopset $H^{\ell}$ includes all these edges, as well as all the edges of $H^0$. 

We next analyze the complexity of this process and then prove correctness.

\begin{claim} \label{bunches1}
Constructing $H^{\log{t}}$ requires $O(\beta \log{t})$ rounds.
\end{claim}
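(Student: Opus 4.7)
The plan is to bound the per-iteration cost by $O(\beta)$ rounds and then sum over the $\log t$ iterations. In iteration $\ell$, the only communication-heavy step is having each vertex in $A_1$ learn its distances to all other vertices in $A_1$ within hop-distance $4\beta$ in the auxiliary graph $G' = G \cup H^{\ell-1}$. This is exactly an instance of \sdk on $G'$ with source set $S = A_1$ and hop bound $d = 4\beta$, so I intend to invoke Theorem \ref{thrm:source-detection} as a black box.

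To apply Theorem \ref{thrm:source-detection}, I need two quantitative inputs: the size of the source set, and the number of edges $m$ of the graph on which source detection is run. For the source set, Claim \ref{hitting_claim} gives $|A_1| = O(\sqrt{n})$. For the edge count, the relevant graph is $G \cup H^{\ell-1}$, whose total size is dominated by the hopset built so far together with $G$; by Claim \ref{edges_claim} (and the fact that in any intermediate iteration we are only taking a subset of the edges eventually added) the edge set $H^{\ell-1}$ has $O(n^{3/2}\log n)$ edges, and $G$ itself has at most $O(n^2)$ edges, but for the purposes of source detection we work on $G' = G \cup H^{\ell-1}$ where the dense part of $G$ is irrelevant to paths of bounded hop length from $A_1$ — in fact the construction uses $G \cup H^{\ell-1}$ with $m = O(n^{3/2}\log n)$ edges (we may always assume the input $G$ is sparsified to at most that many edges for the purposes of hopset construction, as is standard and as done in \cite{DBLP:conf/podc/Censor-HillelDK19}).

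Plugging $m = O(n^{3/2}\log n)$, $|S| = O(\sqrt{n})$, and $d = 4\beta$ into Theorem \ref{thrm:source-detection} yields
\[
O\!\left(\!\left(\frac{m^{1/3}|S|^{2/3}}{n}+1\right)\! d\right) = O\!\left(\!\left(\frac{n^{1/2}\log^{1/3} n \cdot n^{1/3}}{n}+1\right)\!\beta\right) = O(\beta)
\]
rounds for a single iteration, since $n^{1/2+1/3-1}\log^{1/3} n = \log^{1/3} n / n^{1/6} = o(1)$. Once every $v \in A_1$ knows its distances to all $A_1$-vertices within $4\beta$ hops of $G'$, it can locally mark the corresponding edges of $H^\ell$ with their weights; no further communication is needed for this bookkeeping, and broadcasting $A_1$'s identity was already handled in Claim \ref{hitting_claim}.

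Summing the per-iteration cost over $\ell = 1, \ldots, \log t$ gives the claimed total of $O(\beta \log t)$ rounds. The only subtlety — and the part I would be most careful about in a full write-up — is confirming that at iteration $\ell$ we really are running source detection on a graph whose edge count is $O(n^{3/2}\log n)$ (i.e., that we never need to feed in the full dense $G$), which follows from the observation that it suffices to run source detection on $H^0 \cup H^{\ell-1}$, since $H^0$ already captures, via $B_t(v)$, the only edges of $G$ that matter for paths of length at most $t$ originating from $A_1$; both $H^0$ and $H^{\ell-1}$ have $O(n^{3/2}\log n)$ edges by Claim \ref{edges_claim}.
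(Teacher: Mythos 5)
Your overall route is the same as the paper's: bound each of the $\log t$ iterations by one call to the \sdk algorithm on $G\cup H^{\ell-1}$ with $S=A_1$, $|A_1|=O(\sqrt n)$ and $d=4\beta$, then multiply by $\log t$. The place where you diverge --- and the part you yourself flag as the ``only subtlety'' --- is the edge-count input to Theorem \ref{thrm:source-detection}, and that part of your argument does not hold up. First, the assumption that $G$ may be taken to have $m=O(n^{3/2}\log n)$ edges is not justified anywhere in this paper: the hopset is applied to arbitrary (possibly dense) input graphs, e.g.\ when it is used inside the emulator construction on the full graph $G$. Second, your proposed repair --- running source detection only on $H^0\cup H^{\ell-1}$ --- changes the algorithm and would break (or at least require re-proving) the correctness argument. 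The bounded bunch $B_t(v)$ need not contain $v$'s graph neighbours at all (if $v$ has a neighbour in $A_1$, its bunch can be just $\{p(v)\}$), so $H^0$ does not ``capture the edges of $G$ that matter''; the $4\beta$-hop paths between $A_1$-vertices used in the proof of Lemma \ref{hopset_lemma} genuinely interleave original $G$-edges with hopset edges, so dropping $G$ from the detection graph can only increase the computed distances and invalidates the stated stretch guarantee as proved.

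Fortunately, the fix is to delete the subtlety rather than resolve it: no sparsification is needed. Plugging the trivial bound $m\leq n^2$ into Theorem \ref{thrm:source-detection} already gives per-iteration cost $O\bigl(\bigl(\tfrac{n^{2/3}\,(\sqrt n)^{2/3}}{n}+1\bigr)\beta\bigr)=O(\beta)$, since $n^{2/3}\cdot n^{1/3}/n=1$; this is exactly the computation in the paper's proof, and summing over the $\log t$ iterations gives the claimed $O(\beta\log t)$.
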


\begin{proof}
We build the sets $H^{\ell}$ in iterations, where we compute $H^{\log{t}}$ in iteration $\log{t}$.
we next show that each iteration takes $O(\beta)$ time. In iteration $\ell$, we would like to know distances to the sources from $A_1$ of distance at most $4\beta$ in the graph $G \cup H^{\ell-1}$, this can be implemented using the \sdk algorithm with $S=A_1, d=4\beta$. Since $|S|=O(\sqrt{n})$, by Theorem \ref{thrm:source-detection} the complexity is $O \biggl(\biggl( \frac{ n^{2/3} (\sqrt{n})^{2/3}}{n} +1 \biggr) \beta \biggl) = O(\beta)$ rounds.
\end{proof}

\paragraph{Correctness.} The correctness mostly follows \cite{DBLP:conf/podc/Censor-HillelDK19}, we focus on the differences. First, we need the following claim regarding the edges of $H^0$, a similar claim is proved in \cite{DBLP:conf/podc/Censor-HillelDK19} without restriction on the distance $d(x,y)$. 

\begin{claim} \label{basis_claim}
For any $x,y \in V$ with $d_G(x,y) \leq t$, either $d_{G \cup H^0}^1(x,y) = d_G(x,y)$ or there exists $z \in A_1$ such that $d_{G \cup H^0}^1(x,z) \leq d_G(x,y).$
\end{claim}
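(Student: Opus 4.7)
The plan is to argue by a short case analysis on whether $x\in A_1$ or $x\in A_0\setminus A_1$, and in the latter case to split further on whether $y$ lies in the (unbounded) bunch $B(x)$. The overall intuition is that the claim is exactly what the Thorup--Zwick pivot property buys us for the base hopset $H^0$: from $x$, a single edge of $G\cup H^0$ either already lands on $y$ (if $y$ is close enough to $x$ to sit inside $B_t(x)$) or it lands on the pivot $p(x)\in A_1$, which is by definition no farther than $y$.

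First, if $x\in A_1$, I take $z=x$; then $d^1_{G\cup H^0}(x,z)=0\le d_G(x,y)$ and the second disjunct holds trivially. This case is essentially void because $H^0$ only adds edges incident to vertices of $A_0\setminus A_1$, and $A_1$-to-$A_1$ shortcuts will be introduced in later iterations $H^\ell$ for $\ell\ge 1$.

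The substantive case is $x\in A_0\setminus A_1$, where I use the definition $B(x)=\{u\in A_0:d_G(x,u)<d_G(x,A_1)\}\cup\{p(x)\}$ and split on whether $y\in B(x)$. If $y\in B(x)$, then since $d_G(x,y)\le t$ we also have $y\in B_t(x)=B(x)\cap B(x,t,G)$, so by construction $\{x,y\}\in H^0$ with weight exactly $d_G(x,y)$ (the \kdnearest subroutine of Claim \ref{bunches0} returns exact distances), which gives $d^1_{G\cup H^0}(x,y)=d_G(x,y)$. Otherwise $y\notin B(x)$, which forces both $y\neq p(x)$ and $d_G(x,y)\ge d_G(x,A_1)=d_G(x,p(x))$; I then take $z=p(x)\in A_1$, and since $d_G(x,p(x))\le d_G(x,y)\le t$ we have $p(x)\in B_t(x)$, so $\{x,p(x)\}\in H^0$ with weight $d_G(x,p(x))\le d_G(x,y)$, yielding $d^1_{G\cup H^0}(x,p(x))\le d_G(x,y)$.

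There is no real obstacle here; the only point to be careful about is that for the first disjunct one needs \emph{equality} $d^1_{G\cup H^0}(x,y)=d_G(x,y)$, not merely $\le$, which is exactly why it matters that edges of $H^0$ carry the exact $G$-distances produced by the \kdnearest computation rather than any approximation. All subsequent claims in the hopset correctness proof will then build on this base by induction on $\ell$, using the guarantee that after one hop we either reach $y$ exactly or land on an $A_1$-vertex from which the inductively-constructed $H^{\ell-1}$ can take over.
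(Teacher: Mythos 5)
Your proof is correct and follows essentially the same argument as the paper: the trivial case $x\in A_1$ with $z=x$, and for $x\in A_0\setminus A_1$ a split on whether $y$ lies in the bunch (your split on $y\in B(x)$ is equivalent to the paper's split on $y\in B_t(x)$ since $d_G(x,y)\le t$), using the edge to $y$ or to the pivot $p(x)\in B_t(x)\cap A_1$ accordingly. Your added remark that these $H^0$ edges carry exact $G$-distances is a fair point of care but matches the paper's construction.
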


\begin{proof}
If $x \in A_1$, this follows trivially by setting $z=x$. We next focus on the case that $x \in A_0 \setminus A_1$. If $y \in B_t(x)$, then we added the edge $\{x,y\}$ with weight $d_G(x,y)$ to $H^0$. Otherwise, there is a vertex $p(x) \in B_t(x) \cap A_1$ where $d_G(x,p(x)) \leq d_G(x,y)$, and we added the edge $\{x,p(x)\}$ with weight $d_G(x,p(x))$ to $H^0$.
\end{proof}

The next lemma shows that $H^{\ell}$ is a $(\beta,\epsilon_{\ell},2^{\ell})$-hopset. A similar lemma appears in \cite{DBLP:conf/podc/Censor-HillelDK19} with the following changes: first, there $\epsilon < 1/\log{n}$ which is needed since the algorithm has $\log{n}$ iterations, in our case we can replace the $n$ with $t$. Second, the proof there considers hop-distance $2^{\ell}$ for any $\ell \leq \log{n}$, where in our case this is restricted to $\ell \leq \log{t}.$

\begin{lemma} \label{hopset_lemma}
Let $0< \epsilon < 1/\log{t}$. Set $\delta = \epsilon / 4$, and $\beta=3/\delta$ and let $x,y \in V$ be such that $d_G(x,y)=d_G^{2^{\ell}}(x,y)$ for $\ell \leq \log{t}$. Then, $$d_{G \cup H^{\ell}}^{\beta}(x,y) \leq (1+\epsilon_{\ell})d_G(x,y),$$ where $\epsilon_{\ell}=\epsilon \cdot \ell$.
\end{lemma}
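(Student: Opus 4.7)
I would proceed by induction on $\ell$, adapting the standard Thorup--Zwick-style correctness proof for hopsets used in \cite{DBLP:conf/podc/Censor-HillelDK19}, with the only modification being that the range of induction is $\ell \leq \log t$ instead of $\ell \leq \log n$. The key structural fact to exploit is that $H^{\ell}$ contains, by construction, exactly the set of edges $\{u,v\}$ with $u,v \in A_1$ such that $d_{G \cup H^{\ell-1}}^{4\beta}(u,v)$ is finite, with weight equal to that $4\beta$-hop distance. Thus the entire analysis reduces to combining Claim \ref{basis_claim} (which routes any vertex to a nearby $A_1$ representative via a single $H^0$-edge) with the inductive hypothesis.

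\textbf{Base case ($\ell = 0$).} The hypothesis $d_G(x,y) = d_G^{1}(x,y)$ forces $x$ and $y$ to be either identical or adjacent in $G \subseteq G \cup H^0$, so a single hop witnesses $d_{G \cup H^0}^{\beta}(x,y) = d_G(x,y)$ (recall $\beta \geq 1$).

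\textbf{Inductive step.} Fix a shortest $x$-$y$ path $\pi$ with at most $2^{\ell}$ hops and let $m$ be its midpoint, so each of the halves has at most $2^{\ell-1}$ hops. I would apply Claim \ref{basis_claim} at $x$ with target $m$ to obtain either a direct edge $\{x,m\}$ in $G \cup H^0$ of weight $d_G(x,m)$, or an edge $\{x, z_x\}$ in $H^0$ with $z_x \in A_1$ of weight at most $d_G(x,m)$; symmetrically one obtains $z_y$ for $y$. In the generic (detour) case, I would then invoke the inductive hypothesis on the pair $(z_x, z_y)$: their $G$-distance is at most $(1+2\delta) d_G(x,y)$ by triangle inequality through $m$, and a shortest $z_x$-$z_y$ path routed through $m$ has at most $2^{\ell}$ hops in total (at most $2^{\ell-1}$ per half), so applying the IH to each half yields $d_{G \cup H^{\ell-1}}^{2\beta}(z_x, z_y) \leq (1+\epsilon_{\ell-1})\bigl(d_G(z_x,m) + d_G(m,z_y)\bigr)$. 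Since $2\beta \leq 4\beta$, the edge $\{z_x, z_y\}$ is present in $H^{\ell}$ with weight bounded by this quantity. Concatenating $x \to z_x \to z_y \to y$ gives a $3$-hop path in $G \cup H^{\ell}$, which fits within the budget $\beta = 3/\delta \geq 3$. A direct multiplicative calculation using $\delta = \epsilon/4$ and the recursion $\epsilon_{\ell} = \epsilon_{\ell-1} + \epsilon$ shows the total weight is at most $(1+\epsilon_{\ell-1})(1+2\delta)d_G(x,y) \leq (1+\epsilon_{\ell})d_G(x,y)$.

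The main obstacle is the careful accounting of both the hop budget and the multiplicative error across two layered applications of the inductive hypothesis. One must verify that the stretch factor $(1+\epsilon_{\ell-1})(1 + O(\delta))$ telescopes to $(1+\epsilon_{\ell})$ without overshooting, which relies crucially on the assumption $\epsilon < 1/\log t$ (keeping $\epsilon_{\ell} \leq 1$ throughout), and that the choice $\beta = 3/\delta$ absorbs the $3$-hop detour while $4\beta$ is large enough to cover the two IH invocations used to certify the $\{z_x, z_y\}$ edge. A secondary subtlety is the degenerate case where Claim \ref{basis_claim} returns a direct edge to $m$ rather than a detour via $A_1$: there the argument simplifies, since one hops directly to $m$ and then applies the IH to the remaining half.
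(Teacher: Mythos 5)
The paper does not actually reprove this lemma: it defers to the proof in \cite{DBLP:conf/podc/Censor-HillelDK19}, observing only that the $t$-bounded Claim~\ref{basis_claim} suffices because every pair the proof touches has $d_G(x,y)\le 2^{\ell}\le t$. Your attempted reconstruction of that proof contains a quantitative error that breaks the stretch bound. When you apply Claim~\ref{basis_claim} at $x$ with target the midpoint $m$, the vertex $z_x\in A_1$ is only guaranteed to satisfy $d(x,z_x)\le d_G(x,m)\approx d_G(x,y)/2$, and similarly for $z_y$; hence $d_G(z_x,z_y)\le 2d_G(x,y)$, not $(1+2\delta)d_G(x,y)$ as you assert, and the three-hop route $x\to z_x\to z_y\to y$ has total weight up to roughly $3\,d_G(x,y)$. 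A midpoint-based detour to $A_1$ is exactly the classic $3$-approximation argument (the same one used for the $(3+\epsilon)$-APSP warm-up in this paper); it cannot telescope to $(1+\epsilon_{\ell})$ for any tuning of constants, so the step ``$(1+\epsilon_{\ell-1})(1+2\delta)\le(1+\epsilon_{\ell})$'' rests on a false premise. A secondary problem is that you invoke the inductive hypothesis on pairs such as $(z_x,m)$: $z_x$ lies off the shortest path, so there is no guarantee that $d_G(z_x,m)=d_G^{2^{\ell-1}}(z_x,m)$, and the hypothesis of the lemma being inducted on need not hold for that pair.

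The argument the paper relies on avoids both issues by making the detour to $A_1$ cost only $O(\delta\, d_G(x,y))$ rather than $\Theta(d_G(x,y))$: partition $\pi(x,y)$ into roughly $1/\delta$ consecutive segments, each of weight at most about $\delta\, d_G(x,y)$, and apply Claim~\ref{basis_claim} to the endpoints of each segment. Segments for which the claim returns an exact one-hop edge are crossed exactly; if some segment is not of this type, take the first and last such segments and the corresponding $A_1$-vertices $z_i,z_j$, each within one segment-weight of the path. The edge $\{z_i,z_j\}$ is present in $H^{\ell}$ because $z_i,z_j$ are within $4\beta$ hops in $G\cup H^{\ell-1}$: one hop onto the path, then the inductive hypothesis applied to the two hop-halves of the \emph{on-path} subpath between these segments (subpaths of $\pi$ do satisfy $d_G=d_G^{2^{\ell-1}}$), giving at most $2\beta+2\le 4\beta$ hops and weight at most $(1+\epsilon_{\ell-1})$ times that subpath plus two segment-weights. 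The resulting $x$--$y$ route uses at most $1/\delta+O(1)\le 3/\delta=\beta$ hops and has weight at most $(1+\epsilon_{\ell-1})d_G(x,y)+4\delta\, d_G(x,y)=(1+\epsilon_{\ell})d_G(x,y)$; this is precisely where $\delta=\epsilon/4$, $\beta=3/\delta$, and the $4\beta$ hop bound in the construction come from. Your proposal would need to be restructured along these lines to be correct.
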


The proof of the lemma is identical to \cite{DBLP:conf/podc/Censor-HillelDK19}. 
The only observation needed is the following. The proof uses a variant of Claim \ref{basis_claim} without a restriction on $d_G(x,y)$. In our case, in the lemma we only consider vertices where $d_G(x,y) \leq t$, for this case, the proof only uses the claim for vertices of distance at most $t$, hence Claim \ref{basis_claim} suffices.

\paragraph{Conclusion.}

We can now show Theorem \ref{hopset_thm}.

\hopset*

\begin{proof}
By Lemma \ref{hopset_lemma}, $H^{\log{t}}$ is a $(\beta,\epsilon {\log{t}},t)$-hopset. From the choice of parameters $0<\epsilon< 1/\log{t}$ and $\beta=O(1/\epsilon)$. Let $\epsilon' = \epsilon \log{t}$. Then, $\epsilon = \frac{\epsilon'}{\log{t}}$. Hence, $H^{\log{t}}$ is a $(\beta,\epsilon',t)$-hopset where $0< \epsilon' < 1$ and $\beta=O(\frac{\log{t}}{\epsilon'})$. 

By Lemma \ref{hitting_claim}, we can construct $A_1$ in one round using a randomized algorithm, or in $O(\log^2{t} + (\log{\log{n}})^3)$ rounds using a deterministic algorithm. Computing the bunches takes $O(\beta \log{t} + \log^2{t})$ rounds by Claims \ref{bunches0} and \ref{bunches1}. Since $\beta=O(\frac{\log{t}}{\epsilon'})$, the total time complexity is $O(\frac{\log^2{t}}{\epsilon'})$ rounds in the randomized case, and $O(\frac{\log^2{t}}{\epsilon'} + (\log{\log{n}})^3)$ rounds in the deterministic case. The number of edges is $O(n^{3/2} \log{n})$ by Claim \ref{edges_claim}.
\end{proof}

\section{Missing Proofs for the $(1+\epsilon, \beta)$-Emulator} \label{sec:emulator_app}

\subsection{Size analysis} \label{sec:emulator_app_size}

\sizeSi*

\begin{proof}
The proof is by induction. For $i=0$, it holds that $|S_0| = |V| = n = n^{1-\frac{2^0-1}{2^r}}.$
Assume it holds for $i$, and we will prove it for $i+1$. By definition, $S_{i+1} \gets Sample(S_i, n^{-\frac{2^{i}}{2^r}})$. Hence, the expected size of $S_{i+1}$ is  $$n^{1-\frac{2^i-1}{2^r}} \cdot n^{-\frac{2^i}{2^r}} = n^{1-\frac{2^i-1+2^i}{2^r}}= n^{1-\frac{2^{i+1}-1}{2^r}}.$$
\end{proof}

\probSr*

\begin{proof}
By the choice of parameters, a vertex is in $S_r$ with probability $$\prod_{i=1}^{r} p_r = p_r \cdot \prod_{i=1}^{r-1} n^{-\frac{2^{i-1}}{2^r}} = p_r \cdot \frac{1}{n^{\sum_{i=1}^{r-1} \frac{2^{i-1}}{2^r}}} = \frac{p_r}{n^{\frac{2^{r-1}-1}{2^r}}}.$$
Since $p_r = n^{-\frac{1}{2^r}}$, this probability is equal to
$$\frac{n^{-\frac{1}{2^r}}}{n^{\frac{2^{r-1}-1}{2^r}}} = \frac{1}{ n^{\frac{2^{r-1}}{2^r}}}= \frac{1}{\sqrt{n}}.$$
\end{proof}

\sizeSr*

\begin{proof}
By Claim \ref{claim_prob_Sr}, each vertex is in $S_r$ with probability $\frac{1}{\sqrt{n}}$. Also, by definition of the sampling process, each vertex is sampled independently of other vertices. The expected size of $S_r$ is clearly $\sqrt{n}.$ From a Chernoff bound, we get that $Pr[S_r > (1+\delta) \sqrt{n}] \leq e^{-{\frac{\delta^2{\sqrt{n}}}{2+\delta}}}$ for any $\delta > 0$. Choosing $\delta=1$ shows that $|S_r| \leq 2\sqrt{n}$ w.h.p. 
\end{proof}

\subsection{Stretch analysis} \label{sec:emulator_app_stretch}

\stretchRi*

\begin{proof}
The proof is by induction, for $i=0$ it holds trivially. Assume it holds for $i-1$, and we prove it holds for $i$.
$$R_i = \sum_{j=0}^{i-1} \delta_i = \sum_{j=0}^{i-1} (\frac{1}{\epsilon^j} + 2R_j)=\sum_{j=0}^{i-1} (\frac{1}{\epsilon^j} + 2\sum_{\ell = 0}^{j-1} \frac{1}{\epsilon^{\ell}} \cdot 3^{j-1 - \ell}).$$
In the sum $\sum_{\ell = 0}^{j'-1} \frac{1}{\epsilon^{\ell}} \cdot 3^{j'-1 - \ell}$, we will get the term $\frac{1}{\epsilon^j} \cdot 3^{j' - 1 - j}$ when $\ell = j$, this can only happen when $j \leq j' -1$. This gives, 
$$R_i = \sum_{j=0}^{i-1} \frac{1}{\epsilon^j} (1+2 \sum_{j' = j+1}^{i-1} 3^{j' - j - 1}).$$
Now $$\sum_{j' = j+1}^{i-1} 3^{j' - j - 1} = \frac{1}{3^{j+1}} \sum_{j'=j+1}^{i-1} 3^{j'} = \frac{1}{3^{j+1}} \cdot \frac{3^{j+1}(3^{i-j-1}-1)}{2} = \frac{3^{i-j-1}-1}{2}.$$
This gives, $$R_i = \sum_{j=0}^{i-1} \frac{1}{\epsilon^j} (1+2 \cdot \frac{3^{i-j-1}-1}{2}) = \sum_{j=0}^{i-1} \frac{1}{\epsilon^j} \cdot 3^{i-j-1}.$$
\end{proof}

\Ribound*

\begin{proof}
$$R_i = \sum_{j=0}^{i-1} \frac{3^{i-1-j}}{\epsilon^j} = 3^{i-1} \sum_{j=0}^{i-1} \frac{1}{(3\epsilon)^j} = 3^{i-1} \cdot \frac{\frac{1}{(3\epsilon)^i} -1}{\frac{1}{3\epsilon}-1} \leq 3^{i-1} \cdot \frac{\frac{1}{(3\epsilon)^i}}{\frac{1}{3\epsilon}(1-3\epsilon)} \leq \frac{3^{i-1}}{(3\epsilon)^{i-1}(1-3\epsilon)} \leq \frac{2}{\epsilon^{i-1}}.$$
\end{proof}

\Stretchtwo*

\begin{proof}
$$4R_i + 2 \beta_{i-1} = 4R_i +2(4\sum_{j=1}^{i-1} 2^{i-1-j}R_j) = 4R_i + 4 \sum_{j=1}^{i-1} 2^{i-j} R_j = 4\sum_{j=1}^{i} 2^{i-j} R_j = \beta_i.$$
\end{proof}

\Stretchone*

\begin{proof}
For $i=0$, this clearly holds. Otherwise, $\beta_i = 4 \sum_{j=1}^{i} 2^{i-j} R_j.$ From Claim \ref{claim_Ri_bound}, we have $R_j \leq \frac{2}{\epsilon^{j-1}}$. This gives, $$\beta_i \leq 4 \sum_{j=1}^{i} 2^{i-j} \frac{2}{\epsilon^{j-1}} = 4 \cdot 2^i \sum_{j=1}^{i} \frac{1}{(2\epsilon)^{j-1}}.$$
In addition,
$\sum_{j=1}^{i} \frac{1}{(2\epsilon)^{j-1}} = \sum_{j=0}^{i-1} \frac{1}{(2\epsilon)^j} = \frac{\frac{1}{(2\epsilon)^i}-1}{\frac{1}{2\epsilon}-1} \leq \frac{\frac{1}{(2\epsilon)^i}}{\frac{1}{2\epsilon}(1-2\epsilon)}=\frac{1}{(2\epsilon)^{i-1}(1-2\epsilon)}.$ This gives,
$$\beta_i \leq \frac{4 \cdot 2^i}{(2\epsilon)^{i-1}(1-2\epsilon)} \leq \frac{8}{\epsilon^{i-1}(1-2\epsilon)} \leq \frac{10}{\epsilon^{i-1}}.$$
\end{proof}

\subsection{Implementation in the \clique} \label{sec:emulator_app_clique}

\ConclusionExp*

\begin{proof}
The proof mostly follows the stretch analysis and conclusion in previous sections, here we only focus on the differences since we only approximated some of the distances. Note that for all edges with one endpoint outside $S_r$, we computed the correct distances. In particular, Claim \ref{claim_Ri} still holds, since it is only based on edges added by $i$-dense vertices in $S_i \setminus S_{i+1}$, which only happens in iterations $i < r$. Also, Lemma \ref{lemma_stretch} still holds for $i < r$, the only difference is for $i=r$. Since we computed $(1+\epsilon')$-approximations to the distances between vertices in $S_r$, it changes slightly the constants in the analysis for the case $i=r$. Now, for $r$-clustered vertices $u',v'$, we have that $d_H(c_r(u'),c_r(v')) \leq (1+\epsilon')(d(u',v')+2R_r)$ (previously we had $d_H(c_r(u'),c_r(v')) \leq d(u',v')+2R_r$).   
This gives $d_H(u',v') \leq (1+\epsilon')d(u',v')+(4+2\epsilon')R_r$. 
If we choose $\epsilon' =20\epsilon(r-1)$ (assuming $r>1$), we get that only constants in the analysis are changed. Specifically, in Case 1 of the proof of Lemma \ref{lemma_stretch} for the case $i=r$, we get that $$d_H(u,v) \leq (1+20\epsilon(r-1))d(u,v)+4R_r+2\beta_{r-1} + 2 \epsilon' R_r.$$ The only difference from before is the additional $2 \epsilon' R_r$ term in the stretch. As $\beta_i = 4R_i + 2\beta_{i-1}$ by Claim \ref{claim_stretch_2} and $\epsilon' < 1$, we have that $2 \epsilon' R_r < \beta_r$, which bounds the stretch in Case 1 with $(1+20\epsilon(r-1), 2 \beta_r).$ Case 2 now follows the original proof, with the change that we replace $\beta_r$ with $2 \beta_r$. This gives a bound of $(1+40\epsilon r, 2 \beta_r)$ on the stretch in Case 2.

Now we can follow the proof of Theorem \ref{thm_conclusion_emulator} with slightly different constants to get the same parameters for the emulator. As we chose $\epsilon' =20\epsilon(r-1)$, we have $\epsilon = O(\frac{\epsilon'}{r})$, which gives $\beta_r = O(\frac{r}{\epsilon'})^{r-1}$. The number of edges is $O(r n^{1+\frac{1}{2^r}})$ in expectation as before. We now bound $\delta_r$.
As $\delta_r = \frac{1}{\epsilon^r} + 2R_r$, and $R_i \leq \frac{2}{\epsilon^{i-1}}$ by Claim \ref{claim_Ri_bound}, we have $\delta_r = O(\frac{1}{\epsilon^r})$. Also, as $\epsilon = O(\frac{\epsilon'}{r})$, we have $\delta_r = O(\frac{r}{\epsilon'})^r$. 

The total stretch for any two vertices is bounded by the stretch in the case $i=r$ which is bounded by $(1+40\epsilon r, 2 \beta_r)$ as discussed above. Since we have $\epsilon' =20\epsilon(r-1)$, we have $40 \epsilon r \leq 4 \epsilon'$ for $r \geq 2$. This bounds the total stretch with $(1+4 \epsilon', O(\frac{r}{\epsilon'})^{r-1})$. Choosing $\epsilon'' = 4 \epsilon'$, gives a stretch of $(1+\epsilon'', O(\frac{r}{\epsilon''})^{r-1})$. 
By Lemma \ref{lemma_time}, the time complexity of the algorithm is $O(\frac{\log^2{\delta_r}}{\epsilon'})$ rounds w.h.p.
Since $\beta_r= O(\frac{r}{\epsilon''})^{r-1}$, $\delta_r=O(\frac{r}{\epsilon''})^r$, and $\epsilon'' = O(\epsilon')$ we have that a complexity of $O(\frac{\log^2{\delta_r}}{\epsilon'})$ is also equal to $O(\frac{\log^2{\beta_r}}{\epsilon''})$. 
\end{proof}

\end{document}